\newtheorem{theorem}{Theorem}[section]
\newtheorem{lemma}[theorem]{Lemma}
\newtheorem{definition}[theorem]{Definition}
\newtheorem{corollary}[theorem]{Corollary}
\newtheorem{proposition}[theorem]{Proposition}
\newtheorem{remark}[theorem]{Remark}
\newcommand{\qed}{\hfill$\square$}
\newenvironment{proof}{%
  \noindent{\em Proof.\ }}{%
  \hspace*{\fill}\qed \\
  \vspace{2ex}}
\newcommand{\braket}[2]{\langle #1 | #2 \rangle} 
\newcommand{\ket}[1]{| #1 \rangle} 
\newcommand{\bra}[1]{\langle #1 |} 
\newcommand{\bol}[1]{\mathbf{#1}}
\newcommand{\rom}[1]{\mathrm{#1}}
\newcommand{\san}[1]{\mathsf{#1}}
\newcommand{\mymid}{:~}
\newcommand{\mix}{\rom{mix}}
\newcommand{\argmax}{\mathop{\rm argmax}\limits}
\newcommand{\argmin}{\mathop{\rm argmin}\limits}
\begin{document}
\begin{titlepage}
\begin{center}
\vspace*{1in}
{\Large A Study of Channel Estimation and Postprocessing  \\ in Quantum Key
 Distribution Protocols}
\par
\vspace{1.0in}
{\Large Shun Watanabe} 
\par
\vspace{1.0in}
{\Large Supervisor: Prof.~Ryutaroh Matsumoto}
\par
\vspace{1.0in}
{\large A Thesis submitted for the degree of Doctor of Philosophy}
\par
\vspace{0.5in}
{\large Department of Communications and Integrated Systems}
\par
\vspace{0.5in}
{\large Tokyo Institute of Technology}
\par
\vspace{0.5in}
{\large 2009}
\end{center}
\end{titlepage}

\pagestyle{empty}

\chapter*{Acknowledgments}
\pagenumbering{roman}
\setcounter{page}{1}

First of all, I wish to express my sincere gratitude and special
thanks to my supervisor, Prof.~Ryutaroh Matsumoto for
his constant guidance and close supervision during all
the phases of this work. Without his guidance
and valuable advise, I could not accomplish my research.

I would also wish to express my gratitude to 
Prof.~Tomohiko Uyematsu for valuable advice and support.

It is my pleasure to deeply thank Prof.~Masahito Hayashi
for fruitful discussions and for being a
co-examiner of this thesis.

Constructive comments and suggestions given at 
conferences and seminars have significantly 
improved the presentation of my results.
Especially, I would like to thank
Dr.~Jean-Christian Boileau, Prof.~Akio Fujiwara,
Dr.~Manabu Hagiwara, Dr.~Kentaro Imafuku, Prof.~ Hideki Imai,
Prof.~Mitsugu Iwamoto, Dr.~Yasuhito Kawano,
Dr.~Akisato Kimura, 
Prof.~Shigeaki Kuzuoka,
Prof.~Hoi-Kwong Lo and members of his group,
Dr.~Takayuki Miyadera, Dr.~Jun Muramatsu,
Prof.~Hiroshi Nagaoka, Prof.~Tomohiro Ogawa,
Prof.~Renato Renner and members of his group,
Mr.~Yutaka Shikano, Prof.~Tadashi Wadayama,
and Prof.~Stefan Wolf and members of his group
for valuable comments.

I would also like to thank Prof.~Wakaha Ogata,
Prof.~Kohichi Sakaniwa, and Prof.~Isao Yamada
for valuable advice and for being co-examiner
of this thesis.

My deep thanks are also addressed to all my colleagues at
Uyematsu and Matsumoto laboratory for their support and
helpful comments during regular seminars.
I would also like to thank our group secretaries,
Ms.~Junko Goto and Ms.~Kumiko Iriya for their care
and kindness.

Finally, I would like to deeply thank my parents for
their support and encouragement.

This research was also partially supported by the Japan
Society for the Promotion of Science
under Grant-in-Aid No.~00197137.
\chapter*{Abstract}

Quantum key distribution (QKD) has attracted great attention
as an unconditionally secure key distribution scheme.
The fundamental feature of QKD protocols is that
the amount of information gained by an eavesdropper,
usually referred to as Eve, can be estimated
from the channel between the legitimate sender and the receiver,
usually referred to as Alice and Bob respectively.
Such a task cannot be conducted in classical key
distribution schemes.
If the estimated amount is lower than a threshold,
then Alice and Bob determine the length of a secret key
from the estimated amount of Eve's information,
and can share a secret key by performing the 
postprocessing.
One of the most important criteria for the efficiency
of the QKD protocols is the key generation rate,
which is the length of securely sharable key per channel use.

In this thesis, we investigate the channel estimation procedure
and the postprocessing procedure of the QKD protocols
in order to improve the key generation rates of the
QKD protocols. 
Conventionally in the channel estimation procedure,
we only use the statistics of matched measurement
outcomes, which are bit sequences
transmitted and received by the same basis, to
estimate the channel; mismatched measurement outcomes,
which are bit sequences transmitted and received 
by different bases, are discarded in the conventional
estimation procedure. In this thesis, we propose
a channel estimation procedure in which we use
the mismatched measurement outcomes in addition
to the matched measurement outcomes. Then, we clarify
that the key generation rates of the QKD protocols
with our channel estimation procedure is higher 
than that with the conventional channel estimation procedure.

In the conventional postprocessing procedure, 
which is known as the {\em advantage distillation},
we transmit a message over the public channel redundantly, which is
unnecessary divulging of information to Eve.
In this thesis, we propose a postprocessing in which
the above mentioned divulging of information
is reduced by using the distributed data compression.
We clarify that the key generation rate of 
the QKD protocol with our proposed
postprocessing is higher than that
with the conventionally known postprocessings.

\tableofcontents

\chapter{Introduction}
\label{ch:introduction}
\pagestyle{fancy}
\fancyhead{}
\fancyfoot{}
\fancyhead[LE,RO]{\thepage}
\fancyhead[RE]{\nouppercase{\leftmark}}
\fancyhead[LO]{\nouppercase{\rightmark}}
\setcounter{page}{0}
\pagenumbering{arabic}


\section{Background}
\label{sec:background}

Key distribution is one of the most important and
challenging problem in cryptology.
When a sender wants to transmit a confidential message to
a receiver, the sender usually encipher the message
by using a secret key that is only available to
the sender and the receiver.
For a long time, many methods have been proposed 
to solve the key distribution problem.
One of the most broadly used method in the present day is
a method whose
security is based on difficulties to solve
some mathematical problems, such as
factorization into prime numbers.
Such kind of method is believed to be
practically secure,
but it has not been proved to be unconditionally secure;
there might exist some clever algorithm to
solve those mathematical problems efficiently.
On the other hand, quantum key distribution (QKD),
which is the main theme of this thesis, has attracted
the attention of many researchers,
for the reason that its security is based on principles of
the quantum mechanics.
In other word, the QKD is secure as long as 
the quantum mechanics is correct.

The concept of the quantum cryptography was proposed
by Wiesner in 1970s. Unfortunately, his paper was
rejected by a journal, and was not published until
1983 \cite{wiesner:83}\footnote{For more detailed history on
the quantum cryptography, see Brassard's review article
\cite{brassard:05}.}. 
In 1980s, the quantum cryptography was revived by
Bennett {\em et al}.~in a series of 
papers \cite{bennett:82, bennett:83, bennett:84b, bennett:84}.
Especially, the quantum key distribution first appeared
in Bennett and Brassard's one page proceedings paper \cite{bennett:83}
presented at a conference,
although it is more commonly known as BB84 from its
1984 full publication \cite{bennett:84}. 

At first, the security of the BB84 protocol was
guaranteed only in the ideal situation such that
the channel between the sender and receiver is noiseless.
Later, Bennett {\em et al}.~proposed modified protocols
to handle the case in which the channel between
the sender and the receiver is not necessarily 
noiseless \cite{bennett:89, bennett:92b}.
During the course of their struggle against the problem,
many important concepts such as the information
reconciliation and the privacy amplification, which
are explained in detail later, 
were proposed \cite{bennett:85, bennett:88}.
Finally , Mayers 
proposed his version of the BB84 protocol, and showed 
its unconditional 
security \cite{mayers:01} (preliminary versions of his proof
were published in \cite{mayers:95, mayers:96}).
Biham {\em et al}.~also proposed
their version of the BB84 protocol and showed its unconditional
security \cite{biham:00, biham:06}.

In 2000, Shor an Preskill made a remarkable observation
on Mayer's security proof of the BB84 protocol \cite{shor:00}.
They observed that the entanglement distillation protocol (EDP)
\cite{bennett:96,lo:99} with the CSS code, one of the quantum
error correcting codes proposed by Calderbank, Shor,
and Stean \cite{calderbank:96, steane:96}, is implicitly
used in Mayer's  version of the BB84 protocol, and 
presented a simple
proof of Mayer's version of the BB84 protocol.
Their proof technique based on the CSS code is further
extended to some directions. For example, 
Lo \cite{lo:01} proved the security of another QKD protocol,
the six state protocol proposed by 
Bru\ss~\cite{bruss:98}, by using the technique based on the
CSS code. 

Recently, Renner {\em et al}.~\cite{renner:05, renner:05b, kraus:05}
developed information theoretical techniques to prove the security
of the QKD protocols including the BB84 protocol and the six-state
protocol\footnote{Throughout this thesis, we only treat the
BB84 protocol and the six-state protocol, and we mean these two
protocols by the QKD protocols.}.
Their proof method provides important insight into
the security proof of the QKD protocols. More precisely,
they proved the security of the QKD protocols by extending
the key agreement in the information theory
\cite{maurer:93,ahlswede:93}, which will be explained in the next section, 
to the context of the QKD protocols.

In this thesis, we employ Renner {\em et al}.'s approach
for the security proof of the QKD protocols instead of
Shor and Preskill's approach. 
Then, we investigate two important phases,
the channel estimation and the postprocessing,
of the QKD protocols.

The QKD protocol roughly consists of three phases: the bit
transmission phase, the channel estimation phase, and the
postprocessing phase. In the bit transmission phase, the
legitimate sender, usually referred to as Alice
sends a bit sequence to the legitimate receiver, usually
referred to as Bob, by encoding them into quantum carrier
(eg. polarizations of photons).
The channel estimation phase will be 
explained in Section \ref{sec:the-role}. In the postprocessing phase,
Alice and Bob share a secret key based on their bit
sequences obtained in the bit transmission phase.
The postprocessing phase can be essentially regarded as
the key agreement problem in the information theory,
which will be explained in the next section.

\section{Key Agreement in Information Theory}
\label{sec:key-agreement-in-information-theory} 

Following Shannon's mathematical formulation of the 
cryptography \cite{shannon:49} and the studies on
confidential message transmissions over
noisy channels by Wyner \cite{wyner:75}
and Csis\'{z}ar and K\"orner \cite{csiszar:78},
the problem of the key agreement in the information theory
was formulated by Maurer \cite{maurer:93},
and was also studied 
by Ahlswede and Csis\'{z}ar \cite{ahlswede:93}.

In Maurer's formulation Alice and Bob have
sequences of independently identically
distributed (i.i.d.) correlated binary\footnote{Actually, the formulation
in \cite{maurer:93, ahlswede:93} is not restricted to
binary random variables. However, we restrict our attention
to the binary case because Alice and Bob obtain
binary sequences in the QKD protocols (refer to Section \ref{sec:the-role}).} 
random variables
$\bol{X} = (X_1,\ldots,X_n)$ and 
$\bol{Y} = (Y_1,\ldots,Y_n)$ respectively, and the eavesdropper,
usually referred to as Eve, has a sequence of i.i.d. random
variables $\bol{E} = (E_1,\ldots,E_n)$, which are regarded as
the information she obtained by eavesdropping $\bol{X}$ and $\bol{Y}$.
They conduct a postprocessing\footnote{The postprocessing
is a QKD jargon that means a procedure to distill a secret
key from Alice and Bob's bit sequences.} procedure and share a
secret key  by using the pair of
bit sequence $(\bol{X}, \bol{Y})$ as a seed.

In the postprocessing procedure, Alice and Bob are
allowed to exchange messages over the authenticated
public channel, that is, Eve can know every message
transmitted over this channel but she cannot tamper
or forge a message. Actually, the authenticated
public channel can be realized if Alice and Bob
initially share a short secret key \cite{stinson:91}\footnote{For this reason,
it might be more appropriate to call the procedure
the {\em key expansion} rather than the {\em key agreement}.}.
In the rest of this thesis, we assume that the public channel
is always authenticated though we do not mention it
explicitly. 

The communication over the public channel in
the postprocessing procedure may be one-way 
(from Alice to Bob\footnote{The message transmission
can be from Bob to Alice, which case will be treated in
Chapter \ref{ch:channel-estimation}.}) or
two-way. The most elementary postprocessing
procedure is a procedure with one-way public communication, and it
consists of two procedures, the information reconciliation procedure
and the privacy amplification procedure. 

The purpose of the information reconciliation procedure for Alice
and Bob is to agree on a bit sequence from their
correlated bit sequences. This procedure is nothing but
the Slepian-Wolf coding scheme \cite{slepian:73}\footnote{Actually,
the procedures proposed in \cite{maurer:93, ahlswede:93} do
not use the Slepian-Wolf coding scheme. The Slepian-Wolf
coding scheme in the context of the key agreement
was first used by Muramatsu \cite{muramatsu:06b} explicitly, 
although it was already used in cryptography 
community implicitly (for example in \cite{maurer:00}).}. 
In this scheme, Alice sends the compressed version 
$C$ (say $k$ bit data) of $\bol{X}$ to Bob. Then,
Bob reproduce $\hat{\bol{X}}$ by using his bit sequence $\bol{Y}$
and the received data $C$. It is well known that
Bob can reproduce Alice's bit sequence with negligible error
probability if Alice sends appropriate $k \simeq n H(X|Y)$
bits data.

The purpose of the privacy amplification procedure for Alice
and Bob is to distill secret keys from their
bit sequences shared in the information reconciliation procedure.
More specifically, Alice and Bob distill 
$\ell$ bits (usually much shorter than $n$ bit) 
secret key by using appropriate function from $n$ bit
to $\ell$ bit. 
We require the secret keys to be information
theoretically secure, i.e.,
the distilled key is uniformly
distributed and statistically independent from 
Eve's available information $C$ and $\bol{E}$.

Since the pair of bit sequences initially shared by Alice and Bob
are considered as a precious resource\footnote{Actually, Alice and 
Bob's initial bit sequences are shared by transmitting photons
in the QKD protocols, and the transmission rate of the photon
is usually very slow compared to the transmission rate of
the public channel.}, 
we desire the {\em key generation
rate} $\ell / n$ to be as large as possible.
Especially in this paper, we investigate the 
asymptotic behavior of the key generation rate,
{\em asymptotic key generation rate}, such that
the secure key agreement is possible.
Roughly speaking\footnote{If Alice conducts a
preprocessing before the information reconciliation
procedure, then the condition in
Eq.~(\ref{eq:heuristic-key-generation-rate}) can be 
slightly generalized as 
\begin{eqnarray*}
\frac{\ell}{n} \overset{<}{\sim}
H(U|EV) - H(U|YV),
\end{eqnarray*} where $U$ and $V$ are
auxiliary random variables such that
$V$, $U$, $X$, and $(Y,E)$ form a Markov chain in this order.
Although the meaning of the auxiliary random variables
have been unclear for a long time, recently Renner {\em et al}.~clarified
the meaning of $U$ as the noisy preprocessing 
in the context of QKD protocol \cite{renner:05}
(see also Remark \ref{remark:noisy-preprocessing}).}, 
the secure key can be distilled
if the key generation rate is smaller than Eve's
ambiguity (per bit) about the bit sequence after the
information reconciliation, that is,
\begin{eqnarray}
\frac{\ell}{n} \overset{<}{\sim}  
 H(X|E) - H(X|Y).
\label{eq:heuristic-key-generation-rate}
\end{eqnarray}

In \cite{maurer:93}, Maurer also proposed a
postprocessing procedure with two-way public communication.
More specifically, he proposed a preprocessing called
{\em advantage distillation} that is conducted
before the information reconciliation procedure. 
In the advantage distillation,
Alice divides her bit sequence into blocks
of length $2$, and sends the parity $X_{2i-1} \oplus X_{2i}$
of each block to Bob. Bob also divides his bit sequence
into blocks of length $2$, and tells Alice whether the received
parity of the $i$th block coincides with Bob's corresponding parity 
$Y_{2i-1} \oplus Y_{2i}$. If their corresponding parities coincide,
they keep the second bits of those blocks, which are 
regarded to have strong correlation. Otherwise, they
discard those blocks, which are regarded to have
weak correlation. Maurer showed that the key generation rate
of the postprocessing with the advantage distillation
can be strictly higher than the right hand side of 
Eq.~(\ref{eq:heuristic-key-generation-rate}) in an example.

In the context of the QKD protocol, the postprocessing
procedure with both one-way and two-way public communication
were considered. Actually, the postprocessing procedure
with one-way public communication were first studied
\cite{mayers:01, shor:00}. Later, the postprocessing
with the advantage distillation in the 
context of QKD protocol was proposed by 
Gottesman and Lo \cite{gottesman:03}.
The postprocessing with the advantage distillation was
extensively studied by Bae and Ac\'in \cite{bae:07}.

In Chapter \ref{ch:postprocessing}, 
we propose a new kind of postprocessing procedure
with two-way public communication in the context
of QKD protocol. 
The purpose of the advantage distillation
was to divide the blocks into highly correlated ones
and weakly correlated ones by exchanging the parities.
The key idea of our proposed postprocessing is that 
the parities in the conventional advantage distillation is redundantly
transmitted over the public channel, and should be compressed
by the Slepian-Wolf coding because Bob's bits 
$(Y_{2i-1}, Y_{2i})$ is correlated to Alice's parity 
$X_{2i-1} \oplus X_{2i}$.
In our proposed postprocessing, Alice does not
sends the parities itself, 
but she sends the compressed version of the parities
by regarding Bob's sequence $\bol{Y}$ as the side-information at the decoder.
This enables Alice and Bob to extract a secret key
also from the parity sequence, and improves the key
generation rate. Actually, the key generation rate of the
QKD protocols with our proposed postprocessing procedure 
is as high as that with conventional one-way
or two-way postprocessing procedures. We also 
clarify that the former is strictly higher than the latter in
some cases.

\section{Unique Property of Quantum Key Distribution}
\label{sec:the-role}

In the previous section, we have explained 
the mathematical formulation of the key
agreement in the information theory. 
Then, we have explained the fact that Alice and
Bob have to set the key generation rate according
Eve's ambiguity about the bit sequence after the information
reconciliation procedure 
(Eq.~(\ref{eq:heuristic-key-generation-rate}))\footnote{When
Alice and Bob conduct the postprocessing with two-way
public communication, they have to set the key generation
rate according to more complicated formula (for more detail,
see Chapter \ref{ch:postprocessing}).}
in order to share an information theoretically secure key.
However, Alice and Bob cannot calculate the amount of Eve's
ambiguity about the bit sequence if they do not know 
the probability distribution $P_{XYE}$ of their initial
bit sequence and Eve's available information. 
Therefore, they have to estimate the probability distribution itself, 
or at least they have to estimate a lower bound on the
quantity $H(X|E)$\footnote{Since the quantity $H(X|Y)$
only involves the marginal distribution $P_{XY}$, 
Alice and Bob can easily estimate it by sacrificing
a part of their bit sequence as samples. Therefore, we restrict
our attention to the quantity $H(X|E)$.}.  
If Alice and Bob's bit sequences $(\bol{X},\bol{Y})$
are distributed by using a classical channel,
for example the standard telephone line or the Internet, 
then a valid estimate will be the trivial one, $0$, because
Eve can eavesdrop as much as she want without being
detected. The QKD protocols provide a way to estimate
a non-trivial lower bound on $H(X|E)$ by using the axioms of the
quantum mechanics.

In the BB84 protocol, Alice randomly chooses a bit sequence
and send it by encoding each bit into a polarization of a photon.
When she encodes each bit into a polarization of a photon,
she chooses one of two encoding rules at random.
In the first encoding rule, she encodes $0$ into
the vertical polarization, and $1$ into the
horizontal polarization. In the second encoding rule,
she encodes $0$ into the $45$ degree polarization,
and $1$ into the $135$ degree polarization.

On the other hand, Bob measures the received photons
by using one of two measurement device at random.
The first measurement device discriminate between
the vertical and the horizontal polarizations,
and the measurement outcome is decoded into 
the corresponding bit value. The second measurement
device discriminate between the $45$ degree
and the $135$ degree polarizations, and 
the measurement outcome is decoded into the
corresponding bit value. 

After the reception of the photons,
Alice and Bob announce over the public channel
which encoding rule and which measurement device they
have used for each bit.
Then, they keep those bits if their encoding rule
and measurement device are compatible, i.e.,
Alice uses the first (the second) encoding
rule and Bob uses the first (the second)
measurement device. We call such bit sequences
the {\em matched measurement outcomes}. 
On the other hand, they discard those bits
if their encoding rule and measurement
device are incompatible, i.e.,
Alice uses the first (the second) encoding
rule and Bob uses the second (the first)
measurement device. We call such bit sequences
the {\em mismatched measurement outcomes}.
Furthermore, Alice and Bob announce a part of
their matched measurement outcomes to estimate candidates of
the quantum channel over which the photons were transmitted.
The rest of the matched measurement outcomes
are used as a seed for sharing a secret key.

The most important feature of the QKD protocols 
is that we can calculate the quantity
$H(X|E)$\footnote{It should be noted that we have
to use the conditional von Neumann entropy instead
of the conditional Shannon entropy in the case
of the QKD protocols (for more detail, 
see Chapter \ref{ch:channel-estimation}).} 
by using the axioms of the quantum mechanics
if they know the quantum channel exactly. 
Therefore, we can estimate a lower bound on
$H(X|E)$ via estimating the candidates of the
quantum channel. Actually, we employ the 
quantity $H(X|E)$ minimized over the estimated candidates
of the quantum channel as an estimate of true $H(X|E)$.

As we explained above, in the conventional BB84 protocol
we discard the mismatched measurement outcomes and we
estimate the candidates of the quantum channel
by using only the samples from the matched measurement
outcomes. 
In Chapter \ref{ch:channel-estimation},
we propose a channel estimation procedure in which
we use the mismatched measurement outcomes in addition
to the samples from the matched measurement outcomes.
The use of the mismatched measurement outcomes enables us to
reduce candidates of the quantum channel, and then enables
us to estimate tighter lower bounds on the quantity $H(X|E)$. 
Actually, we clarify that the key generation rate decided according
to our proposed channel estimation procedure is at least as
high as the key generation rate decided according to
the conventional channel estimation procedure.
We also clarify that the former is strictly higher than
the latter in some cases. In Chapter \ref{ch:postprocessing},
we also apply our proposed channel estimation procedure
to the protocol with the two-way postprocessing proposed
in Chapter \ref{ch:postprocessing}. 

It should be noted that the use of the mismatched 
measurement outcomes was already considered in 
literatures. In early 90s, Barnett et al.~\cite{barnett:93} 
showed that the use of mismatched measurement
outcomes enables Alice and Bob to detect the presence of Eve with 
higher probability for the so-called intercept and resend attack.
Furthermore, some literatures use the mismatched measurement outcomes
to ensure the quantum channel to be a Pauli channel
\cite{bruss:03,liang:03, kaszlikowski:05, kaszlikowski:05b},
where a Pauli channel is a channel over which four kinds of 
Pauli errors (including the identity) occur probabilistically.
However the quantum channel is not necessarily a Pauli channel
in general. One of the aims of this thesis is to convince
the readers that the non-Pauli channels deserve consideration
in the research of the QKD protocols
as well as the Pauli channel.

\section{Summary}
\label{sec:summary}

The QKD protocols consists of three phases: the bit
transmission phase, the channel estimation phase, and the
postprocessing phase.
The role of the channel estimation phase is to
estimate the amount of Eve's ambiguity about
the bit sequence transmitted in the bit transmission
phase. According to the estimated amount of
Eve's ambiguity, we decide the key generation rate
and conduct the postprocessing to share a secret key.

In the conventional estimation procedure, 
we do not use the mismatched measurement outcomes.
By using the mismatched measurement outcomes in addition
to the samples from the matched measurement outcomes, we 
can improve the key generation rate of the QKD protocols. 
This topic is investigated 
in Chapter \ref{ch:channel-estimation}.

In the conventional (two-way)
postprocessing procedure, we transmit a 
message over the public channel redundantly,
which is unnecessary divulging of  information to Eve. 
By transmitting the compressed version of the 
redundantly transmitted message, we can improve
the key generation rate of the QKD protocols.
This topic is investigated 
in Chapter \ref{ch:postprocessing}.



\chapter{Preliminaries}
\label{ch:preliminaries}

In this chapter, we introduce some terminologies and  notations,
and give a brief review of the known results that are used
throughout this thesis.
The first section is devoted to a review of the 
classical information theory \cite{cover} and
the quantum information theory \cite{nielsen-chuang:00,
hayashi-book:06}. 
In the second section, we review the known results on
the privacy amplification, which is the most important 
tool for the security of the QKD protocols.

\section{Elements of Classical and Quantum Information Theory}
\label{sec:basic}

\subsection{Probability Distribution and Density Operator}
\label{subsec:probability-distribution}

\index{probability}
For a finite set ${\cal X}$, let ${\cal P}({\cal X})$ be 
the set of all probability distributions $P$ on ${\cal X}$, i.e.,
$P(x) \ge 0$ for all $x \in {\cal X}$ and 
$\sum_{x \in {\cal X}} P(x) = 1$.
For a sequence $\bol{x} = (x_1, \ldots, x_n) \in {\cal X}^n$,
the type of $\bol{x}$ is the 
empirical probability distribution $P_{\bol{x}} \in {\cal P}({\cal X})$ defined by
\begin{eqnarray*}
P_{\bol{x}}(a) := \frac{ | \{ i \mid x_i = a \} | }{n}~~~~~~\mbox{for }
a \in {\cal X},
\end{eqnarray*}
where $|A|$ is the cardinality of a set $A$.

For a finite-dimensional Hilbert space ${\cal H}$, let 
${\cal P}({\cal H})$ be the set of all density  operators
$\rho$ on ${\cal H}$, i.e., $\rho$ is non-negative and
normalized, $\rom{Tr} \rho = 1$.
Mathematically, a state of a quantum mechanical system with
$d$-degree of freedom is represented by a density operator on ${\cal H}$
with $\dim {\cal H} = d$.
Throughout the thesis, we occasionally call $\rho$ a state
and ${\cal H}$ a system.
For Hilbert spaces ${\cal H}_A$ and ${\cal H}_B$, the set of all density 
operators ${\cal P}({\cal H}_A \otimes {\cal H}_B)$ on the tensor product space
${\cal H}_A \otimes {\cal H}_B$ is defined in a similar manner.
In Section \ref{sec:privacy-amplification}, we occasionally treat 
non-normalized non-negative operators. For this reason,
we denote the set of all non-negative operators
on a system ${\cal H}$ (and a composite system ${\cal H}_A \otimes {\cal
H}_B$)
by ${\cal P}^\prime({\cal H})$ (and 
${\cal P}^\prime({\cal H}_A \otimes {\cal H}_B)$).

The classical random variables can be regarded as a special case of
the quantum states. For a random variable $X$ with a distribution
$P_X \in {\cal P}({\cal X})$, let 
\begin{eqnarray*}
\rho_X := \sum_{x \in {\cal X}} P_X(x) \ket{x}\bra{x},
\end{eqnarray*}
where $\{ \ket{x} \}_{x \in {\cal X}}$ is an orthonormal basis of ${\cal H}_X$.
We call $\rho_X$ the operator representation of the classical
distribution $P_X$.

When a quantum system ${\cal H}_A$ is prepared in a state $\rho^x_A$ according
to a realization $x$ of a random variable $X$ with a probability distribution $P_X$, it is
convenient to describe this situation
by a density operator
\begin{eqnarray}
	\label{eq-definition-cq-state}
\rho_{XA} := \sum_{x \in {\cal X}} P_X(x) \ket{x}\bra{x} \otimes
 \rho_A^x \in {\cal P}({\cal H}_X \otimes {\cal H}_A),
\end{eqnarray}
where $\{ \ket{x} \}_{x \in {\cal X}}$ is an orthonormal basis of 
${\cal H}_X$.
We call the density operator $\rho_{XA}$ a $\{cq\}$-state \cite{devetak:04},
or we say $\rho_{XA}$ is classical on ${\cal H}_X$
with respect to the orthonormal basis $\{ \ket{x} \}_{x \in {\cal X}}$.
We call $\rho_A^x$ a conditional operator. 
When a quantum system ${\cal H}_A$ is prepared in a state $\rho_A^{x,y}$
according to a joint random variable $(X, Y)$ with a probability
distribution $P_{XY}$, a state $\rho_{XYA}$ is defined in a similar
manner, and the state $\rho_{XYA}$ is called a $\{ ccq \}$-state. 
For non-normalized operator 
$\rho_{XA} \in {\cal P}^\prime({\cal H}_X \otimes {\cal H}_A)$,
if we can write $\rho_{XA}$ as in Eq.~(\ref{eq-definition-cq-state}),
we say that $\rho_{XA}$ is classical on ${\cal H}_{X}$
with respect to the orthonormal basis $\{ \ket{x} \}_{x \in {\cal X}}$.
However, it should be noted that the distribution $P_X$ or
conditional operators $\rho_A^x$ are not necessarily normalized
for a non-normalized $\rho_{XA}$.

For a $\{cq\}$-state $\rho_{XA} \in {\cal P}({\cal H}_X \otimes {\cal
H}_A)$, we occasionally consider a density operator such 
that the classical system ${\cal H}_X$ is mapped by
a function $f:{\cal X} \to {\cal Y}$. By setting the
distribution
\begin{eqnarray*}
P_Y(y) = \sum_{x \in {\cal X} \atop f(x) = y} P_X(x)
\end{eqnarray*}
and the density operator
\begin{eqnarray*}
\rho_A^y = \sum_{x \in {\cal X} \atop f(x) = y} P_X(x) \rho_A^x /P_Y(y),
\end{eqnarray*}
we can describe the resulting $\{cq\}$-state as
\begin{eqnarray}
\label{eq:cq-state-mapping}
\rho_{YE} := \sum_{y \in {\cal Y}} P_Y(y) \ket{y}\bra{y}
  \otimes \rho_A^y. 
\end{eqnarray}

In the quantum mechanics, the most general measurement is
described by the positive operator valued measure
(POVM). A POVM for a system ${\cal H}$
consists of the set ${\cal A}$ of
measurement outcomes, and the set ${\cal M} = \{M_a \}_{a \in {\cal A}}$
of positive operators indexed by the set ${\cal A}$.
For a state $\rho \in {\cal P}({\cal H})$, the probability
distribution of the measurement outcomes is given by
\begin{eqnarray*}
P(a) = \rom{Tr}[ \rho M_a].
\end{eqnarray*}

In the quantum mechanics, the most general state evolution of a
quantum mechanical system is described by a
completely positive (CP) map. It can be shown that
any CP map ${\cal E}$ can be written as
\begin{eqnarray}
	\label{eq-kraus-operator}
{\cal E}(\rho) = \sum_{a \in {\cal A}} E_a \rho E_a^*
\end{eqnarray}
for a family of linear operators
$\{ E_a \}_{a \in {\cal A}}$ from
the initial system ${\cal H}$ to the destination system 
${\cal H}^\prime$, where ${\cal A}$ is the index set.
We usually require the map to be trace preserving (TP),
i.e., $\sum_{a \in {\cal A}} E_a^* E_a = \rom{id}_{\cal H}$,
but if a state evolution involves a selection of states by a measurement,
then the corresponding CP map is not necessarily trace preserving, i.e.,
$\sum_{a \in {\cal A}} E_a^* E_a \le \rom{id}_{\cal H}$.

\subsection{Distance and Fidelity}
\label{subsec:distance}

In this thesis, we use two kinds of distances.
One is the variational distance of ${\cal P}({\cal X})$.
For non-negative functions 
$P, P^\prime \in {\cal P}({\cal X})$, the variational distance between
$P$ and $P^\prime$ is defined by
\begin{eqnarray*}
\| P - P^\prime \| := \sum_{x \in {\cal X}} | P(x) - P^\prime(x) |.
\end{eqnarray*}
The other distance used in this paper is 
the trace distance of ${\cal P}^\prime({\cal H})$.
For non-negative operators $\rho, \sigma \in {\cal P}^\prime({\cal H})$, the trace distance
between $\rho$ and $\sigma$ is defined by
\begin{eqnarray*}
\| \rho - \sigma \| := \rom{Tr} | \rho - \sigma |,
\end{eqnarray*}
where $|A| := \sqrt{A^* A}$ for a operator on ${\cal H}$, and 
$A^*$ is the adjoint operator of $A$.
The following lemma states that the trace distance 
between (not necessarily normalized operators) does not 
increase by applying a CP map, and it is used several times 
in this paper.
\begin{lemma}
\label{lemma:monotonicity-of-trace-distance}
\cite[Lemma A.2.1]{renner:05b}
Let $\rho, \rho^\prime \in {\cal P}^\prime({\cal H})$ and let 
${\cal E}$ be a trace-non-increasing CP map, i.e., ${\cal E}$ satisfies
$\rom{Tr}{\cal E}(\sigma) \le \rom{Tr} \sigma$ for any
$\sigma \in {\cal P}^\prime({\cal H})$. Then we have
\begin{eqnarray*}
\| {\cal E}(\rho) - {\cal E}(\rho^\prime) \| \le \| \rho - \rho^\prime \|.
\end{eqnarray*}
\end{lemma}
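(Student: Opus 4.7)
The plan is to reduce the inequality to the Jordan (Hahn) decomposition of the Hermitian operator $\rho-\rho'$ together with the triangle inequality for the trace norm. First I would observe that $\rho-\rho'$ is self-adjoint (both operands are positive), so it admits a decomposition $\rho-\rho' = \Delta_+ - \Delta_-$ with $\Delta_+, \Delta_- \in {\cal P}^\prime({\cal H})$ and orthogonal supports. From the spectral characterization of $|A|$ this gives the identity
\begin{eqnarray*}
\| \rho - \rho^\prime \| = \rom{Tr}\, \Delta_+ + \rom{Tr}\, \Delta_-.
\end{eqnarray*}

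Next I would apply ${\cal E}$ to the decomposition, obtaining ${\cal E}(\rho) - {\cal E}(\rho') = {\cal E}(\Delta_+) - {\cal E}(\Delta_-)$, and then invoke the triangle inequality for $\|\cdot\|$:
\begin{eqnarray*}
\| {\cal E}(\rho) - {\cal E}(\rho^\prime) \| \le \| {\cal E}(\Delta_+) \| + \| {\cal E}(\Delta_-) \|.
\end{eqnarray*}
Because ${\cal E}$ is (completely) positive, each ${\cal E}(\Delta_\pm)$ is itself non-negative, so its trace norm coincides with its trace: $\| {\cal E}(\Delta_\pm) \| = \rom{Tr}\, {\cal E}(\Delta_\pm)$. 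The hypothesis that ${\cal E}$ is trace-non-increasing then yields $\rom{Tr}\, {\cal E}(\Delta_\pm) \le \rom{Tr}\, \Delta_\pm$, and summing these two bounds reproduces $\rom{Tr}\, \Delta_+ + \rom{Tr}\, \Delta_-$, which is exactly $\| \rho - \rho' \|$.

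The only step I expect to require any care is the passage from the triangle inequality to the bound involving traces; in the standard trace-preserving statement of this lemma one has equality $\rom{Tr}\, {\cal E}(\Delta_\pm) = \rom{Tr}\, \Delta_\pm$, whereas here the trace-non-increasing assumption weakens this to an inequality, but crucially in the \emph{correct} direction for our bound. No other subtlety arises, since we are not normalizing and we are not assuming ${\cal E}$ is trace preserving; the proof goes through verbatim for the enlarged class ${\cal P}^\prime({\cal H})$ of subnormalized non-negative operators.
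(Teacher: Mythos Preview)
Your argument is correct and is the standard proof of this monotonicity lemma. Note, however, that the paper does not actually prove this statement: it is quoted with a citation to \cite[Lemma A.2.1]{renner:05b} and used as a known result, so there is no in-paper proof to compare against.
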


The following lemma states that, for a $\{cq\}$-state 
$\rho_{XB}$, if two classical messages $v$ and $\bar{v}$ are
computed from $x$ and they are equal with high probability,
then the $\{ccq\}$ state $\rho_{XVB}$ and $\rho_{X\bar{V}B}$ that
involve computed classical messages $v$ and $\bar{v}$ 
are close with respect to the trace distance.
\begin{lemma}
	\label{lemma-error-prob-continuity}
Let 
\begin{eqnarray*}
\rho_{XB} := \sum_{x \in {\cal X}} P_X(x) \ket{x}\bra{x} \otimes \rho_B^x
\end{eqnarray*}
be a $\{cq\}$-state, and let $V := f(X)$ for a function $f$ and
$\bar{V} := g(X)$ for a function $g$. Assume that 
\begin{eqnarray*}
\Pr\{ V \neq \overline{V} \} =
\sum_{\scriptstyle x \in {\cal X}  \atop
f(x) \neq g(x)} P_X(x) 
\le \varepsilon.
\end{eqnarray*}
Then, for $\{ ccq\}$-states
\begin{eqnarray*}
\rho_{XVB} := \sum_{x \in {\cal X} }
P_X(x) \ket{x}\bra{x}  \otimes \ket{f(x)}\bra{f(x)}
\otimes \rho_B^x
\end{eqnarray*}
and 
\begin{eqnarray*}
\rho_{X\overline{V}B} := \sum_{x \in {\cal X} }
P_X(x) \ket{x}\bra{x}  \otimes \ket{g(x)}\bra{g(x)}
\otimes \rho_B^x,
\end{eqnarray*}
we have
\begin{eqnarray*}
\| \rho_{XVB} - \rho_{X\overline{V}B} \| \le 2 \varepsilon.
\end{eqnarray*}
\end{lemma}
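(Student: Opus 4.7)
The plan is to exploit the fact that both $\{ccq\}$-states are simultaneously classical (indeed block-diagonal) on $\mathcal{H}_X$, so the trace norm of their difference decomposes as a sum of trace norms of blocks indexed by $x$. Concretely, I would first write
\begin{eqnarray*}
\rho_{XVB} - \rho_{X\overline{V}B}
= \sum_{x \in \mathcal{X}} P_X(x)\, \ket{x}\bra{x} \otimes
  \bigl( \ket{f(x)}\bra{f(x)} - \ket{g(x)}\bra{g(x)} \bigr) \otimes \rho_B^x,
\end{eqnarray*}
and observe that terms with $f(x)=g(x)$ vanish, so only the contributions from $x$ with $f(x)\neq g(x)$ remain.

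Next, since the operators $\ket{x}\bra{x}$ are mutually orthogonal rank-one projectors on $\mathcal{H}_X$, the overall operator is block-diagonal on $\mathcal{H}_X$, and the trace norm factors through the sum:
\begin{eqnarray*}
\| \rho_{XVB} - \rho_{X\overline{V}B} \|
= \sum_{\scriptstyle x \in \mathcal{X} \atop f(x)\neq g(x)}
  P_X(x)\, \bigl\| \ket{f(x)}\bra{f(x)} - \ket{g(x)}\bra{g(x)} \bigr\| \cdot \rom{Tr}\rho_B^x,
\end{eqnarray*}
using the multiplicativity $\| A \otimes B\| = \|A\|\cdot\|B\|$ of the trace norm under tensor products.

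Then I would evaluate each factor: $\rom{Tr}\rho_B^x = 1$ since $\rho_B^x$ is a density operator, and for $f(x)\neq g(x)$ the two vectors $\ket{f(x)},\ket{g(x)}$ are orthogonal, so the difference $\ket{f(x)}\bra{f(x)} - \ket{g(x)}\bra{g(x)}$ has eigenvalues $+1$ and $-1$ (the rest being $0$), giving trace norm equal to $2$. Substituting these values yields
\begin{eqnarray*}
\| \rho_{XVB} - \rho_{X\overline{V}B} \|
= 2 \sum_{\scriptstyle x \in \mathcal{X} \atop f(x)\neq g(x)} P_X(x)
= 2\,\Pr\{V \neq \overline{V}\} \le 2\varepsilon,
\end{eqnarray*}
which is the claim. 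There is no real obstacle here; the only small point to be careful about is justifying the block-diagonal decomposition of the trace norm over the orthogonal subspaces labeled by $x$, which follows from the spectral decomposition of a self-adjoint operator of the form $\sum_x \ket{x}\bra{x} \otimes T_x$ with self-adjoint $T_x$.
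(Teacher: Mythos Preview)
Your proof is correct and is essentially the same as the paper's: both decompose the difference over the orthogonal $\ket{x}\bra{x}$ blocks, use multiplicativity of the trace norm under tensor products, and evaluate $\|\ket{f(x)}\bra{f(x)} - \ket{g(x)}\bra{g(x)}\| = 2(1-\delta_{f(x),g(x)})$. The only cosmetic difference is that you spell out the block-diagonality justification and write $\rom{Tr}\rho_B^x$ where the paper writes $\|\rho_B^x\|$ (these agree since $\rho_B^x \ge 0$).
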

\begin{proof}
We have
\begin{eqnarray*}
\lefteqn{ \| \rho_{XVB} - \rho_{X\overline{V}B} \| } \\
&=& \sum_{x \in {\cal X} } P_X(x)
\| \ket{x}\bra{x} \| 
\cdot \| \ket{f(x)}\bra{f(x)} - \ket{g(x)}\bra{g(x)} \|
\cdot \| \rho_B^x \| \\
&=& \sum_{x \in {\cal X} } P_X(x)
\cdot 2(1- \delta_{f(x), g(x)}) \\
&\le& 2 \varepsilon,
\end{eqnarray*}
where $\delta_{a,b} = 1$ if $a=b$ and $\delta_{a,b} = 0$ if $a \neq b$. 
\end{proof}

The fidelity between two (not necessarily normalized) operators 
$\rho, \sigma \in {\cal P}^\prime({\cal H})$
is defined by
\begin{eqnarray*}
F(\rho, \sigma) := \rom{Tr} \sqrt{\sqrt{\rho}\sigma \sqrt{\rho}}.
\end{eqnarray*}
The following lemma is an extension of Uhlmann's theorem to 
non-normalized operators $\rho$ and $\sigma$.
\begin{lemma}
\cite[Theorem A.1.2]{renner:05b}
	\label{uhlman-theorem}
Let $\rho, \sigma \in {\cal P}^\prime({\cal H})$,
and let $\ket{\psi} \in {\cal H}_R \otimes {\cal H}$ be a purification of $\rho$. Then
\begin{eqnarray*}
F(\rho, \sigma) = \max_{\ket{\phi}\bra{\phi}} F(\ket{\psi}\bra{\psi}, \ket{\phi}\bra{\phi}),
\end{eqnarray*}
where the maximum is taken over all purifications 
$\ket{\phi} \in {\cal H}_R \otimes {\cal H}$ of $\sigma$.
\end{lemma}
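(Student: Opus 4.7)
The plan is to reduce the extended statement to the standard (normalized) Uhlmann theorem by a simple homogeneity/rescaling argument. Assuming for the moment that both $\rho$ and $\sigma$ are nonzero, I would define the normalized states $\tilde{\rho} := \rho/\rom{Tr}\rho$ and $\tilde{\sigma} := \sigma/\rom{Tr}\sigma$. Any purification $\ket{\psi} \in {\cal H}_R \otimes {\cal H}$ of $\rho$ satisfies $\rom{Tr}_R \ket{\psi}\bra{\psi} = \rho$, which forces $\| \ket{\psi} \|^2 = \rom{Tr}\rho$; thus $\ket{\tilde{\psi}} := \ket{\psi}/\sqrt{\rom{Tr}\rho}$ is a normalized purification of $\tilde{\rho}$, and the same rescaling sets up a bijection between purifications of $\sigma$ in ${\cal H}_R \otimes {\cal H}$ and normalized purifications of $\tilde{\sigma}$.

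Next I would record the elementary homogeneity property of the fidelity, namely $F(a \rho, b \sigma) = \sqrt{ab}\, F(\rho, \sigma)$ for $a, b \ge 0$. This is immediate from the definition, since $\sqrt{a\rho} = \sqrt{a}\sqrt{\rho}$ for $a \ge 0$, so $\sqrt{\sqrt{a\rho}\, b\sigma \, \sqrt{a\rho}} = \sqrt{ab}\, \sqrt{\sqrt{\rho}\sigma\sqrt{\rho}}$, and the trace scales accordingly. The same observation applied to the rank-one operators $\ket{\psi}\bra{\psi}$ and $\ket{\phi}\bra{\phi}$ gives $F(\ket{\psi}\bra{\psi}, \ket{\phi}\bra{\phi}) = |\braket{\psi}{\phi}|$, independently of normalization.

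Now I would apply the standard (normalized) Uhlmann theorem to the pair $(\tilde{\rho}, \tilde{\sigma})$ with the distinguished normalized purification $\ket{\tilde{\psi}}$, obtaining $F(\tilde{\rho}, \tilde{\sigma}) = \max_{\ket{\tilde{\phi}}} |\braket{\tilde{\psi}}{\tilde{\phi}}|$, where the maximum runs over normalized purifications of $\tilde{\sigma}$. Multiplying both sides by $\sqrt{\rom{Tr}\rho \cdot \rom{Tr}\sigma}$ and using the homogeneity of $F$ on the left and the bijection from the first paragraph on the right, the right-hand side becomes $\max_{\ket{\phi}} |\braket{\psi}{\phi}| = \max_{\ket{\phi}\bra{\phi}} F(\ket{\psi}\bra{\psi}, \ket{\phi}\bra{\phi})$ over purifications $\ket{\phi}$ of $\sigma$, which is the desired identity.

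The only real obstacle is bookkeeping in the degenerate case $\rho = 0$ or $\sigma = 0$, where one cannot normalize. In either case the trace under the square root in the definition of $F$ vanishes so $F(\rho, \sigma) = 0$; moreover the purification condition forces the purifying vector to be zero, so the only admissible $\ket{\psi}$ or $\ket{\phi}$ is the zero vector, whence $\max |\braket{\psi}{\phi}| = 0$ as well, and the claimed equality holds trivially. Handling this edge case separately completes the argument.
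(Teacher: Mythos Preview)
Your reduction to the normalized Uhlmann theorem via homogeneity is correct and is the natural way to obtain this extension; the degenerate cases are handled properly. Note, however, that the paper does not supply its own proof of this lemma: it is quoted verbatim as \cite[Theorem A.1.2]{renner:05b} and used as a black box, so there is no in-paper argument to compare against.
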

The trace distance and the fidelity have close relationship.
If the trace distance
between two non-negative operators $\rho$ and $\sigma$ is close to $0$,
then the fidelity between $\rho$ and $\sigma$ is close to $1$,
and vise versa.
\begin{lemma}
\cite[Lemma A.2.4]{renner:05b}
	\label{upper-bound-by-fidelity}
Let $\rho, \sigma \in {\cal P}^\prime({\cal H})$. Then, we have
\begin{eqnarray*}
\| \rho - \sigma \| \le 
\sqrt{(\rom{Tr}\rho + \rom{Tr} \sigma)^2 - 4 F(\rho, \sigma)^2 }.
\end{eqnarray*}
\end{lemma}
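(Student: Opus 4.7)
The plan is to reduce to the pure-state case via Uhlmann's theorem (Lemma \ref{uhlman-theorem}), where the inequality actually becomes equality, and then to transfer back by the monotonicity of the trace distance under partial trace (Lemma \ref{lemma:monotonicity-of-trace-distance}).

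First, I would fix arbitrary purifications $\ket{\psi}, \ket{\phi} \in {\cal H}_R \otimes {\cal H}$ of $\rho$ and $\sigma$. By (the extended) Uhlmann's theorem, these purifications can be chosen so that $F(\rho, \sigma) = F(\ket{\psi}\bra{\psi}, \ket{\phi}\bra{\phi}) = |\braket{\psi}{\phi}|$. Note that since $\rho = \rom{Tr}_R \ket{\psi}\bra{\psi}$ we have $\rom{Tr}\rho = \braket{\psi}{\psi}$, and similarly $\rom{Tr}\sigma = \braket{\phi}{\phi}$, so the two traces on the right-hand side are preserved when passing to the purifications.

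The heart of the argument is then a direct computation of the trace distance between two (possibly unnormalized) pure states $\ket{\psi}\bra{\psi} - \ket{\phi}\bra{\phi}$. This is a Hermitian operator of rank at most $2$, supported on $\rom{span}\{\ket{\psi},\ket{\phi}\}$, so its two (possibly nonzero) eigenvalues $\lambda_1, \lambda_2$ are determined by the two invariants $\lambda_1 + \lambda_2 = \rom{Tr}\rho - \rom{Tr}\sigma$ and $\lambda_1^2 + \lambda_2^2 = (\rom{Tr}\rho)^2 + (\rom{Tr}\sigma)^2 - 2|\braket{\psi}{\phi}|^2$. Solving for $\lambda_1 \lambda_2$ and using $|\lambda_1| + |\lambda_2| = \sqrt{(\lambda_1+\lambda_2)^2 - 4\lambda_1\lambda_2}$ yields the clean identity
\begin{eqnarray*}
\| \ket{\psi}\bra{\psi} - \ket{\phi}\bra{\phi} \|
= \sqrt{(\rom{Tr}\rho + \rom{Tr}\sigma)^2 - 4 |\braket{\psi}{\phi}|^2}.
\end{eqnarray*}

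Finally, the partial trace over ${\cal H}_R$ is trace preserving and completely positive, hence in particular trace non-increasing, so Lemma \ref{lemma:monotonicity-of-trace-distance} gives
\begin{eqnarray*}
\| \rho - \sigma \|
= \| \rom{Tr}_R \ket{\psi}\bra{\psi} - \rom{Tr}_R \ket{\phi}\bra{\phi} \|
\le \| \ket{\psi}\bra{\psi} - \ket{\phi}\bra{\phi} \|,
\end{eqnarray*}
and substituting the identity above together with the Uhlmann choice $|\braket{\psi}{\phi}| = F(\rho,\sigma)$ yields the claim.

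The main (and only nontrivial) obstacle is the explicit eigenvalue computation for the rank-two difference of unnormalized pure states; the non-normalization is what makes it slightly less standard than the textbook normalized case, but the two-invariant trick makes it a short calculation. Everything else is a direct application of the two lemmas cited above.
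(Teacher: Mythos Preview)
Your argument is correct and is essentially the standard proof (and the one in Renner's thesis, which the paper merely cites without reproducing a proof). The only step you leave implicit is why $|\lambda_1|+|\lambda_2| = \sqrt{(\lambda_1+\lambda_2)^2 - 4\lambda_1\lambda_2}$ holds here: this identity requires $\lambda_1\lambda_2 \le 0$, which follows from Cauchy--Schwarz since $\lambda_1\lambda_2 = |\braket{\psi}{\phi}|^2 - \braket{\psi}{\psi}\braket{\phi}{\phi} \le 0$.
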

\begin{lemma}
\cite[Lemma A.2.6]{renner:05b}
	\label{lower-bound-by-fidelity}
Let $\rho, \sigma \in {\cal P}^\prime({\cal H})$. Then, we have
\begin{eqnarray*}
\rom{Tr}\rho + \rom{Tr} \sigma - 2 F(\rho, \sigma) 
\le \| \rho - \sigma \|.
\end{eqnarray*}
\end{lemma}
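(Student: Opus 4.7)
The plan is to sandwich the quantity of interest between the Hilbert--Schmidt distance of the square roots and the trace distance, namely to prove the chain
\begin{eqnarray*}
\rom{Tr}\rho + \rom{Tr}\sigma - 2 F(\rho,\sigma) \;\le\; \rom{Tr}\bigl[(\sqrt{\rho}-\sqrt{\sigma})^2\bigr] \;\le\; \|\rho - \sigma\|.
\end{eqnarray*}
For the first inequality I would expand $\rom{Tr}\bigl[(\sqrt{\rho}-\sqrt{\sigma})^2\bigr] = \rom{Tr}\rho + \rom{Tr}\sigma - 2 \rom{Tr}(\sqrt{\rho}\sqrt{\sigma})$. Cyclicity together with $(\sqrt{\rho}\sqrt{\sigma})^* = \sqrt{\sigma}\sqrt{\rho}$ shows that $\rom{Tr}(\sqrt{\rho}\sqrt{\sigma})$ is real, and the elementary bound $\rom{Tr}(A) \le \rom{Tr}|A|$ combined with $\rom{Tr}|\sqrt{\rho}\sqrt{\sigma}| = \rom{Tr}\sqrt{\sqrt{\sigma}\rho\sqrt{\sigma}} = F(\rho,\sigma)$ closes this step.

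The second inequality is a Powers--Stormer-type bound and forms the technical core. Setting $D := \sqrt{\rho}-\sqrt{\sigma}$, I would Jordan-decompose $D = D_+ - D_-$ with $D_\pm \ge 0$ supported on mutually orthogonal subspaces, and let $\Pi_\pm$ be the spectral projections onto those supports. Since $\Pi_+ \Pi_- = 0$, the Hermitian operator $\Pi_+ - \Pi_-$ has eigenvalues in $\{-1,0,+1\}$, so its operator norm is at most $1$, and the standard trace-norm/operator-norm duality gives
\begin{eqnarray*}
\rom{Tr}\bigl[(\Pi_+ - \Pi_-)(\rho - \sigma)\bigr] \;\le\; \|\rho - \sigma\|.
\end{eqnarray*}

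To lower-bound this left-hand side by $\rom{Tr}(D^2)$, I would use the identity $\rho - \sigma = \sqrt{\rho}\,D + D\,\sqrt{\sigma}$ together with the relations $\Pi_+ D = D \Pi_+ = D_+$ and $\Pi_- D = D \Pi_- = -D_-$. Cyclicity of trace and the substitution $\sqrt{\rho} = \sqrt{\sigma} + D$ (with $D_+ D_- = 0$) give, after routine simplification,
\begin{eqnarray*}
\rom{Tr}\bigl[\Pi_+(\rho-\sigma)\bigr] &=& 2\,\rom{Tr}(D_+\sqrt{\sigma}) + \rom{Tr}(D_+^2), \\
\rom{Tr}\bigl[\Pi_-(\sigma-\rho)\bigr] &=& 2\,\rom{Tr}(D_-\sqrt{\rho}) + \rom{Tr}(D_-^2).
\end{eqnarray*}
Both cross terms are traces of products of positive semidefinite operators and hence non-negative, so adding the two displays and using $D^2 = D_+^2 + D_-^2$ yields $\rom{Tr}[(\Pi_+ - \Pi_-)(\rho-\sigma)] \ge \rom{Tr}(D^2)$, closing the chain.

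I expect the main obstacle to be precisely this Powers--Stormer step. A purification-based argument of the kind used for Lemma \ref{upper-bound-by-fidelity} is not available here, because Lemma \ref{lemma:monotonicity-of-trace-distance} shows that passing from $\rho,\sigma$ to their purifications can only \emph{increase} the trace distance---the wrong direction for a \emph{lower} bound on $\|\rho-\sigma\|$. It is precisely the anticommutator identity $\rho - \sigma = \sqrt{\rho}\,D + D\,\sqrt{\sigma}$ paired with the sign-projection $\Pi_+ - \Pi_-$ that supplies the correct direction, without appealing to operator monotonicity of $\sqrt{\cdot}$ or to a variational characterization of $F(\rho,\sigma)$ beyond the definition given in Section \ref{sec:basic}.
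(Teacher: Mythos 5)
Your argument is correct, and it is worth noting at the outset that the thesis offers no proof of this lemma at all---it is imported verbatim from Renner's thesis as Lemma A.2.6---so your write-up is a genuinely self-contained addition rather than a reconstruction. Both halves of your chain check out. For the first half, $\rom{Tr}(\sqrt{\rho}\sqrt{\sigma})$ is a trace of a product of positive semidefinite operators, hence real and bounded by $\rom{Tr}|\sqrt{\rho}\sqrt{\sigma}| = \rom{Tr}\sqrt{\sqrt{\sigma}\rho\sqrt{\sigma}}$, and this equals $F(\rho,\sigma)=\rom{Tr}\sqrt{\sqrt{\rho}\sigma\sqrt{\rho}}$ because $BB^*$ and $B^*B$ share their nonzero spectrum for $B=\sqrt{\sigma}\sqrt{\rho}$ (a symmetry you use implicitly and might state). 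For the Powers--Størmer half, the identities $\rho-\sigma=\sqrt{\rho}\,D+D\sqrt{\sigma}$, $\Pi_+D=D\Pi_+=D_+$, $\Pi_-D=D\Pi_-=-D_-$, and $D_+D_-=0$ do yield exactly your two displays after cyclicity, the cross terms $\rom{Tr}(D_+\sqrt{\sigma})$ and $\rom{Tr}(D_-\sqrt{\rho})$ are nonnegative as traces of products of positive operators, and since $\Pi_+-\Pi_-$ is Hermitian with eigenvalues in $\{-1,0,1\}$ the duality bound $\rom{Tr}[(\Pi_+-\Pi_-)(\rho-\sigma)]\le\|\rho-\sigma\|$ holds. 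Your closing remark is also apt: a purification argument cannot work here because Lemma \ref{lemma:monotonicity-of-trace-distance} pushes the trace distance in the wrong direction. A shorter alternative route, if one is willing to invoke the measurement characterizations $F(\rho,\sigma)=\min_{\cal M}\sum_x\sqrt{P(x)Q(x)}$ and $\|\rho-\sigma\|=\max_{\cal M}\sum_x|P(x)-Q(x)|$, reduces everything to the scalar inequality $(\sqrt{a}-\sqrt{b})^2\le|a-b|$; your operator-level argument trades that extra machinery for the explicit Jordan decomposition, which is a reasonable exchange given the toolkit available in Section \ref{sec:basic}.
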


\subsection{Entropy and its Related Quantities}
\label{subsec:entropic-measures}

For a random variable $X$ on ${\cal X}$ with
a probability distribution $P_X \in {\cal P}({\cal X})$, the entropy
of $X$ is defined by
\begin{eqnarray*}
H(X) = H(P_X) := - \sum_{x \in {\cal X}} P_X(x) \log P_X(x),
\end{eqnarray*}
where we assume the base of $\log$ is $2$
throughout the thesis.
Especially for a real number $0 \le p \le 1$,
the binary entropy function is defined by 
\begin{eqnarray*}
h(p) := - p \log p - (1-p) \log (1-p).
\end{eqnarray*}
Similarly, for a joint random variables $X$ and $Y$ with
a joint  probability distribution $P_{XY} \in {\cal P}({\cal X} \times {\cal Y})$,
the joint entropy of $X$ and $Y$ is
\begin{eqnarray*}
H(XY) &=& H(P_{XY}) \\
&:=& - \sum_{(x,y) \in {\cal X} \times {\cal Y}}
P_{XY}(x,y) \log P_{XY}(x,y).
\end{eqnarray*}
The conditional entropy of $X$ given $Y$ is defined by
\begin{eqnarray*}
H(X|Y) := H(XY) - H(Y).
\end{eqnarray*}
The mutual information between the joint random variables
$X$ and $Y$ is defined by
\begin{eqnarray*}
I(X;Y) := H(X) + H(Y) - H(XY).
\end{eqnarray*}

For a quantum state $\rho \in {\cal P}({\cal H})$, the von Neumann entropy
of the system is defined by
\begin{eqnarray*}
H(\rho) := - \rom{Tr} \rho \log \rho.
\end{eqnarray*}
For a quantum state $\rho_{AB} \in {\cal P}({\cal H}_A \otimes {\cal H}_B)$
of the composite system, the von Neumann entropy of the composite system
is $H(\rho_{AB})$.
The conditional von Neaumann entropy of the system $A$ given the system $B$
is defined by
\begin{eqnarray*}
H_{\rho}(A|B) := H(\rho_{AB}) - H(\rho_B),
\end{eqnarray*}
where $\rho_B = \rom{Tr}_A [ \rho_{AB} ]$ is the partial trace of
$\rho_{AB}$ over the system $A$.
The quantum mutual information between the system $A$ and $B$
is defined by
\begin{eqnarray*}
I_{\rho}(A;B) := H(\rho_A) + H(\rho_B) - H(\rho_{AB}).
\end{eqnarray*}
It should be noted that, for $\{cq\}$-state $\rho_{XA}$, 
the quantum mutual information coincides with the
Holevo information, i.e.,
\begin{eqnarray*}
I_{\rho}(X;A) = H(\rho_A) - \sum_{x \in {\cal X}} P_X(x) H(\rho_A^x).
\end{eqnarray*}

\begin{remark}
    \label{remark-convention-2}
In this paper, we denote $\rho_A$ for  $\rom{Tr}_{B}[ \rho_{AB}]$ or
$\rho_{B}$ for $\rom{Tr}_{AC}[ \rho_{ABC}]$ e.t.c. without
declaring them if they are obvious from the context. 
\end{remark}

\subsection{Bloch Sphere, Choi Operator, and Stokes Parameterization}
\label{subsec:bloch-stokes}

In this section, we first introduce the Bloch sphere, which is a 
parameterization of the set ${\cal P}({\cal H})$ of density operators
on two-dimensional space (qubit). Then, we 
introduce the Choi operator for the qubit channel and
its Stokes parameterization.

Let 
\begin{eqnarray*}
\sigma_\san{x} := \left[\begin{array}{cc}
  0 & 1 \\ 1 & 0 \end{array} \right],~~~ 
\sigma_\san{y} := \left[\begin{array}{cc}
  0 & - \bol{i} \\ \bol{i} & 0 \end{array} \right],~~~
\sigma_\san{z} := \left[\begin{array}{cc}
 1 & 0 \\ 0 & -1 \end{array}\right] 
\end{eqnarray*} 
be the Pauli operators, and let $\sigma_\san{i} = I$
be the identity operator on the qubit. 
Then, the set $\{ \sigma_\san{i}, \sigma_\san{x}, \sigma_\san{y},
\sigma_\san{z}\}$
form a basis of the set ${\cal L}({\cal H})$ of all operators
on ${\cal H}$. Furthermore, we have
\begin{eqnarray}
\label{eq:bloch-sphere}
{\cal P}({\cal H}) = \left\{
\frac{1}{2} \left[ \begin{array}{cc}
1 + \theta_\san{z} & \theta_\san{x} - \bol{i} \theta_\san{y} \\
\theta_\san{x} + \bol{i} \theta_\san{y} & 1 - \theta_\san{z}
\end{array}\right]
\mymid  \theta_\san{x}^2 + \theta_\san{y}^2 + \theta_\san{z}^2 \le 1
\right\},
\end{eqnarray}
that is, there is one-to-one correspondence between a qubit density
operator and a (column) vector\footnote{For a reason 
clarified in Section \ref{sec:example},
we denote the coordinate in this order.} 
$\theta = [\theta_\san{z}, \theta_\san{x}, \theta_\san{y}]^T$
within the unit sphere, which is called the 
Bloch sphere \cite{nielsen-chuang:00}.
By a straightforward calculation, 
we can find that the von Neumann entropy of
the density operator $\rho$ that corresponds to
the vector $\theta = [\theta_\san{z},\theta_\san{x},\theta_\san{y}]^T$
is 
\begin{eqnarray}
\label{eq:entropy-of-bloch}
H(\rho) = h\left( \frac{1 + \| \theta \| }{2} \right),
\end{eqnarray}
where $\| \theta\|$ is the Euclidian norm of the vector $\theta$.

Let ${\cal W}({\cal H}_A, {\cal H}_B)$ be the set of all
TPCP maps (see Section \ref{subsec:probability-distribution})
from ${\cal P}({\cal H}_A)$ 
to ${\cal P}({\cal H}_B)$, where we set ${\cal H}_A = {\cal H}_B$ 
as qubit. Let
\begin{eqnarray}
\label{eq:definition-of-maximally-entangled}
\ket{\psi} := \frac{\ket{00} + \ket{11}}{\sqrt{2}}
\end{eqnarray}
be a maximally entangled state on the composite system
${\cal H}_A \otimes {\cal H}_B$. 
Then, we define the set 
${\cal P}_c \subset {\cal P}({\cal H}_A
\otimes {\cal H}_B)$ such as any element 
$\rho \in {\cal P}_c$ 
satisfies $\rom{Tr}_B[\rho] = I/2$.
It is well known that \cite{choi:75, fujiwara:99}
there is one-to-one correspondence between the set 
${\cal W}({\cal H}_A, {\cal H}_B)$  
and the set ${\cal P}_c$ via
the map
\begin{eqnarray*}
{\cal W}({\cal H}_A, {\cal H}_B) \ni {\cal E} \mapsto
  \rho_{AB} := (\rom{id} \otimes {\cal E})(\psi) \in 
  {\cal P}_c.
\end{eqnarray*}
The operator $\rho_{AB}$ is also known as the (normalized) Choi
operator \cite{choi:75}.

For a Choi operator $\rho_{AB} \in {\cal P}_c$,
let
\begin{eqnarray}
\label{eq:def-stokes-R}
R_{\san{ba}} &:=& \rom{Tr}[ \rho_{AB} (\bar{\sigma}_\san{a} \otimes
 \sigma_\san{b})] 
\end{eqnarray}
and 
\begin{eqnarray}
\label{eq:def-stokes-t}
t_{\san{b}} &:=& \rom{Tr}[ \rho_{AB} (I \otimes \sigma_{\san{b}})]
\end{eqnarray}
for $\san{a},\san{b} \in \{\san{z},\san{x},\san{y} \}$,
where $\bar{\sigma}_{\san{a}}$ is the complex 
conjugate of $\sigma_{\san{a}}$.
The pair 
\begin{eqnarray*}
(R, t) := \left(
\left[\begin{array}{ccc}
R_{\san{zz}} & R_{\san{zx}} & R_{\san{zy}} \\
R_{\san{xz}} & R_{\san{xx}} & R_{\san{xy}} \\
R_{\san{yz}} & R_{\san{yx}} & R_{\san{yy}}
\end{array}\right],
\left[\begin{array}{c}
t_{\san{z}} \\ t_{\san{x}} \\ t_{\san{y}}
\end{array}\right] 
\right) 
\end{eqnarray*}
of the matrix and the vector
is called the Stokes parameterization of
the channel ${\cal E}$ and the Choi operator $\rho_{AB}$
\cite{fujiwara:98,fujiwara:99}. 
By a straightforward calculation,
we can find that the channel ${\cal E}$ is equivalent to 
the affine map
\begin{eqnarray*}
\left[ \begin{array}{c}
\theta_{\san{z}} \\ \theta_{\san{x}} \\ \theta_{\san{y}}
\end{array} \right] 
\mapsto
\left[ \begin{array}{ccc}
R_{\san{zz}} & R_{\san{zx}} & R_{\san{zy}} \\
R_{\san{xz}} & R_{\san{xx}} & R_{\san{xy}} \\
R_{\san{yz}} & R_{\san{yx}} & R_{\san{yy}}
\end{array} \right]
\left[ \begin{array}{c}
\theta_{\san{z}} \\ \theta_{\san{x}} \\ \theta_{\san{y}}
\end{array} \right]
+ \left[ \begin{array}{c}
t_{\san{z}} \\ t_{\san{x}} \\ t_{\san{y}} 
\end{array} \right]
\end{eqnarray*}
from the Bloch sphere to itself.

In the rest of this thesis, we identify a Choi operator
and its Stokes parameterization if it is obvious from the
context. For example, $(R,t) \in {\cal A} \subset {\cal P}_c$ means that the Choi operator $\rho_{AB}$ 
corresponding to $(R,t)$ is included in the subset ${\cal A}$.

\section{Privacy Amplification} 
\label{sec:privacy-amplification} 

In this section, we review the privacy amplification. 
First, we review notions of the (smooth) min-entropy and 
the (smooth) max-entropy. The (smooth) min-entropy and the (smooth)
max-entropy are 
useful tool to prove the security of QKD protocols 
\cite{kraus:05, renner:05, renner:05b}. 
Especially, (smooth) min-entropy is much more important,
because it is related to the length of the securely distillable 
key by the privacy amplification.
The privacy amplification \cite{bennett:85,bennett:88, bennett:95} is a technique 
to  distill a secret key from partially secret data, on 
which an adversary might have some information. 
Later, the privacy amplification was extended to the case that 
an adversary have information encoded into a state of a quantum system
\cite{christandl:04, konig:05, renner:05c, renner:05b}.
Most of the following results can be found in \cite[Sections 3 and
5]{renner:05b},
but lemmas without citations are additionally proved 
in the appendix of \cite{watanabe:07}. 
We need Lemma \ref{lemma-weak-monotonicity}
to apply the results in \cite{renner:05b} 
to the QKD protocols with two-way postprocessing
in Chapter \ref{ch:postprocessing}. 
More specifically,
Eq.~(3.22) in \cite[Theorem 3.2.12]{renner:05b} plays an important
role to show a statement similar as Corollary
\ref{lemma-weak-chain-rule} in the case of 
the QKD protocols with one-way postprocessing.
However, the condition of Eq.~(3.22) in \cite[Theorem
3.2.12]{renner:05b}
is too restricted, and cannot be applied 
to the case of the two-way postprocessing
proposed in Chapter \ref{ch:postprocessing}.
Thus, we show Corollary \ref{lemma-weak-chain-rule}
via Lemma \ref{lemma-weak-monotonicity}.
Lemmas \ref{neighbor-purification} and 
\ref{neighbor-purification-classical} are needed
 to prove Lemma \ref{lemma-weak-monotonicity}.

\subsection{Min- and Max- Entropy}

The (smooth) min-entropy and (smooth) max-entropy are formally defined as follows.
\begin{definition}
\cite[Definition 3.1.1]{renner:05b}
Let $\rho_{AB} \in {\cal P}^\prime({\cal H}_A \otimes {\cal H}_B)$ and
$\sigma_B \in {\cal P}({\cal H}_B)$. The min-entropy of 
$\rho_{AB}$ relative to $\sigma_B$ is defined by
\begin{eqnarray*}
H_{\min}(\rho_{AB}|\sigma_B) := - \log \lambda,
\end{eqnarray*}
where $\lambda$ is the minimum real number such that 
$\lambda \cdot \rom{id}_A \otimes \sigma_B - \rho_{AB} \ge 0$,
where $\rom{id}_A$ is the identity operator on ${\cal H}_A$.
When the condition $\rom{supp}(\rho_B) \subset \rom{supp}(\sigma_B)$ does not hold,
there is no $\lambda$ satisfying the condition 
$\lambda \cdot \rom{id}_A \otimes \sigma_B - \rho_{AB} \ge 0$, thus we define
$H_{\min}(\rho_{AB}|\sigma_B) := - \infty$.

The max-entropy of $\rho_{AB}$ relative to $\sigma_B$ is defined by
\begin{eqnarray*}
H_{\max}(\rho_{AB}|\sigma_B) := \log \rom{Tr} \left( (\rom{id}_A \otimes \sigma_B) 
\rho_{AB}^0 \right),
\end{eqnarray*}
where $\rho_{AB}^0$ denotes the projector onto the support of $\rho_{AB}$.

The min-entropy and the max-entropy of $\rho_{AB}$ given ${\cal H}_B$ are defined by
\begin{eqnarray*}
H_{\min}(\rho_{AB}|B) &:=& \sup_{\sigma_B} H_{\min}(\rho_{AB}|\sigma_B) \\
H_{\max}(\rho_{AB}|B) &:=& \sup_{\sigma_B} H_{\max}(\rho_{AB}|\sigma_B),
\end{eqnarray*}
where the supremum ranges over all $\sigma_B \in {\cal P}({\cal H}_B)$.
\end{definition}

When ${\cal H}_B$ is the trivial space $\mathbb{C}$, the min-entropy and
the max-entropy of $\rho_A$ is 
\begin{eqnarray*}
H_{\min}(\rho_A) &=& - \log \lambda_{\max}( \rho_A) \\
H_{\max}(\rho_A) &=& \log \rom{rank}(\rho_A),
\end{eqnarray*}
where $\lambda_{\max}(\cdot)$ denotes the maximum eigenvalue of the
argument.

\begin{definition}
\cite[Definitions 3.2.1 and 3.2.2]{renner:05b}
Let $\rho_{AB} \in {\cal P}^\prime({\cal H}_A \otimes {\cal H}_B)$,
$\sigma_B \in {\cal P}({\cal H}_B)$, and $\varepsilon \ge 0$.
The $\varepsilon$-smooth min-entropy and the $\varepsilon$-smooth max-entropy
of $\rho_{AB}$ relative to $\sigma_B$ are defined by
\begin{eqnarray*}
H_{\min}^{\varepsilon}(\rho_{AB}|\sigma_B) &:=&
\sup_{\overline{\rho}_{AB}} H_{\min}(\overline{\rho}_{AB}|\sigma_B) \\
H_{\max}^{\varepsilon}(\rho_{AB}|\sigma_B) &:=&
\inf_{\overline{\rho}_{AB}} H_{\max}(\overline{\rho}_{AB}|\sigma_B),
\end{eqnarray*}
where the supremum and infimum ranges over the set
${\cal B}^{\varepsilon}(\rho_{AB})$ of all operators
$\overline{\rho}_{AB} \in {\cal P}^\prime({\cal H}_A \otimes {\cal H}_B)$ such
that $\| \overline{\rho}_{AB} - \rho_{AB} \| \le (\rom{Tr} \rho_{AB}) \varepsilon$.

The conditional 
$\varepsilon$-smooth min-entropy and the $\varepsilon$-smooth max-entropy
of $\rho_{AB}$ given ${\cal H}_B$ are defined by
\begin{eqnarray*}
H_{\min}^{\varepsilon}(\rho_{AB}|B) &:=& \sup_{\sigma_B} H_{\min}^{\varepsilon}(\rho_{AB}|\sigma_B) \\
H_{\max}^{\varepsilon}(\rho_{AB}|B) &:=& \sup_{\sigma_B} H_{\max}^{\varepsilon}(\rho_{AB}|\sigma_B),
\end{eqnarray*}
where the supremum ranges over all $\sigma_B \in {\cal P}({\cal H}_B)$.
\end{definition}

The following lemma is a kind of chain rule for the smooth min-entropy.
\begin{lemma}
     \label{lemma-chain-rule}
\cite[Theorem 3.2.12]{renner:05b}
For a tripartite operator 
$\rho_{ABC} \in {\cal P}^\prime({\cal H}_A \otimes {\cal H}_B \otimes {\cal
 H}_C)$, we have
\begin{eqnarray}
   \label{chain-rule}
H_{\min}^{\varepsilon}(\rho_{ABC}|C)
\le H_{\min}^{\varepsilon}(\rho_{ABC}|BC) + H_{\max}(\rho_B).
\end{eqnarray}
\end{lemma}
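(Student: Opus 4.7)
The plan is to establish the equivalent rearrangement $H_{\min}^{\varepsilon}(\rho_{ABC}|BC) \ge H_{\min}^{\varepsilon}(\rho_{ABC}|C) - H_{\max}(\rho_B)$, by producing, from any near-optimizer of the right-hand side, an explicit conditioning operator $\sigma_{BC}$ and a smoothed state that together witness the left-hand side. Let $\Pi_B$ denote the support projector of $\rho_B$ and set $r := \rom{rank}(\rho_B)$, so that $H_{\max}(\rho_B) = \log r$. First I would fix $\bar\rho_{ABC} \in {\cal B}^{\varepsilon}(\rho_{ABC})$ and $\sigma_C \in {\cal P}({\cal H}_C)$ such that $H_{\min}(\bar\rho_{ABC}|\sigma_C)$ is arbitrarily close to $H_{\min}^{\varepsilon}(\rho_{ABC}|C)$; writing $\lambda := 2^{-H_{\min}(\bar\rho_{ABC}|\sigma_C)}$, the definition of the non-smooth min-entropy gives the operator inequality $\lambda \cdot \rom{id}_{AB} \otimes \sigma_C \ge \bar\rho_{ABC}$.

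Next I would clip $\bar\rho_{ABC}$ onto the support of $\rho_B$. Set $P := \rom{id}_A \otimes \Pi_B \otimes \rom{id}_C$ and $\tilde\rho_{ABC} := P \bar\rho_{ABC} P$. The map $X \mapsto P X P$ is completely positive and trace-non-increasing, and since $\rom{Tr}[(\rom{id}_B - \Pi_B)\rho_B] = 0$ forces $P \rho_{ABC} P = \rho_{ABC}$, Lemma \ref{lemma:monotonicity-of-trace-distance} gives
\begin{eqnarray*}
\| \tilde\rho_{ABC} - \rho_{ABC} \| = \| P \bar\rho_{ABC} P - P \rho_{ABC} P \| \le \| \bar\rho_{ABC} - \rho_{ABC} \| \le \varepsilon\,\rom{Tr}\,\rho_{ABC},
\end{eqnarray*}
so $\tilde\rho_{ABC}$ still lies in ${\cal B}^{\varepsilon}(\rho_{ABC})$. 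Conjugating $\lambda\,\rom{id}_{AB} \otimes \sigma_C \ge \bar\rho_{ABC}$ by $P$ then yields $\lambda\,\rom{id}_A \otimes \Pi_B \otimes \sigma_C \ge \tilde\rho_{ABC}$.

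Now I take the candidate conditioning state $\sigma_{BC}^{*} := \frac{1}{r}\Pi_B \otimes \sigma_C \in {\cal P}({\cal H}_B \otimes {\cal H}_C)$, which is properly normalized because $\rom{Tr}\,\Pi_B = r$. The previous line rewrites as $r\lambda \cdot \rom{id}_A \otimes \sigma_{BC}^{*} \ge \tilde\rho_{ABC}$, so $H_{\min}(\tilde\rho_{ABC}|\sigma_{BC}^{*}) \ge -\log(r\lambda)$. Since $\tilde\rho_{ABC} \in {\cal B}^{\varepsilon}(\rho_{ABC})$, the definitions of the smooth min-entropy yield
\begin{eqnarray*}
H_{\min}^{\varepsilon}(\rho_{ABC}|BC) \ge H_{\min}^{\varepsilon}(\rho_{ABC}|\sigma_{BC}^{*}) \ge H_{\min}(\tilde\rho_{ABC}|\sigma_{BC}^{*}) \ge -\log\lambda - \log r,
\end{eqnarray*}
and letting $\bar\rho_{ABC}$ and $\sigma_C$ exhaust the supremum defining $H_{\min}^{\varepsilon}(\rho_{ABC}|C)$ produces the desired chain rule.

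The main obstacle I anticipate is the interplay between smoothing and the support constraint on the $B$-system: although $\rho_{ABC}$ is automatically supported on $\rom{id}_A \otimes \Pi_B \otimes \rom{id}_C$, the smoothed operator $\bar\rho_{ABC}$ a priori need not be, so a naive projection could leave the $\varepsilon$-ball. The key input that keeps everything inside the ball is the monotonicity of the trace distance under the trace-non-increasing CP map $X \mapsto PXP$, combined with the identity $P\rho_{ABC}P = \rho_{ABC}$; this is also what allows the bound to feature the rank $r$ of $\rho_B$ rather than the full dimension $\dim{\cal H}_B$.
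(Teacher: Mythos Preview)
Your proof is correct. The paper does not actually supply its own proof of this lemma; it merely cites \cite[Theorem 3.2.12]{renner:05b} and uses the statement as a black box. Your argument is essentially the standard one from Renner's thesis: pick a near-optimal pair $(\bar\rho_{ABC},\sigma_C)$ for $H_{\min}^{\varepsilon}(\rho_{ABC}|C)$, compress to the support of $\rho_B$ (which keeps you in the $\varepsilon$-ball via Lemma~\ref{lemma:monotonicity-of-trace-distance} and $P\rho_{ABC}P=\rho_{ABC}$), and then observe that $\sigma_{BC}^{*}=\frac{1}{r}\Pi_B\otimes\sigma_C$ witnesses the required min-entropy lower bound with the factor $r=\rom{rank}(\rho_B)=2^{H_{\max}(\rho_B)}$. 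All steps are justified, including the support identity $P\rho_{ABC}P=\rho_{ABC}$, which follows from $\rom{Tr}[(\rom{id}_B-\Pi_B)\rho_B]=0$ together with positivity of $\rho_{ABC}$.
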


The following lemma states that removing the classical system 
only decreases the min-entropy.
\begin{lemma}
\label{monotonicity-of-min-entropy}
\cite[Lemma 3.1.9]{renner:05b}
(monotonicity of min-entropy)
Let $\rho_{XBC} \in {\cal P}^\prime({\cal H}_X \otimes {\cal H}_B \otimes {\cal H}_C)$ be
classical on ${\cal H}_X$, and let $\sigma_C \in {\cal P}({\cal H}_C)$.
Then, we have
\begin{eqnarray*}
H_{\min}(\rho_{XBC}|\sigma_C) \ge H_{\min}(\rho_{BC}|\sigma_C).
\end{eqnarray*}
\end{lemma}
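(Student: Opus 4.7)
The plan is to unfold both min-entropies directly from the definition and exploit the block-diagonal structure induced by the classicality on $\mathcal{H}_X$. Write $\rho_{XBC} = \sum_{x \in \mathcal{X}} \ket{x}\bra{x} \otimes \rho_{BC}^x$ for some family of non-negative (not necessarily normalized) operators $\rho_{BC}^x \in \mathcal{P}'(\mathcal{H}_B \otimes \mathcal{H}_C)$, so that $\rho_{BC} = \mathrm{Tr}_X[\rho_{XBC}] = \sum_{x \in \mathcal{X}} \rho_{BC}^x$. If $H_{\min}(\rho_{BC}|\sigma_C) = -\infty$ the inequality is vacuous, so assume the right-hand side is finite and set $\lambda := 2^{-H_{\min}(\rho_{BC}|\sigma_C)}$, which by the definition is the minimum non-negative real satisfying $\lambda \cdot \mathrm{id}_B \otimes \sigma_C - \rho_{BC} \ge 0$.

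The key step is to promote this single operator inequality to one that holds term-by-term in $x$. Since each $\rho_{BC}^x$ is non-negative, the inequality $\sum_{x' \in \mathcal{X}} \rho_{BC}^{x'} \le \lambda \cdot \mathrm{id}_B \otimes \sigma_C$ immediately implies
\begin{eqnarray*}
\rho_{BC}^x \le \sum_{x' \in \mathcal{X}} \rho_{BC}^{x'} \le \lambda \cdot \mathrm{id}_B \otimes \sigma_C \qquad \text{for every } x \in \mathcal{X}.
\end{eqnarray*}
Taking the direct sum over $x$, the operator
\begin{eqnarray*}
\lambda \cdot \mathrm{id}_{XB} \otimes \sigma_C - \rho_{XBC}
= \sum_{x \in \mathcal{X}} \ket{x}\bra{x} \otimes \bigl( \lambda \cdot \mathrm{id}_B \otimes \sigma_C - \rho_{BC}^x \bigr)
\end{eqnarray*}
is a block-diagonal operator whose blocks are all non-negative, hence itself non-negative.

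Therefore $\lambda$ is an admissible candidate in the definition of $H_{\min}(\rho_{XBC}|\sigma_C)$, which yields $H_{\min}(\rho_{XBC}|\sigma_C) \ge -\log \lambda = H_{\min}(\rho_{BC}|\sigma_C)$, as desired. I do not expect a real obstacle here: the only delicate point is the reduction from the single aggregate inequality for $\rho_{BC}$ to the per-$x$ inequalities for $\rho_{BC}^x$, and that follows immediately from non-negativity of the conditional operators $\rho_{BC}^x$, which in turn is a consequence of classicality on $\mathcal{H}_X$.
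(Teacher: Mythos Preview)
The paper does not actually include its own proof of this lemma; it is stated with a citation to \cite[Lemma 3.1.9]{renner:05b} and no proof is given in the thesis. Your argument is correct: the decomposition $\rho_{XBC}=\sum_x \ket{x}\bra{x}\otimes \rho_{BC}^x$ with $\rho_{BC}^x\ge 0$, together with $\rho_{BC}^x \le \sum_{x'} \rho_{BC}^{x'} \le \lambda\,\mathrm{id}_B\otimes\sigma_C$, immediately gives block-wise positivity of $\lambda\,\mathrm{id}_{XB}\otimes\sigma_C-\rho_{XBC}$, and this is precisely the standard direct proof one finds in Renner's thesis.
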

In order to extend Lemma \ref{monotonicity-of-min-entropy} to the
smooth min-entropy, we need Lemmas \ref{neighbor-purification} 
and \ref{neighbor-purification-classical}.
\begin{lemma}
\label{neighbor-purification}
Let $\rho_{AB} \in {\cal P}({\cal H}_A \otimes {\cal H}_B)$ be a density operator.
For $\varepsilon \ge 0$, let 
$\hat{\rho}_B \in {\cal B}^\varepsilon(\rho_B)$. Then, there exists
a operator $\hat{\rho}_{AB} \in {\cal B}^{\bar{\varepsilon}}(\rho_{AB})$ such
that $\rom{Tr}_A[\hat{\rho}_{AB}] = \hat{\rho}_B$,
where $\bar{\varepsilon} := \sqrt{8 \varepsilon}$.
\end{lemma}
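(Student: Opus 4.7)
The natural strategy is to transfer the closeness statement from $\rho_B,\hat\rho_B$ up to the joint system via purifications and Uhlmann's theorem.

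First, I would pick an auxiliary Hilbert space $\mathcal{H}_C$ and a purification $\ket{\phi}\in \mathcal{H}_C \otimes \mathcal{H}_A \otimes \mathcal{H}_B$ of $\rho_{AB}$ (so $\rom{Tr}_C[\ket{\phi}\bra{\phi}]=\rho_{AB}$). Regarding $\mathcal{H}_C\otimes\mathcal{H}_A$ as the ``reference'' system, $\ket{\phi}$ is also a purification of $\rho_B$. Applying Uhlmann's theorem (Lemma~\ref{uhlman-theorem}) with this fixed purification of $\rho_B$, there exists a purification $\ket{\hat\phi}\in\mathcal{H}_C\otimes\mathcal{H}_A\otimes\mathcal{H}_B$ of $\hat\rho_B$ achieving
\begin{eqnarray*}
F(\ket{\phi}\bra{\phi},\ket{\hat\phi}\bra{\hat\phi})=F(\rho_B,\hat\rho_B).
\end{eqnarray*}
Define the candidate operator $\hat\rho_{AB}:=\rom{Tr}_C[\ket{\hat\phi}\bra{\hat\phi}]$. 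By construction, $\rom{Tr}_A[\hat\rho_{AB}]=\rom{Tr}_{CA}[\ket{\hat\phi}\bra{\hat\phi}]=\hat\rho_B$, so the marginal condition holds automatically.

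It remains to control $\|\hat\rho_{AB}-\rho_{AB}\|$. The monotonicity of trace distance under the partial trace $\rom{Tr}_C$ (Lemma~\ref{lemma:monotonicity-of-trace-distance}) yields
\begin{eqnarray*}
\|\hat\rho_{AB}-\rho_{AB}\|\le\|\ket{\hat\phi}\bra{\hat\phi}-\ket{\phi}\bra{\phi}\|.
\end{eqnarray*}
Writing $c:=\rom{Tr}\hat\rho_B=\rom{Tr}\ket{\hat\phi}\bra{\hat\phi}$ and applying Lemma~\ref{upper-bound-by-fidelity} to the two (possibly non-normalized) rank-one operators, the right-hand side is bounded by $\sqrt{(1+c)^2-4F(\rho_B,\hat\rho_B)^2}$. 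Lemma~\ref{lower-bound-by-fidelity} applied to $\rho_B,\hat\rho_B$ gives $2F(\rho_B,\hat\rho_B)\ge 1+c-\varepsilon$, since $\hat\rho_B\in\mathcal{B}^\varepsilon(\rho_B)$ and $\rom{Tr}\rho_B=1$. Also, $|c-1|\le\|\hat\rho_B-\rho_B\|\le\varepsilon$, so $c\le 1+\varepsilon$. Substituting,
\begin{eqnarray*}
(1+c)^2-4F(\rho_B,\hat\rho_B)^2\le(1+c)^2-(1+c-\varepsilon)^2=(2(1+c)-\varepsilon)\varepsilon\le(4+\varepsilon)\varepsilon.
\end{eqnarray*}
Since the claim is vacuous for $\varepsilon\ge 2$ (the trace distance between normalized states is at most $2$), we may assume $\varepsilon\le 4$, yielding the desired bound $\|\hat\rho_{AB}-\rho_{AB}\|\le\sqrt{8\varepsilon}=\bar\varepsilon$.

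The only real subtlety is bookkeeping around the fact that $\hat\rho_B$ need not be normalized, so that both Lemmas~\ref{upper-bound-by-fidelity} and~\ref{lower-bound-by-fidelity} must be used in their non-normalized forms and one must track the trace $c$ through the fidelity–distance conversions; this is what costs a constant factor and produces $\sqrt{8\varepsilon}$ rather than the ``naïve'' $2\sqrt{\varepsilon}$ one would expect from a pure Fuchs–van de Graaf calculation between normalized states.
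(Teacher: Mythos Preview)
Your proof is correct and follows essentially the same route as the paper: purify $\rho_{AB}$, apply Uhlmann's theorem (Lemma~\ref{uhlman-theorem}) to obtain a nearby purification of $\hat\rho_B$, convert the fidelity bound to a trace-distance bound via Lemma~\ref{upper-bound-by-fidelity}, and conclude by monotonicity under the partial trace. In fact you are slightly more careful than the paper in tracking the trace $c=\rom{Tr}\hat\rho_B$ through the fidelity--distance inequalities; the paper's own argument tacitly treats $\ket{\Phi}\bra{\Phi}$ as normalized when invoking Lemma~\ref{upper-bound-by-fidelity}, whereas your bound $(4+\varepsilon)\varepsilon\le 8\varepsilon$ handles the general case cleanly for the relevant range of $\varepsilon$.
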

\begin{proof}
Since $\hat{\rho}_B \in {\cal B}^\varepsilon(\rho_B)$, we have
\begin{eqnarray*}
\| \hat{\rho}_B \| \ge \| \rho_B \| - \| \rho_B - \hat{\rho}_B \| \ge 1 - \varepsilon.
\end{eqnarray*}
Then, from Lemma \ref{lower-bound-by-fidelity}, we have
\begin{eqnarray*}
F(\rho_B, \hat{\rho}_B) &\ge& \frac{1}{2} (
\rom{Tr}\rho_B + \rom{Tr} \hat{\rho}_B - \| \rho_B - \hat{\rho}_B \| ) \\
&\ge& 1 - \varepsilon.
\end{eqnarray*}
Let $\ket{\Psi} \in {\cal H}_R \otimes {\cal H}_A \otimes {\cal H}_B$ be a purification
of $\rho_{AB}$. Then, from Theorem \ref{uhlman-theorem}, there exists a purification
$\ket{\Phi} \in {\cal H}_R \otimes {\cal H}_A \otimes {\cal H}_B$ of
$\hat{\rho}_B$ such that
\begin{eqnarray*}
F(\ket{\Psi}, \ket{\Phi}) = F(\rho_B, \hat{\rho}_B) \ge 1 - \varepsilon.
\end{eqnarray*}
By noting that $F(\ket{\Psi}, \ket{\Phi})^2 \ge 1 - 2\varepsilon$, from
Lemma \ref{upper-bound-by-fidelity}, we have
\begin{eqnarray*}
\| \ket{\Psi}\bra{\Psi} - \ket{\Phi}\bra{\Phi} \| \le
\sqrt{8 \varepsilon}.
\end{eqnarray*}
Let $\hat{\rho}_{AB} := \rom{Tr}_R[ \ket{\Phi}\bra{\Phi} ]$.
Then, since the trace distance does not increase by the partial trace, we have
\begin{eqnarray*}
\| \rho_{AB} - \hat{\rho}_{AB} \| \le \sqrt{8 \varepsilon}.
\end{eqnarray*}
\end{proof}
\begin{remark}
In Lemma \ref{neighbor-purification}, if the density operator
$\rho_{AB}$ is classical with respect to both systems 
${\cal H}_A  \otimes {\cal H}_B$, then we can easily replace
$\bar{\varepsilon}$ by $\varepsilon$.
Then, $\bar{\varepsilon}$ in Lemma \ref{neighbor-purification-classical},
\ref{lemma-weak-monotonicity} and Corollary
\ref{lemma-weak-chain-rule} can also be replaced by
$\varepsilon$.
\end{remark}
\begin{lemma}
\label{neighbor-purification-classical}
Let $\rho_{XB} \in {\cal P}({\cal H}_X \otimes {\cal H}_B)$ be a density operator
that is classical on ${\cal H}_X$.
For $\varepsilon \ge 0$, let $\hat{\rho}_B \in {\cal B}^\varepsilon(\rho_B)$.
Then, there exists a operator $\hat{\rho}_{XB} \in {\cal B}^{\bar{\varepsilon}}(\rho_{XB})$
such that $\rom{Tr}_X[ \hat{\rho}_{XB} ] = \hat{\rho}_B$ and $\hat{\rho}_{XB}$ is classical on
${\cal H}_X$, where $\bar{\varepsilon} := \sqrt{8 \varepsilon}$.
\end{lemma}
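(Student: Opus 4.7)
The plan is to leverage Lemma \ref{neighbor-purification} to first obtain some operator close to $\rho_{XB}$ with the correct partial trace, and then apply a dephasing map on ${\cal H}_X$ to enforce classicality while keeping the partial trace on ${\cal H}_B$ intact.

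More concretely, I would first apply Lemma \ref{neighbor-purification} to $\rho_{XB}$ (viewed as a density operator on ${\cal H}_X \otimes {\cal H}_B$ without using its classical structure) and the given $\hat{\rho}_B \in {\cal B}^{\varepsilon}(\rho_B)$ to produce an intermediate operator $\tilde{\rho}_{XB} \in {\cal B}^{\bar{\varepsilon}}(\rho_{XB})$ with $\rom{Tr}_X[\tilde{\rho}_{XB}] = \hat{\rho}_B$, where $\bar{\varepsilon} = \sqrt{8\varepsilon}$. This $\tilde{\rho}_{XB}$ need not be classical on ${\cal H}_X$, so the next step is to project it onto the classical basis $\{\ket{x}\}_{x \in {\cal X}}$ that witnesses the classicality of $\rho_{XB}$.

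Define the dephasing CP map ${\cal D}_X$ on ${\cal H}_X$ by ${\cal D}_X(\sigma) := \sum_{x \in {\cal X}} \ket{x}\bra{x} \sigma \ket{x}\bra{x}$, and set $\hat{\rho}_{XB} := ({\cal D}_X \otimes \rom{id}_B)(\tilde{\rho}_{XB})$. By construction $\hat{\rho}_{XB}$ is classical on ${\cal H}_X$ with respect to $\{\ket{x}\}$. A direct computation shows that the dephasing on ${\cal H}_X$ commutes with the partial trace over ${\cal H}_X$, so $\rom{Tr}_X[\hat{\rho}_{XB}] = \rom{Tr}_X[\tilde{\rho}_{XB}] = \hat{\rho}_B$, as required.

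It remains to verify that $\hat{\rho}_{XB} \in {\cal B}^{\bar{\varepsilon}}(\rho_{XB})$. Since $\rho_{XB}$ is already classical on ${\cal H}_X$ with respect to the same basis, we have $({\cal D}_X \otimes \rom{id}_B)(\rho_{XB}) = \rho_{XB}$, and the monotonicity of the trace distance under the TPCP map ${\cal D}_X \otimes \rom{id}_B$ (Lemma \ref{lemma:monotonicity-of-trace-distance}) yields
\begin{eqnarray*}
\| \hat{\rho}_{XB} - \rho_{XB} \|
= \| ({\cal D}_X \otimes \rom{id}_B)(\tilde{\rho}_{XB} - \rho_{XB}) \|
\le \| \tilde{\rho}_{XB} - \rho_{XB} \| \le \bar{\varepsilon}.
\end{eqnarray*}
I do not anticipate a real obstacle here; the only conceptual point is recognizing that the dephasing step does not disturb either the partial trace on ${\cal H}_B$ (because it acts only on ${\cal H}_X$) or the closeness to $\rho_{XB}$ (because $\rho_{XB}$ is fixed by it), so classicality can be imposed for free on top of Lemma \ref{neighbor-purification}.
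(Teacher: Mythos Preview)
Your proposal is correct and follows essentially the same approach as the paper: invoke Lemma \ref{neighbor-purification} to get an intermediate extension with the right partial trace, then apply the dephasing map on ${\cal H}_X$ and use that $\rho_{XB}$ is a fixed point of dephasing together with Lemma \ref{lemma:monotonicity-of-trace-distance} to control the trace distance. The paper's proof is identical up to notation (it calls the intermediate operator $\rho'_{XB}$ and the dephasing map ${\cal E}_X$).
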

\begin{proof}
From Lemma \ref{neighbor-purification}, there exists a operator 
$\rho^\prime_{XB} \in {\cal B}^{\bar{\varepsilon}}(\rho_{XB})$ such
that $\rom{Tr}_X[ \rho_{XB}^\prime ] = \hat{\rho}_B$.
Let ${\cal E}_X$ be a projection measurement CP map on ${\cal H}_X$, i.e.,
\begin{eqnarray*}
{\cal E}_X(\rho) := \sum_{x \in {\cal X}} \ket{x}\bra{x} \rho \ket{x}\bra{x},
\end{eqnarray*}
where $\{ \ket{x} \}_{x \in {\cal X}}$ is an orthonormal basis of ${\cal H}_X$.
Let $\hat{\rho}_{XB} := ({\cal E}_X \otimes \rom{id}_B)(\rho^\prime_{XB})$.
Then, since the trace distance does not increase by the CP map, and 
$({\cal E}_X \otimes \rom{id}_B)(\rho_{XB}) = \rho_{XB}$, we have
\begin{eqnarray*}
\lefteqn{
\| \hat{\rho}_{XB} - \rho_{XB} \| } \\
&=&
\| ({\cal E}_X \otimes \rom{id}_B)(\rho_{XB}^\prime) -
({\cal E}_X \otimes \rom{id}_B)(\rho_{XB}) \| \\
&\le&
\| \rho_{XB}^\prime - \rho_{XB}
\| \\
&\le& \bar{\varepsilon},
\end{eqnarray*}
where the first inequality follows from 
Lemma \ref{lemma:monotonicity-of-trace-distance}.
Furthermore, we have
$\rom{Tr}_X [ \hat{\rho}_{XB} ] = \rom{Tr}_X [ \rho_{XB}^\prime ] = \hat{\rho}_B$,
and $\hat{\rho}_{XB}$ is classical on ${\cal H}_X$.
\end{proof}

The following lemma states that
the monotonicity of the min-entropy (Lemma \ref{monotonicity-of-min-entropy})
can be extended to the smooth min-entropy 
by adjusting the smoothness $\varepsilon$. 
\begin{lemma}
\label{lemma-weak-monotonicity}
Let $\rho_{XBC} \in {\cal P}({\cal H}_X \otimes {\cal H}_B \otimes {\cal H}_C)$
be a density operator that is classical on ${\cal H}_X$. 
Then, for any $\varepsilon \ge 0$, we have
\begin{eqnarray*}
H_{\min}^{\bar{\varepsilon}}(\rho_{XBC} | C) \ge 
H_{\min}^{\varepsilon}(\rho_{BC} | C),
\end{eqnarray*}
where $\bar{\varepsilon} := \sqrt{8 \varepsilon}$.
\end{lemma}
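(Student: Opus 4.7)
The plan is to lift the smoothing from $\rho_{BC}$ up to a classical-preserving smoothing of $\rho_{XBC}$, and then reduce to the non-smooth monotonicity already stated as Lemma \ref{monotonicity-of-min-entropy}. The pieces of machinery have all been prepared in the section: Lemma \ref{neighbor-purification-classical} is exactly the tool needed to lift a neighbour of $\rho_{BC}$ to a neighbour of $\rho_{XBC}$ that remains classical on $\mathcal{H}_X$, and Lemma \ref{monotonicity-of-min-entropy} then disposes of the classical register.

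First I would fix an arbitrary $\sigma_C \in {\cal P}({\cal H}_C)$ and pick (up to an arbitrarily small slack, since the supremum defining $H_{\min}^{\varepsilon}$ need not be attained) an operator $\hat{\rho}_{BC} \in {\cal B}^{\varepsilon}(\rho_{BC})$ with $H_{\min}(\hat{\rho}_{BC}|\sigma_C)$ essentially equal to $H_{\min}^{\varepsilon}(\rho_{BC}|\sigma_C)$. Next I would apply Lemma \ref{neighbor-purification-classical}, viewing the composite system ${\cal H}_B \otimes {\cal H}_C$ as a single ``$B$''-system; the hypothesis that $\rho_{XBC}$ is classical on ${\cal H}_X$ is preserved under this regrouping, so the lemma yields an operator $\hat{\rho}_{XBC} \in {\cal B}^{\bar{\varepsilon}}(\rho_{XBC})$ that is again classical on ${\cal H}_X$ and satisfies $\rom{Tr}_X[\hat{\rho}_{XBC}] = \hat{\rho}_{BC}$.

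Then I would apply Lemma \ref{monotonicity-of-min-entropy} to $\hat{\rho}_{XBC}$, which gives
\begin{eqnarray*}
H_{\min}(\hat{\rho}_{XBC}|\sigma_C) \ge H_{\min}(\hat{\rho}_{BC}|\sigma_C).
\end{eqnarray*}
Since $\hat{\rho}_{XBC} \in {\cal B}^{\bar{\varepsilon}}(\rho_{XBC})$, the definition of smooth min-entropy gives $H_{\min}^{\bar{\varepsilon}}(\rho_{XBC}|\sigma_C) \ge H_{\min}(\hat{\rho}_{XBC}|\sigma_C)$. Chaining the two inequalities, taking the supremum over $\hat{\rho}_{BC} \in {\cal B}^{\varepsilon}(\rho_{BC})$, and finally taking the supremum over $\sigma_C \in {\cal P}({\cal H}_C)$ on both sides delivers the claim.

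I do not foresee a substantial obstacle: the only point requiring minor care is checking that Lemma \ref{neighbor-purification-classical} still applies after regrouping $BC$ as a single system (the proof of that lemma uses only that the dephasing channel acts on ${\cal H}_X$, so bystander factors on the complementary side are irrelevant), and that one can safely pass to an approximate supremum on the right-hand side if the smoothing ball's extremum is not attained. The factor $\sqrt{8\varepsilon}$ in $\bar{\varepsilon}$ is inherited directly from Lemma \ref{neighbor-purification-classical} and needs no further tracking.
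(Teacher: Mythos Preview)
Your proposal is correct and follows essentially the same approach as the paper's proof: fix $\sigma_C$, pick a near-optimal $\hat{\rho}_{BC}\in\mathcal{B}^{\varepsilon}(\rho_{BC})$, lift it via Lemma~\ref{neighbor-purification-classical} to a classical-on-$X$ operator $\hat{\rho}_{XBC}\in\mathcal{B}^{\bar{\varepsilon}}(\rho_{XBC})$, apply Lemma~\ref{monotonicity-of-min-entropy}, and then take suprema. The paper handles the ``approximate supremum'' issue with an explicit slack parameter $\nu>0$ that is sent to zero at the end, which is exactly what you sketched.
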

\begin{proof}
We will prove that
\begin{eqnarray*}
H_{\min}^{\bar{\varepsilon}}(\rho_{XBC}|\sigma_C) \ge 
H_{\min}^{\varepsilon}(\rho_{BC} | \sigma_C)
\end{eqnarray*}
holds for any $\sigma_C \in {\cal P}({\cal H}_C)$.
From the definition of the smooth min-entropy,
for any $\nu > 0$, there exists $\hat{\rho}_{BC} \in {\cal B}^{\varepsilon}(\rho_{BC})$
such that
\begin{eqnarray}
\label{eq-proof-of-weak-monotonicity-1}
H_{\min}(\hat{\rho}_{BC} | \sigma_C) \ge H_{\min}^{\varepsilon}(\rho_{BC}|\sigma_C) -\nu.
\end{eqnarray}
From Lemma \ref{neighbor-purification-classical}, there exists a operator
$\hat{\rho}_{XBC} \in {\cal B}^{\bar{\varepsilon}}(\rho_{XBC})$ such that
$\rom{Tr}_X [ \hat{\rho}_{XBC} ] = \hat{\rho}_{BC}$, and
$\hat{\rho}_{XBC}$ is classical on ${\cal H}_X$.
Then, from Lemma \ref{monotonicity-of-min-entropy}, we have
\begin{eqnarray}
\label{eq-proof-of-weak-monotonicity-2}
H_{\min}(\hat{\rho}_{XBC}|\sigma_C) \ge H_{\min}(\hat{\rho}_{BC}|\sigma_C).
\end{eqnarray}
Furthermore, from the definition of smooth min-entropy, we have
\begin{eqnarray}
\label{eq-proof-of-weak-monotonicity-3}
H_{\min}^{\bar{\varepsilon}}(\rho_{XBC} | \sigma_C) \ge 
H_{\min}(\hat{\rho}_{XBC} | \sigma_C).
\end{eqnarray}
Since $\nu > 0$ is arbitrary, combining 
Eqs.~(\ref{eq-proof-of-weak-monotonicity-1})--(\ref{eq-proof-of-weak-monotonicity-3}), we 
have the assertion of the lemma.
\end{proof}

Combining Eq.~(\ref{chain-rule}) of Lemma \ref{lemma-chain-rule}
and Lemma \ref{lemma-weak-monotonicity}, we have the following corollary,
which states that the condition decreases 
the smooth min-entropy by at most the amount of
the max-entropy of the condition, and 
plays an important role to prove the security of the QKD protocols.
\begin{corollary}
\label{lemma-weak-chain-rule}
Let $\rho_{XBC} \in {\cal P}({\cal H}_X \otimes {\cal H}_B \otimes {\cal H}_C)$
be a density operator that is classical on ${\cal H}_X$. 
Then, for any $\varepsilon \ge 0$, we have
\begin{eqnarray*}
H_{\min}^{\bar{\varepsilon}}(\rho_{XBC} | XC) \ge 
H_{\min}^{\varepsilon}(\rho_{BC} | C) - H_{\max}(\rho_X),
\end{eqnarray*}
where $\bar{\varepsilon} := \sqrt{8 \varepsilon}$.
\end{corollary}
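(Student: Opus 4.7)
The plan is to chain together the two main ingredients just developed: the chain rule in Lemma \ref{lemma-chain-rule} and the smoothed monotonicity in Lemma \ref{lemma-weak-monotonicity}. The idea is that the chain rule already gives the desired inequality once we take the conditioning system to be the composite $XC$, but with $H_{\min}^{\bar\varepsilon}(\rho_{XBC}|C)$ on the right-hand side instead of $H_{\min}^{\varepsilon}(\rho_{BC}|C)$; the smooth monotonicity is then exactly what is needed to close that gap, at the cost of inflating $\varepsilon$ to $\bar\varepsilon = \sqrt{8\varepsilon}$.

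Concretely, I would first invoke Lemma \ref{lemma-chain-rule} with smoothness parameter $\bar\varepsilon$, identifying its $(A,B,C)$ with the present $(B,X,C)$. Since $\rho_{XBC}$ is classical on $\mathcal{H}_X$, this yields
\begin{eqnarray*}
H_{\min}^{\bar\varepsilon}(\rho_{XBC}|C)
\le H_{\min}^{\bar\varepsilon}(\rho_{XBC}|XC) + H_{\max}(\rho_X),
\end{eqnarray*}
which rearranges to
\begin{eqnarray*}
H_{\min}^{\bar\varepsilon}(\rho_{XBC}|XC)
\ge H_{\min}^{\bar\varepsilon}(\rho_{XBC}|C) - H_{\max}(\rho_X).
\end{eqnarray*}
Next, apply Lemma \ref{lemma-weak-monotonicity} to the classical-on-$X$ state $\rho_{XBC}$, which gives $H_{\min}^{\bar\varepsilon}(\rho_{XBC}|C) \ge H_{\min}^{\varepsilon}(\rho_{BC}|C)$. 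Substituting this lower bound into the previous display yields the claimed inequality.

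There is essentially no obstacle here, since all the hard work has been pushed into Lemmas \ref{lemma-chain-rule} and \ref{lemma-weak-monotonicity}: the only subtlety is bookkeeping of the smoothness parameter and checking that the classicality hypothesis on $\mathcal{H}_X$ is preserved at each step, both of which are immediate. The route via Lemma \ref{lemma-weak-monotonicity} is what lets us avoid the more restrictive version Eq.~(3.22) of \cite[Theorem 3.2.12]{renner:05b}, and is the reason this argument extends to the two-way postprocessing setting of Chapter \ref{ch:postprocessing}.
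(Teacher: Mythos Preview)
Your proof is correct and follows exactly the approach of the paper: apply the chain rule (Lemma \ref{lemma-chain-rule}) at smoothness $\bar\varepsilon$ with the roles $(A,B,C)\to(B,X,C)$, then feed in the smoothed monotonicity (Lemma \ref{lemma-weak-monotonicity}) to replace $H_{\min}^{\bar\varepsilon}(\rho_{XBC}|C)$ by $H_{\min}^{\varepsilon}(\rho_{BC}|C)$. One tiny remark: the classicality hypothesis is only needed for Lemma \ref{lemma-weak-monotonicity}, not for the chain rule step.
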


For a product $\{cq\}$-state $\rho_{XB}^{\otimes n}$,
the smooth min-entropy can be evaluated by using the
von Neumann entropy.
\begin{lemma}
\label{lemma:min-entropy-of-product}
\cite[Corollary 3.3.7]{renner:05b}\footnote{See also Ref.~[22] of \cite{scarani:07}}
Let $\rho_{XB} \in {\cal P}({\cal H}_X \otimes {\cal H}_B)$ be a 
density operator which is classical on ${\cal H}_X$.
Then for $\varepsilon \ge 0$, we have
\begin{eqnarray*}
\frac{1}{n}H_{\min}^{\varepsilon}(\rho_{XB}^{\otimes n}|B^n) \ge 
  H(\rho_{XB}) - H(\rho_B) - \delta,
\end{eqnarray*}
where $\delta := (2 H_{\max}(\rho_X) + 3) \sqrt{\frac{\log(2/\varepsilon)}{n}}$.
\end{lemma}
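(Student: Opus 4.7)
The plan is to prove the asymptotic equipartition property for the smooth min-entropy by constructing an explicit smoothed state $\bar{\rho}_{XB^n} \in {\cal B}^\varepsilon(\rho_{XB}^{\otimes n})$ whose (non-smooth) min-entropy, relative to the product reference $\rho_B^{\otimes n}$, is at least $n(H(\rho_{XB}) - H(\rho_B) - \delta)$. Since
\begin{eqnarray*}
H_{\min}^\varepsilon(\rho_{XB}^{\otimes n} \mid B^n) \ge H_{\min}^\varepsilon(\rho_{XB}^{\otimes n} \mid \rho_B^{\otimes n}) \ge H_{\min}(\bar{\rho}_{XB^n} \mid \rho_B^{\otimes n}),
\end{eqnarray*}
this is sufficient. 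The classicality of $\rho_{XB}$ on ${\cal H}_X$ lets me work sequence by sequence: write $\rho_{XB}^{\otimes n} = \sum_{\bol{x}} P_X^n(\bol{x}) \ket{\bol{x}}\bra{\bol{x}} \otimes \rho_B^{\bol{x}}$ with $\rho_B^{\bol{x}} := \rho_B^{x_1} \otimes \cdots \otimes \rho_B^{x_n}$.

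First I would introduce three typicality devices with widths to be tuned later. Let $T \subset {\cal X}^n$ be the set of $\bol{x}$ whose type $P_{\bol{x}}$ is close to $P_X$, so that by a Chernoff-Hoeffding bound $\Pr[\bol{X}^n \notin T]$ is small and, for $\bol{x} \in T$, $P_X^n(\bol{x}) \le 2^{-n(H(\rho_X) - \delta_1)}$. Let $\Pi_{\bol{x}}$ be the conditional typical projector for $\rho_B^{\bol{x}}$, satisfying $\mathrm{Tr}(\Pi_{\bol{x}} \rho_B^{\bol{x}}) \ge 1 - \varepsilon'$ and the operator inequality $\Pi_{\bol{x}} \rho_B^{\bol{x}} \Pi_{\bol{x}} \le 2^{-n(H_\rho(B|X) - \delta_2)} \Pi_{\bol{x}}$. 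Let $\Pi_B$ be the typical projector for $\rho_B^{\otimes n}$ itself, which satisfies $\Pi_B \le 2^{n(H(\rho_B) + \delta_3)} \rho_B^{\otimes n}$. Then define
\begin{eqnarray*}
\bar{\rho}_{XB^n} := \sum_{\bol{x} \in T} P_X^n(\bol{x}) \ket{\bol{x}}\bra{\bol{x}} \otimes \Pi_B \Pi_{\bol{x}} \rho_B^{\bol{x}} \Pi_{\bol{x}} \Pi_B.
\end{eqnarray*}

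Next I would verify $\| \bar{\rho}_{XB^n} - \rho_{XB}^{\otimes n} \| \le \varepsilon$. Throwing away non-typical $\bol{x}$ costs at most the classical probability outside $T$. For each $\bol{x} \in T$, replacing $\rho_B^{\bol{x}}$ by $\Pi_{\bol{x}} \rho_B^{\bol{x}} \Pi_{\bol{x}}$ and by $\Pi_B \Pi_{\bol{x}} \rho_B^{\bol{x}} \Pi_{\bol{x}} \Pi_B$ is controlled by the gentle-measurement-style bound (trace distance at most $2\sqrt{1-\mathrm{Tr}(\Pi \rho)}$), applied twice. The three contributions can each be made at most $\varepsilon/3$ by picking each typicality width of order $\sqrt{\log(1/\varepsilon)/n}$. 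The min-entropy bound then follows by combining the three operator inequalities: for $\bol{x} \in T$,
\begin{eqnarray*}
P_X^n(\bol{x}) \ket{\bol{x}}\bra{\bol{x}} \otimes \Pi_B \Pi_{\bol{x}} \rho_B^{\bol{x}} \Pi_{\bol{x}} \Pi_B \le 2^{-n(H(\rho_{XB}) - H(\rho_B) - \delta_{\mathrm{tot}})} \ket{\bol{x}}\bra{\bol{x}} \otimes \rho_B^{\otimes n},
\end{eqnarray*}
using $H(\rho_X) + H_\rho(B|X) = H(\rho_{XB})$; summing over $\bol{x} \in T$ replaces $\ket{\bol{x}}\bra{\bol{x}}$ by at most $\mathrm{id}_{X^n}$ on the left.

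The main obstacle is the bookkeeping that collapses the three typicality widths $\delta_1, \delta_2, \delta_3$ and the $1/3$-split of $\varepsilon$ into the single clean coefficient $(2H_{\max}(\rho_X) + 3)\sqrt{\log(2/\varepsilon)/n}$. The factor $H_{\max}(\rho_X) = \log\mathrm{rank}(\rho_X)$ arises twice: once as the variance bound on $-\log P_X$ in the classical Hoeffding estimate, and once because $H_{\max}(\rho_X)$ upper bounds the local entropy $H_\rho(B|X) \le H_{\max}(\rho_X) + H_{\max}(\rho_B)$-type quantities that enter the spectral range of the conditional typical projector $\Pi_{\bol{x}}$. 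The constant $3$ absorbs the three $O(\sqrt{\log(1/\varepsilon)/n})$ contributions from the smoothing, the conditional typical projector, and the marginal typical projector. Once this is done, taking the supremum of $\nu > 0$ in the approximation and dividing by $n$ yields exactly the stated inequality.
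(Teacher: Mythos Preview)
The paper does not supply its own proof of this lemma; it is quoted verbatim from Renner's thesis \cite[Corollary 3.3.7]{renner:05b} and used as a black box. Your typical-projector construction---restricting to classically typical $\bol{x}$, then sandwiching $\rho_B^{\bol{x}}$ between conditional and marginal typical projectors, and combining the three operator inequalities to produce an upper bound by $\lambda\cdot\mathrm{id}_{X^n}\otimes\rho_B^{\otimes n}$---is precisely the mechanism behind that corollary (it is the specialization of \cite[Theorem~3.3.6]{renner:05b} to cq-states), so your approach matches the cited source. Your identification of the bookkeeping for the constant $(2H_{\max}(\rho_X)+3)$ as the only nontrivial remaining step is accurate; in Renner's proof the two copies of $H_{\max}(\rho_X)$ come from bounding the spectral widths of $\rho_X^{\otimes n}$ and of the conditional operators $\rho_B^{\bol{x}}$ (each contributes a factor $\log|\mathcal{X}|$ to the Hoeffding-type deviation), and the additive $3$ collects the contribution from the marginal projector $\Pi_B$ together with the constants from splitting $\varepsilon$.
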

\subsection{Privacy Amplification}
\label{subsec:pa}

The following definition is used to state the 
security of the distilled key by the privacy amplification.

\begin{definition}
\cite[Definition 5.2.1]{renner:05b}
\label{definition-distance-from-uniform}
Let $\rho_{AB} \in {\cal P}^\prime({\cal H}_A \otimes {\cal H}_B)$.
Then the trace distance from the uniform of $\rho_{AB}$ given $B$ is
defined by
\begin{eqnarray*}
d(\rho_{AB}|B) := \| \rho_{AB} - \rho_A^{\rom{mix}} \otimes \rho_B \|,
\end{eqnarray*}
where $\rho_A^{\rom{mix}} := \frac{1}{\dim {\cal H}_A} \rom{id}_A$ is the fully
mixed state on ${\cal H}_A$ and 
$\rho_B := \rom{Tr}_A[\rho_{AB}]$.
\end{definition}
\begin{definition}
\cite{carter:79}
\label{definition-two-universal-hash}
Let ${\cal F}$ be a set of functions from ${\cal X}$ to
${\cal S}$, and let $P_F$ be the uniform probability distribution on ${\cal F}$.
The set ${\cal F}$ is called {\em universal hash family} if
$\Pr \{ f(x) = f(x^\prime) \} \le \frac{1}{|{\cal Z}|}$ for
any distinct $x, x^\prime \in {\cal X}$.
\end{definition}

Consider an operator 
$\rho_{XE} \in {\cal P}^\prime({\cal H}_X \otimes {\cal H}_E)$ 
that is classical with respect to
an orthonormal basis $\{ \ket{x} \}_{x \in {\cal X}}$ of ${\cal H}_X$,
and assume that $f$ is a function from ${\cal X}$ to ${\cal S}$.
The operator describing the classical function output together
with the quantum system ${\cal H}_E$ is then given by
\begin{eqnarray}
\label{eq-state-key}
\rho_{f(X)E} := \sum_{s \in {\cal S}} \ket{s}\bra{s} \otimes \rho_E^s
~\mbox{for } \rho_E^s := \sum_{x \in f^{-1}(z)} \rho_E^x,
\end{eqnarray}
where $\{ \ket{s} \}_{s \in {\cal S}}$ is an orthonormal basis of ${\cal H}_S$.

Assume now that the function $f$ is randomly chosen from a set ${\cal
F}$
of function according to the uniform probability distribution $P_F$.
Then the output $f(x)$, the state of the quantum system, and the
choice of the function $f$ is  described by the operator
\begin{eqnarray}
\label{state-distilled-key}
\rho_{F(X)EF} := \sum_{f \in {\cal F}} P_F(f) \rho_{f(X)E} \otimes
\ket{f}\bra{f}
\end{eqnarray}
on ${\cal H}_S \otimes {\cal H}_E \otimes {\cal H}_F$, where ${\cal H}_F$
is a Hilbert space with orthonormal basis $\{ \ket{f} \}_{f \in {\cal F}}$.
The system ${\cal H}_S$ describes the distilled key, and 
the system ${\cal H}_E$ and  ${\cal H}_F$ describe the information
which an adversary Eve can access. 
The following lemma states that the length of securely distillable key
is given by the conditional smooth min-entropy
$H_{\min}^{\varepsilon}(\rho_{XE}|E)$.

\begin{lemma}
\cite[Corollary 5.6.1]{renner:05b}
      \label{lemma-privacy-amplification}
Let $\rho_{XE} \in {\cal P}({\cal H}_X \otimes {\cal H}_E)$ be a density
operator which is classical with respect to an orthonormal basis 
$\{ \ket{x} \}_{x \in {\cal X}}$ of ${\cal H}_X$.
Let ${\cal F}$ be a universal hash family of functions from
${\cal X}$ to $\{ 0,1\}^\ell$, and let $\varepsilon > 0$.
Then we have
\begin{eqnarray*}
d(\rho_{F(X)EF}|EF) \le 2 \varepsilon + 
2^{-\frac{1}{2} ( H_{\min}^{\varepsilon}(\rho_{XE}|E) - \ell)}
\end{eqnarray*}
for $\rho_{F(X)EF} \in {\cal P}({\cal H}_S \otimes {\cal H}_E \otimes {\cal H}_F)$
defined by Eq.~(\ref{state-distilled-key}).
\end{lemma}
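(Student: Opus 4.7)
The plan is to establish the bound in two stages: first a non-smooth Leftover Hash Lemma of the form
\begin{eqnarray*}
d(\rho_{F(X)EF}|EF) \le 2^{-\frac{1}{2}(H_{\min}(\rho_{XE}|E) - \ell)},
\end{eqnarray*}
and then to lift it to the smoothed statement by a triangle-inequality argument that accounts for the additive $2\varepsilon$.

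For the non-smooth step, I would first observe that, since $F$ is uniform and independent of $(X,E)$, the trace distance factors as $d(\rho_{F(X)EF}|EF) = \sum_{f} P_F(f)\, d(\rho_{f(X)E}|E)$. Fix any $\sigma_E \in {\cal P}({\cal H}_E)$ with $\rom{supp}(\rho_E) \subset \rom{supp}(\sigma_E)$ and apply the Hilbert--Schmidt bound $\|A\|_1 \le \sqrt{\rom{rank}(A)}\,\|A\|_2$ to $A = (\rom{id}_S \otimes \sigma_E^{-1/4})(\rho_{f(X)E} - \rho_S^{\mix} \otimes \rho_E)(\rom{id}_S \otimes \sigma_E^{-1/4})$, so that the rank factor contributes at most $|{\cal S}| = 2^\ell$. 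Averaging over $F$, squaring via Jensen's inequality, and expanding the square, the cross terms cancel after using $\sum_s \rho_E^s = \rho_E$, and the universal-hashing property $\sum_f P_F(f)\,\mathbf{1}_{f(x) = f(x')} \le 1/|{\cal S}|$ for $x \neq x'$ controls the surviving diagonal contribution. What remains is $2^{-H_2(\rho_{XE}|\sigma_E)}$, the conditional collision entropy. Since $H_2(\rho_{XE}|\sigma_E) \ge H_{\min}(\rho_{XE}|\sigma_E)$ and the supremum over $\sigma_E$ equals $H_{\min}(\rho_{XE}|E)$, the non-smooth bound follows.

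For the smoothing step, fix $\nu > 0$ and choose $\bar{\rho}_{XE} \in {\cal B}^\varepsilon(\rho_{XE})$ with $H_{\min}(\bar{\rho}_{XE}|E) \ge H_{\min}^\varepsilon(\rho_{XE}|E) - \nu$; by the same projection-measurement trick used in the proof of Lemma \ref{neighbor-purification-classical}, we may take $\bar{\rho}_{XE}$ to be classical on ${\cal H}_X$ at no cost in $\varepsilon$. The construction (\ref{state-distilled-key}) that sends $\rho_{XE} \mapsto \rho_{F(X)EF}$ is a trace-preserving CP map, as is $\rho_{XE} \mapsto \rho_S^{\mix} \otimes \rho_{EF}$ (partial trace onto $E$, then tensoring with $\rho_S^{\mix}$ and the uniform classical state on $F$). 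Applying Lemma \ref{lemma:monotonicity-of-trace-distance} to both maps shows that each of these two states differs from its $\bar{\rho}$-counterpart by at most $\varepsilon$ in trace norm. Two triangle inequalities and the non-smooth bound applied to $\bar{\rho}_{XE}$ then give
\begin{eqnarray*}
d(\rho_{F(X)EF}|EF) \le 2\varepsilon + 2^{-\frac{1}{2}(H_{\min}^\varepsilon(\rho_{XE}|E) - \nu - \ell)},
\end{eqnarray*}
and letting $\nu \downarrow 0$ yields the claim.

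The main obstacle is the Cauchy--Schwarz-plus-expansion computation underlying the non-smooth inequality: one has to insert the weights $\sigma_E^{-1/4}$ legitimately (via a pseudoinverse on $\rom{supp}(\sigma_E)$), verify that the cross terms in the squared expansion cancel after exploiting $\sum_s \rho_E^s = \rho_E$, and isolate exactly the collision-entropy quantity from the universal-hashing bound. Once that core inequality is in hand, passage from $H_2$ to $H_{\min}$ and from $H_{\min}$ to $H_{\min}^\varepsilon$ is routine bookkeeping with the trace-distance monotonicity.
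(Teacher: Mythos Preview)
The paper does not supply its own proof of this lemma: it is quoted verbatim as \cite[Corollary 5.6.1]{renner:05b} and used as a black box, consistent with the remark preceding Section~2.2 that results with citations are taken from Renner's thesis without proof. Your two-stage outline---a non-smooth leftover-hash inequality obtained by the $\sigma_E^{-1/4}$-weighted Hilbert--Schmidt bound, expansion of the square, cancellation of cross terms, and the universal-hashing collision bound, followed by smoothing via two applications of Lemma~\ref{lemma:monotonicity-of-trace-distance} and a triangle inequality---is exactly the argument in Renner's thesis that the paper is citing, so there is nothing to compare beyond noting that you have correctly reconstructed the referenced proof.
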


By using Corollary \ref{lemma-weak-chain-rule} and 
Lemma \ref{lemma-privacy-amplification}, we can 
derive the following corollary, which gives the length
of the securely distillable key when Eve can access 
classical information in addition to the quantum information.
\begin{corollary}
\label{corollary-privacy-amplification}
Let $\rho_{XCE}$ be a density operator 
on ${\cal P}({\cal H}_X \otimes {\cal H}_C \otimes {\cal H}_E)$
that is classical with respect to the systems $X$ and $C$.
Let ${\cal F}$ be a universal family of hash functions from
${\cal X}$ to $\{0,1\}^\ell$, and let $\varepsilon > 0$.
If 
\begin{eqnarray*}
\ell < H_{\min}^{\bar{\varepsilon}}(\rho_{XE}|E) - \log \dim {\cal H}_C
 - 2 \log(1/\varepsilon),
\end{eqnarray*}
then we have
\begin{eqnarray*}
d(\rho_{F(X)CEF}|CEF) \le 3 \varepsilon,
\end{eqnarray*}
where $\bar{\varepsilon} = \varepsilon^2/8$.
\end{corollary}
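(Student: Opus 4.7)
The plan is to combine the chain-rule corollary (Corollary \ref{lemma-weak-chain-rule}) with the privacy amplification lemma (Lemma \ref{lemma-privacy-amplification}), choosing the smoothing parameters so that everything matches.

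First I would observe that $\rho_{XCE}$ is classical on the system $C$, so by applying Corollary \ref{lemma-weak-chain-rule} with the role of its $X$ played by our $C$, its $B$ by our $X$, and its $C$ by our $E$, one obtains, for any $\varepsilon' \ge 0$,
\begin{eqnarray*}
H_{\min}^{\sqrt{8\varepsilon'}}(\rho_{CXE}|CE) \ge H_{\min}^{\varepsilon'}(\rho_{XE}|E) - H_{\max}(\rho_C).
\end{eqnarray*}
The key choice is $\varepsilon' := \varepsilon^2/8 = \bar{\varepsilon}$, which makes $\sqrt{8\varepsilon'} = \varepsilon$. Combined with the elementary bound $H_{\max}(\rho_C) \le \log \dim {\cal H}_C$, this yields
\begin{eqnarray*}
H_{\min}^{\varepsilon}(\rho_{XCE}|CE) \ge H_{\min}^{\bar{\varepsilon}}(\rho_{XE}|E) - \log \dim {\cal H}_C.
\end{eqnarray*}

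Next I would apply Lemma \ref{lemma-privacy-amplification} to the state $\rho_{XCE}$, viewed as classical on ${\cal H}_X$ with $CE$ regarded as the (quantum) side information — this is legitimate since a classical system is a special case of a quantum system. By the hypothesis on $\ell$ together with the above bound,
\begin{eqnarray*}
\ell < H_{\min}^{\bar{\varepsilon}}(\rho_{XE}|E) - \log \dim {\cal H}_C - 2 \log(1/\varepsilon)
\le H_{\min}^{\varepsilon}(\rho_{XCE}|CE) - 2 \log(1/\varepsilon),
\end{eqnarray*}
so $H_{\min}^{\varepsilon}(\rho_{XCE}|CE) - \ell > 2\log(1/\varepsilon)$ and therefore $2^{-\frac{1}{2}(H_{\min}^{\varepsilon}(\rho_{XCE}|CE) - \ell)} < \varepsilon$.

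Plugging this into Lemma \ref{lemma-privacy-amplification} gives $d(\rho_{F(X)CEF}|CEF) \le 2\varepsilon + \varepsilon = 3\varepsilon$, which is the claim. The only place where there is anything subtle is the matching of smoothness parameters: the definition $\bar{\varepsilon} = \varepsilon^2/8$ is dictated precisely by the $\sqrt{8\cdot}$ loss in Corollary \ref{lemma-weak-chain-rule}, so that the $\varepsilon$-smoothness demanded by the privacy amplification step is exactly what the chain rule delivers. No new estimates beyond these two lemmas are needed.
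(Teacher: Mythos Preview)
Your proof is correct and follows exactly the approach the paper indicates: combine Corollary~\ref{lemma-weak-chain-rule} (with the roles permuted so that the classical conditioning system $C$ plays the part of the ``$X$'' there) with Lemma~\ref{lemma-privacy-amplification}, choosing $\varepsilon' = \bar{\varepsilon} = \varepsilon^2/8$ to match the smoothness parameters. The paper only sketches this (``By using Corollary~\ref{lemma-weak-chain-rule} and Lemma~\ref{lemma-privacy-amplification}\ldots''), and you have filled in the details correctly.
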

\begin{remark}
\label{remark:privacy-amplification}
When the density operator $\rho_{XCE}$ is such that
the system $C$ only depends on $X$, then $\bar{\varepsilon}$
in Corollary \ref{corollary-privacy-amplification} can be
replaced by $\varepsilon$ \cite[Lemma 6.4.1]{renner:05b}.
\end{remark}



\chapter{Channel Estimation}
\label{ch:channel-estimation}

\section{Background}

As we have mentioned in 
Chapter \ref{ch:introduction},
the QKD protocols consists of three phases:
the bit transmission phase, the channel estimation phase,
and the postprocessing phases.
The postprocessing is a procedure in which Alice and
Bob generate a secret key from their bit sequences
obtained in the bit transmission phase, and the 
key generation rate (the length of the generated key
divided by the length of their initial bit sequences)
is decided according to the amount of Eve's ambiguity
about their bit sequence estimated in the channel 
estimation phase. The channel estimation
phase is the main topic investigated in 
this chapter.

Mathematically, quantum channels are described by 
trace preserving completely positive (TPCP) maps
\cite{nielsen-chuang:00}.
Conventionally in the QKD protocols, we only use 
the statistics of matched measurement outcomes,
which are transmitted and received by the same basis,
to estimate the TPCP map describing the quantum channel;
mismatched measurement outcomes, which are
transmitted and received by different bases, are discarded
in the conventionally used channel estimation methods.
By using the statistics of mismatched measurement outcomes
in addition to that of matched measurement outcomes, we can estimate 
the TPCP map more accurately than the conventional 
estimation method. Such an accurate channel estimation method
is also known as the quantum tomography \cite{chuang:97, poyatos:97}.
In early 90s, Barnett et al.~\cite{barnett:93} 
showed that the use of mismatched measurement
outcomes enables Alice and Bob to detect the presence of Eve with 
higher probability for the so-called intercept and resend attack.
Furthermore, some literatures use the accurate estimation method to ensure
the channel to be a Pauli channel
\cite{bruss:03,liang:03, kaszlikowski:05, kaszlikowski:05b},
where a Pauli channel is a channel over which four kinds of 
Pauli errors (including the identity) occur probabilistically.
However the channel is not necessarily a Pauli channel.

The use of the accurate channel estimation method
has a potential to improve the key generation rates of 
the QKD protocols. For this purpose, we have to
construct a postprocessing that fully utilize the
accurate channel estimation results.
However, there was no proposed practically implementable postprocessing that 
can fully utilizes the accurate estimation method.
Recently, Renner et al.~\cite{renner:05, renner:05b, kraus:05}
developed information theoretical techniques to
prove the security of the QKD protocols. 
Their proof techniques can be used to prove the security
of the QKD protocols with a postprocessing that fully utilizes
the accurate estimation method.
However they only considered Pauli channels or
partial twirled  
channels\footnote{By the partial twirling (discrete twirling)
\cite{bennett:96b}, any channel becomes a Pauli channel.}.
For Pauli channels, the accurate estimation method
and the conventional estimation method make no difference.

In this chapter, we propose 
a channel estimation procedure
in which we use the mismatched measurement outcomes
in addition to the matched measurement outcomes,
and also propose a postprocessing that fully 
utilize our channel estimation procedure.
We use the Slepian-Wolf 
coding \cite{slepian:73} with the linear code
(linear Slepian-Wolf coding) 
in our information reconciliation (IR) procedure.

The use of the linear Slepian-Wolf coding
in the IR procedure has the following advantage over the 
IR procedures in the literatures
\cite{renner:05, renner:05b, kraus:05, devetak:04}.
In \cite{devetak:04}, the authors constructed their IR procedure
by the so-called random coding method. Therefore, their
IR procedure is not practically implementable.
In \cite{renner:05, renner:05b, kraus:05}, the authors
constructed their IR procedure by randomly choosing
an encoder from a universal hash family\footnote{See Definition \ref{definition-two-universal-hash}
for the definition of the universal hash family.}.
Their IR procedure is essentially equivalent to 
the Slepian-Wolf coding. However, the ensemble
the encoder of the low density parity check (LDPC) code, which
is one of the practical linear codes, is not
a universal hash family.
On the other hand, 
the linear code in our IR procedure can be 
a LDPC code.

The rest of this chapter is organized as follows:
In Section \ref{sec:bb84-and-six-state-protocol},
we explain the bit transmission phase of the 
QKD protocols with some technical terminologies.
Then, we formally describe the problem setting of
the QKD protocols.
In Section \ref{sec:one-way-IR},
we show our IR procedure.
In Section \ref{subsec:key-generation-rate},
we show our proposed channel estimation procedure, and
then clarify a sufficient condition such that Alice
and Bob can share a secure key
(Theorem \ref{theorem:one-way-security}).
Then, we derive the asymptotic key generation rate
formulae.
In Section \ref{sec:relation-to-conventional},
we clarify the relation between our proposed
channel estimation procedure and the conventional
channel estimation procedure. 
In Section \ref{sec:example}, we investigate 
the asymptotic key generation rates for some representative examples
of channels.

It should be noted that most of the results in this chapter
first appeared in \cite{watanabe:08}. However, some of the results
in Section \ref{subsec:amplitude-damping} and 
Section \ref{sec:condition-strict-improvement} are newly obtained
in this thesis.

\section{BB84 and Six-State Protocol}
\label{sec:bb84-and-six-state-protocol}

In the six-state protocol,
Alice randomly sends bit $0$ or $1$ to Bob by modulating it
into a transmission basis that is randomly chosen from
the $\san{z}$-basis $\{ \ket{0_\san{z}}, \ket{1_\san{z}} \}$,
the $\san{x}$-basis $\{ \ket{0_\san{x}}, \ket{1_\san{x}} \}$, 
or the $\san{y}$-basis $\{ \ket{0_\san{y}}, \ket{1_\san{y}} \}$,
where $\ket{0_\san{a}}, \ket{1_\san{a}} $ are eigenstates of the Pauli 
operator $\sigma_\san{a}$ for $\san{a} \in \{\san{x},\san{y},\san{z}\}$ respectively.  
Then Bob randomly chooses one of measurement observables
$\sigma_\san{x}$, $\sigma_\san{y}$, and $\sigma_\san{z}$, and converts
a measurement result $+1$ or $-1$ into
a bit $0$ or $1$ respectively.
After a sufficient number of transmissions, Alice
and Bob publicly announce their transmission bases and
measurement observables. 
They also announce a part of their bit sequences 
as sample bit sequences for
estimating channel between Alice and Bob.

In the BB84 protocol, Alice only uses $\san{z}$-basis and
$\san{x}$-basis to transmit the bit sequence, 
and Bob only uses observables $\sigma_\san{z}$ and 
$\sigma_\san{x}$ to receive the bit sequence. 

For simplicity we assume that Eve's attack is 
the collective attack, i.e., the channel connecting
Alice and Bob is given by tensor products of 
a channel $\mathcal{E}_B$
from a qubit density operator to itself.
This assumption is not a restriction for
Eve's attack by the following reason.
Suppose that Alice and Bob perform a random permutation to
their bit sequence. By performing this random
permutation, the channel between Alice and Bob becomes
permutation invariant. Then, we can 
asymptotically reduce the security of
the QKD protocols for the most general attack,
the coherent attack, to the security
of the collective attack by using the 
(quantum) de Finetti representation 
theorem \cite{renner:05b, renner:07, christandl:09}.
Roughly speaking, the de Finetti representation theorem
says that (randomly permuted) general attack can be
approximated by a convex mixture of collective attacks. 

So far we have explained the so-called prepare and
measure scheme of the QKD protocols. There is the so-called
entanglement based scheme of the QKD protocols \cite{ekert:91}. 
In the entanglement
based scheme, Alice prepares the Bell state 
\begin{eqnarray*}
\ket{\psi} = \frac{\ket{00} + \ket{11}}{\sqrt{2}}, 
\end{eqnarray*}
and sends the second system to Bob over the quantum 
channel ${\cal E}_B$. Then, Alice and Bob conduct 
measurements for the shared state
\begin{eqnarray*}
\rho_{AB} := (\rom{id} \otimes {\cal E}_B)(\psi)
\end{eqnarray*}
by using randomly chosen observables $\sigma_\san{a}$
and $\sigma_\san{b}$ respectively. Although the 
entangled based  scheme 
is essentially equivalent to the prepare and measure
scheme \cite{bennett:92}, the latter is more practical
in the present day technology because Alice and Bob
do not need the quantum memory to store qubits.
However, the former is more convenient to mathematically 
treat the BB84 protocol and the six-state protocol
in a unified manner. Therefore in the rest of this thesis,
we employ the entanglement based scheme of the QKD
protocols, and consider the following situation.

Suppose that Alice and Bob share the bipartite 
(qubits) system $({\cal H}_A \otimes {\cal H}_B)^{\otimes N}$
whose state is $\rho_{AB}^{\otimes N}$. Alice and 
Bob conduct measurements for the first $n$ (out of $N$) bipartite systems
by $\san{z}$-basis respectively\footnote{In this thesis, we mainly
consider a secret key generated from Alice and Bob's measurement
outcomes by $\san{z}$-basis. Therefore, we occasionally omit
the subscripts $\{\san{x}, \san{y}, \san{z}\}$ of bases, and
the basis $\{ \ket{0}, \ket{1} \}$ is regarded as $\san{z}$-basis
unless otherwise stated.}. They also conduct measurements for the latter
$m$ (out of $N$) bipartite systems by randomly chosen bases from the set
${\cal J}_b := \{\san{x},\san{z}\}$ in the BB84 protocol and 
${\cal J}_s := \{\san{x}, \san{z}, \san{y}\}$ in the 
six-state protocol. Formally, the measurement for the latter $m$
systems can be described by the bipartite POVM 
${\cal M} := \{ M_z \}_{z \in {\cal Z}}$ on 
the bipartite system ${\cal H}_A \otimes {\cal H}_B$, 
where ${\cal Z} := \mathbb{F}_2 \times {\cal J}_b \times \mathbb{F}_2
\times {\cal J}_b$
for the BB84 protocol and 
${\cal Z} := \mathbb{F}_2 \times {\cal J}_s \times \mathbb{F}_2 \times
{\cal J}_s$ for the six-state protocol.
Note that Alice and Bob generate a secret key from the first $n$
measurement outcomes 
$(\bol{x}, \bol{y}) \in \mathbb{F}_2^n \times \mathbb{F}_2^n$,
and they estimate an unknown density operator $\rho_{AB}$
by using the latter measurement outcomes 
$\bol{z} \in {\cal Z}^m$, which we call the sample sequence.
When we do not have to discriminate between the BB84 protocol
and the six-state protocol, we omit the subscripts of
${\cal J}_b$ and ${\cal J}_s$, and denote them by ${\cal J}$.

As is usual in QKD literatures, we assume\footnote{By
this assumption, we are considering the worst case, that is,
the security under this assumption implies the security for
the situation in which Eve can conduct a measurement for 
a subsystem ${\cal H}_{E^\prime}$ of ${\cal H}_E$. This fact
can be formally proved by using the monotonicity of the
trace distance, because the security is defined by 
using the trace distance in this thesis 
(see Section \ref{subsec:key-generation-rate}).}
that
Eve can obtain her information
by conducting a measurement for an environment system ${\cal H}_E$
such that a purification $\psi_{ABE}$ of $\rho_{AB}$ is
a density operator of joint system 
${\cal H}_A \otimes {\cal H}_B \otimes {\cal H}_E$.
Therefore, Alice's bit sequence $\bol{x} = (x_1,\ldots, x_n)$,
Bob's bit sequence $\bol{y} = (y_1,\ldots, y_n)$,
and the state in Eve's system can be described by 
the $\{ccq\}$-state 
\begin{eqnarray*}
\rho_{\bol{X}\bol{Y}\bol{E}} = 
  \sum_{(\bol{x}, \bol{y}) \in \mathbb{F}_2^n \times \mathbb{F}_2^n}
P_{XY}^n(\bol{x}, \bol{y}) \ket{\bol{x}, \bol{y}}\bra{\bol{x},\bol{y}}
\otimes \rho_{\bol{E}}^{\bol{x}, \bol{y}},
\end{eqnarray*}
where $P_{XY}^n$ is the product distribution of
$P_{XY}(x,y) := \rom{Tr}[ \ket{x,y}\bra{x,y} \rho_{AB}]$,
and $\rho_{\bol{E}}^{\bol{x}, \bol{y}} := \rho_E^{x_1,y_1} \otimes
\cdots \otimes \rho_E^{x_n, y_n}$ for the normalized 
density operator $\rho_E^{x,y}$ of
$\rom{Tr}_{AB}[(\ket{x,y}\bra{x,y} \otimes I_E) \psi_{ABE}]$.

\section{One-Way Information Reconciliation}
\label{sec:one-way-IR}

When Alice and Bob have correlated classical sequences,
$\bol{x}, \bol{y} \in \mathbb{F}_2^n$, the purpose of the IR 
procedure for Alice and Bob is to share the same classical sequence by
exchanging messages over the public authenticated channel, where
$\mathbb{F}_2$ is the field of order $2$.
Then, the purpose of the PA procedure is to extract a secret
key from the shared bit sequence.
In this section, we present the most basic IR procedure,
the one-way IR procedure. In the one-way IR procedure,
only Alice (resp.~Bob) transmit messages to Bob (resp.~Alice)
over the public channel.

Before describing our IR procedure, we should review the basic
facts of linear codes.
An $[n,n-k]$ classical linear code $\mathcal{C}$ is an 
$(n-k)$-dimensional linear subspace of $\mathbb{F}_2^n$, and its
parity check matrix $M$ is an $k \times n$ matrix
of rank $k$ with $0,1$ entries such that $M \mathbf{c} = \mathbf{0}$
for any codeword $\mathbf{c} \in \mathcal{C}$.
By using these preparations, our procedure is described as follows:
\begin{enumerate}
\renewcommand{\theenumi}{\roman{enumi}}
\renewcommand{\labelenumi}{(\theenumi)}
\item \label{one-way-IR-step1}
Alice calculates the syndrome $t = t(\bol{x}) := M \mathbf{x}$,
and sends it to Bob over the public channel.

\item \label{one-way-IR-step2}
Bob decodes $(\mathbf{y}, t)$ into an estimate of $\mathbf{x}$ by 
a decoder $\hat{\bol{x}}: \mathbb{F}_2^n \times \mathbb{F}_2^k \to \mathbb{F}_2^n$.
\end{enumerate}

In the QKD protocols, Alice and Bob do not
know the probability distribution $P_{XY}$ in advance,
and they estimate  candidates 
$\{ P_{XY, \theta} \mymid \theta \in \Theta\}$ of the
actual probability distribution $P_{XY}$. 
In order to use the above IR procedure in the QKD protocols,
the decoding error probability have to be universally small
for any candidate of the probability distribution.
For this reason, we introduce the concept that
an IR procedure is $\delta$-universally-correct\footnote{Early papers
of QKD protocols did not consider the universality of the IR procedure.
The need for the universality was first pointed out by
Hamada \cite{hamada:04} as long as the author's knowledge.} as follows.
\begin{definition}
\label{def:one-way-correct}
We define an IR procedure to be $\delta$-universally-correct
for the class $\{ P_{XY, \theta} \mymid \theta \in \Theta\}$
of probability distributions if
\begin{eqnarray*}
P_{XY,\theta}^n(\{ (\bol{x},\bol{y}) \mymid \bol{x} \neq
 \hat{\bol{x}}(\bol{y},t(\bol{x})) \}) \le \delta
\end{eqnarray*}
for every $\theta \in \Theta$.
\end{definition}

An example of a decoder that fulfils the universality
is the minimum entropy decoder defined by
\begin{eqnarray*}
\hat{\bol{x}}(\bol{y}, t) := \argmin_{\bol{x}: M \bol{x} = t} H(P_{\bol{x}\bol{y}}).
\end{eqnarray*}
\begin{theorem}
\label{theorem:universal-coding}
\cite[Theorem 1]{csiszar:82}
Let $r$ be a real number that satisfies
\begin{eqnarray*}
r > \min_{\theta \in \Theta} H(X_\theta|Y_\theta),
\end{eqnarray*}
where the random variables $(X_\theta,Y_\theta)$ are distributed
according to $P_{XY,\theta}$.
Then, for every sufficiently large $n$, there exists a $k \times n$ parity check 
matrix $M$ such that $\frac{k}{n} \le r$ and a constant
$E > 0$ that does not depends on $n$,
and then the decoding error probability by the minimum
entropy decoding satisfies
\begin{eqnarray*}
P_{XY,\theta}^n(\{ (\bol{x},\bol{y}) \mymid \bol{x} \neq
 \hat{\bol{x}}(\bol{y},t(\bol{x})) \}) \le e^{-n E}
\end{eqnarray*}
for every $\theta \in \Theta$.
\end{theorem}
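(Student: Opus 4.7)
The plan is to adapt the method-of-types argument for universal Slepian--Wolf coding with linear codes due to Csisz\'ar. First I would work with the random ensemble of parity check matrices $M \in \mathbb{F}_2^{k\times n}$ whose entries are i.i.d.\ uniform over $\mathbb{F}_2$. This ensemble is pairwise independent, so $\Pr[M\bol{x} = M\bol{x}'] = 2^{-k}$ for any $\bol{x} \ne \bol{x}'$. The minimum-entropy decoder fails at $(\bol{x}, \bol{y})$ precisely when some $\bol{x}' \ne \bol{x}$ in the coset $\bol{x} + \ker M$ satisfies $H(P_{\bol{x}'\bol{y}}) \le H(P_{\bol{x}\bol{y}})$; pairwise independence lets us bound the expected number of such ``dangerous'' $\bol{x}'$ by $2^{-k}$ times their unconstrained count.

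The core combinatorial estimate comes from grouping the $\bol{x}'$ by joint type with the fixed $\bol{y}$. For each conditional type $Q_{X|Y}$, the number of $\bol{x}'$ having that conditional type relative to $\bol{y}$ is at most $2^{nH(Q_{X|Y}|Q_Y)}$, and the number of joint types is polynomial in $n$. Summing and multiplying by $2^{-k}$, the ensemble-averaged conditional error probability at $(\bol{x},\bol{y})$ is at most $\mathrm{poly}(n)\cdot 2^{nH(P_{\bol{x}|\bol{y}}) - k}$. Taking expectation over $(\bol{x}, \bol{y}) \sim P_{XY,\theta}^n$ and grouping by joint type via the type-probability bound $\Pr[P_{\bol{x}\bol{y}} = Q] \le 2^{-n D(Q\|P_{XY,\theta})}$ gives the standard random-coding exponent
\begin{eqnarray*}
E(\theta) := \min_{Q_{XY}}\bigl[\, D(Q_{XY}\|P_{XY,\theta}) + |\, H(Q_{X|Y}|Q_Y) - r\,|^{+}\, \bigr],
\end{eqnarray*}
which is strictly positive whenever $r > H(X_\theta|Y_\theta)$.

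The main obstacle is the universality in $\theta$: the exponent depends on the source, so one must produce a single $M$ that works simultaneously for every $\theta \in \Theta$. This is where the type-universality of the minimum-entropy decoder is crucial: the error event at a given $(\bol{x}, \bol{y})$ depends only on the joint type $P_{\bol{x}\bol{y}}$ and on $M$, not on $\theta$. Hence the ensemble-average failure probability on each of the polynomially many joint type classes can be bounded simultaneously, and an expurgation argument via Markov's inequality yields a single deterministic $M$ whose type-conditional failure probability is at most $\mathrm{poly}(n)$ times the ensemble average on every type class. For that fixed $M$, the decoding error under $P_{XY,\theta}^n$ is then at most $\mathrm{poly}(n)\cdot e^{-nE(\theta)}$, so choosing $E$ to be any positive number strictly below $\inf_{\theta\in\Theta} E(\theta)$ absorbs the polynomial factor for all sufficiently large $n$ and gives the stated uniform bound $e^{-nE}$.
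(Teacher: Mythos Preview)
The paper does not supply its own proof of this statement: it is quoted verbatim as \cite[Theorem 1]{csiszar:82} and used as a black box. Your sketch is essentially the Csisz\'ar argument from that reference --- random linear encoder, minimum-entropy decoder, type counting to bound the number of confusable sequences, and the observation that the decoder's error event depends only on the joint type so that a single expurgated matrix works for every source in the class. In that sense your proposal matches the intended (cited) proof.

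One point worth flagging: your last step needs $\inf_{\theta\in\Theta} E(\theta)>0$, and your exponent $E(\theta)$ is positive precisely when $r>H(X_\theta\mid Y_\theta)$. The hypothesis as printed in the paper, $r>\min_{\theta} H(X_\theta\mid Y_\theta)$, does not guarantee this for every $\theta$; what is actually needed (and what Csisz\'ar's theorem delivers) is $r>\sup_{\theta} H(X_\theta\mid Y_\theta)$, together with enough compactness on $\Theta$ for the infimum of exponents to stay bounded away from zero. Your argument implicitly uses the correct condition, so the sketch is sound once the hypothesis is read in the intended way; just be aware that the ``$\min$'' in the displayed statement appears to be a typo for ``$\max$''.
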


\begin{remark}
\label{remark:ir-with-code}
Conventionally, we used the error correcting code
instead of the Slepian-Wolf coding in the 
IR procedure (e.g.~\cite{shor:00}). 
In this remark, we show that the leakage of
information to Eve in the above IR procedure is
as small as that in the IR procedure with the 
error correcting code. Furthermore, we show
the sufficient and necessary condition for
that the former equals to the latter.
 
For appropriately chosen
linear code ${\cal C} \subset \mathbb{F}_2^n$,
the IR procedure with the error correcting (linear) code
is conducted as follows.
\begin{enumerate}
\renewcommand{\theenumi}{\roman{enumi}}
\renewcommand{\labelenumi}{(\theenumi)}
\item \label{step1-ir-ec}
Alice randomly choose a code word $\bol{c} \in {\cal C}$,
and sends $\bol{c} + \bol{x}$ to Bob over the public channel.

\item Bob decodes $\bol{c} + \bol{x} + \bol{y}$ into an estimate
$\hat{\bol{c}}$ of the code word $\bol{c}$ by a decoder
from $\mathbb{F}_2^n$ to ${\cal C}$. Then, he obtains  an estimate
$\hat{\bol{x}}$ of $\bol{x}$ by subtracting $\hat{\bol{c}}$ from
the received public message $\bol{c} + \bol{x}$.
\end{enumerate}
Note that Step (\ref{step1-ir-ec}) is equivalent to
sending the syndrome $M \bol{x} \in \mathbb{F}_2^k$ to Bob
from the view point of Eve, because
Eve can know to which coset of $\mathbb{F}_2^n/ {\cal C}$
Alice's sequence $\bol{x}$ belongs by knowing $\bol{c} + \bol{x}$.
However, the length $k$ of the syndrome have to be larger
than that in the IR procedure with the Slepian-Wolf coding
by the following reason.

Define a probability distribution\footnote{For simplicity, we assume
that there exists only one candidate of distribution $P_{XY}$,
and omit $\theta$ in this remark.} on $\mathbb{F}_2$ as
\begin{eqnarray}
\label{eq-pw}
P_W(w) := \sum_{y \in \mathbb{F}_2} P_Y(y) P_{X|Y}(y+w|y).
\end{eqnarray}
Then the error $\mathbf{w} := \mathbf{x} + \mathbf{y}$
between Alice and Bob's sequences is distributed according to $P_W^n$. 
Since we can regard that the code word $\bol{c}$ is transmitted over
the binary symmetric channel (BSC) with the crossover probability $P_W(1)$,
the converse of the channel coding theorem \cite{cover}
implies that $\dim {\cal C}/n = 1 - k/n$ have to be smaller than
$1 - H(W)$. By using the log-sum inequality \cite{cover}
and Eq.~(\ref{eq-pw}), we have
\begin{eqnarray*}
\lefteqn{ H(X|Y) } \\
&=& \sum_{x, y \in \mathbb{F}_2}  P_Y(y) 
    P_{X|Y}(x|y) \log \frac{1}{P_{X|Y}(x|y)} \\
&=& \sum_{w ,y \in \mathbb{F}_2} 
    P_{Y}(y) P_{X|Y}(y+w|y) \log 
       \frac{P_{Y}(y)}{P_{Y}(y) P_{X|Y}(y+w|y)} \\
&\le& \sum_{w \in \mathbb{F}_2} P_W(w) \log
   \frac{1}{P_W(w)} \\
&=& H(W),
\end{eqnarray*}
and the equality holds if and only if
$P_{X|Y}(w|0)$ equals $ P_{X|Y}(1 + w|1)$ for any $w \in \mathbb{F}_2$.
\end{remark}

\begin{remark}
\label{remark:ldpc}
When we implement the above IR procedure, we should use
a parity check matrix with an efficient decoding algorithm.
For example, we may use the low density parity check (LDPC)
matrix \cite{gallager:63} with the sum-product algorithm.

For a given sequence $\mathbf{y} \in \mathbb{F}_2^n$,
and a syndrome $t \in \mathbb{F}_2^k$, define
a function
\begin{eqnarray}
\label{eq-sum-product}
P^*(\hat{\mathbf{x}}) := \prod_{j=1}^n P_{X|Y}(\hat{x}_j|y_j)
 \prod_{i=1}^k \mathbf{1}\left[ \sum_{l \in N(i)} 
 \hat{x}_l = t_i \right],
\end{eqnarray}
where $N(i) := \{ j \mid M_{ij} = 1\}$ for 
the parity check matrix $M$, and $\mathbf{1}[\cdot]$ is the
indicator function.
The function $P^*(\hat{\mathbf{x}})$ is the non-normalized
a posteriori probability distribution on $\mathbb{F}_2^n$
given $\mathbf{y}$ and $t$. The sum-product
algorithm is a method to (approximately) calculate
the marginal a posteriori probability, i.e.,
\begin{eqnarray*}
P^*_j(\hat{x}_j) := \sum_{\hat{x}_l,l \neq j}
  P^*(\hat{\mathbf{x}}).
\end{eqnarray*}
The definition of a posteriori probability in Eq.~(\ref{eq-sum-product})
is the only difference between the decoding for the Slepian-Wolf source coding
and that for the channel coding.
More precisely, we replace \cite[Eq.~(47.6)]{mackay-book} with
Eq.~(\ref{eq-sum-product}), and use the algorithm in 
\cite[Section 47.3]{mackay-book}.
The above procedure is a generalization of \cite{liveris:02},
and a special case of \cite{coleman:06}.

In QKD protocols we should 
minimize the block error probability 
rather than the bit error probability, because a bit error might
propagate to other bits after the privacy amplification.
Although the sum-product algorithm is designed to minimize
the bit error probability, it is known by computer simulations
that the algorithm makes the block error 
probability small \cite{mackay-book}. 

Unfortunately, it has not been shown analytically that
the LDPC matrix with the sum-product algorithm can satisfy
the condition in Definition \ref{def:one-way-correct}.
However, it has been shown that 
the LDPC matrix can
satisfy the condition in Definition \ref{def:one-way-correct}
if we use the maximum a posteriori probability (MAP) 
decoding with an estimated probability distribution
\cite{yamasaki:09}\footnote{In \cite{muramatsu:05}, Muramatsu
 {\em et.~al.}~has proposed to use the LDPC code and the MAP decoding for the Slepian-Wolf
code sysmtem. However, their result cannot be used in the context of
the QKD protocol, because there is an estimation error of 
the distribution $P_{XY}$.}.
Since the sum-product algorithm is a approximation of
the MAP decoding, we expect that the LDPC matrix with the sum-product algorithm can satisfy
the condition in Definition \ref{def:one-way-correct} as well.
\end{remark}
\section{Channel Estimation and Asymptotic Key Generation Rate}

\subsection{Channel Estimation Procedure}
\label{subsec:key-generation-rate}

In this section, we show the channel estimation procedure.
The purpose of the channel estimation procedure
is to estimate an unknown Choi operator
$\rho = \rho_{AB} \in {\cal P}_c$
from the sample sequence $\bol{z} \in {\cal Z}^m$.
By using the estimate of the Choi operator,
we show a condition on the parameters (the rate of the syndrome
and the key generation rate) in the postprocessing 
such that Alice and Bob can share a secure key
(Theorem \ref{theorem:one-way-security}).

Let us start with the channel estimation procedure of
the six-state protocol.
In this thesis, we employ the maximum likelihood
(ML) estimator:
\begin{eqnarray*}
\hat{\rho}(\bol{z}) := 
\argmax_{\rho \in {\cal P}_c}
P_\rho^m(\bol{z}),
\end{eqnarray*}
where $P_\rho^m$ is $m$ products of 
the probability distribution 
$P_\rho$ of the sample symbol $z \in {\cal Z}$
defined by $P_\rho(z) := \rom{Tr}[M_z \rho]$.

As we have seen in Section
\ref{sec:key-agreement-in-information-theory},
the conditional von Neumann entropy 
\begin{eqnarray*}
H_\rho(X|E) := H(\rho_{XE}) - H(\rho_E)
\end{eqnarray*}
plays an important role to decide the key generation
rate in the postprocessing, where 
\begin{eqnarray*}
\rho_{XE} := \rom{Tr}_B\left[ \left(\sum_{x \in \mathbb{F}_2}
                             \ket{x}\bra{x} \otimes I_{BE} \right)
\psi_{ABE} \left( \sum_{x \in \mathbb{F}_2}
                             \ket{x}\bra{x} \otimes I_{BE}\right)\right]
\end{eqnarray*}
for a purification $\ket{\psi_{ABE}}$ of $\rho = \rho_{AB}$.
Therefore, we have to estimate 
this quantity, $H_\rho(X|E)$. Actually, the estimator
\begin{eqnarray*}
\hat{H}_{\bol{z}}(X|E) := H_{\hat{\rho}(\bol{z})}(X|E)
\end{eqnarray*}
is the ML estimator of $H_\rho(X|E)$
\cite[Theorem 7.2.10]{casella-berger:02}.

Next, we consider the channel estimation procedure of
the BB84 protocol. 
Although the Choi operator $\rho$ is described 
by $12$ real parameters (in the Stokes parameterization), from 
Eqs.~(\ref{eq:def-stokes-R}) and (\ref{eq:def-stokes-t}),
we find that the distribution $P_\rho$ only depends on the parameters
$\omega := (R_\san{zz},R_\san{zx}, R_\san{xz}, R_\san{xx}, t_\san{z},
t_\san{x})$,
and does not depend on the parameters
$\tau := (R_\san{zy}, R_\san{xy}, R_\san{yz}, R_\san{yx},
R_\san{yy},t_\san{y})$.
Therefore, we regard
the set 
\begin{eqnarray*}
\Omega := \{ \omega \in \mathbb{R}^6 \mymid \exists \tau \in
 \mathbb{R}^6 ~(\omega,\tau) \in {\cal P}_c \}
\end{eqnarray*}
as the parameter space, and denote $P_\rho$ by $P_\omega$.
Then, we estimate the parameters $\omega$ by the ML estimator:
\begin{eqnarray*}
\hat{\omega}(\bol{z}) := \argmax_{\omega \in \Omega} P_\omega^m(\bol{z}),
\end{eqnarray*}

Since we cannot estimate the parameters $\tau$, we have to
consider the worst case, and estimate the quantity
\begin{eqnarray}
\label{eq:definition-of-worst-case}
\min_{\varrho \in {\cal P}_c(\omega)} H_{\varrho}(X|E)
\end{eqnarray}
for a given $\omega \in \Omega$, where the set
\begin{eqnarray*}
{\cal P}_c(\omega) := \{ \varrho = (\omega^\prime, \tau^\prime) \in
 {\cal P}_c  \mymid \omega^\prime = \omega \}
\end{eqnarray*}
is the candidates of Choi operators for a given $\omega \in \Omega$.
Actually, 
\begin{eqnarray*}
\hat{H}_{\bol{z}}(X|E) := \min_{\varrho \in {\cal
 P}_c(\hat{\omega}(\bol{z}))} H_\varrho(X|E)
\end{eqnarray*}
is the ML estimator of the quantity in 
Eq.~(\ref{eq:definition-of-worst-case}).

It is known that the ML estimator is a consistent
estimator
(with certain conditions, which are satisfied in our
case \cite{wald:49}), that is, the quantities
\begin{eqnarray}
\label{eq:definition-mu-1}
\mu_s(\alpha,m) := P_{\rho}^m(\{ \bol{z} \mymid \|\hat{\rho}(\bol{z}) -
 \rho\| > \alpha\})
\end{eqnarray}
for the six-state protocol and
\begin{eqnarray}
\label{eq:definition-mu-2}
\mu_b(\alpha,m) := P_\omega^m(\{ \bol{z} \mymid \| \hat{\omega}(\bol{z}) -
 \omega \| > \alpha\})
\end{eqnarray}
for the BB84 protocol converge to $0$ for any $\alpha > 0$ as 
$m$ goes to infinity.
In the rest of this thesis, we omit the subscripts of $\mu_s(\alpha,m)$
and $\mu_b(\alpha,m)$, and denote them by $\mu(\alpha,m)$. 

Since $H_\rho(X|E)$ is a continuous function of $\rho$,
which follows from the continuity of the von Neumann entropy,
there exists a function
$\eta_s(\cdot)$ such that
\begin{eqnarray}
\label{eq:consistency-bound-1}
| \hat{H}_\bol{z}(X|E) - H_\rho(X|E) | \le \eta_s(\alpha) 
\end{eqnarray}
for $\| \hat{\rho}(\bol{z}) - \rho\| \le \alpha$ and
$\eta_s(\alpha) \to 0$ as $\alpha \to 0$. 
Similarly, since Eq.~(\ref{eq:definition-of-worst-case})
is a continuous function of $\omega$, which will be
proved in Lemma \ref{lemma:continuity-of-min},
there exists a function 
$\eta_b(\cdot)$ such that
\begin{eqnarray}
\label{eq:consistency-bound-2}
| \hat{H}_\bol{z}(X|E) - \min_{\varrho \in
 {\cal P}_c(\omega)} H_\varrho(X|E) | \le \eta_b(\alpha) 
\end{eqnarray}
for $\|\hat{\omega}(\bol{z}) - \omega\| \le \alpha$
and $\eta_b(\alpha) \to 0$ as $\alpha \to 0$. 
In the rest of this thesis, 
we omit the subscripts of $\eta_s(\cdot)$ and $\eta_b(\cdot)$,
and denote them by $\eta(\cdot)$.

\subsection{Sufficient Condition on Key Generation Rates for Secure Key Agreement}
\label{subsec:sufficient-codition}

In this section, we explain how Alice and Bob decides the parameters
of the postprocessing and conduct it. Then, we show a sufficient
conditions on the parameters such that Alice and Bob can
share a secure key.

If the sample sequence is not contained in a prescribed
acceptable region ${\cal Q} \subset {\cal Z}^m$
(see Remark \ref{remark:abort} for the definition),
then Alice and Bob abort the protocol.
Otherwise, they decide the rate $\frac{k(\bol{z})}{n}$ of the linear code 
used in the IR procedure according to the sample bit sequence $\bol{z}$. 
Furthermore, they also
decide the length $\ell(\bol{z})$ of the finally distilled
key according to the sample sequence $\bol{z}$.
Then, they conduct the postprocessing as follows.
\begin{enumerate}
\renewcommand{\theenumi}{\roman{enumi}}
\renewcommand{\labelenumi}{(\theenumi)}

\item \label{one-way-postprocessing-step1}
Alice and Bob undertake the IR procedure
of Section \ref{sec:one-way-IR}, and 
Bob obtains the estimate $\hat{\bol{x}}$
of Alice's raw key $\bol{x}$.

\item \label{one-way-postprocessing-step2}
Alice and Bob carry out the
privacy amplification (PA) procedure 
to distill a key pair $(s_A, s_B)$ such 
that Eve has little information about it.
Alice first randomly chooses a function,
$f: \mathbb{F}_2^{n} \to \{ 0,1\}^{\ell(\bol{z})}$,
from a universal hash family 
(see Definition \ref{definition-two-universal-hash}),
and sends the choice of $f$ to Bob over the public channel.
Then, Alice's distilled key is
$s_A = f(\bol{x})$ and 
Bos's distilled key is $s_B = f(\hat{\bol{x}})$
respectively.
\end{enumerate}

We have explained the procedures of the postprocessing so far.
The next thing we have to do is to define the security
of the generated key formally.
By using the convention in Eq.~(\ref{eq:cq-state-mapping})
for the $\{ccq\}$-state $\rho_{\bol{X}\bol{Y}\bol{E}}$
and the mapping that describes the postprocessing,
the generated key pair and Eve's available information
can be described by a $\{cccq\}$-state,
$\rho_{S_A S_B C \bol{E}}^{\bol{z}}$, where classical system
$C$ consists of the random variable
$T$ that describe
 the syndrome transmitted in the IR procedure
and the random variable $F$ that describes the choice  of the 
function in the PA procedure. It should be noted that
the $\{cccq\}$-state $\rho_{S_A S_B C \bol{E}}^{\bol{z}}$
depends on the sample sequence $\bol{z}$ because the
parameters in the postprocessing is determined from it.
To define the security of the distilled key pair
$(S_A, S_B)$, we use the universally composable security 
definition \cite{ben-or:04, renner:05c} (see also \cite{renner:05b}),
which is defined by the trace distance between the
actual key pair and the ideal key pair.
We cannot state security of the QKD protocols 
in the sense that the distilled key
pair $(S_A, S_B)$ is secure 
for a particular sample sequence $\bol{z}$,
because there is a slight possibility that
the channel estimation procedure will underestimate
Eve's information.
\begin{definition}
\label{definition:security-of-key}
The generated key pair is said to be $\varepsilon$-secure (in the sense of the average
over the sample sequence\footnote{If it is obvious from the 
context, we occasionally use terms ``$\varepsilon$-secure'',
``$\varepsilon$-secret'', and ``$\delta$-correct'' for specific realization
$\bol{z}$ instead for average.}) if
\begin{eqnarray}
      \label{eq-varepsilon-secure-average}
\sum_{\bol{z} \in {\cal Q}} P_{\rho}^m(\bol{z}) 
\frac{1}{2} \|
\rho^\bol{z}_{S_A S_B C \bol{E}} - \rho_{S_A S_B}^{\bol{z},\rom{mix}} 
\otimes \rho^{\bol{z}}_{C \bol{E}}
\| \le \varepsilon
\end{eqnarray}
for any (unknown) Choi operator $\rho \in {\cal P}_c$ initially shared by Alice and Bob,
where $\rho_{S_A S_B}^{\bol{z},\rom{mix}} := \sum_{s \in {\cal S}_\bol{z}}
\frac{1}{|{\cal S}_\bol{z}|} \ket{s,s}\bra{s,s}$
is the uniformly distributed key on the key space
${\cal S}_\bol{z} := \{0, 1\}^{\ell(\bol{z})}$.
\end{definition}
\begin{remark}
\cite[Remark 6.1.3]{renner:05b}
The above security definition can be subdivided into
two conditions. If the generated key is
$\varepsilon$-secret, i.e.,
\begin{eqnarray*}
\sum_{\bol{z} \in {\cal Q}} P_{\rho}^m(\bol{z}) 
\frac{1}{2} \|
\rho^\bol{z}_{S_A C \bol{E}} - \rho_{S_A}^{\bol{z},\rom{mix}} 
\otimes \rho^{\bol{z}}_{C \bol{E}}
\| \le \varepsilon
\end{eqnarray*}
and $\delta$-correct, i.e.,
\begin{eqnarray*}
\sum_{\bol{z} \in {\cal Q}} P_{\rho}^m(\bol{z})  P_{S_A S_B}^{\bol{z}}(s_A \neq s_B) \le \delta,
\end{eqnarray*}
then the generated key pair is $(\varepsilon + \delta)$-secure.
\end{remark}

For a given Choi operator 
$\rho \in {\cal P}_c$,
we define the probability distribution 
$P_{XY,\rho} \in {\cal P}(\mathbb{F}_2 \times \mathbb{F}_2)$ as
\begin{eqnarray}
\label{eq:definition-of-pxy}
P_{XY, \rho}(x,y) := \rom{Tr}[ (\ket{x}\bra{x} \otimes
 \ket{y}\bra{y}) \rho].
\end{eqnarray}
Actually, $P_{XY,\rho}$ does not depend on the
parameter $\tau$ in the BB84 protocol. Therefor, we denote
$P_{XY,\rho}$ by $P_{XY,\omega}$ when we 
treat the BB84 protocol.

The following theorem gives a sufficient 
conditions
on $k(\bol{z})$ and $\ell(\bol{z})$ such that
the generated key pair is secure.
\begin{theorem}
\label{theorem:one-way-security}
For each sample sequence $\bol{z} \in {\cal Q}$, 
assume that the 
IR procedure is $\delta$-universally-correct for the
class of distributions
\begin{eqnarray*}
\{ P_{XY, \rho} \mymid \|\hat{\rho}(\bol{z}) - \rho\| \le \alpha \}
\end{eqnarray*}
in the six-state protocol, and for the
class of distributions 
\begin{eqnarray*}
\{ P_{XY,\omega} \mymid \|\hat{\omega}(\bol{z}) - \omega\| \le
 \alpha\}
\end{eqnarray*} 
in the BB84 protocol. 
For each $\bol{z} \in {\cal Q}$, if we set
\begin{eqnarray}
\label{eq:one-way-security-condition}
\frac{\ell(\bol{z})}{n} < 
\hat{H}_{\bol{z}}(X|E) - \eta(\alpha) - \frac{k(\bol{z})}{n} - \nu_n, 
\end{eqnarray}
then the distilled key pair is 
$(\varepsilon + \delta + \mu(\alpha,m))$-secure,
where 
$\nu_n := 5 \sqrt{\frac{\log(3/\varepsilon)}{n}} + \frac{2 \log(3/2\varepsilon)}{n}$.
\end{theorem}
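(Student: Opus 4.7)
The plan is to decompose the security bound into three contributions that match the three error terms $\varepsilon + \delta + \mu(\alpha,m)$. First I would split the average in Eq.~(\ref{eq-varepsilon-secure-average}) according to whether the estimation is accurate, that is, whether $\|\hat{\rho}(\bol{z})-\rho\|\le\alpha$ (six-state) or $\|\hat{\omega}(\bol{z})-\omega\|\le\alpha$ (BB84). Sample sequences outside this set contribute at most $\mu(\alpha,m)$ to the total by the definitions in Eqs.~(\ref{eq:definition-mu-1})--(\ref{eq:definition-mu-2}), and the trace distance on each such $\bol{z}$ is bounded by $1$. For every other $\bol{z}\in\mathcal{Q}$ it remains to show that $\frac{1}{2}\|\rho^{\bol{z}}_{S_AS_BC\bol{E}}-\rho^{\bol{z},\mix}_{S_AS_B}\otimes\rho^{\bol{z}}_{C\bol{E}}\|\le \varepsilon+\delta$.

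For a fixed good $\bol{z}$ I would first replace Bob's key $S_B=f(\hat{\bol{x}})$ by Alice's key $S_A=f(\bol{x})$. By the hypothesis of $\delta$-universal-correctness together with the fact that the true $P_{XY,\rho}$ lies in the assumed class when estimation is good, the event $\bol{x}\neq\hat{\bol{x}}(\bol{y},t(\bol{x}))$ has probability at most $\delta$; hence Lemma \ref{lemma-error-prob-continuity} applied to the $\{cq\}$-state with $V=S_A$ and $\bar{V}=S_B$ reduces the problem, at a cost of $\delta$ in trace distance, to bounding $\|\rho^{\bol{z}}_{S_AC\bol{E}}-\rho^{\bol{z},\mix}_{S_A}\otimes\rho^{\bol{z}}_{C\bol{E}}\|$. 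I would then invoke Corollary \ref{corollary-privacy-amplification} with a security parameter proportional to $\varepsilon$; because the syndrome $T=M\bol{X}$ depends only on $\bol{X}$ and the hash choice $F$ is independent, Remark \ref{remark:privacy-amplification} lets me identify $\bar\varepsilon$ with $\varepsilon$, while $\log\dim\mathcal{H}_C=k(\bol{z})$.

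The last ingredient is a lower bound on $H_{\min}^{\bar\varepsilon}(\rho^{\otimes n}_{XE}|E^n)$ in terms of $\hat{H}_{\bol{z}}(X|E)$. Lemma \ref{lemma:min-entropy-of-product} gives $\frac{1}{n}H_{\min}^{\bar\varepsilon}(\rho^{\otimes n}_{XE}|E^n)\ge H_\rho(X|E)-\delta_n$ with $\delta_n=(2H_{\max}(\rho_X)+3)\sqrt{\log(2/\bar\varepsilon)/n}\le 5\sqrt{\log(2/\bar\varepsilon)/n}$ since $X$ is a bit. In the six-state protocol, Eq.~(\ref{eq:consistency-bound-1}) yields $H_\rho(X|E)\ge \hat{H}_{\bol{z}}(X|E)-\eta(\alpha)$ directly. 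In the BB84 protocol I would observe that the true Choi operator lies in $\mathcal{P}_c(\omega)$, so $H_\rho(X|E)\ge \min_{\varrho\in\mathcal{P}_c(\omega)}H_\varrho(X|E)\ge \hat{H}_{\bol{z}}(X|E)-\eta(\alpha)$ via Eq.~(\ref{eq:consistency-bound-2}). Combining these with Corollary \ref{corollary-privacy-amplification} and the hypothesis (\ref{eq:one-way-security-condition}), absorbing the arising $\sqrt{\log(\cdot)/n}$ and $\log(\cdot)/n$ terms into $\nu_n$, gives the per-$\bol{z}$ bound $\varepsilon+\delta$, and averaging over $\bol{z}\in\mathcal{Q}$ and adding the $\mu(\alpha,m)$ tail yields the claim.

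The main obstacle, and the only really delicate bookkeeping, is matching the smoothing parameter in Corollary \ref{corollary-privacy-amplification} with the smoothing parameter that appears in Lemma \ref{lemma:min-entropy-of-product} so that all the constants land inside $\nu_n$. The conceptual step that requires care is the BB84 case: because the parameters $\tau$ are unobservable, one must argue that replacing $H_\rho(X|E)$ by the worst-case minimum over $\mathcal{P}_c(\omega)$ never overestimates Eve's ignorance, which is exactly why the estimator $\hat{H}_{\bol{z}}(X|E)$ is defined through the minimization in Eq.~(\ref{eq:definition-of-worst-case}).
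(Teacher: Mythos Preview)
Your proposal is correct and follows essentially the same route as the paper: split over good/bad estimation outcomes using Eqs.~(\ref{eq:definition-mu-1})--(\ref{eq:definition-mu-2}), bound the bad part trivially by $\mu(\alpha,m)$, and on the good part combine Lemma~\ref{lemma:min-entropy-of-product} with Eqs.~(\ref{eq:consistency-bound-1})--(\ref{eq:consistency-bound-2}) to feed Corollary~\ref{corollary-privacy-amplification} (with Remark~\ref{remark:privacy-amplification}), then add $\delta$ for correctness. The paper handles the correctness contribution via the $\varepsilon$-secret/$\delta$-correct decomposition of Remark~6.1.3 rather than an explicit appeal to Lemma~\ref{lemma-error-prob-continuity}, and it fixes the smoothing parameter as $2\varepsilon/3$ so that the corollary's $3\varepsilon'$ output becomes the desired $\varepsilon$, but these are the same ingredients you identified.
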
  
\begin{proof}
We only prove the statement for the six-state protocol,
because the statement for the BB84 protocol is proved
exactly in the same way by replacing
$\rho \in {\cal P}_c$
with $\omega \in \Omega$ and some other related quantities.
The assertion of the theorem follows from the 
combination of Corollary \ref{corollary-privacy-amplification}, 
Remark \ref{remark:privacy-amplification},
Lemma \ref{lemma:min-entropy-of-product}, and 
Eqs.~(\ref{eq:definition-mu-1}), and
(\ref{eq:consistency-bound-1}).

For any $\rho \in {\cal P}_c$,
Eq.~(\ref{eq:definition-mu-1}) means that 
$\| \hat{\rho}(\bol{z}) - \rho\| \le \alpha$ with probability
$1-\mu(\alpha,m)$. When $\|\hat{\rho}(\bol{z}) - \rho\| > \alpha$,
the distilled key pair trivially satisfies
\begin{eqnarray*}
\frac{1}{2} \| \rho_{S_A S_B C \bol{E}}^\bol{z} - 
  \rho_{S_A S_B}^{\bol{z},\rom{mix}} \otimes \rho_{C\bol{E}}^\bol{z} \| \le 1.
\end{eqnarray*}
On the other hand, when $\| \hat{\rho}(\bol{z}) - \rho\| \le \alpha$,
Eq.~(\ref{eq:one-way-security-condition}) implies
\begin{eqnarray*}
\ell(\bol{z}) < H_{\min}^{2\varepsilon/3}(\rho_{\bol{X}\bol{E}}
 |\bol{E}) - k(\bol{z}) - 2 \log(3/2\varepsilon)
\end{eqnarray*}
by using Lemma \ref{lemma:min-entropy-of-product}.
Thus the distilled key satisfies
\begin{eqnarray*}
\frac{1}{2} \| \rho_{S_A S_B C \bol{E}}^\bol{z} - 
  \rho_{S_A S_B}^{\bol{z},\rom{mix}} \otimes \rho_{C\bol{E}}^\bol{z} \| 
  \le \varepsilon + \delta
\end{eqnarray*}
by Corollary \ref{corollary-privacy-amplification},
Remark \ref{remark:privacy-amplification}, and the assumption that
the IR procedure is $\delta$-universally-correct for the
class of distribution 
$\{ P_{XY, \rho} \mymid  \| \hat{\rho}(\bol{z}) - \rho\| \le \alpha\}$. 
Averaging over the sample sequence $\bol{z} \in {\cal Q}$,
we have the assertion of the theorem.
\end{proof}

From Eq.~(\ref{eq:one-way-security-condition}),
we find that the estimator 
$\hat{H}_{\bol{z}}(X|E)$
of Eve's ambiguity and the syndrome rate $\frac{k(\bol{z})}{n}$
for the IR procedure are the important factors to
decide the key generation rate $\frac{\ell(\bol{z})}{n}$.
In the next section, we investigate the asymptotic 
behavior of the 
key generation rate derived from the right hand side of 
Eq.~(\ref{eq:one-way-security-condition}).

\begin{remark}
\label{remark:abort}
The acceptable region ${\cal Q} \subset {\cal Z}^m$
is defined as follows:
Each $\bol{z} \in {\cal Z}^m$ belongs to
${\cal Q}$ if and only if the right hand side of
Eq.~(\ref{eq:one-way-security-condition}) is positive.
\end{remark}
\begin{remark}
By switching the role of Alice and Bob, we obtain 
a postprocessing with the so-called reverse
reconciliation\footnote{The reverse reconciliation was originally 
proposed by Maurer in the classical key agreement 
context \cite{maurer:93}.}. On the other hand, the original
procedure is usually called the direct reconciliation.

In the reverse reconciliation, 
Bob sends syndrome $M \bol{y}$ to Alice, and Alice recovers
the estimate $\hat{\bol{y}}$ of Bob's sequence.
Then, Alice and Bob's final keys are 
$s_A = f(\hat{\bol{y}})$ and $s_B = f(\bol{y})$
for a randomly chosen function 
$f:\mathbb{F}_2^n \to \{0,1\}^{\ell(\bol{z})}$ from
a universal hash family.

For the postprocessing with the reverse reconciliation, 
we can show almost the same statement as Theorem \ref{theorem:one-way-security}
by replacing $\hat{H}_{\bol{z}}(X|E)$ with $\hat{H}_{\bol{z}}(Y|E)$,
which is defined in a similar manner as
$\hat{H}_{\bol{z}}(X|E)$, and by using
$\delta$-universally-correct for the reverse reconciliation.

In Section \ref{sec:example}, we will show that the asymptotic key 
generation rate of the reverse reconciliation can be higher
than that of the direct reconciliation. Although the fact that the 
asymptotic key generation rate of the direct reconciliation and the
reverse reconciliation are different is already pointed out
for QKD protocols with weak coherent states
\cite{boileau:05, hayashi:07}, it is new for the QKD protocols
with qubit states.
\end{remark}
\begin{remark}
\label{remark:noisy-preprocessing}
Although Alice and Bob
conducted the (direct) IR procedure for the pair of
bit sequence $(\bol{x}, \bol{y})$ in the postprocessing
explained so far, Alice can locally conducts a 
(stochastic) preprocessing for her bit sequence
before conducting the IR procedure. Surprisingly, 
Renner {\em et al}.~\cite{renner:05, renner:05b, kraus:05}
found that Alice should add noise to her bit sequence
in some cases, which is called the noisy preprocessing.
In the postprocessing with the noisy preprocessing,
Alice first flip each bit with
probability $q$ and obtain a bit sequence $\bol{u}$.
Then, Alice and Bob conduct the IR procedure and the 
PA procedure for the pair $(\bol{u}, \bol{y})$.
Renner {\em et al}.~found that, by appropriately choosing the value
$q$, the key generation rate can be improved.
\end{remark}

\subsection{Asymptotic Key Generation Rate of The Six-State Protocol}
\label{subsec:six-state}

In this section, we derive the asymptotic key
generation rate formula for the six-state protocol.
As we have seen
in Section \ref{subsec:key-generation-rate}, the estimator
$\hat{H}_{\bol{z}}(X|E)$ converges to the true value $H_\rho(X|E)$
in probability as $m$ goes to infinity.
On the other hand,
Theorem \ref{theorem:universal-coding} implies that
it is sufficient to set the rate of the syndrome so that
\begin{eqnarray}
\label{eq:asymptotic-syndrome-rate}
\frac{k(\bol{z})}{n}  > 
\min H_\varrho(X|Y)
\end{eqnarray}
for sufficiently large $n$,
where $H_\varrho(X|Y)$ is the conditional entropy\footnote{Equivalently,
we can regard $H_\varrho(X|Y)$ as the quantum conditional 
entropy for the classical density operator $\varrho_{XY}$.} for
the random variables $(X,Y)$ that 
are distributed according to $P_{XY,{\varrho}}$,
and the minimization is taken over the set 
$\{ \varrho \mymid \| \hat{\rho}(\bol{z}) - \varrho \| \le \alpha \}$.
Since the ML estimator $\hat{\rho}(\bol{z})$
is a consistency estimator of $\rho$, 
we can set the sequence of the syndrome rates so that
it converges to
$H_\rho(X|Y)$ in probability as $m,n \to \infty$.
Therefore, we can set the sequence of the key generation rates so that
it converges to the asymptotic key
generation rate formula
\begin{eqnarray}
\label{eq:asymptotic-direct}
H_{\rho}(X|E) - H_\rho(X|Y) 
\end{eqnarray}
in probability as $m, n \to \infty$.

Similarly for the postprocessing with the reverse reconciliation, 
we can set the sequence of the key 
generation rates so that it converges to
the asymptotic key generation rate formula
\begin{eqnarray}
\label{eq:asymptotic-reverse}
H_{\rho}(Y|E) - H_\rho(Y|X).
\end{eqnarray}

\subsection{Asymptotic Key Generation Rate of The BB84 Protocol}
\label{subsec:bb84}

In this section, we derive the asymptotic key
generation rate formula for the BB84 protocol.
As we have seen
in Section \ref{subsec:key-generation-rate}, the estimator
$\hat{H}_{\bol{z}}(X|E)$ converges to the true value
$\min_{\varrho \in {\cal P}_c(\omega)} H_\varrho(X|E)$
in probability as $m$ goes to infinity.
On the other hand, 
Theorem \ref{theorem:universal-coding} implies that
it is sufficient to set the rate of the syndrome so that
\begin{eqnarray}
\label{eq:asymptotic-syndrome-rate2}
\frac{k(\bol{z})}{n}  > 
\min H_\omega(X|Y)
\end{eqnarray}
for sufficiently large $n$,
where $H_\omega(X|Y)$ is the conditional entropy for
the random variables $(X,Y)$ that 
are distributed according to $P_{XY,\omega}$,
and the minimization is taken over the set 
$\{ \omega^\prime \mymid \| \hat{\omega}(\bol{z}) - \omega^\prime \| \le \alpha \}$.
Since the ML estimator $\hat{\omega}(\bol{z})$ is a consistency
estimator of $\omega$, we can set the sequence of the syndrome rates so that 
it converges to $H_\omega(X|Y)$ in probability as $m,n \to \infty$. 
Therefore, we can set the sequence of the key generation rates so that
it converges to the asymptotic key generation rate formula
\begin{eqnarray}
\label{eq:key-rate-one-way-bb84}
\min_{\varrho \in {\cal P}_c(\omega)} H_\varrho(X|E)
 - H_\omega(X|Y).
\end{eqnarray}

Similarly, for the postprocessing with the reverse reconciliation,
we can set the sequence of the key 
generation rates so that it converges to the
asymptotic key generation rate formula
\begin{eqnarray}
\label{eq:eq:key-rate-one-way-bb84-reverse}
\min_{\varrho \in {\cal P}_c(\omega)} H_\varrho(Y|E)
  - H_\omega(Y|X).
\end{eqnarray}


Although the asymptotic key generation rate formulae
for the six-state protocol 
(Eqs.~(\ref{eq:asymptotic-direct}) and (\ref{eq:asymptotic-reverse}))
do not involve the minimization, the asymptotic
key generation rate formulae for the BB84 protocol
(Eqs.~(\ref{eq:key-rate-one-way-bb84}) and (\ref{eq:eq:key-rate-one-way-bb84-reverse}))
involve the minimization, and therefore calculation
of these formula is not straightforward.
The following propositions are very useful for the
calculation of the asymptotic key generation rate
of the BB84 protocol.

\begin{proposition}
\label{proposition:convexity}
For two Choi operators 
$\rho^1, \rho^2 \in {\cal P}_c$
and a probabilistically mixture 
$\rho^\prime := \lambda \rho^1 + (1-\lambda)\rho^2$,
Eve's ambiguity is convex, i.e., we have
\begin{eqnarray*}
H_{\rho^\prime}(X|E) \le 
   \lambda H_{\rho^1}(X|E) + (1-\lambda) H_{\rho^2}(X|E),
\end{eqnarray*}
where $\rho^\prime_{XE}$ is $\{cq\}$-state derived from 
a purification $\psi^\prime_{ABE}$ of $\rho^\prime_{AB}$.
\end{proposition}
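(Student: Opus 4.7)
The plan is to reduce convexity of $\rho \mapsto H_\rho(X|E)$ to a single application of the data processing inequality for the conditional von Neumann entropy, exploiting the freedom to enlarge Eve's purifying system with a ``flag'' register that records which component of the mixture was used.

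First I would fix purifications $\ket{\psi^1}_{ABE'}$ and $\ket{\psi^2}_{ABE'}$ of $\rho^1_{AB}$ and $\rho^2_{AB}$ inside a common purifying space $\mathcal{H}_{E'}$ (enlarged if necessary), and adjoin an auxiliary two-dimensional flag system $\mathcal{H}_F$ with orthonormal basis $\{\ket{1}_F,\ket{2}_F\}$. Define
\begin{eqnarray*}
\ket{\Psi}_{ABE'F} := \sqrt{\lambda}\,\ket{\psi^1}_{ABE'}\ket{1}_F + \sqrt{1-\lambda}\,\ket{\psi^2}_{ABE'}\ket{2}_F .
\end{eqnarray*}
The orthogonality of $\ket{1}_F$ and $\ket{2}_F$ kills the cross terms when $F$ is traced out, so $\rom{Tr}_{E'F}[\Psi_{ABE'F}]=\rho'_{AB}$; i.e.\ $\ket{\Psi}$ is a purification of $\rho'_{AB}$. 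Because the conditional von Neumann entropy of $X$ given the purifying system depends only on $\rho'_{XE}$ and not on the particular purification used, we have $H_{\rho'}(X|E) = H_{\Psi}(X|E'F)$.

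Next I would dephase the flag register $F$ in its $\{\ket{1},\ket{2}\}$ basis, obtaining the $\{ccq\}$-state
\begin{eqnarray*}
\tilde{\rho}_{XE'F} = \lambda\,\rho^1_{XE'}\otimes \ket{1}\bra{1}_F + (1-\lambda)\,\rho^2_{XE'}\otimes \ket{2}\bra{2}_F .
\end{eqnarray*}
Since $F$ is now classical in $\tilde{\rho}$, direct decomposition gives
$H_{\tilde{\rho}}(X|E'F) = \lambda\, H_{\rho^1}(X|E) + (1-\lambda)\, H_{\rho^2}(X|E)$. Dephasing is a trace-preserving CP map applied entirely on Eve's side, so the data processing inequality (equivalently, a standard consequence of strong subadditivity) yields $H_{\Psi}(X|E'F) \le H_{\tilde{\rho}}(X|E'F)$. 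Chaining this inequality with the purification-invariance identity from the previous paragraph produces the claimed bound.

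The argument has no computational obstacle; the only delicate point is getting the direction of the data processing inequality correct, namely that a CP map on the conditioning system can only \emph{increase} the conditional entropy, which is precisely what is needed here because dephasing removes information that the quantum $F$-coherences could have carried for Eve. Everything else is bookkeeping about the purification and the orthogonality of the flag basis.
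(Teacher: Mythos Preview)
Your proof is correct and essentially identical to the paper's: both build the flagged purification $\sqrt{\lambda}\ket{\psi^1}\ket{1}+\sqrt{1-\lambda}\ket{\psi^2}\ket{2}$, dephase the flag, and invoke data processing. The only cosmetic difference is that the paper phrases the key step as monotonicity of the quantum mutual information $I_{\rho'}(X;ER)\ge I_{\rho^*}(X;ER)$ under measurement on $R$, whereas you phrase it as a CP map on the conditioning system only increasing $H(X|\cdot)$; these are equivalent since $H(X)$ is untouched.
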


\begin{proof}
For $r = 1$ and $2$, let $\psi_{ABE}^r$ be a purification of the 
$\rho_{AB}^r$.
Then the density operator $\rho_{XE}^r$ is derived by
Alice's measurement by $\san{z}$-basis and the partial trace over Bob's system,
 i.e.,
\begin{eqnarray}
\label{eq-cq-state-1}
\rho_{XE}^r = \rom{Tr}_B \left[
\sum_{x} (\ket{x}\bra{x} \otimes I) \psi^r_{ABE} 
   (\ket{x}\bra{x} \otimes I)
\right].
\end{eqnarray}
Let 
\begin{eqnarray*}
\ket{\psi^\prime_{ABER}} := 
   \sqrt{\lambda} \ket{\psi_{ABE}^1} \ket{1} + 
   \sqrt{1-\lambda} \ket{\psi_{ABE}^2} \ket{2} 
\end{eqnarray*}
be a purification of $\rho^\prime_{AB}$, 
where $\mathcal{H}_R$ is the reference system,
and $\{ \ket{1}, \ket{2} \}$ is an orthonormal basis of $\mathcal{H}_R$.
Let 
\begin{eqnarray}
\label{eq-cq-state-2}
\rho^\prime_{XER} :=
    \rom{Tr}_B \left[
\sum_{x} (\ket{x}\bra{x} \otimes I) \psi^\prime_{ABER} 
   (\ket{x}\bra{x} \otimes I)
\right],
\end{eqnarray}
and let 
\begin{eqnarray*}
\rho^{*}_{XER} &:=& \sum_{r \in \{1,2\}} 
   (I \otimes \ket{r}\bra{r}) \rho^\prime_{XER} 
   (I \otimes \ket{r}\bra{r}) \\
&=& \lambda \rho_{XE}^1 \otimes \ket{1}\bra{1}
    + (1 - \lambda) \rho_{XE}^2 \otimes \ket{2}\bra{2}
\end{eqnarray*}
be the density operator such that  the system $\mathcal{H}_R$ is measured
 by $\{ \ket{1}, \ket{2}\}$ basis.
Then we have
\begin{eqnarray*}
\lefteqn{
H_{\rho^\prime}(X|ER) 
} \\
&=& H(X) - I_{\rho^\prime}(X; ER) \\
&\le& H(X) - I_{\rho^*}(X;ER) \\
&=& H_{\rho^*}(X|ER) \\
&=& \lambda H_{\rho^1}(X|E) + (1-\lambda) H_{\rho^2}(X|E),
\end{eqnarray*}
where the inequality follows from the monotonicity
of the quantum mutual information for measurements
(data processing inequality) \cite{hayashi-book:06}.
By renaming the systems $ER$ to $E$, we have the 
assertion of the lemma.
\end{proof}
\begin{remark}
In a similar manner, we can also show the convexity 
\begin{eqnarray*}
H_{\rho^\prime}(Y|E) \le 
   \lambda H_{\rho^1}(Y|E) + (1-\lambda) H_{\rho^2}(Y|E)
\end{eqnarray*}
under the same condition as in Proposition \ref{proposition:convexity}.
\end{remark}

The following proposition reduces the number of free parameters
in the minimization of Eqs.~(\ref{eq:key-rate-one-way-bb84}) 
and (\ref{eq:eq:key-rate-one-way-bb84-reverse}).
\begin{proposition}
\label{proposition:minimization}
For the BB84 protocol,
the minimization in Eqs.~(\ref{eq:key-rate-one-way-bb84}) 
and (\ref{eq:eq:key-rate-one-way-bb84-reverse}) is achieved
by Choi operator $\varrho$ whose components $R_{\san{zy}}$,
$R_{\san{xy}}$, $R_{\san{yz}}$, $R_{\san{yx}}$,
and $t_{\san{y}}$, are all $0$.
\end{proposition}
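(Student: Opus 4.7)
The plan is to exploit the complex-conjugation symmetry of the Choi operator together with the convexity lemma (Proposition \ref{proposition:convexity}). For a Choi operator $\varrho_{AB}$, let $\bar{\varrho}_{AB}$ denote its complex conjugate in the $\san{z}$-basis of ${\cal H}_A\otimes{\cal H}_B$. Since $\bar{\varrho}_{AB}$ is still positive, unit-trace, and satisfies $\rom{Tr}_B[\bar{\varrho}_{AB}] = \overline{I/2} = I/2$, it lies in ${\cal P}_c$. Using $\bar\sigma_{\san{z}}=\sigma_{\san{z}}$, $\bar\sigma_{\san{x}}=\sigma_{\san{x}}$, $\bar\sigma_{\san{y}}=-\sigma_{\san{y}}$ together with $\rom{Tr}[\bar X Y] = \overline{\rom{Tr}[X\bar Y]}$, a direct computation of the Stokes parameters of $\bar\varrho_{AB}$ from (\ref{eq:def-stokes-R})--(\ref{eq:def-stokes-t}) shows that all entries of $\omega$ (and also $R_{\san{yy}}$) are preserved, while the five parameters $R_{\san{zy}},R_{\san{xy}},R_{\san{yz}},R_{\san{yx}},t_{\san{y}}$ get their signs flipped. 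In particular, $\bar\varrho\in{\cal P}_c(\omega)$ whenever $\varrho\in{\cal P}_c(\omega)$.

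Next I would show the symmetry $H_{\bar\varrho}(X|E)=H_\varrho(X|E)$ and $H_{\bar\varrho}(Y|E)=H_\varrho(Y|E)$. If $\ket{\psi_{ABE}}$ purifies $\varrho_{AB}$, then $\overline{\ket{\psi_{ABE}}}$ (conjugation in the $\san{z}$-bases of $A,B,E$) purifies $\bar\varrho_{AB}$. Because Alice's and Bob's measurement operators $\ket{x}\bra{x}$, $\ket{y}\bra{y}$ are real in the $\san{z}$-basis, the $\{cq\}$-states $\bar\varrho_{XE}$ and $\bar\varrho_{YE}$ derived from this purification are obtained from $\varrho_{XE}$ and $\varrho_{YE}$ by pointwise complex conjugation in the relevant product basis. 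Complex conjugation preserves eigenvalues, hence preserves the von Neumann entropy of each of $\varrho_{XE}, \varrho_E, \varrho_{YE}$, and the claimed equalities follow.

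Finally, given any minimizer $\varrho$ of either $H_\varrho(X|E)$ or $H_\varrho(Y|E)$ over ${\cal P}_c(\omega)$, form the equal mixture $\varrho^\star := \tfrac12\varrho + \tfrac12\bar\varrho$. Since $\omega$ is preserved by conjugation, $\varrho^\star\in{\cal P}_c(\omega)$, and by construction its Stokes parameters satisfy $R^\star_{\san{zy}}=R^\star_{\san{xy}}=R^\star_{\san{yz}}=R^\star_{\san{yx}}=t^\star_{\san{y}}=0$. Convexity of Eve's ambiguity (Proposition \ref{proposition:convexity} and its remark for the $Y$ case) gives
\begin{eqnarray*}
H_{\varrho^\star}(X|E) \le \tfrac12 H_\varrho(X|E)+\tfrac12 H_{\bar\varrho}(X|E) = H_\varrho(X|E),
\end{eqnarray*}
and likewise for $H(Y|E)$. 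Since $\varrho$ was a minimizer, $\varrho^\star$ is also a minimizer with the prescribed five parameters equal to $0$, proving the proposition.

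The only subtle step is the symmetry identity $H_{\bar\varrho}(X|E) = H_\varrho(X|E)$: one has to argue carefully that conjugation of the purification commutes with the sequence (measure in $\san{z}$-basis, partial trace over $B$) that defines $\varrho_{XE}$; once that is set up, the rest is bookkeeping with the Stokes parameterization and an invocation of Proposition \ref{proposition:convexity}.
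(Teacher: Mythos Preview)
Your proposal is correct and follows essentially the same route as the paper: complex-conjugate the Choi operator to flip exactly the five Stokes parameters $R_{\san{zy}},R_{\san{xy}},R_{\san{yz}},R_{\san{yx}},t_{\san{y}}$ while preserving $H_\varrho(X|E)$ (respectively $H_\varrho(Y|E)$), then average with the original and invoke Proposition~\ref{proposition:convexity}. You are somewhat more explicit than the paper about why $\bar\varrho\in{\cal P}_c(\omega)$ and about the purification argument behind $H_{\bar\varrho}(X|E)=H_\varrho(X|E)$, but the structure of the proof is identical.
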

\begin{proof}
The statement of this proposition easily follows from
Proposition \ref{proposition:convexity}.
We only prove the statement for Eq.~(\ref{eq:key-rate-one-way-bb84})
because the statement for Eq.~(\ref{eq:eq:key-rate-one-way-bb84-reverse})
can be proved exactly in the same manner.

For any $\varrho \in {\cal P}_c(\omega)$,
let $\bar{\varrho}$ be the complex conjugate
of $\varrho$. Note that eigenvalues of density
matrices are unchanged by the complex conjugate, and thus
Eve's ambiguity $H_{\bar{\varrho}}(X|E)$ for
$\bar{\varrho}$ equals to $H_{\varrho}(X|E)$.
By applying Proposition \ref{proposition:convexity}
for $\rho^1 = \varrho$, $\rho^2 = \bar{\varrho}$,
and $\lambda = \frac{1}{2}$, we have
\begin{eqnarray*}
H_{\varrho^\prime}(X|E) \le \frac{1}{2}H_{\varrho}(X|E) + 
  \frac{1}{2} H_{\bar{\varrho}}(X|E),
\end{eqnarray*}
where
$\varrho^\prime = \frac{1}{2} \varrho + \frac{1}{2} \bar{\varrho}$.
Note that the Stokes parameterization of $\bar{\varrho}$
is given by 
\begin{eqnarray*}
\left(
\left[\begin{array}{ccc}
R_{\san{zz}} & R_{\san{zx}} & - R_{\san{zy}} \\
R_{\san{xz}} & R_{\san{xx}} & - R_{\san{xy}} \\
- R_{\san{yz}} & - R_{\san{yx}} & R_{\san{yy}}
\end{array}\right],
\left[\begin{array}{c}
t_{\san{z}} \\ t_{\san{x}} \\ - t_{\san{y}}
\end{array}\right] 
\right) \in {\cal P}_c(\omega).
\end{eqnarray*}
Therefore, the components, $R_{\san{zy}}$,
$R_{\san{xy}}$, $R_{\san{yz}}$, $R_{\san{yx}}$,
and $t_{\san{y}}$, of the Stokes parameterization of
$\varrho^\prime$ are all $0$.
Since ${\cal P}_c(\omega)$ is a convex set, 
$\varrho^\prime \in {\cal P}_c(\omega)$.
Since $\varrho \in {\cal P}_c(\omega)$ was
arbitrary, we have the assertion of the proposition.
\end{proof}

The following proposition can be used to calculate a lower
bound on the asymptotic key generation rate of the BB84
protocol.
\begin{proposition}
\label{proposition:unital-bound}
For the BB84 protocol, we have
\begin{eqnarray}
\lefteqn{ \min_{\varrho \in {\cal P}_c(\omega)} H_{\varrho}(X|E) }
 \nonumber \\
&\ge&  \hspace{-3mm} 1 - h\left(\frac{1 + d_\san{z}}{2}\right)
- h\left(\frac{1+d_\san{x}}{2}\right)
+ h\left( \frac{1+\sqrt{R_{\san{zz}}^2 + R_{\san{xz}}^2}}{2}\right)
\label{eq:unital-bound-direct}
\end{eqnarray}
and 
\begin{eqnarray}
\lefteqn{ \min_{\varrho \in {\cal P}_c(\omega)} H_{\varrho}(Y|E) }
 \nonumber \\
&\ge&  \hspace{-3mm} 1 - h\left(\frac{1 + d_\san{z}}{2}\right)
- h\left(\frac{1+d_\san{x}}{2}\right)
+ h\left( \frac{1+\sqrt{R_{\san{zz}}^2 + R_{\san{zx}}^2}}{2}\right),
\label{eq:unital-bound-reverse}
\end{eqnarray}
where $d_\san{z}$ and $d_\san{x}$ are the singular 
values of the  matrix
\begin{eqnarray}
\label{eq:two-times-two-matrix}
\left[\begin{array}{cc} R_{\san{zz}} & R_{\san{zx}} \\ R_\san{xz} &
  R_\san{xx} \end{array}\right]
\end{eqnarray}
for $\omega := (R_\san{zz},R_\san{zx}, R_\san{xz}, R_\san{xx}, t_\san{z},
t_\san{x})$.
The equalities
in Eqs.~(\ref{eq:unital-bound-direct}) 
and (\ref{eq:unital-bound-reverse}) hold if 
$t_\san{z} = t_\san{x} = 0$.
\end{proposition}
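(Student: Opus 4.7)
My plan has three stages: reduce the parameter space using Proposition \ref{proposition:minimization}, diagonalize the Choi operator in the Bell basis for the unital subcase and apply subadditivity of Shannon entropy, and extend to general $\omega$ via a $\sigma_\san{y}\otimes\sigma_\san{y}$ symmetrization together with Proposition \ref{proposition:convexity}.

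First I would apply Proposition \ref{proposition:minimization} to restrict the minimization to $\varrho$ with $R_\san{zy}=R_\san{xy}=R_\san{yz}=R_\san{yx}=t_\san{y}=0$; this leaves $R_\san{yy}$ as the only free parameter beyond those fixed by $\omega$. In the unital subcase $t_\san{z}=t_\san{x}=0$, Alice's marginal is $I/2$ so $H(X)=1$, and taking a purification $\ket{\psi_{ABE}}$ of $\varrho_{AB}$ together with the identities $H(\rho_E)=H(\varrho_{AB})$ and $H(\rho_E^x)=H(\rho_B^x)$ (the latter from purity of $\rho_{BE}^x$ after Alice's measurement) yields
\begin{equation*}
H_\varrho(X|E)=1+\sum_{x\in\{0,1\}}\tfrac{1}{2}H(\rho_B^x)-H(\varrho_{AB}).
\end{equation*}
A direct Bloch-vector computation shows that $\rho_B^x$ has Bloch vector $\pm(R_\san{zz},R_\san{xz},0)^T$, hence by Eq.~(\ref{eq:entropy-of-bloch}) the middle term equals $h\!\left(\tfrac{1+\sqrt{R_\san{zz}^2+R_\san{xz}^2}}{2}\right)$.

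The core step is to bound $H(\varrho_{AB})$. Expressing the reduced unital $\varrho_{AB}$ in the Bell basis, a direct calculation shows that it is block diagonal with two $2\times 2$ blocks spanning $\{\ket{\Phi^+},\ket{\Psi^-}\}$ and $\{\ket{\Phi^-},\ket{\Psi^+}\}$. Diagonalizing each block and using the $2\times 2$ singular-value identities $\|A\|_F^2=d_\san{z}^2+d_\san{x}^2$ and $|\det A|=d_\san{z}d_\san{x}$ for the matrix $A$ in Eq.~(\ref{eq:two-times-two-matrix}), the four eigenvalues of $\varrho_{AB}$ come out to $(1\pm R_\san{yy}'\pm d_\san{z}\pm d_\san{x})/4$ with appropriate sign pairings, where $R_\san{yy}':=\mathrm{sign}(\det A)\,R_\san{yy}$. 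These four values can be read as the joint distribution of two binary variables $U,V$ with marginals $(1\pm d_\san{z})/2$ and $(1\pm d_\san{x})/2$, so subadditivity of the Shannon entropy gives $H(\varrho_{AB})\le h\!\left(\tfrac{1+d_\san{z}}{2}\right)+h\!\left(\tfrac{1+d_\san{x}}{2}\right)$, with equality iff $U,V$ are independent, i.e., $R_\san{yy}'=d_\san{z}d_\san{x}$. One checks that this value of $R_\san{yy}$ keeps $\varrho_{AB}$ positive semidefinite (the four eigenvalues factor as $(1\pm d_\san{z})(1\pm d_\san{x})/4\ge 0$), so the bound is attained; substituting into the formula for $H_\varrho(X|E)$ proves Eq.~(\ref{eq:unital-bound-direct}) together with the equality assertion in the unital case.

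For general $\omega$, I would use a symmetry argument: given $\varrho\in{\cal P}_c(\omega)$ set $\varrho^*:=(\sigma_\san{y}\otimes\sigma_\san{y})\varrho(\sigma_\san{y}\otimes\sigma_\san{y})$. Since $\sigma_\san{y}\sigma_\san{a}\sigma_\san{y}=-\sigma_\san{a}$ for $\san{a}\in\{\san{z},\san{x}\}$, this conjugation leaves $A$ and $R_\san{yy}$ unchanged but flips the sign of all one-body terms; in particular $(\varrho+\varrho^*)/2$ is unital with the same $R$-part as $\varrho$. Since $\sigma_\san{y}\ket{x}=\pm i\ket{1-x}$, Alice's $\san{z}$-measurement on $\varrho^*$ merely relabels her outcome, yielding $H_{\varrho^*}(X|E)=H_\varrho(X|E)$, which together with Proposition \ref{proposition:convexity} gives $H_{(\varrho+\varrho^*)/2}(X|E)\le H_\varrho(X|E)$. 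Hence the minimum over ${\cal P}_c(\omega)$ is bounded below by the minimum over its unital slice, which by the previous paragraph equals the RHS of Eq.~(\ref{eq:unital-bound-direct}). Equation~(\ref{eq:unital-bound-reverse}) follows from the same argument applied to $Y$, with the only change being that Alice's conditional state $\rho_A^y$ has Bloch vector of norm $\sqrt{R_\san{zz}^2+R_\san{zx}^2}$ (the first row of $A$ rather than the first column). I expect the main obstacle to be the explicit Bell-basis diagonalization of the reduced $\varrho_{AB}$ and the verification that the resulting eigenvalues admit the claimed product-form interpretation; everything else is a short linear-algebra computation or invokes results already in this thesis.
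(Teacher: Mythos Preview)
Your proposal is correct and follows essentially the same route as the paper: both reduce via Proposition~\ref{proposition:minimization}, pass to the unital case by a $\sigma_\san{y}\otimes\sigma_\san{y}$ symmetrization combined with Proposition~\ref{proposition:convexity} (the paper writes $\bar\sigma_\san{y}\otimes\sigma_\san{y}$, which is the same conjugation), and then maximize $H(\varrho_{AB})$ over $R_\san{yy}$ using the product-marginal structure of the four eigenvalues. The only differences are cosmetic: the paper obtains the eigenvalues by an SVD in the $\san{z}$--$\san{x}$ plane (Eqs.~(\ref{eq-svd}) and (\ref{eq-relation-q-d})) rather than by a direct Bell-basis diagonalization, and phrases the entropy maximization as $\max_{R_\san{yy}} H[q_\san{i},q_\san{z},q_\san{x},q_\san{y}]=h(q_\san{i}+q_\san{z})+h(q_\san{i}+q_\san{x})$ rather than as subadditivity of Shannon entropy.
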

\begin{proof}
We only prove the statement for Eq.~(\ref{eq:unital-bound-direct})
because the statement for Eq.~(\ref{eq:unital-bound-reverse}) is
proved exactly in a similar manner.
By Proposition \ref{proposition:minimization},
it suffice to consider the Choi operator $\varrho$
of the form
\begin{eqnarray*}
\left(
\left[ \begin{array}{ccc}
R_{\mathsf{z}\mathsf{z}} & R_{\mathsf{z}\mathsf{x}} & 0 \\
R_{\mathsf{x}\mathsf{z}} & R_{\mathsf{x}\mathsf{x}} & 0 \\
0 & 0 & R_{\mathsf{y}\mathsf{y}}
\end{array}
\right],
\left[ \begin{array}{c}
t_{\mathsf{z}} \\ t_{\mathsf{x}} \\ 0
\end{array} \right]
\right).
\end{eqnarray*}
Define another Choi operator 
$\varrho^- := (\bar{\sigma}_\san{y} \otimes \sigma_\san{y})\varrho
 (\bar{\sigma}_\san{y} \otimes \sigma_\san{y})$ and the mixed
one $\varrho^\prime := \frac{1}{2}\varrho + \frac{1}{2} \varrho^-$.
Since the Stokes parameterization of $\varrho^-$ is 
\begin{eqnarray*}
\left(
\left[ \begin{array}{ccc}
R_{\mathsf{z}\mathsf{z}} & R_{\mathsf{z}\mathsf{x}} & 0 \\
R_{\mathsf{x}\mathsf{z}} & R_{\mathsf{x}\mathsf{x}} & 0 \\
0 & 0 & R_{\mathsf{y}\mathsf{y}}
\end{array}
\right],
\left[ \begin{array}{c}
- t_{\mathsf{z}} \\ - t_{\mathsf{x}} \\ 0
\end{array} \right]
\right),
\end{eqnarray*}
the vector part (of the Stokes parameterization) of $\varrho^\prime$ is zero vector, 
and the matrix part (of the Stokes parameterization) of $\varrho^\prime$ is the same as
that of $\varrho$.  Furthermore, since $H_\varrho(X|E) = H_{\varrho^-}(X|E)$,
by using Proposition \ref{proposition:convexity}, we have
\begin{eqnarray*}
H_\varrho(X|E) \ge H_{\varrho^\prime}(X|E).
\end{eqnarray*}
The equality holds if $t_\san{z} = t_\san{x} = 0$.

The rest of the proof is to calculate the minimization  of
$H_{\varrho^\prime}(X|E)$ with respect to $R_\san{yy}$.
By the singular value decomposition, we can decompose
the matrix $R^\prime$ corresponding to the Choi operator
 $\varrho^\prime$ as
\begin{eqnarray*}
O_2 \left[ \begin{array}{ccc}
\tilde{d}_\san{z} & 0 & 0 \\
0 & \tilde{d}_\san{x} & 0 \\
0 & 0 & R_\san{yy} 
\end{array} \right] O_1,
\end{eqnarray*}
where $O_1$ and $O_2$ are some rotation matrices within $\san{z}$-$\san{x}$-plane, and 
$|\tilde{d}_\san{z}|$ and $|\tilde{d}_\san{x}|$ are the 
singular value of the matrix in Eq.~(\ref{eq:two-times-two-matrix}).
Then, we have
\begin{eqnarray*}
\lefteqn{ \min_{R_\san{yy}} H_{\varrho^\prime}(X|E) } \\
&=& \min_{R_\san{yy}} \left[
1 - H(\varrho^\prime) + \sum_{x \in \mathbb{F}_2} \frac{1}{2} 
H(\varrho_B^{\prime x}) \right] \\
&=& 1 - \max_{R_\san{yy}} H[q_\san{i}, q_\san{z}, q_\san{x}, q_\san{y}] 
 + h\left(\frac{1 + \sqrt{R_\san{zz}^2 + R_\san{xz}^2}}{2} \right) \\
&=& 1 - h(q_\san{i} + q_\san{z}) - h(q_\san{i} + q_\san{x}) 
 + h\left(\frac{1 + \sqrt{R_\san{zz}^2 + R_\san{xz}^2}}{2} \right),
\end{eqnarray*}
where $(q_\san{i}, q_\san{z}, q_\san{x}, q_\san{y})$ are
the eigenvalues of the Choi operator $\varrho^\prime$,
and $\varrho_B^{\prime x} := 2 \rom{Tr}_A[(\ket{x}\bra{x} \otimes
 I)\varrho^\prime]$.
Note that we used Eq.~(\ref{eq:entropy-of-bloch}) 
to calculate the von Neumann entropy $H(\varrho_B^{\prime x})$.
By noting that $q_\san{i} + q_\san{z} = \frac{1 + \tilde{d}_\san{z}}{2}$
and $q_\san{i} + q_\san{x} = \frac{1 + \tilde{d}_\san{x}}{2}$
(see Eqs.~(\ref{eq-svd}) and (\ref{eq-relation-q-d})), we
have the statement for Eq.~(\ref{eq:unital-bound-direct}).
\end{proof}


The following lemma shows that the function
\begin{eqnarray}
G(\omega) := \min_{\varrho \in {\cal P}_c(\omega)} H_{\varrho}(X|E)
\end{eqnarray}
is a continuous function of $\omega$, which we suspended
in Section \ref{subsec:key-generation-rate}.

\begin{lemma}
\label{lemma:continuity-of-min}
The function $G(\omega)$ is a continuous
function of $\omega$ 
(with respect to the Euclidean distance) for any $\omega \in \Omega$.
\end{lemma}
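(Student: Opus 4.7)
The plan is to deduce continuity of $G$ from Berge's maximum theorem. This requires two ingredients: continuity of the objective $\varrho\mapsto H_{\varrho}(X|E)$ on the compact set ${\cal P}_c$, and continuity (both upper and lower hemicontinuity) of the feasible-set correspondence $\omega\mapsto {\cal P}_c(\omega)=\pi^{-1}(\omega)\cap{\cal P}_c$, where $\pi:{\cal P}_c\to\mathbb{R}^6$ is the linear map extracting the six Stokes coordinates that constitute $\omega$.

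Continuity of $\varrho\mapsto H_{\varrho}(X|E)$ follows from the continuity of the von Neumann entropy on density operators (continuity of eigenvalues of a Hermitian operator, combined with continuity of $x\log x$): writing $H_{\varrho}(X|E)=H(X)+\sum_{x}P_X(x)H(\rho_B^x)-H(\varrho_{AB})$, where $P_X(x)=\rom{Tr}[(\ket{x}\bra{x}\otimes I)\varrho_{AB}]$ and $\rho_B^x$ is the normalized conditional marginal on Bob, every term is continuous in $\varrho_{AB}$ (using $H(\rho_E)=H(\varrho_{AB})$ and $H(\rho_E^x)=H(\rho_B^x)$, which hold for any purification $\psi_{ABE}$ since Alice's projective measurement produces a pure conditional state on $BE$). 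Compactness and upper hemicontinuity of the correspondence are routine: for any cluster point $\varrho^\infty$ of a sequence $\varrho_n\in{\cal P}_c(\omega_n)$ with $\omega_n\to\omega$, closedness of the PSD cone and continuity of $\pi$ give $\varrho^\infty\ge 0$ and $\pi(\varrho^\infty)=\omega$. Specializing to $\varrho_n$ attaining $G(\omega_n)$ and extracting a convergent subsequence already yields the easy direction $G(\omega)\le\liminf_n G(\omega_n)$.

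The main obstacle is lower hemicontinuity, equivalently $\limsup_n G(\omega_n)\le G(\omega)$: given $\varrho^*\in{\cal P}_c(\omega)$ attaining $G(\omega)$ and $\omega_n\to\omega$, one must construct $\varrho_n\in{\cal P}_c(\omega_n)$ with $\varrho_n\to\varrho^*$, after which continuity of $H_{\varrho}(X|E)$ finishes the job. The construction exploits the fact that $\pi$ is a surjective linear map from the twelve-dimensional affine subspace $\{\varrho:\rom{Tr}_B\varrho=I/2\}$ onto $\mathbb{R}^6$, and hence admits a bounded right inverse $\pi^\dagger$. To reconcile the additive correction $\pi^\dagger(\omega_n-\omega)$ with the positivity constraint, I would first blend $\varrho^*$ slightly with a strictly positive reference Choi operator $\varrho^\circ$ (for instance $\varrho^\circ=I/4$), forming $\varrho_n^{(1)}:=(1-\beta_n)\varrho^*+\beta_n\varrho^\circ$ so that $\lambda_{\min}(\varrho_n^{(1)})\ge\beta_n\lambda_{\min}(\varrho^\circ)$, and then set $\varrho_n:=\varrho_n^{(1)}+\pi^\dagger\bigl(\omega_n-\pi(\varrho_n^{(1)})\bigr)$. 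Choosing $\beta_n\downarrow 0$ slowly enough that the Stokes-correction norm remains below the eigenvalue slack guarantees $\varrho_n\in{\cal P}_c(\omega_n)$ and $\varrho_n\to\varrho^*$.

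The delicate point I expect to require the most care is the case $\omega\in\partial\Omega$ where every preimage is rank-deficient: then the background mismatch $\beta_n(\omega-\omega^\circ)$ contributed by the blending step competes with the eigenvalue budget of order $\beta_n\lambda_{\min}(\varrho^\circ)$, so a fixed $\varrho^\circ$ like $I/4$ cannot absorb the correction uniformly. The fix is to choose $\varrho^\circ$ adapted to $\omega$ with $\pi(\varrho^\circ)$ close to $\omega$, which is possible because $\Omega$ is a compact convex set and the relative interior of ${\cal P}_c$ is mapped onto the relative interior of $\Omega$ by $\pi$. With this refinement the construction works at every $\omega\in\Omega$, and Berge's maximum theorem then yields continuity of $G$.
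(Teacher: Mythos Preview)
Your strategy via Berge's maximum theorem is a legitimate alternative to the paper's route and, had it gone through, would be more self-contained. The paper instead first invokes Proposition~\ref{proposition:minimization} to collapse the feasible set ${\cal P}_c(\omega)$ to a one-parameter family (only $R_{\san{yy}}$ remains free), so that the correspondence becomes an interval $\omega\mapsto[L(\omega),U(\omega)]$ with $L$ convex and $U$ concave on $\Omega$. Continuity of these scalar endpoint functions, argued from convexity and closedness of the seven-dimensional set ${\cal P}_c''=\{(\omega,R_{\san{yy}})\}$, then gives Hausdorff continuity of the interval and hence of $G$. What the reduction buys is that lower hemicontinuity in one fiber dimension is elementary, whereas in your full-fiber approach it is the crux.

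The gap is precisely in your boundary l.h.c.\ argument. Your sufficient condition for $\varrho_n\ge 0$ is that the operator norm of the Stokes correction not exceed the eigenvalue slack, which unpacks to
\[
\|\pi^\dagger\|\bigl(\|\omega_n-\omega\|+\beta_n\|\omega-\omega^\circ\|\bigr)\le\beta_n\,\lambda_{\min}(\varrho^\circ),
\]
forcing $\|\omega-\omega^\circ\|<\lambda_{\min}(\varrho^\circ)/\|\pi^\dagger\|$ independently of $\beta_n$. Your proposed refinement pushes $\varrho^\circ$ toward the boundary so that $\omega^\circ\to\omega$, but $\lambda_{\min}(\varrho^\circ)$ then shrinks at the same rate, and the inequality cannot be rescued. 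A clean model is the Euclidean unit ball $K\subset\mathbb{R}^3$ with $\pi$ the projection onto the first two coordinates and $\omega$ on the unit circle: for any interior $\varrho^\circ$ one has $\lambda_{\min}(\varrho^\circ)=1-\|\varrho^\circ\|$ while $\|\omega-\pi(\varrho^\circ)\|\ge 1-\|\pi(\varrho^\circ)\|\ge 1-\|\varrho^\circ\|$, so the strict inequality you need never holds (and indeed, taking $\omega_n$ along the boundary circle, the correction vector produced by $\pi^\dagger$ has norm strictly exceeding $\beta_n\lambda_{\min}(\varrho^\circ)$ for every choice of $\beta_n$ and $\varrho^\circ$). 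Lower hemicontinuity of affine slices of a compact convex body does hold here, but it requires a different mechanism than the ball-around-an-interior-point trick; the simplest repair is to adopt the paper's one-dimensional reduction, after which the whole issue disappears.
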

\begin{proof}
Owing to Proposition \ref{proposition:minimization},
we have
\begin{eqnarray*}
G(\omega) = \min_{R_\san{yy} \in {\cal P}_c^\prime(\omega)}
H_\varrho(X|E),
\end{eqnarray*}
where $\varrho = (\omega,0,0,0,0,R_\san{yy},0)$
and ${\cal P}_c^\prime(\omega)$ is the set of 
all $R_\san{yy}$ such that 
$(\omega,0,0,0,0,R_\san{yy},0) \in {\cal P}_c(\omega)$.

Since the conditional entropy is a continuous function,
the following statement is suffice for proving that
$G(\omega)$ is continuous function at any $\omega_0 \in \Omega$.
For any $\omega \in \Omega$ such that $\| \omega - \omega_0 \| \le
 \varepsilon$,
there exist $\varepsilon^\prime, \varepsilon^{\prime\prime} >0$ such that
\begin{eqnarray}
\label{eq-neighbor-1}
{\cal P}_c^\prime(\omega) &\subset& {\cal B}_{\varepsilon^\prime}({\cal P}^\prime_c(\omega_0)), \\
\label{eq-neighbor-2}
{\cal P}^\prime_c(\omega_0) &\subset& {\cal B}_{\varepsilon^{\prime\prime}}({\cal P}^\prime_c(\omega)),
\end{eqnarray}
and $\varepsilon^\prime$ and $\varepsilon^{\prime\prime}$ converge to
$0$ as $\varepsilon$ goes to $0$,
where ${\cal B}_{\varepsilon^\prime}({\cal P}^\prime_c(\omega_0))$ is the
$\varepsilon^\prime$-neighbor of the set ${\cal P}^\prime_c(\omega_0)$.

Define the set ${\cal P}_c^{\prime\prime} := \{(\omega,R_\san{yy}) \mymid
 \omega \in \Omega, R_\san{yy} \in {\cal P}_c^\prime(\omega) \}$,
which is a closed convex set.
Define functions 
\begin{eqnarray*}
U(\omega) &:=& \max_{R_\san{yy} \in {\cal P}_c^\prime(\omega)} R_\san{yy}, \\
L(\omega) &:=& \min_{R_\san{yy} \in {\cal P}_c^\prime(\omega)} R_\san{yy}
\end{eqnarray*}
as the upper surface and the lower surface of the set 
${\cal P}_c^{\prime\prime}$ respectively.
Then $U(\omega)$ and $L(\omega)$ are concave and convex functions respectively, 
because ${\cal P}_c^{\prime\prime}$ is a convex set.
Thus, $U(\omega)$ and $L(\omega)$ are continuous functions except
the extreme points of $\Omega$.
For any extreme point $\omega^\prime$ of $\Omega$ and for any interior point
 $\omega$ of $\Omega$,
we have $U(\omega) \ge U(\omega^\prime)$ and 
$L(\omega) \le L(\omega^\prime)$, because
${\cal P}_c^{\prime\prime}$ is a convex set.
Since ${\cal P}_c^{\prime\prime}$ is a closed set, we have
$\lim_{\omega \to \omega^\prime} U(\omega) \in {\cal P}_c^\prime(\omega^\prime)$
and $\lim_{\omega \to \omega^\prime} L(\omega) \in {\cal P}_c^\prime(\omega^\prime)$,
which implies that $U(\omega^\prime) = \lim_{\omega \to \omega^\prime}
 U(\omega)$ 
and $L(\omega^\prime) = \lim_{\omega \to \omega^\prime}
 L(\omega)$.
Thus $U(\omega)$ and $L(\omega)$ are also continuous at the extreme
 points.
Since ${\cal P}_c^\prime(\omega)$ is a convex set, the continuity
of $U(\omega)$ and $L(\omega)$ implies that
Eqs.~(\ref{eq-neighbor-1}) and (\ref{eq-neighbor-2})
hold for some $\varepsilon^\prime, \varepsilon^{\prime\prime} >0$,
and $\varepsilon^\prime$ and $\varepsilon^{\prime\prime}$ converge
to $0$ as $\varepsilon$ goes to $0$.
\end{proof}

\section{Comparison to Conventional Estimation}
\label{sec:relation-to-conventional}

In this section, we show the conventional channel 
estimation procedure, and the asymptotic key 
generation rate formulas with the conventional 
channel estimation.
Then, we show that the
asymptotic key generation rates with our
proposed channel estimation are at least as
high as those with the conventional channel estimation
for the six-state protocol (Theorem \ref{theorem:at-least-six-state})
and the BB84 protocol (Theorem \ref{theorem:at-least-bb84})
respectively.

In the conventional channel estimation procedure,
Alice and Bob discard those bits if their bases disagree.
Furthermore, they ignore the difference between
$(x,y) = (0,1)$ and $(x,y) = (1,0)$.
Mathematically, these discarding and ignoring
can be described by 
a function $g:{\cal Z} \to \tilde{\cal Z} := \tilde{\mathbb{F}}_2 \times
{\cal J} \times {\cal J}$ defined by
\begin{eqnarray*}
g(z) = g((x,\san{a},y,\san{b})) := \left\{
\begin{array}{ll}
(x + y,\san{a},\san{b}) & \mbox{if } \san{a} = \san{b} \\
(\Delta, \san{a},\san{b}) & \mbox{else}
\end{array}
\right.,
\end{eqnarray*}
where $\tilde{\mathbb{F}}_2 := \mathbb{F}_2 \cup \{ \Delta\}$ 
and $\Delta$ is a dummy symbol indicating that
Alice and Bob discarded that sample bit.

\subsection{Six-State Protocol}

In the conventional estimation, 
Alice and Bob estimate 
$\rho \in {\cal P}_c$ from
the degraded sample sequence
$g(\bol{z}) := (g(z_1), \ldots,g(z_m))$.
Although the Choi operator $\rho$ is described by $12$ real
parameters (in the Stokes parameterization), from 
Eqs.~(\ref{eq:def-stokes-R}) and (\ref{eq:def-stokes-t}), we find that
the distribution 
\begin{eqnarray*}
\tilde{P}_\rho(\tilde{z}) = P_\rho(\{z \in {\cal Z} \mymid
 g(z) = \tilde{z} \})
\end{eqnarray*}
of the degraded sample symbol $\tilde{z} \in \tilde{{\cal Z}}$
only depends on the parameters $\gamma = (R_\san{zz}, R_\san{xx}, R_\san{yy})$,
and does not depend on
the parameters 
$\kappa = (R_\san{zx},R_\san{zy},R_\san{xz},R_\san{xy}, R_\san{yz}, R_\san{yx}, t_\san{z}, t_\san{x}, t_\san{y})$.
Therefore, we regard the set 
\begin{eqnarray*}
\Gamma := \{ \gamma \in \mathbb{R}^3 \mymid
\exists \kappa \in \mathbb{R}^9 ~ (\gamma, \kappa) \in {\cal P}_c \}
\end{eqnarray*}
as the parameter space, and denote $\tilde{P}_\rho$ by $\tilde{P}_{\gamma}$.
Then, we estimate the parameters $\gamma$ by the ML estimator:
\begin{eqnarray*}
\hat{\gamma}(\tilde{\bol{z}}) := \argmax_{\gamma \in \Gamma}
\tilde{P}_\gamma^m(\tilde{\bol{z}})
\end{eqnarray*}
for $\tilde{\bol{z}} \in \tilde{{\cal Z}}^m$.

Since we cannot estimate the parameters $\kappa$, we have
to consider the worst case, and estimate the quantity
\begin{eqnarray*}
\min_{\varrho \in {\cal P}_c(\gamma)} H_{\varrho}(X|E)
\end{eqnarray*}
for a given $\gamma \in \Gamma$, where the set 
\begin{eqnarray*}
{\cal P}_c(\gamma) := \{ \varrho = (\gamma^\prime, \kappa^\prime) \in
{\cal P}_c \mymid \gamma^\prime = \gamma \}
\end{eqnarray*}
is the candidates of Choi operators for a given $\gamma \in \Gamma$.

By following similar arguments as in 
Sections \ref{subsec:key-generation-rate},
\ref{subsec:sufficient-codition}, and \ref{subsec:six-state},
we can derive the asymptotic key generation rate formula
of the postprocessing with
the direct reconciliation 
\begin{eqnarray}
\label{eq:conventional-one-way-six}
\min_{\varrho \in {\cal P}_c(\gamma)} [
H_\varrho(X|E) - H_{\varrho}(X|Y) ].
\end{eqnarray}
We can also derive the asymptotic key generation
rate formula of the postprocessing with the reverse 
reconciliation 
\begin{eqnarray}
\label{eq:conventional-one-way-six-reverse}
\min_{\varrho \in {\cal P}_c(\gamma)} [
H_\varrho(Y|E) - H_\varrho(Y|X) ].
\end{eqnarray}

Since Eqs.~(\ref{eq:conventional-one-way-six}) 
and (\ref{eq:conventional-one-way-six-reverse}) involves
the minimizations, we have the following 
straight forward but important
theorem.
\begin{theorem}
\label{theorem:at-least-six-state}
The asymptotic key generation rates for the direct and the reverse
reconciliation 
with our proposed channel
estimation procedure (Eqs.~(\ref{eq:asymptotic-direct}) and
(\ref{eq:asymptotic-reverse}))
are at least as high as those with
the conventional channel estimation procedure
(Eqs.~(\ref{eq:conventional-one-way-six}) 
and (\ref{eq:conventional-one-way-six-reverse})) respectively.
\end{theorem}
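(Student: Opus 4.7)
The plan is to exploit the simple set-theoretic observation that the set of Choi operators compatible with the conventional (matched-only) statistics always contains the true Choi operator $\rho$, so the worst-case minimization appearing in the conventional formulae can only make the key rate smaller than the value evaluated at $\rho$ itself, which is exactly our proposed rate.

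More concretely, I would first fix the true Choi operator $\rho \in {\cal P}_c$ that Alice and Bob share, and read off its Stokes parameterization $(R,t)$. Let $\gamma = (R_{\san{zz}}, R_{\san{xx}}, R_{\san{yy}})$ be the three diagonal entries that govern the degraded distribution $\tilde{P}_\rho = \tilde{P}_\gamma$, as identified just above Eq.~(\ref{eq:conventional-one-way-six}). By definition of ${\cal P}_c(\gamma)$, the true operator $\rho$ is itself an element of ${\cal P}_c(\gamma)$. Consequently
\begin{eqnarray*}
\min_{\varrho \in {\cal P}_c(\gamma)} \bigl[ H_\varrho(X|E) - H_\varrho(X|Y) \bigr]
\;\le\; H_\rho(X|E) - H_\rho(X|Y),
\end{eqnarray*}
which is precisely the inequality between Eq.~(\ref{eq:conventional-one-way-six}) and Eq.~(\ref{eq:asymptotic-direct}). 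The identical argument with $X$ and $Y$ interchanged yields
\begin{eqnarray*}
\min_{\varrho \in {\cal P}_c(\gamma)} \bigl[ H_\varrho(Y|E) - H_\varrho(Y|X) \bigr]
\;\le\; H_\rho(Y|E) - H_\rho(Y|X),
\end{eqnarray*}
giving the reverse-reconciliation case.

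There is essentially no obstacle here: the result follows entirely from the fact that our estimation procedure pins down all twelve Stokes parameters of $\rho$ in the asymptotic limit, whereas the conventional procedure pins down only $(R_{\san{zz}}, R_{\san{xx}}, R_{\san{yy}})$ and must therefore take the worst case over the larger compatibility set ${\cal P}_c(\gamma) \ni \rho$. The only thing to be careful about is to make sure that the minimization in Eqs.~(\ref{eq:conventional-one-way-six}) and (\ref{eq:conventional-one-way-six-reverse}) is being evaluated at the same $\gamma$ that corresponds to the true $\rho$, which is indeed the case because the ML estimator $\hat{\gamma}(\tilde{\bol{z}})$ is consistent and converges to the true $\gamma$ in probability. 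Hence the theorem follows by a one-line containment argument applied to each formula.
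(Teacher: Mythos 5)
Your proposal is correct and is exactly the argument the paper has in mind: the paper justifies the theorem by the single remark that Eqs.~(\ref{eq:conventional-one-way-six}) and (\ref{eq:conventional-one-way-six-reverse}) involve minimizations over ${\cal P}_c(\gamma)$, which contains the true $\rho$, so the minimized quantities cannot exceed the values at $\rho$, i.e.\ Eqs.~(\ref{eq:asymptotic-direct}) and (\ref{eq:asymptotic-reverse}). Your additional remark about the consistency of the ML estimator pinning the comparison to the $\gamma$ of the true channel is a correct (and slightly more explicit) treatment of a point the paper leaves implicit.
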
 

The following proposition gives an explicit expression of
Eqs.~(\ref{eq:conventional-one-way-six}) and
(\ref{eq:conventional-one-way-six-reverse}) 
for any Choi operator.
The following proposition also clarifies that the asymptotic key
generation rates of the direct and the reverse reconciliation
coincide for any Choi operator if we use
the conventional channel estimation procedure.
Although the following proposition is implicitly stated in
the literatures \cite{renner:05, renner:05b, kraus:05},
we present it for readers' convenience.
\begin{proposition}
\label{proposition:explicit-formula-convention}
For any $\rho = (\gamma, \tau) \in {\cal P}_c$, we have
\begin{eqnarray}
\label{eq:partial-twirled-rate1}
\lefteqn{
\min_{\varrho \in {\cal P}_c(\gamma)} [
H_\varrho(X|E) - H_{\varrho}(X|Y) ] } \\
\label{eq:partial-twirled-rate2}
 &=& \min_{\varrho \in {\cal P}_c(\gamma)} [
H_\varrho(Y|E) - H_{\varrho}(Y|X) ] \\
\label{eq:partial-twirled-rate3}
 &=& 1 - H[p_\san{i}, p_\san{z}, p_\san{x}, p_\san{y}],
\end{eqnarray}
where the distribution 
$(p_\san{i}, p_\san{z}, p_\san{x}, p_\san{y})$ is given by
\begin{eqnarray*}
p_\san{i} &=& \frac{1 + R_\san{zz} + R_\san{xx} + R_\san{yy}}{4}, \\
p_\san{z} &=& \frac{1 + R_\san{zz} - R_\san{xx} - R_\san{yy}}{4}, \\
p_\san{x} &=& \frac{1 - R_\san{zz} + R_\san{xx} - R_\san{yy}}{4}, \\
p_\san{y} &=& \frac{1 - R_\san{zz} - R_\san{xx} + R_\san{yy}}{4}.
\end{eqnarray*}
\end{proposition}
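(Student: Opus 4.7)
The plan is to reduce the minimization to a computation on Bell-diagonal states via partial twirling, and then evaluate the minimand on the twirled state explicitly.

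I would first introduce the partial twirling
\begin{eqnarray*}
\mathcal{T}(\varrho) := \frac{1}{4} \sum_{\san{a} \in \{\san{i}, \san{x}, \san{y}, \san{z}\}} (\bar{\sigma}_\san{a} \otimes \sigma_\san{a})\, \varrho\, (\bar{\sigma}_\san{a} \otimes \sigma_\san{a}).
\end{eqnarray*}
A direct computation with the (anti)commutation relations of the Paulis shows, in the Stokes parameterization, that $\mathcal{T}$ annihilates every off-diagonal entry of $R$ and every entry of $t$ while leaving $R_\san{zz}$, $R_\san{xx}$, and $R_\san{yy}$ fixed. Hence $\mathcal{T}(\varrho) \in {\cal P}_c(\gamma)$ for every $\varrho \in {\cal P}_c(\gamma)$, and this image is the same Bell-diagonal state for every such $\varrho$, with Bell-state weights $(p_\san{i}, p_\san{z}, p_\san{x}, p_\san{y})$ given by the formulae in the proposition.

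Next I would show that replacing $\varrho$ by $\mathcal{T}(\varrho)$ can only decrease the minimand. By Proposition \ref{proposition:convexity},
\begin{eqnarray*}
H_{\mathcal{T}(\varrho)}(X|E) \le \frac{1}{4} \sum_\san{a} H_{(\bar{\sigma}_\san{a} \otimes \sigma_\san{a})\varrho(\bar{\sigma}_\san{a} \otimes \sigma_\san{a})}(X|E).
\end{eqnarray*}
For each $\san{a}$, the conjugation by $\bar{\sigma}_\san{a} \otimes \sigma_\san{a}$ acts on Alice's $\san{z}$-basis measurement by at most a bit flip and acts unitarily on Bob's system (the latter being purified by Eve), so each summand equals $H_\varrho(X|E)$. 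For the classical term, concavity of the conditional entropy in the joint distribution $P_{XY}$ together with the same bit-flip observation yields $H_{\mathcal{T}(\varrho)}(X|Y) \ge H_\varrho(X|Y)$. Combining,
\begin{eqnarray*}
H_{\mathcal{T}(\varrho)}(X|E) - H_{\mathcal{T}(\varrho)}(X|Y) \le H_\varrho(X|E) - H_\varrho(X|Y),
\end{eqnarray*}
so the minimum over ${\cal P}_c(\gamma)$ is attained at the Bell-diagonal representative $\mathcal{T}(\varrho)$.

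Finally I would evaluate the minimand on that Bell-diagonal state. Using the purification $\ket{\psi_{ABE}} = \sum_\san{a} \sqrt{p_\san{a}} \ket{\phi^\san{a}}_{AB} \ket{\san{a}}_E$, one finds $H(\rho_E) = H[p_\san{i}, p_\san{z}, p_\san{x}, p_\san{y}]$, and the post-measurement Eve state $\rho_E^x$ has spectrum $(p_\san{i} + p_\san{z}, p_\san{x} + p_\san{y}, 0, 0)$ for each $x \in \mathbb{F}_2$, so $H(\rho_E^x) = h(p_\san{i} + p_\san{z})$. This gives $H(X|E) = 1 + h(p_\san{i} + p_\san{z}) - H[p_\san{i}, p_\san{z}, p_\san{x}, p_\san{y}]$, while the Bell-diagonal marginal $P_{XY}$ yields $H(X|Y) = h(p_\san{x} + p_\san{y}) = h(p_\san{i} + p_\san{z})$, and the difference is $1 - H[p_\san{i}, p_\san{z}, p_\san{x}, p_\san{y}]$ as claimed. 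The equality between (\ref{eq:partial-twirled-rate1}) and (\ref{eq:partial-twirled-rate2}) then follows from the $A \leftrightarrow B$ swap symmetry of any Bell-diagonal state. The main obstacle should be the bookkeeping in the partial twirling step: verifying simultaneously that it preserves exactly the three diagonal $R$-entries, that each conjugation by $\bar{\sigma}_\san{a} \otimes \sigma_\san{a}$ leaves $H(X|E)$ and $H(X|Y)$ genuinely unchanged (via a joint bit-flip on Alice's and Bob's outcomes), and that the purification computation correctly yields the stated spectrum for $\rho_E^x$.
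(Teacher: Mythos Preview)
Your proposal is correct and follows essentially the same route as the paper: partial-twirl an arbitrary $\varrho\in{\cal P}_c(\gamma)$ to the Bell-diagonal representative, invoke Proposition~\ref{proposition:convexity} together with the invariance of $H_\varrho(X|E)$ under each Pauli conjugation to show the minimum is attained there, and then evaluate explicitly on the Bell-diagonal state. The only cosmetic difference is in the treatment of $H_\varrho(X|Y)$ --- the paper bounds it above by $H_\varrho(W)=h((1+R_\san{zz})/2)$ via Remark~\ref{remark:ir-with-code}, whereas you use concavity of $H(X|Y)$ in $P_{XY}$ combined with the same bit-flip invariance; both arguments yield the same conclusion.
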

\begin{proof}
We only prove the equality between Eqs.~(\ref{eq:partial-twirled-rate1}) and
(\ref{eq:partial-twirled-rate3}), because the equality between
Eqs.~(\ref{eq:partial-twirled-rate2}) and
(\ref{eq:partial-twirled-rate3}) can be proved exactly in the same manner.

For any $\varrho \in {\cal P}_c(\gamma)$, let 
$\varrho^\san{z} := (\sigma_\san{z} \otimes \sigma_\san{z}) \varrho
 (\sigma_\san{z} \otimes \sigma_\san{z})$,
$\varrho^\san{x} := (\sigma_\san{x} \otimes \sigma_\san{x}) \varrho
 (\sigma_\san{x} \otimes \sigma_\san{x})$,
and $\varrho^\san{y} := (\sigma_\san{y} \otimes \sigma_\san{y}) \varrho
 (\sigma_\san{y} \otimes \sigma_\san{y})$.
Then, $\varrho^\san{z}$, $\varrho^\san{x}$, and $\varrho^\san{y}$
also belong to the set ${\cal P}_c(\gamma)$.
Define the (partial) twirled\footnote{The (partial) twirling was a
 technique to convert any bipartite density operator into
the  Bell diagonal state (see Section \ref{subsec:pauli-channel} for the
 definition of the Bell diagonal state). The (partial) twirling 
was first proposed by Bennett {\em et al}.~\cite{bennett:96b}.} Choi operator
\begin{eqnarray*}
\varrho^{tw} := \frac{1}{4} \varrho + \frac{1}{4} \varrho^\san{z}
  + \frac{1}{4} \varrho^\san{x} + \frac{1}{4} \varrho^\san{y}. 
\end{eqnarray*}
Then, the convexity of ${\cal P}_c(\gamma)$ implies 
$\varrho^{tw} \in {\cal P}_c(\gamma)$, 
and we can also find that the vector components 
(in the Stokes parameterization)
of $\varrho^{tw}$ is the zero vector and the matrix components
(in the Stokes parameterization)
of $\varrho^{tw}$ is the diagonal matrix with the diagonal
entries $R_\san{zz}$, $R_\san{xx}$, and $R_\san{yy}$.
Furthermore, we find that $\varrho^{tw} = \rho^{tw}$
for any $\varrho \in {\cal P}_c(\gamma)$. 

By using Proposition \ref{proposition:convexity} (twice), 
we have
\begin{eqnarray}
\lefteqn{ \min_{\varrho \in {\cal P}_c(\gamma)} [
H_\varrho(X|E) - H_{\varrho}(X|Y) ] } \nonumber \\
&\ge& H_{\rho^{tw}}(X|E) \nonumber \\
&=& 1 - H(\varrho^{tw}) + \sum_{x \in \mathbb{F}_2} \frac{1}{2} 
H(\varrho_B^{tw x}) \nonumber \\
&=& 1 - H[q_\san{i}, q_\san{z}, q_\san{x}, q_\san{y}] 
 + h\left(\frac{1 + R_\san{zz}}{2} \right),
\label{eq:partial-twirled-rate-proof-1}
\end{eqnarray}
where $\varrho_B^{tw x} := 2 \rom{Tr}_A[(\ket{x}\bra{x} \otimes
 I)\varrho^{tw}]$.

In a similar manner as in Remark \ref{remark:ir-with-code}, we have
\begin{eqnarray}
H_{\varrho}(X|Y) \le H_{\varrho}(W) = H_{\rho^{tw}}(W) 
= h\left(\frac{1 + R_\san{zz}}{2} \right)
\label{eq:partial-twirled-rate-proof-2}
\end{eqnarray}
for any $\varrho \in {\cal P}_c(\gamma)$,
where $H_{\varrho}(W)$ is the entropy of the random variable $W$
whose distribution is 
\begin{eqnarray*}
P_{W,\varrho}(w) := \sum_{y \in \mathbb{F}_2} 
P_{XY, \varrho}(y+w,y).
\end{eqnarray*}

Combining Eqs.~(\ref{eq:partial-twirled-rate-proof-1})
and (\ref{eq:partial-twirled-rate-proof-2}), we have
the equality between Eqs.~(\ref{eq:partial-twirled-rate1}) and
(\ref{eq:partial-twirled-rate3}).
\end{proof}

\begin{remark}
\label{remark:no-difference-sw-er}
As we can find in the proof of 
Proposition \ref{proposition:explicit-formula-convention},
the use of the IR procedure (with the linear Slepian-Wolf coding)
proposed in Section \ref{sec:one-way-IR} and the use of the IR procedure
(with the error correcting code) presented in 
Remark \ref{remark:ir-with-code} make no difference
to the asymptotic key generation rate if we use the conventional
channel estimation procedure.
\end{remark} 
\begin{remark}
It should be noted that Eq.~(\ref{eq:partial-twirled-rate3}) 
is the well known asymptotic key generation rate formula
\cite{lo:01}, which can be derived by using the technique based on the
CSS code (See Section \ref{sec:background} for the CSS code technique). 
\end{remark}

\subsection{BB84 Protocol}

In the conventional estimation,
Alice and Bob estimate 
$\rho \in {\cal P}_c$ from
the degraded sample sequence
$g(\bol{z}) := (g(z_1), \ldots,g(z_m))$.
Although the Choi operator $\rho$ is described by $12$ real
parameters (in the Stokes parameterization), from
Eqs.~(\ref{eq:def-stokes-R}) and (\ref{eq:def-stokes-t}), we find that
the distribution 
\begin{eqnarray*}
\tilde{P}_\omega(\tilde{z}) = P_\omega(\{z \in {\cal Z} \mymid
 g(z) = \tilde{z} \})
\end{eqnarray*}
of the degraded sample symbol $\tilde{z} \in \tilde{{\cal Z}}$
only depends on the parameters $\upsilon = (R_\san{zz}, R_\san{xx})$,
and does not depend on the parameters
$\varsigma = (R_\san{zx},R_\san{zy},R_\san{xz},R_\san{xy}, R_\san{yz}, R_\san{yx}, R_\san{yy}, t_\san{z}, t_\san{x}, t_\san{y})$.
Therefore, we regard the set
\begin{eqnarray*}
\Upsilon := \{ \upsilon \in \mathbb{R}^2 \mymid
\exists \varsigma \in \mathbb{R}^{10}, ~ (\upsilon,\varsigma) \in {\cal P}_c \}
\end{eqnarray*}
as the parameter space, and denote $\tilde{P}_\omega$ by
$\tilde{P}_{\upsilon}$. 
Then, we estimate the parameters $\upsilon$ by the ML estimator:
\begin{eqnarray*}
\hat{\upsilon}(\tilde{\bol{z}}) := \argmax_{\upsilon \in \Upsilon}
\tilde{P}_\upsilon^m(\tilde{\bol{z}})
\end{eqnarray*}
for $\tilde{\bol{z}} \in \tilde{{\cal Z}}^m$.

Since we cannot estimate the parameters $\varsigma$, we have
to consider the worst case, and estimate the quantity
\begin{eqnarray*}
\min_{\varrho \in {\cal P}_c(\upsilon)} H_{\varrho}(X|E)
\end{eqnarray*}
for a given $\upsilon \in \upsilon$, where the set 
\begin{eqnarray*}
{\cal P}_c(\upsilon) := \{ \varrho = (\upsilon^\prime, \varsigma^\prime) \in
{\cal P}_c \mymid \upsilon^\prime = \upsilon \}
\end{eqnarray*}
is the candidates of Choi operators for a given $\upsilon \in \Upsilon$.

By following similar arguments as in 
Sections \ref{subsec:key-generation-rate},
\ref{subsec:sufficient-codition}, and \ref{subsec:bb84},
we can derive the asymptotic key generation rate formula
of the postprocessing with
the direct reconciliation 
\begin{eqnarray}
\label{eq:conventional-one-way-bb84}
\min_{\varrho \in {\cal P}_c(\upsilon)}[
 H_{\varrho}(X|E) - H_{\varrho}(X|Y)].
\end{eqnarray}
We can also derive the asymptotic key generation
rate formula of the postprocessing with the reverse 
reconciliation 
\begin{eqnarray}
\label{eq:conventional-one-way-bb84-reverse}
\min_{\varrho \in {\cal P}_c(\upsilon)}[
 H_{\varrho}(Y|E) - H_{\varrho}(Y|X)]. 
\end{eqnarray}

Since the range ${\cal P}_c(\omega)$ of the minimizations
in Eqs.~(\ref{eq:key-rate-one-way-bb84}) 
and (\ref{eq:eq:key-rate-one-way-bb84-reverse}) is smaller than
the range ${\cal P}_c(\upsilon)$ of the minimizations
in Eqs.~(\ref{eq:conventional-one-way-bb84}) and 
(\ref{eq:conventional-one-way-bb84-reverse}), we have the following obvious but important
theorem.
\begin{theorem}
\label{theorem:at-least-bb84}
The asymptotic key generation rates for the direct and the reverse
reconciliation 
with our proposed channel
estimation procedure (Eqs.~(\ref{eq:key-rate-one-way-bb84}) and
(\ref{eq:eq:key-rate-one-way-bb84-reverse}))
are at least as high as those with
the conventional channel estimation procedure
(Eqs.~(\ref{eq:conventional-one-way-bb84}) 
and (\ref{eq:conventional-one-way-bb84-reverse})) respectively.
\end{theorem}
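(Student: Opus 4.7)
The plan is to reduce the inequality between the two asymptotic rates to a trivial monotonicity of the minimum under shrinking of its feasible set, together with the fact that the conditional entropy $H_\varrho(X|Y)$ depends only on $\omega$ (equivalently, only on $\upsilon$ augmented by the off-diagonal parameters that do not affect $P_{XY,\varrho}$). The key observation, which I would state first, is the inclusion $\mathcal{P}_c(\omega)\subset\mathcal{P}_c(\upsilon)$: since $\upsilon=(R_{\san{zz}},R_{\san{xx}})$ is a sub-tuple of $\omega=(R_{\san{zz}},R_{\san{zx}},R_{\san{xz}},R_{\san{xx}},t_{\san{z}},t_{\san{x}})$, fixing all six coordinates of $\omega$ constrains a Choi operator more than fixing only the two coordinates of $\upsilon$.

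Next, I would recall from the discussion following Eq.~(\ref{eq:definition-of-pxy}) that $P_{XY,\varrho}$ does not depend on the $\tau$-parameters in the BB84 setting, i.e., $P_{XY,\varrho}=P_{XY,\omega}$ for every $\varrho\in\mathcal{P}_c(\omega)$. Consequently $H_\varrho(X|Y)$ takes the constant value $H_\omega(X|Y)$ on $\mathcal{P}_c(\omega)$, so the proposed rate (\ref{eq:key-rate-one-way-bb84}) can be rewritten as
\begin{eqnarray*}
\min_{\varrho\in\mathcal{P}_c(\omega)}H_\varrho(X|E)-H_\omega(X|Y)
=\min_{\varrho\in\mathcal{P}_c(\omega)}\bigl[H_\varrho(X|E)-H_\varrho(X|Y)\bigr].
\end{eqnarray*}
Once the rate is expressed in this form, the theorem follows immediately: minimizing the same functional over the smaller set $\mathcal{P}_c(\omega)$ yields a value at least as large as the minimum over $\mathcal{P}_c(\upsilon)$, which is exactly the conventional rate (\ref{eq:conventional-one-way-bb84}).

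For the reverse reconciliation case I would run the identical argument with $X$ and $Y$ swapped, using that $P_{XY,\varrho}$ still depends only on $\omega$ so that $H_\varrho(Y|X)=H_\omega(Y|X)$ on $\mathcal{P}_c(\omega)$, obtaining the analogous inequality between (\ref{eq:eq:key-rate-one-way-bb84-reverse}) and (\ref{eq:conventional-one-way-bb84-reverse}). There is no substantive obstacle in this proof; the only bookkeeping point I would be careful about is checking that every quantity appearing in the rewritten proposed formula actually agrees on $\mathcal{P}_c(\omega)$, which follows directly from the Stokes-parameterization formulas (\ref{eq:def-stokes-R}) and (\ref{eq:def-stokes-t}) specialized to $P_{XY,\varrho}=\mathrm{Tr}[(|x\rangle\langle x|\otimes|y\rangle\langle y|)\varrho]$.
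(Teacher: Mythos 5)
Your proposal is correct and follows essentially the same route as the paper: the paper derives the theorem directly from the observation that the minimization range ${\cal P}_c(\omega)$ is contained in ${\cal P}_c(\upsilon)$, with the fact that $H_\varrho(X|Y)=H_\omega(X|Y)$ on ${\cal P}_c(\omega)$ (so the proposed rate is a minimum of the same functional over the smaller set) left implicit. You have merely made these two small bookkeeping steps explicit, which is fine.
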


The following proposition gives an explicit expression of
Eqs.~(\ref{eq:conventional-one-way-bb84}) 
and (\ref{eq:conventional-one-way-bb84-reverse}) 
for any Choi operator.
The following proposition also clarifies that the asymptotic key
generation rates of the direct and the reverse reconciliation
coincide for any Choi operator if we use
the conventional channel estimation procedure.
Although the following proposition is implicitly stated in
the literatures \cite{renner:05, renner:05b, kraus:05},
we present it for readers' convenience.
\begin{proposition}
\label{proposition:explicit-formula-convention-bb84}
For any $\rho = (\upsilon,\varsigma) \in {\cal P}_c$, we have
\begin{eqnarray}
\label{eq:partial-twirled-rate1-bb84}
\lefteqn{
\min_{\varrho \in {\cal P}_c(\upsilon)} [
H_\varrho(X|E) - H_{\varrho}(X|Y) ] } \\
\label{eq:partial-twirled-rate2-bb84}
 &=& \min_{\varrho \in {\cal P}_c(\upsilon)} [
H_\varrho(Y|E) - H_{\varrho}(Y|X) ] \\
\label{eq:partial-twirled-rate3-bb84}
 &=& 1 - h\left(\frac{1 + R_\san{zz}}{2} \right)
 - h\left(\frac{1 + R_\san{xx}}{2} \right).
\end{eqnarray}
\end{proposition}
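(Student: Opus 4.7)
My plan is to parallel the proof of Proposition \ref{proposition:explicit-formula-convention} and reduce the minimization to Bell-diagonal states, then treat the one extra free parameter $R_{\san{yy}}$ (unknown in BB84 conventional estimation) by a constrained entropy maximization. It suffices to establish the equality of (\ref{eq:partial-twirled-rate1-bb84}) and (\ref{eq:partial-twirled-rate3-bb84}); the reverse-reconciliation identity (\ref{eq:partial-twirled-rate2-bb84}) follows by interchanging the roles of Alice and Bob, which preserves $R_{\san{zz}}$ and $R_{\san{xx}}$.

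First I would apply the full Pauli twirling
\begin{eqnarray*}
\varrho^{tw} := \frac{1}{4} \sum_{\san{a} \in \{\san{i},\san{z},\san{x},\san{y}\}}
  (\sigma_{\san{a}} \otimes \sigma_{\san{a}})\, \varrho\, (\sigma_{\san{a}} \otimes \sigma_{\san{a}}).
\end{eqnarray*}
Direct inspection of the Stokes parameters shows that this map kills all off-diagonal entries of the matrix part and all components of the vector part, while leaving the diagonal entries $(R_{\san{zz}}, R_{\san{xx}}, R_{\san{yy}})$ untouched; so $\varrho^{tw}$ is Bell-diagonal with the same $R_{\san{zz}}$ and $R_{\san{xx}}$ as $\varrho$. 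Since each summand differs from $\varrho$ by local unitaries on Alice and Bob (which at most relabel Alice's Z-outcome), Eve's ambiguity is preserved summand by summand, and Proposition \ref{proposition:convexity} yields $H_{\varrho^{tw}}(X|E) \le H_{\varrho}(X|E)$. Combining this with the log-sum bound $H_{\varrho}(X|Y) \le h((1+R_{\san{zz}})/2)$ from Remark \ref{remark:ir-with-code} (which is tight for Bell-diagonal states) gives
\begin{eqnarray*}
H_{\varrho}(X|E) - H_{\varrho}(X|Y) \,\ge\, H_{\varrho^{tw}}(X|E) - H_{\varrho^{tw}}(X|Y) \,=\, 1 - H[p_{\san{i}}, p_{\san{z}}, p_{\san{x}}, p_{\san{y}}],
\end{eqnarray*}
where the final equality invokes Proposition \ref{proposition:explicit-formula-convention} applied to the Bell-diagonal $\varrho^{tw}$, and the weights $p$ are the usual functions of $R_{\san{zz}}, R_{\san{xx}}, R_{\san{yy}}$.

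The remaining step is to maximize $H[p_{\san{i}}, p_{\san{z}}, p_{\san{x}}, p_{\san{y}}]$ over the one remaining degree of freedom (equivalently $R_{\san{yy}}$) subject to the marginal constraints $p_{\san{i}} + p_{\san{z}} = (1+R_{\san{zz}})/2 =: a$ and $p_{\san{i}} + p_{\san{x}} = (1+R_{\san{xx}})/2 =: c$. Parametrizing $p_{\san{i}} = t,\ p_{\san{z}} = a-t,\ p_{\san{x}} = c-t,\ p_{\san{y}} = 1-a-c+t$ and setting $\partial H / \partial t = 0$ yields the product solution $t = ac$, giving $H[p] = h(a) + h(c)$. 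This establishes the lower bound $1 - h((1+R_{\san{zz}})/2) - h((1+R_{\san{xx}})/2)$ on (\ref{eq:partial-twirled-rate1-bb84}); achievability is immediate by taking the specific Bell-diagonal $\varrho^{*} \in {\cal P}_c(\upsilon)$ with $R_{\san{yy}} = R_{\san{zz}} R_{\san{xx}}$ and all off-diagonals and vector components zero. The main subtlety is that, unlike the six-state setting where the twirled state is uniquely determined by the known parameters, here $\varrho^{tw}$ still depends on the unknown $R_{\san{yy}}$, which is exactly why the additional one-parameter optimization is needed; once the marginal constraints are recognised as the row and column sums of a $2\times 2$ grid, the product form is forced and the computation is routine.
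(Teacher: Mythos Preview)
Your proof is correct and follows essentially the same route the paper intends: the paper omits the argument, pointing to Proposition \ref{proposition:explicit-formula-convention}, and your write-up is precisely the natural adaptation of that proof (full Pauli twirl plus the log-sum bound of Remark \ref{remark:ir-with-code}), together with the one extra step that is genuinely new in the BB84 case --- the optimization over the unknown $R_{\san{yy}}$. That optimization, namely $\max_{R_{\san{yy}}} H[p_{\san{i}},p_{\san{z}},p_{\san{x}},p_{\san{y}}] = h(p_{\san{i}}+p_{\san{z}}) + h(p_{\san{i}}+p_{\san{x}})$, is the same computation the paper carries out (without detailed justification) inside the proof of Proposition \ref{proposition:unital-bound}; your product-form argument makes it explicit and also verifies feasibility of the optimizer $R_{\san{yy}} = R_{\san{zz}} R_{\san{xx}}$.
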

\begin{proof}
This proposition is proved in a similar manner as 
Proposition \ref{proposition:explicit-formula-convention}.
Therefore, we omit the proof.
\end{proof}

\begin{remark}
It should be noted that the same remark 
as Remark \ref{remark:no-difference-sw-er} also holds
for the BB84 protocol.
\end{remark}
\begin{remark}
It should be noted that Eq.~(\ref{eq:partial-twirled-rate3-bb84}) 
is with the well known asymptotic key generation rate formula
\cite{shor:00}, which can be derived by using the technique based on the
CSS code (See Section \ref{sec:background} for the CSS code technique). 
\end{remark}

\section{Asymptotic Key Generation Rates for Specific Channels}
\label{sec:example}

In this section, we calculate the asymptotic key generation rates
of the BB84 protocol and the six-state protocol
for specific channels, and
clarify the advantage to use 
our proposed channel estimation instead of
the conventional channel estimation.

\subsection{Amplitude Damping Channel}
\label{subsec:amplitude-damping}

When the channel between Alice and Bob is an
amplitude damping channel, the Stokes parameterization
of the corresponding density operator 
$\rho_p \in {\cal P}_c$ is
\begin{eqnarray}
\label{eq:amplitude-damping}
\left(
\left[
\begin{array}{ccc}
1- p & 0 & 0 \\
0 & \sqrt{1-p} & 0 \\
0 & 0 & \sqrt{1-p}
\end{array}
\right],
\left[
\begin{array}{c}
p \\ 0 \\ 0
\end{array}
\right]
\right),
\end{eqnarray}
where $0 \le p \le 1$.

For the six-state protocol, 
since there are no minimization in Eqs.~(\ref{eq:asymptotic-direct})
and (\ref{eq:asymptotic-reverse}), there are no difficulty to calculate
Eqs.~(\ref{eq:asymptotic-direct})
and (\ref{eq:asymptotic-reverse}).

Next, we consider the BB84 protocol.
For $\omega = (1-p,0,0,\sqrt{1-p},p,0)$,
Eqs.~(\ref{eq:key-rate-one-way-bb84})
and (\ref{eq:eq:key-rate-one-way-bb84-reverse})
can be calculated as follows.
By Proposition \ref{proposition:minimization},
it is sufficient to consider $\varrho \in {\cal P}_c(\omega)$
such that $R_\san{zy} = R_\san{xy} = R_\san{yz} = R_\san{yx} = t_\san{y}
= 0$.
Furthermore, by the condition on the TPCP map \cite{fujiwara:99}
\begin{eqnarray*}
(R_{\mathsf{xx}} - R_{\mathsf{yy}})^2 
\le (1 - R_{\mathsf{zz}})^2 - t_{\mathsf{z}}^2,
\end{eqnarray*}
we can decide the remaining parameter as $R_\san{yy} = \sqrt{1-p}$.
Therefore, Eqs.~(\ref{eq:key-rate-one-way-bb84})
and (\ref{eq:eq:key-rate-one-way-bb84-reverse}) coincide with
the true values respectively. Furthermore,
the asymptotic key generation rates for the 
BB84 protocol coincide with those for the
six-state protocol.

The asymptotic key generation rates for the
direct and the reverse reconciliations can
be written as functions of the
parameter $p$:
\begin{eqnarray}
h\left( \frac{1+p}{2}\right) - h\left( \frac{p}{2} \right)
\label{eq-key-rate-forward} 
\end{eqnarray}
and
\begin{eqnarray}
1 - h\left( \frac{p}{2} \right)
\label{eq-key-rate-reverse}
\end{eqnarray}
respectively. They are plotted in Fig.~\ref{fig:amplitude-damping}.

From Fig.~\ref{fig:amplitude-damping}, we find that
the asymptotic key generation rate
with the reverse reconciliation is higher than
that with the forward reconciliation. Actually, 
they are analyzed in detail as follows.
By a straightforward calculation, we have
\begin{eqnarray*}
H_\rho(X|E) &=& 1 + \frac{1}{2}h\left(p\right) -
 h\left(\frac{p}{2}\right)  \\
&=& H_\rho(XY) - h\left(\frac{p}{2}\right)
\end{eqnarray*} 
and
\begin{eqnarray*}
H_\rho(Y|E) &=& h\left(\frac{1 + p}{2}\right) + 
 \frac{1 + p}{2}h\left(\frac{1}{1 + p}\right) 
 - h\left(\frac{p}{2}\right) \\
 &=& H_\rho(XY) - h\left(\frac{p}{2}\right),
\end{eqnarray*}
where $H_\rho(XY)$ is the entropy of the random variables
with distribution $P_{XY,\rho}$.
Therefore, the difference between the asymptotic key
generation rate with the forward and the reverse reconciliations
comes from 
the difference between 
$H_\rho(X|Y)$ and $H_\rho(Y|X)$, which is equal to
the difference between
$H_\rho(Y)$ and $H_\rho(X) = 1$.
Note that $H_\rho(Y)$ goes to $0$ as 
$p \to 1$.

The Bell diagonal entries of the Choi operator
$\rho_p$ are
$\frac{1}{4}(2+2 \sqrt{1-p} - p)$,
$\frac{1}{4}p$, $\frac{1}{4}(2-2 \sqrt{1-p} - p)$,
and $\frac{1}{4}p$.
When Alice and Bob only use the degraded statistic,
i.e., when Alice and Bob use the conventional channel estimation,
the asymptotic key generation rates
of the six-state protocol and the BB84
protocol can be
calculated only from the Bell diagonal entries
(Propositions \ref{proposition:explicit-formula-convention}
and \ref{proposition:explicit-formula-convention-bb84}), and 
are also plotted in Fig.~\ref{fig:amplitude-damping}.

\begin{figure}
\centering
\includegraphics[width=\linewidth]{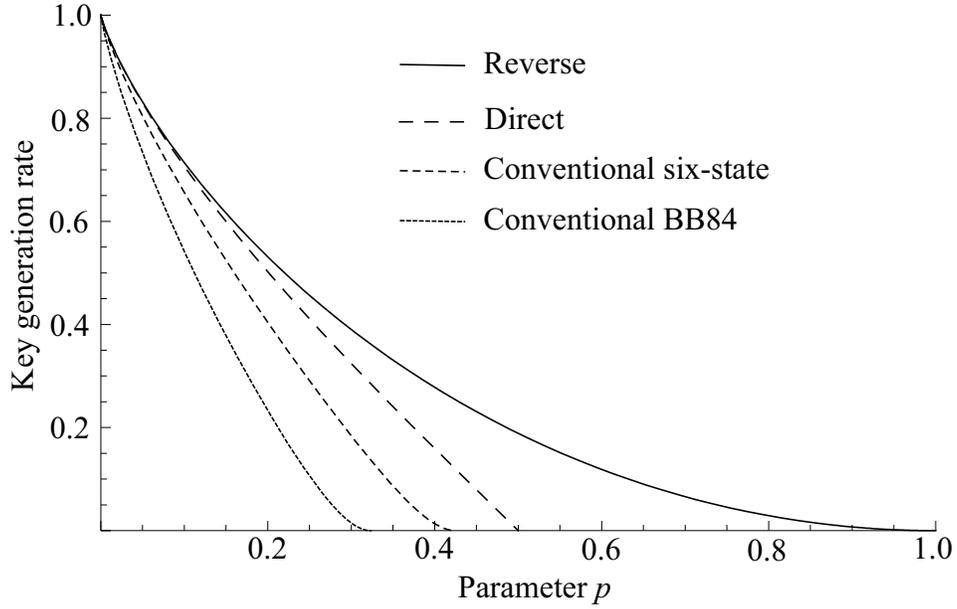}
\caption{ 
Comparison of the asymptotic key generation rates against the parameter
$p$ of the amplitude damping channel (see Eq.~(\ref{eq:amplitude-damping})).
``Reverse'' and ``Direct'' are the asymptotic key generation rates when we use the
reverse reconciliation and the direct reconciliation
with our channel estimation procedure (Eqs.~(\ref{eq-key-rate-reverse}) and
(\ref{eq-key-rate-forward})) respectively.
``Conventional six-state'' and ``Conventional BB84'' are the asymptotic key
generation rates of the six-state protocol and the BB84 protocol
with the conventional channel estimation procedure.
Note that the protocols with
the conventional channel estimation procedure 
involves the noisy preprocessing \cite{renner:05, kraus:05}
in the postprocessing.
}
\label{fig:amplitude-damping}
\end{figure}

\begin{remark}
\label{remark:degradable}
As is mentioned in Remark \ref{remark:noisy-preprocessing},
there is a possibility to improve the asymptotic key 
generation rate in Eq.~(\ref{eq:asymptotic-direct})
by the noisy preprocessing. If a $\{ccq\}$-state
$\rho_{XYE}$ derived from 
a Choi operator $\rho \in {\cal P}_c$ satisfies the condition below, we can show that
the noisy preprocessing does not improve the asymptotic
key generation rate.

We define a $\{ccq\}$-state
\begin{eqnarray*}
\rho_{XYE} = \sum_{x,y \in \mathbb{F}_2} P_{XY}(x,y) \ket{x,y}\bra{x,y}
\otimes \rho_E^{x,y}
\end{eqnarray*}
to be degradable state\footnote{The concept of the degradable
state is an analogy of the degradable channel \cite{devetak:05b}.
For the degradable channel, the quantum wire-tap channel
capacity \cite{devetak:05} is known to be achievable without 
any auxiliary random variable \cite{smith:07,hayashi-book:06}.}
(from Alice to Bob and Eve) 
if there exist states $\{ \hat{\rho}_E^y\}_{y \in \mathbb{F}_2}$
satisfying 
\begin{eqnarray*}
\sum_{y \in \mathbb{F}_2} P_{Y|X}(y|x) \hat{\rho}_E^y = \rho_E^x := 
  \sum_{y \in \mathbb{F}_2} P_{Y|X}(y|x) \rho_E^{x,y}
\end{eqnarray*}
for any $x \in \mathbb{F}_2$. 
If a $\{ccq\}$-state
$\rho_{XYE}$ derived from a Choi operator $\rho$ is degradable,
then the asymptotic key generation rate in Eq.~(\ref{eq:asymptotic-direct})
is optimal, that is,  it cannot be improved by the noisy preprocessing.

The above statement is proved as follows. 
Even if we know the Choi operator $\rho$ in advance,
the asymptotic key generation rate
of any postprocessing is upper bounded by
the quantum intrinsic information\footnote{It is the quantum 
analogy of the intrinsic information proposed by
Maurer and Wolf \cite{maurer:99}.}
\begin{eqnarray*}
I_\rho(X;Y \downarrow E) := \inf I_\rho(X;Y | E^\prime),
\end{eqnarray*} 
where 
\begin{eqnarray*}
I_\rho(X;Y | E^\prime) := H_\rho(XE) + H_\rho(YE) - H_\rho(XYE) - H_\rho(E)
\end{eqnarray*}
is the quantum conditional mutual information,
and the infimum is taken over all $\{ccq\}$-states
$\rho_{XYE^\prime} = (\rom{id} \otimes \mathcal{N}_{E \to
 E^\prime})(\rho_{XYE})$ for CPTP maps $\mathcal{N}_{E \to E^\prime}$
from system $E$ to $E^\prime$ \cite{christandl:06}.
Taking the identity map $\rom{id}_E$, the quantum conditional mutual
information $I_\rho(X;Y|E)$ itself is an upper bound on
the asymptotic key generation rate for any postprocessing.

Since we are now considering the postprocessing in which 
only Alice sends the public message, the maximum of
the asymptotic key generation rate only depends on
the distribution $P_{XY}$ and $\{cq\}$-state $\rho_{XE}$.
Thus the maximum of the asymptotic key generation rate for
$\rho_{XYE}$ is equals to  that for degraded version of it,
\begin{eqnarray*}
\hat{\rho}_{XYE} := \sum_{x,y} P_{XY}(x,y) \ket{x}\bra{x} \otimes
 \ket{y}\bra{y} \otimes \hat{\rho}_E^{y}.
\end{eqnarray*}
Applying the above upper bound $I_\rho(X;Y|E)$ for
the degraded $\{ccq\}$-state $\hat{\rho}_{XYE}$,
the maximum of the asymptotic key generation rate is upper bounded
by 
\begin{eqnarray*}
\lefteqn{
I_{\hat{\rho}}(X;Y | E) 
} \\
&=& I_{\hat{\rho}}(X ; YE) - I_{\hat{\rho}}(X;E) \\
&=& H_{\hat{\rho}}(X|E) - H(X|Y) + I_{\hat{\rho}}(X;E | Y) \\
&=& H_{\rho}(X|E) - H(X|Y),
\end{eqnarray*}
which is the desired upper bound, and equals to
Eq.~(\ref{eq:asymptotic-direct}).

For the amplitude damping channel, we can show that the $\{ccq\}$-state
$\rho_{XYE}$ is degradable by a straightforward 
calculation. Therefore, the asymptotic key generation rate
in Eq.~(\ref{eq:asymptotic-direct}) is optimal for the
amplitude damping channel. 
%
\end{remark}

Although we exclusively considered a key 
generated from the bit sequences transmitted and received
by the $\san{z}$-basis, 
we can also obtain a key from the 
bit sequences transmitted and received by 
the $\san{x}$-basis (or the $\san{y}$-basis for the six-state protocol). 
In this case, the asymptotic key generation
rates are also given by Eqs.~(\ref{eq:asymptotic-direct}),
(\ref{eq:asymptotic-reverse}), (\ref{eq:key-rate-one-way-bb84}),
and (\ref{eq:eq:key-rate-one-way-bb84-reverse}), where
the definition of the $\{cq\}$-state $\rho_{XE}$ and
the distribution $P_{XY}$ must be replaced
appropriately.

For the amplitude damping channel\footnote{By the symmetry of
the amplitude damping channel for the $\san{x}$-basis and 
the $\san{y}$-basis, the asymptotic key generation rates 
for the $\san{y}$-basis are the same as those for
the $\san{x}$-basis}, 
the asymptotic key generation rates for the forward
and the reverse reconciliations can be written as functions
of the parameter $p$:
\begin{eqnarray}
1 + h\left(\frac{1 + \sqrt{1 - p + p^2}}{2}\right) - 
 h\left(\frac{p}{2}\right)
 - h\left(\frac{1 + \sqrt{1 - p}}{2}\right),
\label{eq-key-rate-forward-x-basis}
\end{eqnarray}
and
\begin{eqnarray}
1 - h\left(\frac{p}{2}\right)
\label{eq-key-rate-reverse-x-basis}
\end{eqnarray}
respectively. They are plotted in Fig.~\ref{fig:amplitude-damping-x-basis},
and compared to the asymptotic key generation rates 
with the conventional channel estimation.

From Fig.~\ref{fig:amplitude-damping-x-basis}, we find
that the asymptotic key generation rate with
the reverse reconciliation is higher than
that with the forward reconciliation.
Although the difference between the asymptotic key
generation rate with the forward and the reverse reconciliations
comes from the difference between 
$H_\rho(X|Y)$ and $H_\rho(Y|X)$ in the case of
the $\san{z}$-basis, the difference between the asymptotic key
generation rate with the forward and the reverse reconciliations
comes from the difference between $H_\rho(X|E)$ and
$H_\rho(Y|E)$, because 
$H_\rho(X|Y) = H_\rho(Y|X)$ in the case of
the $\san{x}$-basis.

\begin{figure}
\centering
\includegraphics[width=\linewidth]{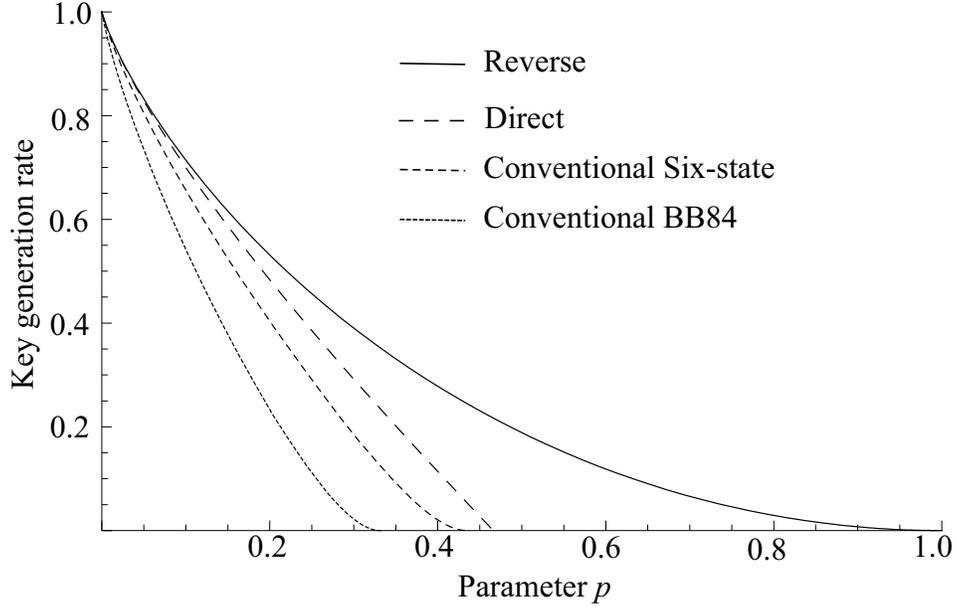}
\caption{ 
Comparison of the asymptotic key generation rates against the parameter
$p$ of the amplitude damping channel (see
 Eq.~(\ref{eq:amplitude-damping}))
for a key generated from the bit sequences transmitted and
received by the $\san{x}$-basis.
``Reverse'' and ``Direct'' are the asymptotic key generation rates when we use the
reverse reconciliation and the direct reconciliation
with our channel estimation procedure
 (Eqs.~(\ref{eq-key-rate-reverse-x-basis}) and
(\ref{eq-key-rate-forward-x-basis})) respectively.
``Conventional six-state'' and ``Conventional BB84'' are the asymptotic key
generation rates of the six-state protocol and the BB84 protocol
with the conventional channel estimation procedure.
Note that the protocols with
the conventional channel estimation procedure 
involves the noisy preprocessing \cite{renner:05, kraus:05}
in the postprocessing.
}
\label{fig:amplitude-damping-x-basis}
\end{figure}

\subsection{Unital Channel and Rotation Channel}
\label{subsec:unital-channel}

A channel is called a unital channel if 
${\cal E}_B$ maps the completely mixed state
$I/2$ to itself, or equivalently
the corresponding Choi operator 
$\rho \in {\cal P}_c$
satisfies $\rom{Tr}_A[\rho] = I/2$.
In the Stokes parameterization, a unital channel
$(R,t)$ satisfies that $t$ is the zero vector.
The unital channel has the following physical meaning
in QKD protocols. When Eve conducts the Pauli cloning \cite{cerf:00}
with respect to an orthonormal basis that is a rotated
version of $\{\ket{0_\san{z}}$, $\ket{1_\san{z}}\}$, the quantum channel
from Alice to Bob is not a Pauli channel but a unital
channel. It is natural to assume that Eve cannot determine
the direction of the basis $\{\ket{0_\san{z}}$, $\ket{1_\san{z}}\}$
accurately, and the unital channel deserve consideration
in the QKD research as well as the Pauli channel. 

By the singular value decomposition, we can 
decompose the matrix $R$ of the Stokes parameterization as
\begin{eqnarray}
\label{eq-svd}
O_2 \left[\begin{array}{ccc}
e_\san{z} & 0 & 0 \\
0 & e_\san{x} & 0 \\
0 & 0 & e_\san{y} 
\end{array} \right]
O_1,
\end{eqnarray}
where $O_1$ and $O_2$ are some rotation 
matrices\footnote{The rotation matrix is the real orthogonal
matrix with determinant $1$.}, and
$|e_\san{z}|$, $|e_\san{x}|$, and $|e_\san{y}|$
are the singular value of the matrix 
$R$\footnote{The decomposition is not unique 
because we can change the order of 
$(e_\san{z}, e_\san{x}, e_\san{y})$ or
the sign of them by adjusting the rotation matrices $O_1$ and $O_2$.
However, the result in this paper does not depends on a choice of the decomposition.}.  
Thus, we can consider the unital channel as a composition of a
unitary channel, a Pauli channel 
\begin{eqnarray*}
\rho \mapsto q_\san{i} \rho + q_\san{z} \sigma_\san{z} \rho \sigma_\san{z}
+ q_\san{x} \sigma_\san{x} \rho \sigma_\san{x} + q_\san{y} \sigma_\san{y} \rho \sigma_\san{y},
\end{eqnarray*}
and a unitary channel \cite{bourdon:04}, where 
\begin{eqnarray}
\begin{array}{rcl}
q_\san{i} &=& \frac{1 + e_\san{z} + e_\san{x} + e_\san{y}}{4}, \\
q_\san{z} &=& \frac{1 + e_\san{z} - e_\san{x} - e_\san{y}}{4}, \\
q_\san{x} &=& \frac{1 - e_\san{z} + e_\san{x} - e_\san{y}}{4}, \\
q_\san{y} &=& \frac{1 - e_\san{z} - e_\san{x} + e_\san{y}}{4}.
\end{array}
\label{eq-relation-q-d}
\end{eqnarray}

For the six-state protocol, 
we can derive simple forms of 
$H_\rho(X|E)$ and $H_\rho(Y|E)$ as follows.
\begin{lemma}
\label{lemma:unital-six-state} 
For the unital channel, we have
\begin{eqnarray}
H_\rho(X|E) = 1 - H[q_\san{i}, q_\san{z}, q_\san{x}, q_\san{y}] + 
h\left( \frac{1 + \sqrt{R_\san{zz}^2 + R_\san{xz}^2 + R_\san{yz}^2}}{2} \right)
\label{eq-unital-six-state}
\end{eqnarray}
and 
\begin{eqnarray}
H_\rho(Y|E) = 1 - H[q_\san{i}, q_\san{z}, q_\san{x}, q_\san{y}] + 
h\left( \frac{1 + \sqrt{R_\san{zz}^2 + R_\san{zx}^2 + R_\san{zy}^2}}{2} \right).
\label{eq-unital-six-state-reverse}
\end{eqnarray}
\end{lemma}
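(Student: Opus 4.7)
The plan is to reduce $H_\rho(X|E)$ to quantities that can be read off directly from the Stokes parameterization $(R,t=0)$ of the unital channel. Writing $\rho_{XE} = \sum_x P_X(x)\ket{x}\bra{x}\otimes\rho_E^x$, the standard decomposition gives
\begin{eqnarray*}
H_\rho(X|E) = H(X) + \sum_x P_X(x) H(\rho_E^x) - H(\rho_E).
\end{eqnarray*}
Since $\rho_{AB}\in{\cal P}_c$ has $\rom{Tr}_B[\rho_{AB}]=I/2$, we obtain $P_X(x)=1/2$ and $H(X)=1$. Because the underlying state $\psi_{ABE}$ is a purification of $\rho_{AB}$, we have $H(\rho_E)=H(\rho_{AB})$; moreover, after Alice's $\san{z}$-basis measurement with outcome $x$ the post-measurement state on $BE$ is pure, so $H(\rho_E^x)=H(\rho_B^x)$. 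Thus the task is reduced to computing $H(\rho_{AB})$ and $H(\rho_B^x)$.

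For $H(\rho_{AB})$, I would use the singular value decomposition (\ref{eq-svd}) to write $R = O_2\,\mathrm{diag}(e_\san{z},e_\san{x},e_\san{y})\,O_1$. Since $O_1$ and $O_2$ are rotations of the Bloch sphere, they are induced by local unitaries on $A$ and $B$; conjugating $\rho_{AB}$ by these local unitaries (which leaves the spectrum unchanged) produces the Bell-diagonal Choi operator of the Pauli channel with weights $(q_\san{i},q_\san{z},q_\san{x},q_\san{y})$ as in (\ref{eq-relation-q-d}). Hence $H(\rho_{AB}) = H[q_\san{i},q_\san{z},q_\san{x},q_\san{y}]$.

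For $H(\rho_B^x)$, I would use the Choi–Jamio\l{}kowski correspondence: measuring $A$ in the $\san{z}$-basis and obtaining $x$ collapses Bob's system to ${\cal E}_B(\ket{x}\bra{x})$ (the complex conjugate in the $\san{z}$-basis is trivial). The Bloch vector of $\ket{x}\bra{x}$ is $((-1)^x,0,0)^T$, so by the affine-map description of the channel with $t=0$ the resulting Bloch vector equals $(-1)^x(R_\san{zz},R_\san{xz},R_\san{yz})^T$ — precisely the $\san{z}$-column of $R$ up to a global sign. Its Euclidean norm is $\sqrt{R_\san{zz}^2+R_\san{xz}^2+R_\san{yz}^2}$, independent of $x$, and formula (\ref{eq:entropy-of-bloch}) then yields the binary-entropy term in (\ref{eq-unital-six-state}). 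Combining the three pieces gives (\ref{eq-unital-six-state}).

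The formula (\ref{eq-unital-six-state-reverse}) is obtained by the symmetric argument with the roles of $A$ and $B$ interchanged: we now use $H_\rho(Y|E)=1+\sum_y P_Y(y)H(\rho_A^y)-H(\rho_{AB})$ (with $P_Y=1/2$ by unitality), and measuring $B$ in the $\san{z}$-basis yields a conditional Bloch vector on $A$ determined by the $\san{z}$-row of $R$ (with a sign flip on the $\san{y}$-component from $\bar\sigma_\san{y}=-\sigma_\san{y}$, which does not affect the norm), giving norm $\sqrt{R_\san{zz}^2+R_\san{zx}^2+R_\san{zy}^2}$. The only subtlety I anticipate is keeping the conventions straight — in particular tracking the complex conjugation of the Pauli basis in (\ref{eq:def-stokes-R}) so that column-versus-row ambiguity between (\ref{eq-unital-six-state}) and (\ref{eq-unital-six-state-reverse}) is resolved correctly — but no other nontrivial step is required.
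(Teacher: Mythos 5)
Your proof is correct and follows essentially the same route as the paper, which refers to the latter half of the proof of Proposition \ref{proposition:unital-bound}: there the identity $H_\varrho(X|E) = 1 - H(\varrho) + \sum_{x}\frac{1}{2}H(\varrho_B^{x})$ (i.e.\ your purification-based decomposition with $H(\rho_E)=H(\rho_{AB})$ and $H(\rho_E^x)=H(\rho_B^x)$) is combined with the eigenvalues $(q_\san{i},q_\san{z},q_\san{x},q_\san{y})$ of the Choi operator and the Bloch-norm entropy formula (\ref{eq:entropy-of-bloch}), exactly as you do, including the column/row distinction and the harmless sign flip from $\bar{\sigma}_\san{y}=-\sigma_\san{y}$ in the reverse case.
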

\begin{proof}
We omit the proof because it can be proved in a
similar manner as the latter half of the proof
of Proposition \ref{proposition:unital-bound}.
\end{proof}
From this lemma, we can find that
$R_\san{xz}^2 + R_\san{yz}^2 = R_\san{zx}^2 + R_\san{zy}^2$
is the necessary and sufficient condition for 
$H_\rho(X|E) = H_\rho(Y|E)$.
Furthermore, we can show
$H_\rho(X|Y) = H_\rho(Y|X) = h((1+ R_\san{zz})/2)$
by a straightforward calculation.

For the BB84 protocol, ${\cal P}_c(\omega)$ consists of
infinitely many elements in general.
By using Proposition \ref{proposition:unital-bound},
we can calculate Eve's worst case ambiguity as
\begin{eqnarray}
\lefteqn{ \min_{\varrho \in {\cal P}_c(\omega)}
H_{\varrho}(X|E) } \nonumber \\
&=& \hspace{-3mm}
1 - h\left(\frac{1 + d_\san{z}}{2}\right)
- h\left(\frac{1+d_\san{x}}{2}\right)
+ h\left( \frac{1+\sqrt{R_{\san{zz}}^2 + R_{\san{xz}}^2}}{2}\right)
\label{eq:asymptotic-key-rate-unital-bb84}
\end{eqnarray}
and 
\begin{eqnarray}
\lefteqn{ \min_{\varrho \in {\cal P}_c(\omega)}
H_{\varrho}(Y|E) } \nonumber \\  
&=& \hspace{-3mm}
1 - h\left(\frac{1 + d_\san{z}}{2}\right)
- h\left(\frac{1+d_\san{x}}{2}\right)
+ h\left( \frac{1+\sqrt{R_{\san{zz}}^2 + R_{\san{zx}}^2}}{2}\right),
\label{eq:asymptotic-key-rate-unital-bb84-reverse}
\end{eqnarray}
where $d_\san{z}$ and $d_\san{x}$ are the singular values of
the matrix $\left[\begin{array}{cc} R_{\san{zz}} & R_{\san{zx}} \\ R_\san{xz} &
  R_\san{xx} \end{array}\right]$.
From these formulae, we find that
$R_\san{xz} = R_\san{zx}$ is the necessary and  sufficient
condition for 
$\min_{\varrho \in {\cal P}_c(\omega)} H_{\varrho}(X|E)$
coincides with
$\min_{\varrho \in {\cal P}_c(\omega)} H_{\varrho}(Y|E)$.
It should be noted that the singular values $(d_\san{z}, d_\san{x})$
are different from the singular values
$(|e_\san{z}|, |e_\san{x}|)$ in general because there exist
off-diagonal elements $(R_\san{zy}, R_\san{xy}, R_\san{yz},
R_\san{yx})$. By a straightforward calculation, we can show that
$H_\omega(X|Y) = H_\omega(Y|X) = h((1+
R_\san{zz})/2)$.

In the rest of this section, we analyze a special class
of the unital channel, the rotation channel,
for the BB84 protocol.
The rotation channel is a channel whose Stokes parameterization
is given by
\begin{eqnarray*}
\left(
\left[ \begin{array}{ccc}
\cos \vartheta & - \sin \vartheta & 0 \\
\sin \vartheta & \cos \vartheta & 0 \\
0 & 0 & 1
\end{array}
\right],
\left[
\begin{array}{c}
0 \\ 0 \\ 0
\end{array}
\right]
\right).
\end{eqnarray*}
The rotation channels occur, for example, when
the directions of 
the transmitter and the receiver
are not properly aligned.

For the rotation channel, 
Eq.~(\ref{eq:asymptotic-key-rate-unital-bb84})
gives 
$\min_{\varrho \in {\cal P}_c(\omega)} H_{\varrho}(X|E) = 1$,
which implies that Eve gained no information.
Thus, Eve's worst case ambiguity,
$\min_{\varrho \in {\cal P}_c(\omega)} H_{\varrho}(X|E)$
coincide with the true value $H_\rho(X|E)$, and the 
BB84 protocol can achieve the same asymptotic key
generation rate as the six-state protocol.

The reason why we show this example is that
Alice and Bob can share a secret key with a positive 
asymptotic key generation rate 
even though the so-called error rate is higher
than the $25$\% limit \cite{gottesman:03} in the BB84 protocol.
The Bell diagonal entries of
the Choi operator $\rho$ that corresponds to the rotation 
channel are $\cos^2 (\vartheta/2)$, $0$, $0$,
and $\sin^2(\vartheta/2)$. Thus the error rate is $\sin^2(\vartheta/2)$.
For $\pi /3 \le \vartheta \le 5 \pi/3$,
the error rate is higher than $25$\%, but
we can obtain the positive key rate, $1 - h(\sin^2 (\vartheta/2))$
except $\vartheta = \pi/2, 3 \pi /2$.
Note that the asymptotic key generation rate
in Eq.~(\ref{eq:conventional-one-way-bb84}) is given by
$1 - 2 h(\sin^2 (\vartheta/2))$.
This fact verifies Curty et al's suggestion \cite{curty:04} that
key agreement might be possible even for the error rates
higher than $25$\% limits.

\section{Condition for Strict Improvement}
\label{sec:condition-strict-improvement}

So far, we have seen that 
the asymptotic key generation rates with our
proposed channel estimation is at least as
high as those with the conventional channel estimation
(Section \ref{sec:relation-to-conventional}), and that
the former is strictly higher than the latter for 
some specific channels (Section \ref{sec:example}).
For the BB84 protocol,
the following theorems show the 
necessary and sufficient condition such that
the former is strictly higher than the latter
is that the channel is a Pauli channel.
\begin{theorem}
\label{theorem:strict-improve-bb84}
Suppose that $R_\san{zz} \neq 0$ and $R_\san{xx} \neq 0$.
In the BB84 protocol,
for the bit sequences transmitted and received by either
$\san{z}$-basis or the $\san{x}$-basis, 
the asymptotic key generation rates 
with our proposed channel estimation
are strictly higher than those with the conventional channel
estimation if and only if 
$(t_\san{z}, t_\san{x}) \neq (0,0)$ or $(R_\san{zx},R_\san{xz}) \neq (0,0)$.
\end{theorem}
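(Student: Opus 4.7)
By Proposition \ref{proposition:explicit-formula-convention-bb84}, the conventional asymptotic key generation rate equals $1 - f(R_\san{zz}) - f(R_\san{xx})$, where I set $f(x) := h((1+x)/2)$; the function $f$ is even, strictly concave on $[-1,1]$, and satisfies $f''(x) = -(1-x^2)^{-1}$. Proposition \ref{proposition:unital-bound} gives the lower bound
\begin{eqnarray*}
\min_{\varrho \in {\cal P}_c(\omega)} H_\varrho(X|E) \ge 1 - f(d_\san{z}) - f(d_\san{x}) + f\bigl(\sqrt{R_\san{zz}^2 + R_\san{xz}^2}\bigr)
\end{eqnarray*}
on Eve's ambiguity for the $\san{z}$-basis key, where $d_\san{z}, d_\san{x}$ are the singular values of the matrix in Eq.~(\ref{eq:two-times-two-matrix}); the bound is tight when $t_\san{z} = t_\san{x} = 0$. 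A direct computation of $P_{XY,\omega}$ (which for the $\san{z}$-basis depends only on $R_\san{zz}$ and $t_\san{z}$) yields $H_\omega(X|Y) = f(R_\san{zz})$ whenever $t_\san{z} = 0$, and the analogous statements for the $\san{x}$-basis follow by the symmetry $R_\san{xz} \leftrightarrow R_\san{zx}$, $R_\san{zz} \leftrightarrow R_\san{xx}$, $t_\san{z} \leftrightarrow t_\san{x}$.

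For the only-if direction, suppose $(t_\san{z}, t_\san{x}) = (0,0)$ and $(R_\san{zx}, R_\san{xz}) = (0,0)$. Then $d_\san{z} = |R_\san{zz}|$, $d_\san{x} = |R_\san{xx}|$, and $\sqrt{R_\san{zz}^2 + R_\san{xz}^2} = |R_\san{zz}|$; the tight formula together with $H_\omega(X|Y) = f(R_\san{zz})$ and the evenness of $f$ gives proposed rate $= 1 - f(R_\san{xx}) - f(R_\san{zz}) =$ conventional rate in both bases, so no strict improvement occurs.

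For the if direction with $(t_\san{z}, t_\san{x}) = (0,0)$ but $(R_\san{zx}, R_\san{xz}) \neq (0,0)$, the $\san{z}$-basis proposed-minus-conventional difference reduces to
\begin{eqnarray*}
\Delta_\san{z} := f\bigl(\sqrt{R_\san{zz}^2 + R_\san{xz}^2}\bigr) + f(R_\san{xx}) - f(d_\san{z}) - f(d_\san{x}),
\end{eqnarray*}
with $d_\san{z}^2 + d_\san{x}^2 = R_\san{zz}^2 + R_\san{zx}^2 + R_\san{xz}^2 + R_\san{xx}^2$ and $d_\san{z} d_\san{x} = |R_\san{zz} R_\san{xx} - R_\san{zx} R_\san{xz}|$. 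The key lemma I would establish is that on the level set $\{(s,u) \in [0,1]^2 : s^2 + u^2 = C\}$, the map $(s,u) \mapsto f(s) + f(u)$ is strictly increasing in the product $su$; this follows by differentiating along the level set and using that $f'(x)/x$ is strictly decreasing on $(0,1]$, a direct consequence of the formula for $f''$. Combined with the singular-value identities and a short extra step that absorbs the $R_\san{zx}^2$ gap between the two sums of squares, this lemma yields $\Delta_\san{z} > 0$ when $R_\san{xz} \neq 0$; otherwise $R_\san{zx} \neq 0$, and the symmetric argument produces $\Delta_\san{x} > 0$ for the $\san{x}$-basis key.

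The case $(t_\san{z}, t_\san{x}) \neq (0,0)$ is the principal obstacle, because Proposition \ref{proposition:unital-bound} is no longer tight. I would pivot through the sign-symmetrised parameter $\omega^* := (R_\san{zz}, R_\san{zx}, R_\san{xz}, R_\san{xx}, 0, 0)$. For any $\varrho \in {\cal P}_c(\omega)$, conjugation by $\sigma_\san{y} \otimes \sigma_\san{y}$ produces an operator $\varrho^-$ with $(t_\san{z}, t_\san{x})$ sign-flipped, and $H_{\varrho^-}(X|E) = H_\varrho(X|E)$ by unitary invariance of the von Neumann entropy combined with a relabelling of Alice's measurement outcomes; Proposition \ref{proposition:convexity} applied to $(\varrho + \varrho^-)/2 \in {\cal P}_c(\omega^*)$ then gives $\min_{{\cal P}_c(\omega)} H_\varrho(X|E) \ge \min_{{\cal P}_c(\omega^*)} H_\varrho(X|E)$. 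Separately, since for the $\san{z}$-basis the joint distribution $P_{XY,\omega}$ depends only on $R_\san{zz}$ and $t_\san{z}$, a direct calculation shows $H_\omega(X|Y) < H_{\omega^*}(X|Y)$ whenever $t_\san{z} \neq 0$, with the analogous strict inequality for the $\san{x}$-basis when $t_\san{x} \neq 0$. Since at least one of $t_\san{z}, t_\san{x}$ is nonzero by hypothesis, the corresponding basis enjoys strict improvement over the $\omega^*$ rate, which by the preceding cases is at least the conventional rate. The delicate step is the strict mutual-information inequality $I_\omega(X;Y) > I_{\omega^*}(X;Y)$; my plan is to verify it by showing that $I(X;Y) = 1 - H(X|Y)$, viewed as a function of $t_\san{z}$ on its admissible interval at fixed $R_\san{zz}$, is strictly convex with minimum at $t_\san{z} = 0$, via a second-derivative computation at the origin.
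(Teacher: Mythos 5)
Your architecture matches the paper's (the ``only if'' direction via Propositions \ref{proposition:unital-bound} and \ref{proposition:explicit-formula-convention-bb84}; the ``if'' direction split into the off-diagonal case and the $t\neq 0$ case), but the technical execution differs in both halves and each has a soft spot. For the off-diagonal case the paper writes $R = B\,\mathrm{diag}(d_\san{z},d_\san{x})\,A$ and chains Cauchy--Schwarz with Jensen's inequality for the concave map $x \mapsto h((1+\sqrt{x})/2)$; your level-set lemma (monotonicity of $f(s)+f(u)$ in $su$ on $s^2+u^2=C$, via $f'(x)/x$ decreasing) is exactly equivalent to that concavity, so the two arguments are cousins. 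But your ``short extra step that absorbs the $R_\san{zx}^2$ gap'' is doing real work: the pairs $(\sqrt{R_\san{zz}^2+R_\san{xz}^2},\,|R_\san{xx}|)$ and $(d_\san{z},d_\san{x})$ share a level set only when $R_\san{zx}=0$. The completion is to first replace $|R_\san{xx}|$ by the column norm $\sqrt{R_\san{xx}^2+R_\san{zx}^2}$ (using that $f$ decreases in $|x|$); the two column norms of the $2\times2$ block then do lie on the level set of $(d_\san{z},d_\san{x})$ and dominate their product by Hadamard's inequality. You must then track where strictness lives: Hadamard is strict only when the columns are non-orthogonal, i.e.\ $R_\san{zz}R_\san{zx}+R_\san{xz}R_\san{xx}\neq 0$, and in the orthogonal sub-case strictness must come instead from the gap-absorption step, which is strict only when $R_\san{zx}\neq 0$. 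Under $R_\san{zz},R_\san{xx}\neq 0$ these sub-cases do exhaust $(R_\san{zx},R_\san{xz})\neq(0,0)$, but the bookkeeping is not automatic and is precisely where the hypothesis of the theorem enters.

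For $(t_\san{z},t_\san{x})\neq(0,0)$ your pivot through $\omega^*$ is sound and is essentially a repackaging of the paper's argument, since $H_{\omega^*}(X|Y)=h((1+R_\san{zz})/2)=H_\omega(W)$ is exactly the quantity the paper compares against via the conventional minimizer $\varrho^*$. The genuine gap is your plan for the strict inequality $H_\omega(X|Y)<H_{\omega^*}(X|Y)$: a second-derivative computation of $I(X;Y)$ at $t_\san{z}=0$ establishes the inequality only in a neighbourhood of the origin, not on the whole admissible interval, and upgrading local to global requires strict convexity in $t_\san{z}$, which you have not argued. The clean route is the one already recorded in Remark \ref{remark:ir-with-code}: the log-sum inequality gives $H_\omega(X|Y)\le H_\omega(W)$ with equality iff $P_{X|Y}(w|0)=P_{X|Y}(1+w|1)$ for all $w$, and a one-line computation with $P_{XY,\omega}(x,y)=\frac{1}{4}(1+(-1)^{x+y}R_\san{zz}+(-1)^y t_\san{z})$ shows this condition fails exactly when $t_\san{z}R_\san{zz}\neq 0$. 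This is again where $R_\san{zz}\neq 0$ (resp.\ $R_\san{xx}\neq 0$ for the $\san{x}$-basis) is needed; your write-up never invokes that hypothesis, which is a sign that the strictness claims have not been pinned down.
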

\begin{proof}
We only prove the statement for the direct reconciliation,
because the statement for the reverse reconciliation
can be proved in a similar manner.
\paragraph{``only if" part}
Suppose that $(t_\san{z}, t_\san{x}) = (0,0)$
and $(R_\san{zx},R_\san{xz}) = (0,0)$.
Then, Propositions \ref{proposition:unital-bound} and
\ref{proposition:explicit-formula-convention-bb84} implies that
Eq.~(\ref{eq:key-rate-one-way-bb84}) is 
equal to Eq.~(\ref{eq:conventional-one-way-bb84}).
Similarly, the asymptotic key generation rate
for the $\san{x}$-basis with our proposed channel estimation
is equal to that with the conventional channel estimation.
\paragraph{``if" part}
Suppose that $t_\san{z} \neq 0$. 
Let $\varrho^*$ be the Choi operator satisfying
\begin{eqnarray*}
H_{\varrho^*}(X|E) - H_{\varrho^*}(X|Y) = \min_{\varrho \in {\cal P}_c(\upsilon)}[
H_\varrho(X|E) - H_\varrho(X|Y)].
\end{eqnarray*}
Then, we have
\begin{eqnarray*}
H_{\varrho^*}(X|Y)
 = h\left(\frac{1 + R_\san{zz}}{2} \right) = H_\omega(W),
\end{eqnarray*}
where $H_\omega(W)$ is the entropy of the distribution
defined by
\begin{eqnarray*}
P_{W,\omega}(w) := \sum_{y \in \mathbb{F}_2} P_{XY,\omega}(y+w,y).
\end{eqnarray*}
Then, $t_\san{z} \neq 0$ and the arguments at the end of
Remark \ref{remark:ir-with-code} imply
\begin{eqnarray*}
H_{\omega}(X|Y) < H_\omega(W).
\end{eqnarray*}
Since 
\begin{eqnarray*}
\min_{\varrho \in {\cal P}_c(\omega)} H_\varrho(X|E) \ge 
  \min_{\varrho \in {\cal P}_c(\upsilon)} H_\varrho(X|E)
\ge H_{\varrho^*}(X|E),
\end{eqnarray*}
Eq.~(\ref{eq:key-rate-one-way-bb84}) is strictly higher
than Eq.~(\ref{eq:conventional-one-way-bb84}).
In a similar manner, we can show that
the asymptotic key generation rate
for the $\san{x}$-basis with our proposed channel estimation
is strictly higher that that with the conventional channel estimation
if $t_\san{x} \neq 0$.

Suppose that $(t_\san{z},t_\san{x}) = (0,0)$ and
$R_\san{zx} \neq 0$.
By using Proposition \ref{proposition:unital-bound}, we have
\begin{eqnarray}
\lefteqn{
\min_{\varrho \in {\cal P}_c(\omega)} H_\varrho(X|E) 
 - H_\omega(X|Y) } \nonumber \\
&=& 1 - h\left(\frac{1 + d_\san{z}}{2} \right) 
  - h\left(\frac{1 + d_\san{x}}{2} \right) \nonumber \\ 
&& ~~~~+  h\left( \frac{1 + \sqrt{R_\san{zz}^2 + R_\san{xz}^2}}{2} \right) 
 - h\left(\frac{1 + R_\san{zz}}{2} \right) .
\label{eq:proof-strict-condition1}
\end{eqnarray}
By the singular value decomposition, we have
\begin{eqnarray*}
\left[ \begin{array}{cc}
R_\san{zz} & R_\san{zx} \\
R_\san{xz} & R_\san{xx}
\end{array} \right]
&=& B~\mbox{diag}[d_\san{z}, d_\san{x}]~A \\
 &=& \left[ \begin{array}{c} 
  \bra{B_\san{z}} \\ \bra{B_\san{x}} \end{array} \right]
 \left[ \begin{array}{cc} d_\san{z} & 0 \\ 0 & d_\san{x} \end{array} \right]
  \left[ \begin{array}{cc} \ket{A_\san{z}} & \ket{A_\san{x}} \end{array} 
   \right] \\
 &=& \left[ \begin{array}{cc}
  \braket{B_\san{z}}{\tilde{A}_\san{z}} & \braket{B_\san{z}}{\tilde{A}_\san{x}} \\
  \braket{B_\san{x}}{\tilde{A}_\san{z}} & \braket{B_\san{x}}{\tilde{A}_\san{x}}
 \end{array} \right],
\end{eqnarray*}
where $A$ and $B$ are the rotation matrices, and we set
$\bra{\tilde{A}_\san{z}} = (d_\san{z} A_\san{zz}, d_\san{x} A_\san{zx})$
and
$\bra{\tilde{A}_\san{x}} = (d_\san{z} A_\san{xz}, d_\san{x} A_\san{xx})$.
From Proposition \ref{proposition:explicit-formula-convention-bb84},
we have
\begin{eqnarray}
\lefteqn{
\min_{\varrho \in {\cal P}_c(\upsilon)} [
H_\varrho(X|E) - H_{\varrho}(X|Y) ] } \nonumber \\
&=& 1 - h\left( \frac{1 + \braket{B_\san{z}}{\tilde{A}_\san{z}}}{2} \right)
 - h\left( \frac{1 + \braket{B_\san{x}}{\tilde{A}_\san{x}}}{2}
    \right). 
\label{eq:proof-strict-condition2}
\end{eqnarray}
Subtracting Eq.~(\ref{eq:proof-strict-condition2}) from 
Eq.~(\ref{eq:proof-strict-condition1}), we have
\begin{eqnarray}
\lefteqn{ h\left(\frac{1 +
 \braket{B_\san{x}}{\tilde{A}_\san{x}}}{2}\right)
 + h\left(\frac{1 + \sqrt{R_\san{zz}^2 + R_\san{xz}^2}}{2}\right) }
 \nonumber \\
&&~~- h\left(\frac{1 + d_\san{z}}{2}\right) 
  - h\left(\frac{1 + d_\san{x}}{2}\right) \nonumber \\
&>& h\left( \frac{1 + \| \ket{\tilde{A}_\san{x}} \|}{2} \right)
  + h\left( \frac{1 + \| \ket{\tilde{A}_\san{z}} \|}{2} \right)
  \nonumber \\
&&~~- h\left(\frac{1 + d_\san{z}}{2}\right) 
  - h\left(\frac{1 + d_\san{x}}{2}\right) \nonumber \\
&=& h\left( \frac{1 + \sqrt{d_\san{z}^2 A_\san{xz}^2 + d_\san{x} A_\san{xx}^2}}{2} \right) 
 + h\left(\frac{1 + \sqrt{d_\san{z}^2 A_\san{zz}^2 + d_\san{x}^2 A_\san{zx}^2}}{2} \right) \nonumber \\
&&~~- h\left(\frac{1 + d_\san{z}}{2}\right) 
  - h\left(\frac{1 + d_\san{x}}{2}\right) \nonumber \\
&\ge& A_\san{xz}^2 h\left(\frac{1 + d_\san{z}}{2}\right) + 
 A_\san{xx}^2 h\left(\frac{1 + d_\san{x}}{2}\right) \nonumber \\
&& ~~+ A_\san{zz}^2 h\left(\frac{1 + d_\san{z}}{2}\right) 
  + A_\san{zx}^2 h\left(\frac{1 + d_\san{x}}{2}\right) \nonumber \\
&&~~- h\left(\frac{1 + d_\san{z}}{2}\right) 
  - h\left(\frac{1 + d_\san{x}}{2}\right) \nonumber \\
&=& 0,
\end{eqnarray}
where the second inequality follows from 
the concavity of the function
\begin{eqnarray*}
h\left( \frac{1 + \sqrt{x}}{2} \right),
\end{eqnarray*}
which can be shown by a straight forward calculation.
Thus,
we have shown that
Eq.~(\ref{eq:key-rate-one-way-bb84}) is strictly higher
than Eq.~(\ref{eq:conventional-one-way-bb84}).
In a similar manner, we can show that
the asymptotic key generation rate
for the $\san{x}$-basis with our proposed channel estimation
is strictly higher that with the conventional channel estimation
if $R_\san{xz} \neq 0$.
\end{proof}

\section{Summary}
\label{sec:summary-of-chapter3}

The results in this chapter is summarized as 
follows:
In Section \ref{sec:bb84-and-six-state-protocol},
we formally described the problem setting
of the QKD protocols.

In Section \ref{sec:one-way-IR},
we showed the most basic IR procedure with
one-way public communication.
We introduced the condition such the IR procedure
is universally correct (Definition \ref{def:one-way-correct}). 
This condition was required because the IR procedure have to
be robust against the fluctuation of the estimated probability
of Alice and Bob's bit sequences.
We also explained the conventionally used IR procedure with
the error correcting code, and we clarified that the length
of the syndrome that must be transmitted in the conventional
IR procedure is larger than that in our IR procedure
(Remark \ref{remark:ir-with-code}).
We showed how to apply the LDPC code with the
sum product algorithm in our IR procedure
(Remark \ref{remark:ldpc}). 

In Section \ref{subsec:key-generation-rate},
we showed our proposed channel estimation procedure.
We clarified a sufficient condition on the key generation rate such 
that Alice and Bob can share a secure key (Theorem \ref{theorem:one-way-security}),
and we derived the asymptotic key generation rate formulae.
We developed some techniques to calculate the
asymptotic key generation rates 
(Propositions \ref{proposition:minimization} and
\ref{proposition:unital-bound}) for the BB84 protocol.

In Section \ref{sec:relation-to-conventional},
we explained the conventional estimation
procedure. Then, we derived the asymptotic key generation
rate formulae with the conventional channel estimation.

In Section \ref{sec:example},
we investigated 
the asymptotic key generation rates for some examples
of channels.
We also introduced the concept of the degradable state,
and we clarified that the asymptotic key generation rate
in Eq.~(\ref{eq:asymptotic-direct}) is optimal if the state shared
by Alice, Bob, and Eve is degradable
(Remark \ref{remark:degradable}).
For the rotation channel, we clarified that the
asymptotic key generation rate can be positive even
if the error rate is higher than the 25\% limit
(Section \ref{subsec:unital-channel}).

Finally in Section \ref{sec:condition-strict-improvement},
for the BB84 protocol we clarified the
necessary and sufficient condition such that
the asymptotic key generation rates
with our proposed channel estimation is strictly higher than 
those with the conventional channel estimation
is that the channel is a Pauli channel.

\chapter{Postprocessing}
\label{ch:postprocessing}

\section{Background}

The postprocessing shown in Chapter \ref{ch:channel-estimation}
consists of the IR procedure and the PA procedure.
Roughly speaking, Alice and Bob can share a secret key
with the key generation rate
\begin{eqnarray}
\label{eq:one-way-rough-key-rate}
H_{\rho}(X|E) - H_\rho(X|Y)
\end{eqnarray}
in that postprocessing. An interpretation
of Eq.~(\ref{eq:one-way-rough-key-rate}) is that the
key generation rate is given by the difference between
Eve's ambiguity about Alice's bit sequence 
subtracted by Bob's ambiguity about Alice's 
bit sequence. Therefore, when Eve's ambiguity
about Alice's bit sequence is smaller than
Bob's ambiguity about Alice's bit sequence,
the key generation rate of that postprocessing 
is $0$. 

In \cite{maurer:93}, Maurer
proposed a procedure, the so-called advantage distillation.
The advantage distillation is conducted before
the IR procedure, and the resulting postprocessing
can have positive key generation rate even though
Eq.~(\ref{eq:one-way-rough-key-rate}) is negative.
Gottesman and Lo applied the advantage distillation
to the QKD protocols \cite{gottesman:03}.
In the QKD protocols,
the postprocessing with the advantage distillation was
extensively studied by Bae and Ac\'in \cite{bae:07}.

In this chapter, we propose a new kind of postprocessing,
which can be regarded as a generalization of the postprocessing
that consists of the advantage distillation,
the IR procedure, and the PA procedure.
In our proposed postprocessing, the advantage distillation
and the IR procedure are combined into
one procedure, the two-way IR procedure.
After the two-way IR procedure, we conduct the 
standard PA procedure.

The rest of this chapter is organized as follows:
In Section \ref{sec:advantage-distillation},
we review the advantage distillation.
Then in Section \ref{sec:two-way-ir}, we propose
the two-way information reconciliation procedure.
In Section \ref{sec:key-rate-two-way},
we show a sufficient condition of the key generation rate such
that Alice and Bob can share a secure key by
our proposed postprocessing. 
In Section \ref{sec:example-two-way},
we clarify that the key generation rate of our
proposed postprocessing is higher than the other
postprocessing by showing examples.
Finally, we mention the relation between
our proposed postprocessing and the entanglement
distillation protocols
in Section \ref{sec:relation-edp}.

\section{Advantage Distillation}
\label{sec:advantage-distillation}

In order to clarify the relation between the two-way
IR procedure and the advantage distillation proposed by
Maurer \cite{maurer:93},
we review the postprocessing with the advantage distillation in this section.
For convenience, the notations are adapted to this thesis.
We assume that Alice and Bob have 
correlated binary sequences $\bol{x}, \bol{y} \in \mathbb{F}_2^{2n}$
of even length.
The pair of sequences $(\bol{x},\bol{y})$ is independently identically
distributed (i.i.d.) according to a joint probability distribution
$P_{XY} \in {\cal P}(\mathbb{F}_2 \times \mathbb{F}_2)$.

First, we need to define some auxiliary random variables
to describe the postprocessing with the advantage distillation
procedure.
Let $\xi:\mathbb{F}_2^2 \to \mathbb{F}_2$ be a function defined as
$\xi(a_1,a_2) := a_1 + a_2$ for $a_1,a_2 \in \mathbb{F}_2$, and
let $\zeta:\mathbb{F}_2^2 \to \mathbb{F}_2$ be a function defined as
$\zeta(a,0) := a$ and $\zeta(a,1):=0$ for $a \in \mathbb{F}_2$.
For a pair of joint random variables $((X_1, Y_1)$, $(X_2, Y_2))$ with a
distribution, $P_{XY}^2$, we define random variables
$U_1 := \xi(X_1, X_2)$, $V_1:= \xi(Y_1, Y_2)$ and
$W_1:= U_1 + V_1$.
Furthermore, define random variables  $U_2 := \zeta(X_2,W_1)$,
$V_2:= \zeta(Y_2,W_1)$ and $W_2 := U_2 + V_2$.
For the pair of sequences, 
$\bol{x} = (x_{11},x_{12},\ldots,x_{n1},x_{n2})$ and
$\bol{y} = (y_{11},y_{12}, \ldots, y_{n1}, y_{n2})$,
which is distributed according to $P_{XY}^{2n}$,
let $\bol{u}$, $\bol{v}$ and $\bol{w}$ be $2n$-bit sequences such that
\begin{eqnarray*}
u_{i1} := \xi(x_{i1}, x_{i2}),~~
v_{i1} := \xi(y_{i1}, y_{i2}),~~
w_{i1} := u_{i1} + v_{i1}
\end{eqnarray*}
and
\begin{eqnarray*}
u_{i2} := \zeta(x_{i2}, w_{i1}),~~
v_{i2} := \zeta(y_{i2}, w_{i1}),~~
w_{i2} := u_{i2} + v_{i2}
\end{eqnarray*}
for $1 \le i \le n$.
Then, the pair $(\bol{u},\bol{v})$ and the discrepancy, $\bol{w}$
between $\bol{u}$ and $\bol{v}$
are distributed
according to the distribution $P_{U_1 U_2 V_1 V_2 W_1 W_2}^n$.

The purpose of the advantage distillation 
is to classify blocks of length $2$ according to the
parity $w_{i1}$ of the discrepancies in each block.
When $P_{XY}$ is a distribution such that $P_X$ is
the uniform distribution and $P_{Y|X}$ is a binary
symmetric channel (BSC), 
the validity of this classification
can be understood because we have
\begin{eqnarray*}
H(X_{i2}| Y_{i1} Y_{i2}, W_i = 1) = 1.
\end{eqnarray*}
This formula means that Alice have to send $X_{i2}$ itself
if she want to tell Bob $X_{i2}$. Therefore, they cannot obtain
any secret key from $X_{i2}$, and they should discard
$X_{i2}$ if $W_i = 1$. For general $P_{XY}$, the validity of
above mentioned classification is unclear.
For this reason, we employ a function which is more general than
$\zeta$ in the next section.

By using above preparations, we can describe the postprocessing
with the advantage distillation as follows.
First, Alice sends the parity sequence
$\bol{u}_1 := (u_{11},\ldots,u_{n1})$ to 
Bob so that he can identify the parity sequence 
$\bol{w}_1 := (w_{11},\ldots,w_{n1})$ of the
discrepancies. Bob sends $\bol{w}_1$ back to Alice.
Then, they discard $\bol{u}_1$ and 
$\bol{v}_1 := (v_{11},\ldots,v_{n1})$ respectively,
because $\bol{u}_1$ is revealed to Eve.
As the final step of the advantage distillation, 
Alice calculate\footnote{Conventionally, Alice discard those blocks
if $w_{i1} = 1$. In our procedure, Alice convert the second
bit of those blocks into the constant $u_{i2} = 0$, which is
mathematically equivalent to discarding those blocks.} the sequence 
$\bol{u}_2 := (u_{12},\ldots,u_{n2})$ by using
$\bol{x}$ and $\bol{w}_1$.

At the end of the advantage distillation,
Alice has $\bol{u}_2$ and Bob has $\bol{y}$ and $\bol{w}_1$
as a seed for the key agreement.
By conducting the (one-way) IR procedure and 
the PA procedure for $(\bol{u}_2,(\bol{y},\bol{w}_1))$,
Alice and Bob share a secret key. 

\section{Two-Way Information Reconciliation}
\label{sec:two-way-ir}

In this section, we show the two-way IR procedure.
The essential difference between the two-way IR procedure
and the advantage distillation  is that
Alice does not send the sequence $\bol{u}_1$ itself.
As is usual in information theory, if we allow  negligible error probability,
Alice does not need  to send the parity sequence, $\bol{u}_1$, to Bob
to identify parity sequence $\bol{u}_1$.
More precisely, Bob can decode $\bol{u}_1$ 
with negligible decoding error probability if Alice sends
a syndrome with a sufficient length.
Since Eve's available information from the syndrome is
much smaller than that from sequence $\bol{u}_1$ itself,
Alice and Bob can use $\bol{u}_1$ as a seed for the key agreement.

First, we need to define some auxiliary random variables.
As we have mentioned in the previous section, we use
a function which is more general than $\zeta$.
Let $\chi_A, \chi_B$ be arbitrary functions from
$\mathbb{F}_2^2$ to $\mathbb{F}_2$.
Then, let $\zeta_A:\mathbb{F}_2^3 \to \mathbb{F}_2$
be a function defined as
$\zeta_A(a_1,a_2,a_3) := a_1$ if 
$\chi_A(a_2,a_3) = 0$, and 
$\zeta_A(a_1,a_2,a_3) := 0$ else.
Let $\zeta_B:\mathbb{F}_2^3 \to \mathbb{F}_2$
be a function defined as
$\zeta_B(b_1,b_2,b_3) := b_1$ if 
$\chi_B(b_2,b_3) = 0$, and 
$\zeta_B(b_1,b_2,b_3) := 0$ else.
By using these functions and 
the function $\xi$ defined in the previous section, 
we define the auxiliary random variables:
$U_1 := \xi(X_1, X_2)$, $V_1 := \xi(Y_1, Y_2)$,
$W_1 := U_1 + V_1$, 
$U_2 := \zeta_A(X_2, U_1,V_1)$, and
$V_2 := \zeta_B(Y_2, U_1,V_1)$.
These auxiliary random variables mean that
either Alice or Bob's second bits are kept or 
discarded depending on the values of
$\chi_A(U_1,V_1)$ and $\chi_B(U_1,V_1)$.
The specific form of $\chi_A$ and $\chi_B$
will be given in Section \ref{sec:example-two-way}
so that the asymptotic key generation rates are
maximized.

Our proposed two-way IR procedure is conducted as follows:
\begin{enumerate}
\renewcommand{\theenumi}{\roman{enumi}}
\renewcommand{\labelenumi}{(\theenumi)}
\item \label{step1-two-way-IR}
Alice calculate $\bol{u}_1$ and Bob does the same for
      $\bol{v}_1$.

\item \label{step2-two-way-IR}
Alice calculates syndrome $t_1 = t_1(\bol{u}_1) := M_1 \mathbf{u}_1$,
and sends it to Bob over the public channel.

\item \label{step3-two-way-IR}
Bob decodes $(\mathbf{y}, t_1)$ into estimate of $\mathbf{u}_1$ by 
a decoder $\hat{\bol{u}}_1: (\mathbb{F}_2^2)^n \times \mathbb{F}_2^{k_1} \to \mathbb{F}_2^n$.
Then, he calculates $\hat{\bol{w}}_1 = \hat{\bol{u}}_1 + \bol{v}_1$, and
      sends it to Alice over the public channel.

\item \label{step4-two-way-IR}
Alice calculates $\tilde{\bol{u}}_2$ by using $\bol{x}$,
      $\hat{\bol{w}}_1$, and the function $\zeta_A$.
Bob also calculates $\tilde{\bol{v}}_2$ by using
$\bol{y}$, $\hat{\bol{w}}$, and the function $\zeta_B$.

\item \label{step5-two-way-IR}
Alice calculates syndrome $\tilde{t}_{A,2} := M_{A,2}
      \tilde{\bol{u}}_2$, and sends it to Bob over the public channel.
Bob also calculate syndrome $\tilde{t}_{B,2} := M_{B,2}
      \tilde{\bol{v}}_2$, and sends it to Alice over the public channel.

\item \label{step6-two-way-IR}
Bob decodes $(\bol{y}, \hat{\bol{w}}_1, \tilde{t}_{A,2})$ into estimate
of $\bol{u}_2$ by using a decoder 
$\hat{\bol{u}}_2: (\mathbb{F}_2^2)^n \times \mathbb{F}_2^n \times
      \mathbb{F}_2^{k_{A,2}} \to \mathbb{F}_2^n$.
Alice also decodes $(\bol{x}, \hat{\bol{w}}_1, \tilde{t}_{B,2})$ by
      using a decoder
$\hat{\bol{v}}_2: (\mathbb{F}_2^2)^n \times \mathbb{F}_2^n \times
      \mathbb{F}_2^{k_{B,2}} \to \mathbb{F}_2^n$.
\end{enumerate}

As we mentioned in Section \ref{sec:one-way-IR},
the decoding error probability of the two-way
IR procedure have to be universally small for
any distribution in the candidate $\{ P_{XY, \theta} \mymid \theta \in
\Theta\}$ that are estimated by Alice and Bob.
For this reason, we introduce the concept that
a two-way IR procedure is $\delta$-universally-correct
in a similar manner as in 
Definition \ref{def:one-way-correct}:
\begin{definition}
\label{def:two-way-correct}
We define a two-way IR procedure to be $\delta$-universally-correct
for the class $\{ P_{XY, \theta} \mymid \theta \in \Theta\}$
of probability distribution if
\begin{eqnarray*}
\lefteqn{ P_{XY, \theta}^{2n}(\{ (\bol{x}, \bol{y}) \mymid
  (\bol{u}_1,\tilde{\bol{u}}_2, \hat{\bol{v}}_2) \neq
  (\bol{u}_1,\bol{u}_2, \bol{v}_2)
 \mbox{ or } } \\ 
&& ~~~~~~~~~~~~~~~(\hat{\bol{u}}_1,\hat{\bol{u}}_2,\tilde{\bol{v}}_2) 
 \neq (\bol{u}_1,\bol{u}_2,\bol{v}_2)
  \}) \le \delta
\end{eqnarray*}
for any $\theta \in \Theta$.
\end{definition}

An example of a decoder that fulfils the universality is the
minimum entropy decoder. For Step (\ref{step3-two-way-IR}), the minimum entropy
decoder is defined by
\begin{eqnarray*}
\hat{\bol{u}}_1(\bol{y}, t_1) :=
  \argmin_{\bol{u}_1 \in \mathbb{F}_2^n: M_1 \bol{u}_1 = t_1}
  H(P_{\bol{u}_1 \bol{y}}),
\end{eqnarray*}
where $P_{\bol{u}_1 \bol{y}} \in {\cal P}_n(\mathbb{F}_2^3)$ is the joint type of the sequence
\begin{eqnarray*}
(\bol{u}_1, \bol{y}) = ((u_{11},y_{11},y_{12}), \ldots,(u_{n1},y_{n1},y_{n2}))
\end{eqnarray*}
of length $n$. For Step (\ref{step6-two-way-IR}),
the minimum entropy decoder is defined by 
\begin{eqnarray*}
\hat{\bol{u}}_2(\bol{y},\bol{w}_1,t_2) :=
 \argmin_{\bol{u}_2 \in \mathbb{F}_2^n: M_{A,2} \bol{u}_2 = t_{A,2}}
   H(P_{\bol{u}_2 \bol{w}_1 \bol{y}}),
\end{eqnarray*}
where $P_{\bol{u}_2 \bol{w}_1 \bol{y}} \in {\cal P}_n(\mathbb{F}_2^4)$ 
is the joint type of the sequence
\begin{eqnarray*}
(\bol{u}_2,\bol{w}_1,\bol{y}) = ((u_{12}, w_{11}, y_{11},y_{12}),
 \ldots, (u_{n2}, w_{n1}, y_{n1}, y_{n2}))
\end{eqnarray*}
of length $n$. The minimum entropy decoder for $\hat{\bol{v}}_2$
is defined in a similar manner.
\begin{theorem}
\label{theorem:universal-coding-two-way}
\cite[Theorem 1]{csiszar:82}
Let $r_1$, $r_{A,1}$, and  $r_{A,2}$ be real numbers that satisfy
\begin{eqnarray*}
r_1 > \min_{\theta \in \Theta} H(U_{1,\theta}|Y_{1,\theta} Y_{2,\theta}),
\end{eqnarray*}
\begin{eqnarray*}
r_{A,2} > \min_{\theta \in \Theta} H(U_{2,\theta}| W_{1,\theta} Y_{1,\theta}
 Y_{2,\theta}),
\end{eqnarray*}
and 
\begin{eqnarray*}
r_{B,2} > \min_{\theta \in \Theta} H(V_{2,\theta}| W_{1,\theta} X_{1,\theta}
 X_{2,\theta}),
\end{eqnarray*}
respectively, where $U_{1,\theta} = \xi(X_{1,\theta},X_{2,\theta})$,
$W_{1,\theta} = U_{1,\theta} + \xi(Y_{1,\theta},Y_{2,\theta})$,
and $U_{2,\theta} = \zeta(X_{2,\theta},W_{1,\theta})$ for
the random variables $(X_{1,\theta},X_{2,\theta},
 Y_{1,\theta},Y_{2,\theta})$
that are distributed according to $P_{XY,\theta}^2$.
Then, for every sufficiently large $n$, there exist a $k_1 \times n$
parity check matrix $M_1$, a $k_{A,2} \times n$ parity check matrix
$M_{A,2}$, and a $k_{B,2} \times n$ parity check matrix $M_{B,2}$ 
 such that $\frac{k_1}{n} \le r_1$, $\frac{k_{A,2}}{n} \le r_{A,2}$,
and $\frac{k_{B,2}}{n} \le r_{B,2}$,
and the decoding error probability by the minimum entropy
decoding satisfies
\begin{eqnarray*}
\lefteqn{ P_{XY, \theta}^{2n}(\{ (\bol{x}, \bol{y}) \mymid
  (\bol{u}_1,\tilde{\bol{u}}_2, \hat{\bol{v}}_2) \neq
  (\bol{u}_1,\bol{u}_2, \bol{v}_2)   
 \mbox{ or } } \\
 && ~~~~~~~~~~~~(\hat{\bol{u}}_1,\hat{\bol{u}}_2,\tilde{\bol{v}}_2) 
 \neq (\bol{u}_1,\bol{u}_2,\bol{v}_2)
  \})  \\
&\le& e^{-n E_1} + e^{-n E_{A,2}} + e^{-n E_{B,2}}
\end{eqnarray*}
for any $\theta \in \Theta$,
where $E_1, E_{A,2},E_{B,2} > 0$ are constants that do not
depends on $n$.
\end{theorem}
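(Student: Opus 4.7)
The plan is to reduce the claim to three independent invocations of Theorem \ref{theorem:universal-coding} (Csisz\'ar's universal linear Slepian--Wolf coding), one per syndrome appearing in the two-way IR procedure, followed by a union bound.

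First I would introduce the events $F_1 := \{\hat{\bol{u}}_1 \neq \bol{u}_1\}$, $F_{A,2} := \{\hat{\bol{u}}_2 \neq \bol{u}_2\}$, and $F_{B,2} := \{\hat{\bol{v}}_2 \neq \bol{v}_2\}$, and observe that the failure event of Definition \ref{def:two-way-correct} is contained in $F_1 \cup F_{A,2} \cup F_{B,2}$. Indeed, on the complement of $F_1$ one has $\hat{\bol{w}}_1 = \hat{\bol{u}}_1 + \bol{v}_1 = \bol{w}_1$, so Alice's computation $\tilde{\bol{u}}_2$, obtained by applying $\zeta_A$ componentwise to $\bol{x}$ and $\hat{\bol{w}}_1$, coincides with $\bol{u}_2$, and symmetrically $\tilde{\bol{v}}_2 = \bol{v}_2$; consequently every tuple mismatch in Definition \ref{def:two-way-correct} forces at least one of $F_1$, $F_{A,2}$, $F_{B,2}$ to occur.

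Next I would bound the three probabilities separately. For $F_1$, Theorem \ref{theorem:universal-coding} is applied with source $U_1$ and per-letter side information $(Y_1, Y_2) \in \mathbb{F}_2^2$; the rate condition $r_1 > \min_\theta H(U_{1,\theta} | Y_{1,\theta} Y_{2,\theta})$ is exactly its hypothesis, and it yields a $k_1 \times n$ parity-check matrix $M_1$ with $k_1/n \le r_1$ and an exponent $E_1 > 0$ that does not depend on $n$ or $\theta$. For $F_{A,2}$ I would observe that on $F_1^c$ the decoder $\hat{\bol{u}}_2$ operates with the true side information $(\bol{w}_1, \bol{y})$, so the event $F_1^c \cap F_{A,2}$ is precisely a minimum-entropy Slepian--Wolf failure for source $U_2$ with per-letter side information $(W_1, Y_1, Y_2) \in \mathbb{F}_2^3$, to which Theorem \ref{theorem:universal-coding} applies and produces $M_{A,2}$ with the required rate and an exponent $E_{A,2} > 0$. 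The treatment of $F_{B,2}$ is structurally identical, with the roles of Alice and Bob interchanged.

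The only minor technical point, which I would expect to be the single obstacle worth addressing explicitly, is that Theorem \ref{theorem:universal-coding} is stated with a scalar side-information variable, whereas here the side information is a tuple. This is harmless, because Csisz\'ar's result holds for an arbitrary finite side-information alphabet, so the tuples are simply regarded as single letters over $\mathbb{F}_2^2$ or $\mathbb{F}_2^3$, and the induced classes of joint distributions inherit their parameterization directly from $\{P_{XY,\theta}\}_{\theta \in \Theta}$. A final union bound $\Pr(\mbox{failure}) \le \Pr(F_1) + \Pr(F_1^c \cap F_{A,2}) + \Pr(F_1^c \cap F_{B,2}) \le e^{-nE_1} + e^{-nE_{A,2}} + e^{-nE_{B,2}}$, valid uniformly over $\theta \in \Theta$, completes the argument.
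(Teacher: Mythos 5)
Your proposal is correct and matches the paper's (implicit) argument: the paper proves this theorem simply by invoking \cite[Theorem 1]{csiszar:82} for each of the three syndrome transmissions, exactly as you do, with the reduction to the events $F_1$, $F_1^c \cap F_{A,2}$, $F_1^c \cap F_{B,2}$ and a union bound. Your observation that on $F_1^c$ one has $\hat{\bol{w}}_1 = \bol{w}_1$, hence $\tilde{\bol{u}}_2 = \bol{u}_2$ and $\tilde{\bol{v}}_2 = \bol{v}_2$ and the second-stage decoders see the true side information, together with the remark that Csisz\'ar's result allows non-binary side-information alphabets, is precisely what makes the threefold application legitimate.
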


\section{Security and Asymptotic Key Generation Rate}
\label{sec:key-rate-two-way}

\subsection{Sufficient Condition on Key Generation Rate for Secure Key Agreement}

In this section, we show how Alice and Bob decide
the parameters of the postprocessing and share
a secret key. Then, we show a sufficient condition
on the parameters such that Alice and Bob can share
a secure key.
We employ almost the same notations 
as in Section \ref{subsec:key-generation-rate}.

Let us start with the six-state protocol.
Instead of the conditional von Neumann entropy
$H_\rho(X|E)$, the quantities
\begin{eqnarray}
\label{eq:definition-two-way-conditional-entropy-1}
H_{\rho}(U_1 U_2 V_2|W_1 E_1 E_2) = H(\rho_{U_1 U_2 V_2 W_1 E_1 E_2})
   - H(\rho_{W_1 E_1 E_2})
\end{eqnarray}
and 
\begin{eqnarray}
\label{eq:definition-two-way-conditional-entropy-2}
H_{\rho}(U_2 V_2|U_1 W_1 E_1 E_2) = H(\rho_{U_1 U_2 V_2 W_1 E_1 E_2})
   - H(\rho_{U_1 W_1 E_1 E_2})
\end{eqnarray}
play important roles in our postprocessing,
where the von Neumann entropies are calculated 
with respect to the 
operator $\rho_{U_1 U_2 V_2 W_1 E_1 E_2}$ derived from
$\rho_{AB}^{\otimes 2}$ via the measurement and
the functions $\xi, \zeta_A, \zeta_B$.
For the ML estimator $\hat{\rho}(\bol{z})$
of $\rho \in {\cal P}_c$,
we set
\begin{eqnarray*}
\hat{H}_\bol{z}(U_1 U_2 V_2|W_1 E_1 E_2) 
 := H_{\hat{\rho}(\bol{z})}(U_1 U_2 V_2|W_1 E_1 E_2)
\end{eqnarray*}
and
\begin{eqnarray*}
\hat{H}_\bol{z}(U_2 V_2|U_1 W_1 E_1 E_2) 
 := H_{\hat{\rho}(\bol{z})}(U_2 V_2|U_1 W_1 E_1 E_2),
\end{eqnarray*}
which are the ML estimators of the quantities 
in Eqs.~(\ref{eq:definition-two-way-conditional-entropy-1})
and (\ref{eq:definition-two-way-conditional-entropy-2}) 
respectively.

For the BB84 protocol, we similarly set
\begin{eqnarray*}
\hat{H}_\bol{z}(U_1 U_2 V_2|W_1 E_1 E_2) 
 := \min_{\varrho \in {\cal P}_c(\hat{\omega}(\bol{z}))}
  H_{\varrho}(U_1 U_2 V_2|W_1 E_1 E_2)
\end{eqnarray*}
and
\begin{eqnarray*}
\hat{H}_\bol{z}(U_2 V_2|U_1 W_1 E_1 E_2)  
:= \min_{\varrho \in {\cal P}_c(\hat{\omega}(\bol{z}))}
H_{\varrho}(U_2 V_2|U_1 W_1 E_1 E_2)
\end{eqnarray*}
respectively.

According to the sample bit sequence $\bol{z}$,
Alice and Bob decide the rate $\frac{k_1(\bol{z})}{n}$,
$\frac{k_{A,2}(\bol{z})}{n}$, and $\frac{k_{B,2}(\bol{z})}{n}$ 
of the parity check matrices used in the 
two-way IR procedure. Furthermore, they also
decide the length $\ell(\bol{z})$ of the finally distilled
key according to the sample bit sequence $\bol{z}$.
Then, they conduct the postprocessing as follows.
\begin{enumerate}
\renewcommand{\theenumi}{\roman{enumi}}
\renewcommand{\labelenumi}{(\theenumi)}
\item \label{step1-two-way-postprocessing}
Alice and Bob undertake the two-way IR procedure
of Section \ref{sec:two-way-ir}, and they obtain
$(\bol{u}_1, \tilde{\bol{u}}_2, \hat{\bol{v}}_2)$
and $(\hat{\bol{u}}_1,\hat{\bol{u}}_2, \tilde{\bol{v}}_2)$
respectively.

\item \label{step2-two-way-postprocessing}
Alice and Bob carry out the PA procedure to distill
a key pair $(s_A, s_B)$. First, Alice randomly
chooses a hash function,
 $f:\mathbb{F}_2^{3n} \to \{0,1\}^{\ell(\bol{z})}$,
from a family of two-universal hash functions, and sends the
choice of $f$ to Bob over the public channel. Then, Alice's distilled
key is $s_A = f(\bol{u}_1,\tilde{\bol{u}}_2, \hat{\bol{v}}_2)$ and Bob's distilled key
is $s_B = f(\hat{\bol{u}}_1,\hat{\bol{u}}_2, \tilde{\bol{v}}_2)$ respectively.
\end{enumerate}

The distilled key pair and Eve's available information can be
described by a $\{cccq\}$-state,
$\rho_{S_A S_B C \bol{E}}^{\bol{z}}$, where classical system
$C$ consists of random variables $T_1$, $\tilde{T}_{A,2}$, and $\tilde{T}_{B,2}$ that
describe the syndromes transmitted in Steps (\ref{step2-two-way-IR})
and (\ref{step5-two-way-IR}) of the two-way IR procedure
and random variable $F$ that describe the choice of the
function in the PA procedure. Then, the security
of the distilled key pair is defined in the same way as
in Section \ref{subsec:key-generation-rate}, i.e.,
the key pair is said to be $\varepsilon$-secure if
Eq.~(\ref{eq-varepsilon-secure-average}) is satisfied.


The following theorem gives a sufficient condition
on $k_1(\bol{z})$, $k_{A,2}(\bol{z})$, 
$k_{B,2}(\bol{z})$, and $\ell(\bol{z})$ such
that the distilled key is secure.
\begin{theorem}
\label{theorem:security-two-way}
For each sample sequence $\bol{z} \in {\cal Q}$, 
assume that the 
IR procedure is $\delta$-universally-correct for the
class of distributions
\begin{eqnarray*}
\{ P_{XY, \rho} \mymid \|\hat{\rho}(\bol{z}) - \rho\| \le \alpha \}
\end{eqnarray*}
in the six-state protocol, and for the
class of distributions 
\begin{eqnarray*}
\{ P_{XY,\omega} \mymid \|\hat{\omega}(\bol{z}) - \omega\| \le
 \alpha\}
\end{eqnarray*} 
in the BB84 protocol. 
For each $\bol{z} \in {\cal Q}$, if we set
\begin{eqnarray}
\lefteqn{ \frac{\ell(\bol{z})}{2n} } \nonumber \\
&<& 
 \frac{1}{2}  \max \left[
\hat{H}_{\bol{z}}(U_1 U_2 V_2| W_1 E_1 E_2) - \eta(\alpha) -
\frac{k_1(\bol{z})}{n} - \frac{k_{A,2}(\bol{z})}{n} - \frac{k_{B,2}(\bol{z})}{n},
\right. \nonumber \\
&& ~~~~\left. \hat{H}_{\bol{z}}(U_2 V_2|U_1 W_1 E_1 E_2) -
                       \eta(\alpha) - \frac{k_{A,2}(\bol{z})}{n} - \frac{k_{B,2}(\bol{z})}{n}
\right] - \nu_n, 
\label{eq:two-way-key-generation-rate}
\end{eqnarray}
then the distilled key is $(\varepsilon + 3 \delta + \mu(\alpha,m))$-secure,
where $\nu_n := 5 \sqrt{\frac{\log(36/\varepsilon^2)}{n}} + \frac{2
 \log(3/\varepsilon)}{n}$.
\end{theorem}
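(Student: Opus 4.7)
The plan is to follow the structure of the proof of Theorem~\ref{theorem:one-way-security}, handling the sample-sequence estimation error via $\mu(\alpha,m)$ and the correctness of the two-way IR via the $\delta$-universally-correct assumption, but with two key modifications: (i) the reconciliation now involves three possible decoding failures (for $\hat{\bol{u}}_1$, $\hat{\bol{u}}_2$, $\tilde{\bol{v}}_2$), contributing the factor $3\delta$ through a union bound over the three events in Definition~\ref{def:two-way-correct}, and (ii) the key is extracted from the longer string $(\bol{u}_1,\tilde{\bol{u}}_2,\hat{\bol{v}}_2)$ of length $3n$, with Eve's classical side information consisting of $\bol{W}_1$, the three syndromes $T_1,\tilde{T}_{A,2},\tilde{T}_{B,2}$, and the hash choice $F$. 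As before, when $\|\hat\rho(\bol{z})-\rho\|>\alpha$ (or $\|\hat\omega(\bol{z})-\omega\|>\alpha$), occurring with probability at most $\mu(\alpha,m)$, we trivially bound the trace distance by $1$, and it suffices to establish $(\varepsilon+3\delta)$-security on the good event.

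On the good event, I would first invoke Lemma~\ref{lemma:min-entropy-of-product} applied to the $n$ i.i.d.\ blocks of the $\{cccccq\}$-state $\rho_{U_1U_2V_2W_1E_1E_2}^{\otimes n}$ derived from $\rho_{AB}^{\otimes 2n}$ via $\xi,\zeta_A,\zeta_B$ and the appropriate measurement. Combined with Eqs.~(\ref{eq:consistency-bound-1})--(\ref{eq:consistency-bound-2}), this gives, for an appropriate choice of $\varepsilon'= \varepsilon/6$, the smooth min-entropy bounds
\begin{eqnarray*}
\tfrac{1}{n}H_{\min}^{\varepsilon'}(\rho_{\bol{U}_1\bol{U}_2\bol{V}_2\bol{W}_1\bol{E}_1\bol{E}_2}|\bol{W}_1\bol{E}_1\bol{E}_2)
&\ge& \hat{H}_{\bol{z}}(U_1U_2V_2|W_1E_1E_2)-\eta(\alpha)-\tfrac{\nu_n}{2},\\
\tfrac{1}{n}H_{\min}^{\varepsilon'}(\rho_{\bol{U}_2\bol{V}_2\bol{U}_1\bol{W}_1\bol{E}_1\bol{E}_2}|\bol{U}_1\bol{W}_1\bol{E}_1\bol{E}_2)
&\ge& \hat{H}_{\bol{z}}(U_2V_2|U_1W_1E_1E_2)-\eta(\alpha)-\tfrac{\nu_n}{2}.
\end{eqnarray*}

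Next, I would account for the syndromes Eve learns by applying the chain-rule-like Corollary~\ref{lemma-weak-chain-rule}. For the first term of the max in Eq.~(\ref{eq:two-way-key-generation-rate}), the three syndromes $T_1,\tilde{T}_{A,2},\tilde{T}_{B,2}$ of total bit length $k_1(\bol{z})+k_{A,2}(\bol{z})+k_{B,2}(\bol{z})$ are classical functions of $(\bol{U}_1,\tilde{\bol{U}}_2,\hat{\bol{V}}_2)$ and can be peeled off by iteratively applying Corollary~\ref{lemma-weak-chain-rule} (with the smoothing adjustment $\bar\varepsilon=\varepsilon^2/8$ absorbed into the choice of constants), reducing the smooth min-entropy by at most their total max-entropy. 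For the second term, I would instead use Lemma~\ref{lemma-weak-monotonicity} to treat $\bol{U}_1$ as the classical system being added to the conditioning side (yielding the conditional smooth min-entropy given $\bol{U}_1$ directly, which is an upper bound on what Eve can learn from the syndrome $T_1$), and then peel off only $\tilde{T}_{A,2}$ and $\tilde{T}_{B,2}$ via Corollary~\ref{lemma-weak-chain-rule}. The maximum of the two bounds is valid because either strategy provides a secure-key-length lower bound.

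Finally, I would apply Corollary~\ref{corollary-privacy-amplification} (with Remark~\ref{remark:privacy-amplification} where applicable) to the resulting smooth min-entropy, with $\dim\mathcal{H}_C$ being the relevant classical register size, giving trace distance at most $3\cdot\tfrac{\varepsilon}{3}=\varepsilon$ from the ideal key, conditional on the IR procedure succeeding. Combining with the $3\delta$ correctness error from the two-way IR (via Lemma~\ref{lemma-error-prob-continuity} applied three times, once for each decoded string) and the $\mu(\alpha,m)$ estimation error, and averaging over $\bol{z}\in\mathcal{Q}$, yields the claimed $(\varepsilon+3\delta+\mu(\alpha,m))$-security. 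The main obstacle I anticipate is carefully tracking the smoothing parameter through the repeated applications of Corollary~\ref{lemma-weak-chain-rule} and verifying that the additive slack collapses into the stated $\nu_n$; the two-way structure makes the bookkeeping noticeably more delicate than in the one-way proof.
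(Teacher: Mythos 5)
Your proposal matches the paper's proof in all essentials: the estimation error is handled by $\mu(\alpha,m)$, secrecy of a dummy key $f(\bol{U}_1,\bol{U}_2,\bol{V}_2)$ is established via Lemma~\ref{lemma:min-entropy-of-product} and Corollary~\ref{corollary-privacy-amplification} separately for the two conditionings that produce the max (subtracting all three syndrome lengths, or conditioning on $\bol{U}_1$ itself --- which dominates what Eve learns from $T_1$ --- and subtracting only $k_{A,2}(\bol{z})+k_{B,2}(\bol{z})$), and the result is transferred to the actual key $f(\bol{u}_1,\tilde{\bol{u}}_2,\hat{\bol{v}}_2)$ via Lemma~\ref{lemma-error-prob-continuity}, exactly as in the paper. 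The one slip is the $\delta$-bookkeeping: Definition~\ref{def:two-way-correct} already bounds the union of all decoding-failure events by a single $\delta$, so the paper applies Lemma~\ref{lemma-error-prob-continuity} once to the whole tuple (a $2\delta$ loss in secrecy, via the triangle inequality) and then adds $\delta$ for correctness of the key pair to reach $\varepsilon+3\delta$, whereas your ``union bound over three events / three applications of the continuity lemma'' accounting would, taken literally, give $\varepsilon+4\delta$ instead of the stated constant.
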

\begin{proof}
We only prove the statement for the six-state protocol,
because the statement for the BB84 protocol is proved
exactly in the same way by replacing
$\rho \in {\cal P}_c$
with $\omega \in \Omega$ and some other related quantities.
The assertion of the theorem is proved by using
Corollary \ref{corollary-privacy-amplification},
Lemma \ref{lemma:min-entropy-of-product},
Lemma \ref{lemma-error-prob-continuity},
and Eq.~(\ref{eq:definition-mu-1}).

For any $\rho \in {\cal P}_c$,
Eq.~(\ref{eq:definition-mu-1}) means that 
$\| \hat{\rho}(\bol{z}) - \rho\| \le \alpha$ with probability
$1-\mu(\alpha,m)$.
When $\|\hat{\rho}(\bol{z}) - \rho\| > \alpha$,
the distilled key pair is $1$-secure.
For $\| \hat{\rho}(\bol{z}) - \rho\| \le \alpha$,
we
first assume (proved later) that the dummy key 
$S := f(\bol{U}_1,\bol{U}_2,\bol{V}_2)$ is
$\varepsilon$-secret under the condition that
Eve can access $(\bol{W}_1,T_1, T_{A,2},T_{B,2}, F, \bol{E})$, i.e.,
\begin{eqnarray}
\label{eq:two-way-proof-1}
\frac{1}{2} \| \rho^\bol{z}_{S \bol{W}_1 T_1 T_{A,2} T_{B,2} F \bol{E}} - 
  \rho_S^{\bol{z},\mix} \otimes \rho^\bol{z}_{\bol{W}_1 T_1 T_{A,2} T_{B,2} F \bol{E}} \|
\le \varepsilon.
\end{eqnarray}
The assumption that the two-way IR procedure is 
$\delta$-universally-correct implies that
$\hat{\bol{w}}_1 = \bol{w}_1$,  
$\tilde{t}_{A,2} = t_{A,2} := M_{A,2} \bol{u}_2$,
and $\tilde{t}_{B,2} = t_{B,2} := M_{B,2} \bol{v}_2$ with probability
at least $1- \delta$.
Since $(\bol{u}_2,\tilde{\bol{u}}_2)$, $(\bol{v}_2, \hat{\bol{v}}_2)$, 
$(\bol{w}_1,\hat{\bol{w}}_1)$,
$(t_{A,2},\tilde{t}_{A,2})$, and $(t_{B,2}, \tilde{t}_{B,2})$
 can be computed from $(\bol{x},\bol{y})$,
by using Lemma \ref{lemma-error-prob-continuity}, we have
\begin{eqnarray*}
\| \rho^\bol{z}_{\bol{X}\bol{Y} \bol{U}_1 \tilde{\bol{U}}_2 \hat{\bol{V}}_2
 \hat{\bol{W}}_1 T_1 \tilde{T}_{A,2} \tilde{T}_{B,2} F
 \bol{E}}
 - \rho^\bol{z}_{\bol{X} \bol{Y} \bol{U}_1 \bol{U}_2 \bol{V}_2 \bol{W}_1 T_1
 T_{A,2} T_{B,2} F \bol{E}} \| \le 2 \delta.
\end{eqnarray*}
Since the trace distance does not increase by CP maps, we have
\begin{eqnarray*}
\| \rho^\bol{z}_{S_A \bol{W}_1 T_1 \tilde{T}_{A,2} \tilde{T}_{B,2} F \bol{E}} -
  \rho^\bol{z}_{S \bol{W}_1 T_1 T_{A,2} T_{B,2} F \bol{E}} \| \le 2 \delta.
\end{eqnarray*}
Therefore, the statement that the dummy key $S$ is 
$\varepsilon$-secret implies that the actual key
$S_A$ is $(\varepsilon + 2 \delta)$-secret
as follows:
\begin{eqnarray*}
\lefteqn{
\| 
\rho^\bol{z}_{S_A \hat{\bol{W}}_1 T_1 \tilde{T}_{A,2} \tilde{T}_{B,2} F \bol{E}} -
\rho_{S_A}^{\bol{z},\rom{mix}} \otimes 
\rho^\bol{z}_{ \hat{\bol{W}}_1 T_1 \tilde{T}_{2,A} \tilde{T}_{2,B} F \bol{E}}
\|
} \\
&\le& \|
\rho^\bol{z}_{S_A \hat{\bol{W}}_1 T_1 \tilde{T}_{A,2} \tilde{T}_{B,2} F \bol{E}} -
\rho^\bol{z}_{S \bol{W}_1 T_1 T_{A,2} T_{B,2} F \bol{E}} 
\| \\
&& +  \|
\rho^\bol{z}_{S \bol{W}_1 T_1 T_{A,2} T_{B,2} F \bol{E}} -
\rho_S^{\bol{z},\rom{mix}} \otimes 
\rho^\bol{z}_{\bol{W}_1 T_1 T_{A,2} T_{B,2} F \bol{E}}
\| \\
&& + \|
\rho_S^{\bol{z}, \rom{mix}} \otimes \rho^\bol{z}_{\bol{W}_1 T_1 T_{A,2} T_{B,2} F \bol{E}} -
\rho_{S_A}^{\bol{z},\rom{mix}} \otimes 
\rho^\bol{z}_{\hat{\bol{W}}_1 T_1 \tilde{T}_{A,2} \tilde{T}_{B,2} F \bol{E}}
\|,
\end{eqnarray*}
where the first term is upper bounded by $2 \delta$,
the second term is upper bounded by $2 \varepsilon$,
and the third term is also upper bounded by $2 \delta$
because $\rho_S^{\bol{z},\mix} = \rho_{S_A}^{\bol{z},\mix}$. The assumption 
that the two-way IR procedure is $\delta$-universally-correct
also implies that the distilled key pair $(S_A,S_B)$ is
$\delta$-universally-correct. Thus, the key pair is 
$(\varepsilon + 3 \delta)$-secure if 
$\| \hat{\rho}(\bol{z}) - \rho\| \le \alpha$.
Averaging over the sample sequence $\bol{z} \in {\cal Q}$,
the distilled key pair is 
$(\varepsilon + 3 \delta +  \mu(\alpha,m))$-secure.

One thing we have left is to prove Eq.~(\ref{eq:two-way-proof-1}).
According to  Lemma \ref{lemma:min-entropy-of-product}, the inequality
\begin{eqnarray*}
\lefteqn{ \frac{\ell(\bol{z})}{2n} < } \\
&&
   \frac{1}{2} \left[ \hat{H}_{\bol{z}}(U_1 U_2 V_2| W_1 E_1 E_2)
                - \eta(\alpha) - \frac{k_1(\bol{z})}{n} -
   \frac{k_{A,2}(\bol{z})}{n} - \frac{k_{B,2}(\bol{z})}{n} \right] - \nu_n
\end{eqnarray*}
implies the inequality
\begin{eqnarray*}
\lefteqn{ \ell(\bol{z}) < } \\
&&
 H_{\min}^{\bar{\varepsilon}}(\rho_{\bol{U}_1 \bol{U}_2 \bol{V}_2 \bol{W}_1
 \bol{E}}|\bol{W}_1 \bol{E}) -  k_1(\bol{z}) - k_{A,2}(\bol{z}) - k_{B,2}(\bol{z})
 - 2 \log(3/2 \varepsilon).
\end{eqnarray*}
Thus, Corollary \ref{corollary-privacy-amplification}
implies that the dummy key $S$ is $\varepsilon$-secret.

Since the syndrome $T_1$ is computed from the sequence
$\bol{U}_1$, if the dummy key $S$ is $\varepsilon$-secret
in the case that Eve can access the sequence $\bol{U}_1$,
then the dummy key $S$ must be $\varepsilon$-secret in the
case that Eve can only access the syndrome $T_1$ instead
of $\bol{U}_1$. According to 
Lemma \ref{lemma:min-entropy-of-product}, the inequality
\begin{eqnarray*}
\frac{\ell(\bol{z})}{2n} <
   \frac{1}{2} \left[ \hat{H}_{\bol{z}}(U_2 V_2| U_1 W_1 E_1 E_2)
                - \eta(\alpha) -
   \frac{k_{A,2}(\bol{z})}{n} - \frac{k_{B,2}(\bol{z})}{n} \right] - \nu_n
\end{eqnarray*}
implies the inequality
\begin{eqnarray*}
\ell(\bol{z}) <
 H_{\min}^{\bar{\varepsilon}}(\rho_{\bol{U}_1 \bol{U}_2 \bol{V}_2 \bol{W}_1
 \bol{E}}|\bol{U}_1 \bol{W}_1 \bol{E}) - k_{A,2}(\bol{z}) - k_{B,2}(\bol{z})
 - 2 \log(3/2 \varepsilon).
\end{eqnarray*}
Thus, Corollary \ref{corollary-privacy-amplification}
implies that the dummy key $S$ is $\varepsilon$-secret.

Combining above two arguments, we have the assertion of
the theorem.
\end{proof}

\begin{remark}
The maximization in Eq.~(\ref{eq:two-way-key-generation-rate})
is very important. If either of them is omitted,
the key generation rate of the postprocessing
can be underestimated, as will be discussed in 
Section \ref{sec:example-two-way}.
\end{remark}
\begin{remark}
By switching the role of Alice and Bob, we obtain 
a postprocessing with the reverse two-way 
IR procedure. For the postprocessing with
the reverse two-way IR procedure,
we can show almost the same statement as 
Theorem \ref{theorem:security-two-way}
by replacing $U_1$ with $V_1$, and by using the 
$\delta$-universally-correct for the reverse
two-way IR procedure.
\end{remark}

\subsection{Asymptotic Key Generation Rates}

In this section, we derive the asymptotic key generation
rate formula for the postprocessing with the
two-way IR procedure.
First, we consider the six-state protocol.
Since the ML estimator is a consistent
estimator, in a similar arguments
as in Sections \ref{subsec:key-generation-rate} 
and \ref{subsec:six-state}, we can set the sequence of the key generation rates so that
it converges to the asymptotic key
generation rate formula
\begin{eqnarray}
\lefteqn{ \frac{1}{2} \max
  \left[
   H_{\rho}(U_1 U_2 V_2|W_1 E_1 E_2) -
   H_\rho(U_1|Y_1 Y_2) 
   \right. } \nonumber \\ 
   && ~~~~~~~~~~~~~~~ - H_\rho(U_2|W_1 Y_1
   Y_2) - H_\rho(V_2|W_1 X_1 X_2), \nonumber \\
&& \hspace{-5mm} \left. H_{\rho}(U_2 V_2|U_1 W_1 E_1 E_2) -
    H_\rho(U_2|W_1 Y_1
   Y_2) - H_\rho(V_2|W_1 X_1 X_2) \right]
\label{eq:asymptotic-key-rate-two-way}
\end{eqnarray}
in probability as $m,n \to \infty$.
We can also derive the asymptotic key generation formula
for the postprocessing with the reverse two-way IR procedure
as
\begin{eqnarray}
\lefteqn{ \frac{1}{2} \max
  \left[
   H_{\rho}(V_1 U_2 V_2|W_1 E_1 E_2) -
   H_\rho(V_1|X_1 X_2)
   \right. } \nonumber \\ 
   && ~~~~~~~~~~~~~~~ - H_\rho(U_2|W_1 Y_1 Y_2) - H_\rho(V_2|W_1 X_1
   X_2) ,  \nonumber \\
&& \hspace{-5mm} \left. H_{\rho}(U_2 V_2|U_1 W_1 E_1 E_2) -
    H_\rho(U_2|W_1 Y_1 Y_2) - H_\rho(V_2|W_1 X_1
   X_2) \right].
\label{eq:asymptotic-key-rate-two-way-reverse}
\end{eqnarray}

Next, we consider the BB84 protocol.
Since the ML estimator is a consistent
estimator, in a similar arguments
as in Sections \ref{subsec:key-generation-rate} 
and \ref{subsec:bb84}, we can set the sequence of the key generation rates so that
it converges to the asymptotic key generation rate formula
\begin{eqnarray}
\lefteqn{ \frac{1}{2} \min_{\varrho \in {\cal P}_c(\omega)} \max
  \left[
   H_{\varrho}(U_1 U_2 V_2|W_1 E_1 E_2) -
   H_\omega(U_1|Y_1 Y_2) 
   \right. } \nonumber \\ 
   && ~~~~~~~~~~~~~~~ - H_\omega(U_2|W_1 Y_1
   Y_2) - H_\omega(V_2|W_1 X_1 X_2) ,  \nonumber \\
&& \hspace{-5mm} \left. H_{\varrho}(U_2 V_2|U_1 W_1 E_1 E_2) -
    H_\omega(U_2|W_1 Y_1
   Y_2) - H_\omega(V_2|W_1 X_1 X_2) \right],
\label{eq:asymptotic-key-rate-two-way-bb84}
\end{eqnarray}
in probability as $m,n \to \infty$.

We can also derive the asymptotic key generation rate formula
for the postprocessing with the reverse two-way IR procedure
as
\begin{eqnarray}
\lefteqn{ \frac{1}{2} \min_{\varrho \in {\cal P}_c(\omega)} \max
  \left[
   H_{\varrho}(V_1 U_2 V_2|W_1 E_1 E_2) -
   H_\omega(U_1|X_1 X_2)
   \right. } \nonumber \\ 
   && ~~~~~~~~~~~~~~~ - H_\omega(U_2|W_1 Y_1
   Y_2) - H_\omega(V_2|W_1 X_1
   X_2),  \nonumber \\
&& \hspace{-5mm} \left. H_{\varrho}(V_2|U_1 W_1 E_1 E_2) -
    H_\omega(U_2|W_1 Y_1
   Y_2) - H_\omega(V_2|W_1 X_1
   X_2) \right].
\label{eq:asymptotic-key-rate-two-way-reverse-bb84}
\end{eqnarray}

The following propositions are useful to calculate
the minimizations in 
Eqs.~(\ref{eq:asymptotic-key-rate-two-way-bb84}) 
and (\ref{eq:asymptotic-key-rate-two-way-reverse-bb84}).
\begin{proposition}
\label{proposition:convexity-2}
For two density operator 
$\rho^1, \rho^2 \in {\cal P}_c$
and a probabilistically mixture 
$\rho^\prime := \lambda \rho^1 + (1-\lambda) \rho^2$,
Eve's ambiguities are convex, i.e., we have
\begin{eqnarray*}
\lefteqn{ H_{\rho^\prime}(U_1 U_2 V_2|W_1 E_1 E_2) } \\
&\le&
\lambda H_{\rho^1}(U_1 U_2 V_2|W_1 E_1 E_2) + 
 (1 - \lambda) H_{\rho^2}(U_1 U_2 V_2 |W_1 E_1 E_2)
\end{eqnarray*}
and 
\begin{eqnarray*}
\lefteqn{ H_{\rho^\prime}(U_2 V_2|U_1 W_1 E_1 E_2) } \\
&\le&
\lambda H_{\rho^1}(U_2 V_2|U_1 W_1 E_1 E_2) + 
 (1 - \lambda) H_{\rho^2}(U_2 V_2|U_1 W_1 E_1 E_2),
\end{eqnarray*}
where $\rho^\prime_{U_1 U_2 V_2 W_1 E_1 E_2}$ is 
the density operator derived from a purification 
$(\psi^{\prime}_{ABE})^{\otimes 2}$ of 
$(\rho^\prime_{AB})^{\otimes 2}$.
\end{proposition}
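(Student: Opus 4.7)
The plan is to mirror the purification-and-data-processing argument used to prove Proposition \ref{proposition:convexity}, extended to the two-copy setting. The key idea, as in the single-copy case, is to enlarge Eve's system by a reference register $\mathcal{H}_R$ that coherently records which mixture branch, $\rho^1$ or $\rho^2$, has been realised, and then to observe that measuring this register can only decrease Eve's quantum mutual information with Alice and Bob's classical variables.

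Concretely, I would first fix purifications $\ket{\psi^r_{ABE}}$ of $\rho^r_{AB}$ for $r=1,2$, and form the purification
\begin{eqnarray*}
\ket{\psi^\prime_{ABER}} := \sqrt{\lambda}\,\ket{\psi^1_{ABE}}\ket{1}_R + \sqrt{1-\lambda}\,\ket{\psi^2_{ABE}}\ket{2}_R
\end{eqnarray*}
of $\rho^\prime_{AB}$. Taking two copies on $A_1 B_1 E_1 R_1 A_2 B_2 E_2 R_2$, Alice's and Bob's $\san{z}$-basis measurements on $A_1 B_1 A_2 B_2$ produce the classical outcomes $X_1, Y_1, X_2, Y_2$, from which $U_1, V_1, W_1, U_2, V_2$ are computed via $\xi, \zeta_A, \zeta_B$; this yields a $\{ccccq\}$-state whose partial trace over the references recovers the $\rho^\prime_{U_1 U_2 V_2 W_1 E_1 E_2}$ of the statement. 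I then define $\rho^*$ as the state obtained by additionally measuring the references in the $\{\ket{1},\ket{2}\}$ basis. Since this measurement commutes with the classical processing on $A_1 B_1 A_2 B_2$, the data-processing inequality for the conditional quantum mutual information — precisely the step used in the proof of Proposition \ref{proposition:convexity} — yields
\begin{eqnarray*}
H_{\rho^\prime}(U_1 U_2 V_2 | W_1 E_1 E_2 R_1 R_2) \le H_{\rho^*}(U_1 U_2 V_2 | W_1 E_1 E_2 R_1 R_2),
\end{eqnarray*}
and the block-diagonal structure of $\rho^*$ in $R_1 R_2$ decomposes the right-hand side as a weighted sum of component conditional entropies. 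Absorbing $R_1, R_2$ into Eve's system on the left (the final "renaming" step in the proof of Proposition \ref{proposition:convexity}) then delivers the desired convex upper bound. The second inequality, for $H_{\rho^\prime}(U_2 V_2 | U_1 W_1 E_1 E_2)$, follows by the same argument after moving $U_1$ into the conditioning.

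The main obstacle I anticipate is that the naive choice of independent references on each copy produces a four-term bound
\begin{eqnarray*}
\sum_{r_1, r_2 \in \{1,2\}} \lambda_{r_1} \lambda_{r_2}\, H_{\rho^{r_1, r_2}}(U_1 U_2 V_2 | W_1 E_1 E_2),
\end{eqnarray*}
rather than the two-term convex combination asserted by the proposition. To collapse the four cross terms into two, one must couple the two copies through a single shared reference register, using the purification
\begin{eqnarray*}
\sqrt{\lambda}\,\ket{\psi^1}_{A_1 B_1 E_1}\ket{\psi^1}_{A_2 B_2 E_2}\ket{1}_R + \sqrt{1-\lambda}\,\ket{\psi^2}_{A_1 B_1 E_1}\ket{\psi^2}_{A_2 B_2 E_2}\ket{2}_R
\end{eqnarray*}
in place of the independent construction. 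The delicate point is that this correlated purification actually purifies $\lambda(\rho^1_{AB})^{\otimes 2} + (1-\lambda)(\rho^2_{AB})^{\otimes 2}$ rather than $(\rho^\prime_{AB})^{\otimes 2}$ itself; verifying that the quantity $H_{\rho^\prime}(U_1 U_2 V_2 | W_1 E_1 E_2)$ on the left-hand side of the proposition is still upper-bounded by the same quantity computed from the correlated purification — equivalently, that the cross-term contributions can be majorised by the diagonal ones — is the technical step that distinguishes this argument from the single-copy proof of Proposition \ref{proposition:convexity}.
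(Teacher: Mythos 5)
Your overall strategy coincides with the paper's: the paper proves this proposition by declaring it ``exactly the same'' as Proposition \ref{proposition:convexity}, i.e.\ the purification--which-branch-register--measurement--data-processing argument that you reproduce. The obstacle you raise is therefore not a misreading on your part but a genuine subtlety that the paper's one-line proof does not engage: since $\rho \mapsto \rho^{\otimes 2}$ is not affine, attaching an independent reference to each copy and measuring both gives only the four-term bound $\sum_{r_1,r_2}\lambda_{r_1}\lambda_{r_2}H_{\rho^{r_1,r_2}}(U_1U_2V_2|W_1E_1E_2)$ over independent branch labels, which is neither the asserted two-term bound nor sufficient for the intended application (in Proposition \ref{proposition:minimization-two-way} one needs the right-hand side to consist of ambiguities of genuine Choi operators in ${\cal P}_c(\omega)$, whereas the cross terms correspond to the two copies being sent through different channels, e.g.\ $\varrho\otimes\bar{\varrho}$, and cannot be absorbed).

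However, as submitted your proposal is not a proof: the step you defer as a ``delicate point'' is precisely the content of the proposition beyond the single-copy case, and neither of your two constructions delivers it. The shared-reference state $\sqrt{\lambda}\ket{\psi^1}^{\otimes 2}\ket{1}_R+\sqrt{1-\lambda}\ket{\psi^2}^{\otimes 2}\ket{2}_R$ purifies $\lambda(\rho^1_{AB})^{\otimes 2}+(1-\lambda)(\rho^2_{AB})^{\otimes 2}$, which differs from $(\rho^\prime_{AB})^{\otimes 2}$, so the identification $H_{\rho^\prime}(U_1U_2V_2|W_1E_1E_2)=H(U_1U_2V_2|W_1E_1E_2R)$ computed from that vector simply fails at the first step, and the comparison you would need --- that the quantity for $(\rho^\prime_{AB})^{\otimes 2}$ is upper-bounded by the one for the correlated mixture --- is not a data-processing consequence: any processing of the which-branch register only reweights the diagonal products $\rho^r\otimes\rho^r$ and can never generate the cross products $\rho^1\otimes\rho^2$ appearing in $(\rho^\prime)^{\otimes 2}$, so no channel on Eve's side relates the two states. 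Likewise, the alternative route of majorising the cross terms, $H_{12}+H_{21}\le H_{11}+H_{22}$, is asserted but not argued, and nothing in the single-copy machinery (Proposition \ref{proposition:convexity} applied to two-copy states, which is what your four-term bound amounts to) yields it. So you have correctly located the missing idea, but you have not supplied it; to count as a proof, the argument must either establish the comparison between $(\rho^\prime)^{\otimes 2}$ and the correlated mixture, or prove the cross-term majorisation, or find a different decomposition altogether --- none of which is done here.
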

\begin{proof}
The statement of this proposition is shown exactly
in the same way as Proposition \ref{proposition:convexity}.
\end{proof}

\begin{proposition}
\label{proposition:minimization-two-way}
For the BB84 protocol,
the minimization in Eqs.~(\ref{eq:asymptotic-key-rate-two-way-bb84}) 
and (\ref{eq:asymptotic-key-rate-two-way-reverse-bb84}) is achieved
by Choi operator $\varrho$ whose components $R_{\san{zy}}$,
$R_{\san{xy}}$, $R_{\san{yz}}$, $R_{\san{yx}}$,
and $t_{\san{y}}$, are all $0$.
\end{proposition}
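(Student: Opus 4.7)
The plan is to adapt the proof of Proposition~\ref{proposition:minimization} to the two-copy setting, replacing the appeal to Proposition~\ref{proposition:convexity} by one to Proposition~\ref{proposition:convexity-2}. Concretely, for an arbitrary $\varrho \in {\cal P}_c(\omega)$ I would consider the complex conjugate $\bar\varrho$ (conjugation taken with respect to the $\san{z}$-basis). From the Stokes expressions (\ref{eq:def-stokes-R}) and (\ref{eq:def-stokes-t}) together with the fact that $\bar\sigma_\san{y}=-\sigma_\san{y}$ while $\sigma_\san{i},\sigma_\san{z},\sigma_\san{x}$ are real, one sees that complex conjugation leaves the $\omega$-components $(R_\san{zz},R_\san{zx},R_\san{xz},R_\san{xx},t_\san{z},t_\san{x})$ untouched and only flips the signs of the five components $R_\san{zy},R_\san{xy},R_\san{yz},R_\san{yx},t_\san{y}$. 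In particular $\bar\varrho \in {\cal P}_c(\omega)$, and the mixture $\varrho' := \tfrac12\varrho+\tfrac12\bar\varrho$ also lies in ${\cal P}_c(\omega)$ (by convexity) and has the five listed components equal to zero.

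The core quantitative step is the invariance
\begin{eqnarray*}
H_{\bar\varrho}(U_1U_2V_2|W_1E_1E_2)=H_\varrho(U_1U_2V_2|W_1E_1E_2),\quad
H_{\bar\varrho}(U_2V_2|U_1W_1E_1E_2)=H_\varrho(U_2V_2|U_1W_1E_1E_2).
\end{eqnarray*}
To see this, note that both copies of Alice's and Bob's measurements are in the (real) $\san{z}$-basis, so the joint distribution $P_{XY,\bar\varrho}=P_{XY,\varrho}$ is unchanged, and hence the classical statistics of $U_1,U_2,V_2,W_1$ (functions of the two-copy $\san{z}$-measurement outcomes via $\xi,\zeta_A,\zeta_B$) are identical for $\varrho$ and $\bar\varrho$. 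For the quantum side, taking a purification $\ket{\psi_{ABE}}$ of $\varrho$ and its entry-wise conjugate $\ket{\bar\psi_{ABE}}$ as a purification of $\bar\varrho$, the conditional Eve states $\bar\rho_E^{x,y}$ are the complex conjugates of $\rho_E^{x,y}$ and hence share the same spectrum; tensoring over the two copies preserves this, and the von Neumann entropies in the definitions~(\ref{eq:definition-two-way-conditional-entropy-1}) and (\ref{eq:definition-two-way-conditional-entropy-2}) therefore coincide.

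With these two ingredients in hand, applying Proposition~\ref{proposition:convexity-2} with $\rho^1=\varrho$, $\rho^2=\bar\varrho$, $\lambda=\tfrac12$ yields
\begin{eqnarray*}
H_{\varrho'}(U_1U_2V_2|W_1E_1E_2) &\le& H_\varrho(U_1U_2V_2|W_1E_1E_2), \\
H_{\varrho'}(U_2V_2|U_1W_1E_1E_2) &\le& H_\varrho(U_2V_2|U_1W_1E_1E_2),
\end{eqnarray*}
so replacing $\varrho$ by $\varrho'$ cannot increase either term inside the maximum of (\ref{eq:asymptotic-key-rate-two-way-bb84}). Since $\varrho$ was arbitrary in ${\cal P}_c(\omega)$, the infimum in (\ref{eq:asymptotic-key-rate-two-way-bb84}) is attained on the subset of Choi operators with the five listed components zero. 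The argument for (\ref{eq:asymptotic-key-rate-two-way-reverse-bb84}) is identical after swapping the roles of Alice and Bob (interchanging $U_1\leftrightarrow V_1$).

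The main obstacle is the conditional-entropy invariance claim: one has to be careful that both the classical registers $U_1,U_2,V_2,W_1$ (which are built from $\san{z}$-basis outcomes on two independent copies of the channel) and the quantum Eve registers $E_1,E_2$ behave correctly under complex conjugation. Once the invariance is established on one copy via the real $\san{z}$-basis measurement and the spectral invariance of $\rho_E^{x,y}\mapsto\bar\rho_E^{x,y}$, the extension to two copies is immediate since conjugation commutes with tensor products and the functions $\xi,\zeta_A,\zeta_B$ act only on the classical registers.
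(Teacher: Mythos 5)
Your proposal is correct and is essentially the paper's own argument: the paper proves this proposition ``exactly in the same way as Proposition~\ref{proposition:minimization} by using Proposition~\ref{proposition:convexity-2}'', i.e.\ by mixing $\varrho$ with its complex conjugate $\bar{\varrho}$ (which flips the signs of $R_{\san{zy}},R_{\san{xy}},R_{\san{yz}},R_{\san{yx}},t_{\san{y}}$ while preserving the entropic quantities) and invoking convexity. Your additional verification that the conditional entropies in Eqs.~(\ref{eq:definition-two-way-conditional-entropy-1}) and (\ref{eq:definition-two-way-conditional-entropy-2}) are invariant under conjugation, via the real $\san{z}$-basis measurements and the spectral invariance of the conjugated $\{cq\}$-states, correctly fills in the detail the paper leaves implicit.
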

\begin{proof}
The statement of this proposition is shown exactly
in the same way as Proposition \ref{proposition:minimization}
by using Proposition \ref{proposition:convexity-2}.
\end{proof}

\begin{remark}
By using the chain rule of von Neumann entropy, we can rewrite
 Eq.~(\ref{eq:asymptotic-key-rate-two-way}) as
\begin{eqnarray}
\lefteqn{ \frac{1}{2} \{ \max[ H_\rho(U_1|W_1 E_1 E_2) -  H(U_{1,\rho}|Y_{1,\rho}
 Y_{2,\rho}), 0] } \nonumber \\
&& \hspace{-13mm} + H_{\rho}(U_2 V_2|U_1 W_1 E_1 E_2)
  - H_\rho(U_2|W_1 Y_1 Y_2 ) - H_\rho(V_2|W_1 X_1 X_2 ) \}.
\label{eq:explanation-singular-point}
\end{eqnarray}
We can interpret this formula as follows. 
If Bob's ambiguity $H_\rho(U_1|Y_1 Y_2)$
about bit $U_1$ is smaller than Eve's ambiguity
$H_\rho(U_1|W_1 E_1 E_2)$ about $U_1$, then
Eve cannot decode sequence 
$\bol{U}_1$ \cite{slepian:73, devetak:03},
and there exists some remaining ambiguity
about bit $U_1$ for Eve. We can thus distill
some secure key from bit $U_1$.
On the other hand, if Bob's 
ambiguity $H_\rho(U_1|Y_1 Y_2)$
about bit $U_1$, i.e., the amount of transmitted 
syndrome per bit, is larger than Eve's ambiguity
$H_\rho(U_1|W_1 E_1 E_2)$ about $U_1$, then
Eve might be able to decode sequence $\bol{U}_1$ from 
her side information and the transmitted syndrome
\cite{slepian:73, devetak:03}.
Thus, there exists the possibility that Eve can completely
know bit $U_1$, and we can distill no 
secure key from bit $U_1$, because we have to
consider the worst case in a cryptography scenario.
Consequently, sending the compressed version (syndrome)
of sequence $\bol{U}_1$ instead of $\bol{U}_1$ itself
is not always effective, and the slope of the key rate curves
change when Eve becomes able to decode
$\bol{U}_1$ (see Figs.~\ref{fig:hash-six-depolarizing},
\ref{fig:hash-bb84-depolarizing},
\ref{fig:quarter-bb84},
\ref{fig:quarter-six-state},
\ref{fig:two-way-amplitude}).

A similar argument also holds for the BB84 protocol.
\end{remark}

\begin{remark}
If we take the functions $\chi_A$ and $\chi_B$
as 
\begin{eqnarray}
\chi_A(a_1,a_2) := \left\{ \begin{array}{ll}
0 & \mbox{if } a_1 = a_2 \\
1 & \mbox{else}
\end{array} \right.
\label{eq:chi-A}
\end{eqnarray}
and
\begin{eqnarray}
\label{eq:chi-B} 
\chi_B(b_1,b_2) = 1. 
\end{eqnarray}
Then, the postprocessing proposed in
this thesis reduces to the postprocessing proposed in
\cite{watanabe:07}.
\end{remark}

\begin{remark}
The asymptotic key generation rate (for the six-state protocol) of the postprocessing
with the advantage distillation is given by
\begin{eqnarray}
\label{eq:key-rate-of-adv}
\frac{1}{2}[ H_\rho(U_2|U_1 W_1 E_1 E_2) - H_\rho(U_2|W_1 Y_1 Y_2)],
\end{eqnarray}
where the auxiliary random variables $U_1, U_2, W_1$ are defined as in
Section \ref{sec:advantage-distillation}, or they are defined by
using the functions $\chi_A, \chi_B$ given in
Eqs.~(\ref{eq:chi-A}) and (\ref{eq:chi-B}).
From Eqs.~(\ref{eq:asymptotic-key-rate-two-way})
and (\ref{eq:key-rate-of-adv}), we can find that
the asymptotic key generation rate of the proposed postprocessing
is at least as high as that of the postprocessing with 
the advantage distillation if we employ appropriate functions
$\chi_A, \chi_B$.

A similar argument also holds for the BB84 protocol.
\end{remark}
\begin{remark}
In \cite{gohari:08}, Gohari and Anantharam 
proposed\footnote{It should be noted that they consider the classical 
key agreement problem instead of the postprocessing of the QKD protocol.
However, as we mentioned in Chapter \ref{ch:introduction},
they are essentially the same.} a two-way postprocessing which is
similar to our proposed two-way postprocessing.
They derived the asymptotic key generation rate formula of their proposed
postprocessing. Although their postprocessing seems to be
a generalization of our proposed postprocessing, 
the asymptotic key generation 
rate (Eq.~(\ref{eq:asymptotic-key-rate-two-way})) of our proposed
postprocessing cannot be derived by their
asymptotic key generation rate formula.
By modifying their formula for the QKD protocol, we can only derive
the asymptotic key
generation rate
\begin{eqnarray}
\lefteqn{ 
\frac{1}{2} [ H_\rho(U_1|E_1 E_2) - H_\rho(U_1|Y_1 Y_2)
} \nonumber \\
&&+ H_\rho(W_1|U_1 E_1 E_2) - H_\rho(W_1| U_1 X_1 X_2) \nonumber \\
&&+ H_\rho(U_2|U_1 W_1 E_1 E_2) - H_\rho(U_2|U_1 W_1 Y_1 Y_2) \nonumber \\
&& + H_\rho(V_2|U_1 W_1 U_2 E_1 E_2) - H_\rho(V_2|U_1 W_1 U_2 X_1 X_2) ].
\label{eq:rate-derived-by-isit-paper}
\end{eqnarray}
For a Pauli channel, since $W_1$ is independent from 
$(X_1, X_2)$ and $H_\rho(W_1| E_1 E_2) = 0$, 
Eq.~(\ref{eq:rate-derived-by-isit-paper}) is
strictly smaller than Eq.~(\ref{eq:asymptotic-key-rate-two-way}). 

The underestimation of the asymptotic key generation rate comes from 
the following reason.
In Gohari and  Anantharam's postprocessing, a syndrome
of $\bol{w}_1$ is transmitted over the public channel,
and the length of the syndrome is roughly
$H_\rho(W_1|U_1X_1  X_2)$. When the syndrome is transmitted
over the public channel, Eve cannot obtain more
information than $\bol{w}_1$ itself.
The lack of this observation results into 
Eq.~(\ref{eq:rate-derived-by-isit-paper}).
\end{remark}

\section{Comparison of Asymptotic Key Generation Rates for Specific Channels}
\label{sec:example-two-way}

In this section, we compare the asymptotic key generation rates
of the proposed postprocessing, the postprocessing with the
advantage distillation, the one-way postprocessing
for representative specific channels.

\subsection{Pauli Channel}
\label{subsec:pauli-channel}

When the channel between Alice and Bob is a 
Pauli channel, the Stokes parameterization of the corresponding
density operator $\rho \in {\cal P}_c$
is
\begin{eqnarray}
\label{eq:pauli-channel}
\left(
\left[
\begin{array}{ccc}
e_\san{z} & 0 & 0 \\
0 & e_\san{x} & 0 \\
0 & 0 & e_\san{y}
\end{array}
\right],
\left[
\begin{array}{c}
0 \\ 0 \\ 0
\end{array}
\right]
\right),
\end{eqnarray}
for $-1 \le e_\san{z},e_\san{x},e_\san{y} \le 1$.
The Choi operator of the Pauli channel is a Bell diagonal state:
\begin{eqnarray}
\label{eq:bell-diagonal}
\rho = \sum_{\san{k},\san{l} \in \mathbb{F}_2}
  P_{\san{KL}}(\san{k},\san{l}) \ket{\psi(\san{k},\san{l})}\bra{\psi(\san{k},\san{l})},
\end{eqnarray}
where $P_{\san{KL}}$ is a distribution on $\mathbb{F}_2 \times
\mathbb{F}_2$ defined by
\begin{eqnarray}
\begin{array}{rcl}
P_{\san{KL}}(0,0) &=& \frac{1 + e_\san{z} + e_\san{x} + e_\san{y}}{4}, \\
P_{\san{KL}}(0,1) &=& \frac{1 + e_\san{z} - e_\san{x} - e_\san{y}}{4}, \\
P_{\san{KL}}(1,0) &=& \frac{1 - e_\san{z} + e_\san{x} - e_\san{y}}{4}, \\
P_{\san{KL}}(1,1) &=& \frac{1 - e_\san{z} - e_\san{x} + e_\san{y}}{4},
\end{array}
\label{eq:relation-P-e}
\end{eqnarray}
and 
\begin{eqnarray*}
\ket{\psi(0,0)} &:=& \frac{\ket{00} + \ket{11}}{\sqrt{2}}, \\
\ket{\psi(1,0)} &:=& \frac{\ket{01} + \ket{10}}{\sqrt{2}}, \\
\ket{\psi(0,1)} &:=& \frac{\ket{00} - \ket{11}}{\sqrt{2}}, \\
\ket{\psi(1,1)} &:=& \frac{\ket{01} - \ket{10}}{\sqrt{2}}.
\end{eqnarray*}
We occasionally abbreviate $P_{\san{KL}}(\san{k},\san{l})$ as
$p_{\san{kl}}$.
Note that the Pauli channel is a special class of the unital channel
discussed in Section \ref{subsec:unital-channel}.

The following lemma simplify the calculation of 
Eq.~(\ref{eq:asymptotic-key-rate-two-way-bb84})
for a Pauli channel.
\begin{lemma}
\label{lemma:calc-bell-diagonal}
For a Bell diagonal Choi operator $\rho$, the minimizations
in  Eqs.~(\ref{eq:asymptotic-key-rate-two-way-bb84}) 
(\ref{eq:asymptotic-key-rate-two-way-reverse-bb84}) are achieved by
a Bell diagonal operator $\varrho \in {\cal P}_c(\omega)$.
\end{lemma}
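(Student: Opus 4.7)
The plan is to deduce the lemma directly from Proposition \ref{proposition:minimization-two-way} together with the special form of the Stokes parameterization of a Bell diagonal state. First, I would recall that the Bell diagonal Choi operator $\rho$ corresponds, via Eqs.~(\ref{eq:pauli-channel}) and (\ref{eq:bell-diagonal}), to a Stokes parameterization whose matrix part is the diagonal matrix with entries $(e_{\san{z}}, e_{\san{x}}, e_{\san{y}})$ and whose vector part is the zero vector. In particular, the ``reduced'' parameter $\omega$ defining the estimation constraint has the form $\omega = (R_{\san{zz}}, R_{\san{zx}}, R_{\san{xz}}, R_{\san{xx}}, t_{\san{z}}, t_{\san{x}}) = (e_{\san{z}}, 0, 0, e_{\san{x}}, 0, 0)$, so every $\varrho \in {\cal P}_c(\omega)$ automatically inherits $R_{\san{zx}} = R_{\san{xz}} = 0$ and $t_{\san{z}} = t_{\san{x}} = 0$.

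Next, I would invoke Proposition \ref{proposition:minimization-two-way}, which tells us that the minimizations in Eqs.~(\ref{eq:asymptotic-key-rate-two-way-bb84}) and (\ref{eq:asymptotic-key-rate-two-way-reverse-bb84}) are attained by some $\varrho^\star \in {\cal P}_c(\omega)$ whose remaining ``$\san{y}$-mixing'' components satisfy $R_{\san{zy}} = R_{\san{xy}} = R_{\san{yz}} = R_{\san{yx}} = t_{\san{y}} = 0$. Combining these two sets of constraints, the Stokes parameterization of such an optimizer $\varrho^\star$ has matrix part
\begin{eqnarray*}
\left[\begin{array}{ccc}
e_{\san{z}} & 0 & 0 \\
0 & e_{\san{x}} & 0 \\
0 & 0 & R_{\san{yy}}
\end{array}\right]
\end{eqnarray*}
and vector part equal to the zero vector, where $R_{\san{yy}}$ is the only remaining free parameter (constrained so that $\varrho^\star$ is a valid Choi operator). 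By the correspondence between Bell diagonal states and Stokes parameterizations with diagonal $R$ and zero $t$ (cf.~Eq.~(\ref{eq:relation-P-e}) applied with $e_{\san{y}}$ replaced by $R_{\san{yy}}$), this $\varrho^\star$ is itself Bell diagonal, which is the assertion of the lemma.

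There is essentially no obstacle: the lemma is a direct structural corollary of Proposition \ref{proposition:minimization-two-way} once one observes that the ``$\omega$-projection'' of a Bell diagonal state has vanishing $R_{\san{zx}}, R_{\san{xz}}, t_{\san{z}}, t_{\san{x}}$. The only mild bookkeeping step is to verify that restricting the matrix part to diagonal form and the vector part to zero genuinely yields a Bell diagonal operator (rather than some strictly larger class), which is immediate from Eqs.~(\ref{eq:bell-diagonal}) and (\ref{eq:relation-P-e}). Because the argument produces an optimizer rather than asserting uniqueness, we need not worry about whether $R_{\san{yy}}$ is determined by $\omega$; any admissible value will do.
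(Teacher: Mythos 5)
Your proof is correct, but it takes a somewhat different route from the paper's. The paper proves the lemma as a direct corollary of Proposition \ref{proposition:convexity-2}: the natural reading is a twirling argument in the spirit of Proposition \ref{proposition:explicit-formula-convention} --- for any $\varrho \in {\cal P}_c(\omega)$ one averages over the four local Pauli conjugations $(\sigma_\san{a}\otimes\sigma_\san{a})\varrho(\sigma_\san{a}\otimes\sigma_\san{a})$, which stay in ${\cal P}_c(\omega)$ precisely because the Bell-diagonal $\rho$ forces $R_\san{zx}=R_\san{xz}=t_\san{z}=t_\san{x}=0$ in $\omega$; the twirl is Bell diagonal and, by convexity plus the invariance of Eve's ambiguities under these conjugations, both quantities in the maximum do not increase, so a Bell-diagonal minimizer exists. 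You instead invoke the already-established Proposition \ref{proposition:minimization-two-way} (itself derived from Proposition \ref{proposition:convexity-2} via complex conjugation) to set $R_\san{zy}=R_\san{xy}=R_\san{yz}=R_\san{yx}=t_\san{y}=0$, note that membership in ${\cal P}_c(\omega)$ already fixes $R_\san{zx}=R_\san{xz}=t_\san{z}=t_\san{x}=0$, and conclude that the optimizer has diagonal Stokes matrix $\mathrm{diag}(e_\san{z},e_\san{x},R_\san{yy})$ and zero vector, hence is Bell diagonal by the one-to-one correspondence between Choi operators and their Stokes parameters together with Eqs.~(\ref{eq:pauli-channel}), (\ref{eq:bell-diagonal}) and (\ref{eq:relation-P-e}). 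What your route buys is economy: no new symmetrization step is needed (in particular, no verification that $H_\varrho(U_1U_2V_2|W_1E_1E_2)$ and $H_\varrho(U_2V_2|U_1W_1E_1E_2)$ are unchanged under Pauli conjugation), only parameter bookkeeping on top of a proposition the paper has already proved; the paper's twirling argument is self-contained relative to the convexity statement alone. Your closing observations are also the right ones: positivity of the resulting Bell-diagonal entries is automatic since the optimizer is a valid Choi operator, and only existence (not uniqueness or a determined $R_\san{yy}$) is claimed.
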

\begin{proof}
This lemma is a straightforward corollary of 
Proposition \ref{proposition:convexity-2}.
\end{proof}

\begin{lemma}
\label{lemma-optimal-function}
For Bell diagonal state $\rho$, the asymptotic key 
generation rate is maximized when we employ the
functions $\chi_A, \chi_B$ given by
Eqs.~(\ref{eq:chi-A}) and (\ref{eq:chi-B}).
\end{lemma}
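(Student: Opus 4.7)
The plan is to reduce the statement to a finite optimization over the $256$ pairs $(\chi_A,\chi_B)$ of functions on $\mathbb{F}_2^2$, and then to trim this down drastically using the symmetries of the Bell diagonal state.

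First I would dispense with the minimization over $\mathcal{P}_c(\omega)$ in the BB84 key rate formula Eq.~(\ref{eq:asymptotic-key-rate-two-way-bb84}) by appealing to Lemma \ref{lemma:calc-bell-diagonal}: the minimizer may be taken Bell diagonal, and since $\omega = (e_{\san{z}}, 0, 0, e_{\san{x}}, 0, 0)$ determines the Bell weights of a Bell diagonal $\varrho$ up to the single parameter $e_{\san{y}} = R_{\san{yy}}$, the problem in both the BB84 and six-state cases reduces to maximizing a function of $(\chi_A,\chi_B)$ evaluated on a fixed Bell diagonal state (in the BB84 case, after a further straightforward minimization in $e_{\san{y}}$, which commutes with the $(\chi_A,\chi_B)$-optimization since the auxiliary random variables are unaffected). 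Thus it suffices to prove the statement for Eq.~(\ref{eq:asymptotic-key-rate-two-way}).

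Next I would exploit the symmetries of the Bell diagonal state and of the key rate. Adding a constant to $\chi_A$ just relabels ``keep'' versus ``discard,'' which at most flips the sign of one bit in $U_2$ and does not change any entropy; likewise for $\chi_B$. The map $(u_1,v_1)\mapsto(\bar u_1,\bar v_1)$ is a symmetry of the distribution of $(X_1,Y_1,X_2,Y_2,E_1,E_2)$ for any Bell diagonal $\rho$. These two symmetries let me parametrize a maximizer by the two \emph{unordered} values of $\big(\chi_A(u_1,v_1),\chi_B(u_1,v_1)\big) \in \{0,1\}^2$ on the two cosets $\{u_1=v_1\}$ (where $W_1=0$) and $\{u_1\neq v_1\}$ (where $W_1=1$). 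Since all relevant conditional entropies in Eq.~(\ref{eq:asymptotic-key-rate-two-way}) factor through $W_1$ for Bell diagonal states (because $(X_2,Y_2)$ conditioned on $(U_1,V_1)$ is again Bell-diagonal-type with parameters depending only on $W_1$), the rate splits additively across the two cosets, and on each coset the choice is among only four pairs $(\chi_A,\chi_B)\in\{0,1\}^2$.

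I would then compute, for each coset and each of the four local choices, the contributions to the three entropic terms in Eq.~(\ref{eq:asymptotic-key-rate-two-way}) in closed form in terms of $(p_{\san{00}},p_{\san{01}},p_{\san{10}},p_{\san{11}})$. Monotonicity arguments would then show: on the coset $W_1=1$ the optimal choice is to discard on both sides, matching Eq.~(\ref{eq:chi-A}); on the coset $W_1=0$ the optimal choice is for Alice to keep and Bob to discard, matching Eq.~(\ref{eq:chi-B}). The first comparison is easy, since when $W_1=1$ the conditional state of Eve on $(U_1,W_1)$ is more informative about $(X_2,Y_2)$ than Bob's conditional state, so any kept bit is ``leaked.''

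The main obstacle will be the comparison on the coset $W_1=0$ between the asymmetric choice $(\chi_A,\chi_B)=(0,1)$ (Alice keeps, Bob discards) and the symmetric choice $(0,0)$ (both keep). Naively, keeping both bits seems to double the raw secret material, but the reconciliation cost for $V_2$, namely $H_\rho(V_2\mid W_1 X_1 X_2)$, grows by essentially the same amount as the new term $H_\rho(U_2 V_2\mid U_1 W_1 E_1 E_2)-H_\rho(U_2\mid U_1 W_1 E_1 E_2)$. Showing that these quantities cancel exactly (and not merely up to a slack) requires an identity valid specifically for Bell diagonal states, of the form $H_\rho(V_2\mid W_1=0,X_1,X_2,U_2)=H_\rho(V_2\mid U_1,W_1=0,E_1,E_2,U_2)$; I would prove this identity by direct manipulation of the Bell diagonal weights, using that conditioning on $W_1=0$ and on $U_2$ leaves a residual distribution in which the Alice--Bob channel and the Alice--Eve channel are related by a common classical degradation.
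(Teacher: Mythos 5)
Your crux step is set up incorrectly. For the comparison on the $W_1=0$ coset between ``both keep'' and ``Alice keeps, Bob discards'' you do not need any exact cancellation: to show the functions of Eqs.~(\ref{eq:chi-A}) and (\ref{eq:chi-B}) are maximizers it suffices that the extra ambiguity Eve retains when $V_2=Y_2$ is also kept, namely $H_\rho(Y_2|W_1=0, X_1 X_2, E_1 E_2)$, is at most the extra reconciliation cost $H_\rho(Y_2|W_1=0, X_1 X_2)$ — and this one-sided bound is exactly Eq.~(\ref{eq:not-both}) of the paper (note $(U_1,X_2)$ determines $(X_1,X_2)$), which follows from nothing more than the fact that conditioning on Eve's quantum system cannot increase the entropy. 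Worse, the identity you propose to prove is false for Bell diagonal states: Eve's conditional states $\ket{\phi(x,\san{k})}$ for different bit-error values $\san{k}$ have orthogonal supports, so Eve can read $K_2$ off $E_2$ perfectly, giving $H_\rho(V_2|U_1,W_1=0,E_1,E_2,U_2)=0$, whereas $H_\rho(V_2|W_1=0,X_1,X_2,U_2)$ is generically strictly positive. Your planned ``direct manipulation of the Bell diagonal weights'' would therefore be aimed at a statement that cannot be established, even though the inequality actually needed is trivial (and typically strict).

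A secondary problem: your first ``symmetry'' is not one. By the definition of $\zeta_A$, the value $\chi_A=0$ means keep and $\chi_A=1$ means discard, so adding a constant to $\chi_A$ exchanges the kept and discarded blocks (it turns Eq.~(\ref{eq:chi-A}) into ``keep only when $W_1=1$''), which changes the rate drastically; the legitimate reduction to choices constant on the two cosets comes from the per-block additive decomposition of the entropies in Eq.~(\ref{eq:asymptotic-key-rate-two-way}) together with the $(u_1,v_1)\mapsto(\bar{u}_1,\bar{v}_1)$ symmetry, which you do gesture at. For contrast, the paper's proof avoids the enumeration entirely: (i) $H_\rho(X_2|W_1=1,Y_1Y_2)=1\ge H_\rho(X_2|W_1=1,E_1E_2)$, so keeping a bit in a $W_1=1$ block cannot help (and likewise for $Y_2$); (ii) the Alice--Bob symmetry of the Bell diagonal state makes keeping $X_2$ or $Y_2$ on $W_1=0$ equivalent; (iii) Eq.~(\ref{eq:not-both}) shows one should not keep both. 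If you replace your claimed equality by that monotonicity inequality, your enumeration-based argument can be repaired and then runs essentially parallel to the paper's.
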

\begin{proof}
Since $H_\rho(X_2|W_1 = 1, Y_1 Y_2) = 1$
and $H_\rho(X_2|W_1 = 1, E_1 E_2) \le 1$,
$X_2$ should be discarded if $W_1 = 1$.
Similarly, $Y_2$ should be discarded if $W_1 = 0$.
Since the Bell diagonal Choi operator is symmetric with respect
to Alice and Bob's subsystem, we have
\begin{eqnarray*}
H_\rho(X_2|W_1 = 0, U_1 E_1 E_2) = H_\rho(Y_2|W_1 = 0, U_1 E_1 E_2),
\end{eqnarray*} 
and
\begin{eqnarray*}
H_\rho(X_2|W_1 = 0, Y_1 Y_2) = H_\rho(Y_2| W_1 = 0, X_1 X_2).
\end{eqnarray*}
Furthermore, we have
\begin{eqnarray}
\label{eq:not-both}
H_\rho(Y_2|W_1 = 0, U_1 X_2 E_1 E_2) \le
H_\rho(Y_2|W_1 = 0, X_1 X_2).
\end{eqnarray}
Therefore,
the functions given by Eqs.~(\ref{eq:chi-A}) and (\ref{eq:chi-B})
are optimal. Note that Eq.~(\ref{eq:not-both}) means
that we should not keep $Y_2$ if we keep $X_2$.
\end{proof}

By Lemmas \ref{lemma:calc-bell-diagonal} and \ref{lemma-optimal-function}, 
it suffice to consider the functions
given by Eqs.~(\ref{eq:chi-A}) and (\ref{eq:chi-B}) if the channel is
a Pauli channel. Therefore, we employ the functions 
given by Eqs.~(\ref{eq:chi-A}) and (\ref{eq:chi-B})
throughout this subsection. Furthermore, we can find that
the asymptotic key generation rates for the direct and the reverse 
IR procedure coincide, because 
$H_\rho(U_1|W_1 E_1 E_2) = H_\rho(V_1|W_1 E_1 E_2)$
and
$H_\rho(U_1|Y_1 Y_2) = H_\rho(V_1| X_1 X_2)$.
Therefore, we only consider the asymptotic key 
generation rate for the direct IR procedure throughout
this subsection.

\begin{theorem}
\label{theorem:calc-bell-diagonal}
For a Bell diagonal state $\rho$, 
we have
\begin{eqnarray}
\lefteqn{ \frac{1}{2}  \max
  \left[
   H_{\rho}(U_1 U_2|W_1 E_1 E_2) -
   H_\rho(U_1|Y_1 Y_2)
   \right. } \nonumber \\ 
   && ~~~~~~~~~~~~~~~ - H_\rho(U_2|W_1 Y_1
   Y_2),  \nonumber \\
&& ~~~~~~~~~~\left. H_{\rho}(U_2|U_1 W_1 E_1 E_2) -
    H_\rho(U_2|W_1 Y_1
   Y_2) \right], \nonumber \\
&=& \max[
1 - H(P_{\san{KL}}) \nonumber \\
&&  + \frac{P_{\bar{\san{K}}}(1)}{2}
h\left( \frac{p_{00} p_{10} + p_{01} p_{11}}{
(p_{00} + p_{01})(p_{10} + p_{11})}\right), \nonumber \\
&& \hspace{10mm}
\frac{P_{\bar{\san{K}}}(0)}{2} (
1 - H(P_{\san{KL}}^\prime) ) ] ,
\label{eq:two-way-asymptotic-key-pauli}
\end{eqnarray}
where 
\begin{eqnarray*}
P_{\bar{\san{K}}}(0) &:=& (p_{00} + p_{01})^2
+ (p_{10} + p_{11})^2, \\
P_{\bar{\san{K}}}(1) &:=& 2 (p_{00} + p_{01})
(p_{10} + p_{11}),
\end{eqnarray*}
and
\begin{eqnarray*}
P_{\san{KL}}^\prime(0,0) &:=&
\frac{p_{00}^2 + p_{01}^2}{(p_{00} + p_{01})^2 + (p_{10}+p_{11})^2}, \\
P_{\san{KL}}^\prime(1,0) &:=& 
\frac{2 p_{00} p_{01}}{(p_{00} + p_{01})^2 + (p_{10}+p_{11})^2}, \\
P_{\san{KL}}^\prime(0,1) &:=&
\frac{p_{10}^2 + p_{11}^2}{(p_{00} + p_{01})^2 + (p_{10}+p_{11})^2}, \\
P_{\san{KL}}^\prime(1,1) &:=&
\frac{2 p_{10}p_{11} }{(p_{00} + p_{01})^2 + (p_{10}+p_{11})^2}.
\end{eqnarray*}
\end{theorem}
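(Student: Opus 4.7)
My plan is first to simplify using the preceding lemmas: Lemma \ref{lemma-optimal-function} identifies $\chi_A, \chi_B$ of Eqs.~(\ref{eq:chi-A}) and (\ref{eq:chi-B}) as the optimal two-way functions for Bell diagonal states, under which $V_2 \equiv 0$ and $U_2 = X_2 \cdot \mathbf{1}[W_1 = 0]$; thus the $V_2$-terms in Eq.~(\ref{eq:asymptotic-key-rate-two-way}) drop out and the bracketed expression on the left-hand side of Eq.~(\ref{eq:two-way-asymptotic-key-pauli}) is exactly what must be computed. For the BB84 rate formula Eq.~(\ref{eq:asymptotic-key-rate-two-way-bb84}) one additionally invokes Lemma \ref{lemma:calc-bell-diagonal} to reduce the minimization over ${\cal P}_c(\omega)$ to Bell diagonal operators. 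The forward/reverse rates coincide by the $A\leftrightarrow B$ symmetry of Bell diagonal states, so only the forward rate needs analysis.

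Next I will use the purification $\ket{\Psi}_{ABE} = \sum_{\san{k},\san{l}} \sqrt{p_{\san{k}\san{l}}}\,\ket{\psi(\san{k},\san{l})}_{AB}\otimes\ket{\san{k},\san{l}}_E$ with $E = E_K \otimes E_L$. A direct projection shows that after Alice's and Bob's $z$-basis measurements with outcomes $(x,y)$, Eve's conditional pure state factorizes as $\ket{x \oplus y}_{E_K}\otimes\ket{\alpha(x \oplus y, x)}_{E_L}$, where $\ket{\alpha(\san{k}, x)} := \sum_{\san{l}} \sqrt{p_{\san{k}\san{l}}/P_{\san{K}}(\san{k})}\,(-1)^{x\san{l}}\ket{\san{l}}$. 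Hence Eve classically holds $\san{K}_i = X_i \oplus Y_i$ (and can therefore compute $W_1 = \bar{\san{K}}$), while the only residual quantum correlation with $X_i$ is in the 2-dimensional $E_{L,i}$ through the nonorthogonal pair $\ket{\alpha(\san{k}, 0)}, \ket{\alpha(\san{k}, 1)}$ with real inner product $2 p_{\san{k}} - 1$, where $p_{\san{k}} := p_{\san{k}0}/P_{\san{K}}(\san{k})$. With this structure the two classical entropies are immediate from $X_i = Y_i + \san{K}_i$ and the conditional distribution of $\san{K}_2$ given $\san{K}_1 = \san{K}_2$: I obtain $H_\rho(U_1|Y_1 Y_2) = h(P_{\bar{\san{K}}}(1))$ and $H_\rho(U_2|W_1 Y_1 Y_2) = P_{\bar{\san{K}}}(0)\,h(P_{\san{K}}(1)^2/P_{\bar{\san{K}}}(0))$.

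The main work is the two quantum conditional entropies, which I will carry out by conditioning on the value of $W_1$ (available to Eve). In the $W_1 = 1$ branch $U_2 \equiv 0$, Eve's $E_L$-state is $\sigma_{\san{k}_1}\otimes\sigma_{\san{k}_2}$ with $H(\sigma_{\san{k}}) = h(p_{\san{k}})$, and a Holevo calculation for a mixture of two tensor-product pure states (whose overlap factorizes as $(2p_0 - 1)(2p_1 - 1)$) yields
\begin{eqnarray*}
H_\rho(U_1 | W_1 = 1, E_1 E_2) = 1 - h(p_0) - h(p_1) + h\!\left(\frac{p_{00}p_{10} + p_{01}p_{11}}{(p_{00}+p_{01})(p_{10}+p_{11})}\right),
\end{eqnarray*}
the argument of $h$ coming from $(1 + (2p_0-1)(2p_1-1))/2 = p_0 p_1 + (1-p_0)(1-p_1)$ after using $h(x) = h(1-x)$. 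In the $W_1 = 0$ branch, $E_L$ becomes pure once $(X_1, X_2)$ is specified, so the Holevo information reduces to $2 \sum_{\san{k}} \tilde{P}_{\san{k}}\, h(p_{\san{k}})$ with $\tilde{P}_{\san{k}} := P_{\san{K}}(\san{k})^2/P_{\bar{\san{K}}}(0)$. Combining the branches and applying the chain-rule identity
\begin{eqnarray*}
2 h(P_{\san{K}}(1)) = H(\san{K}_1, \san{K}_2) = H(P_{\bar{\san{K}}}) + P_{\bar{\san{K}}}(0) H(\tilde{P}) + P_{\bar{\san{K}}}(1),
\end{eqnarray*}
obtained by conditioning $H(\san{K}_1, \san{K}_2)$ on $\bar{\san{K}}$, collapses the first candidate in the $\max$ to $1 - H(P_{\san{KL}}) + \tfrac{1}{2} P_{\bar{\san{K}}}(1) h(\cdot)$. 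The second candidate arises by isolating the $W_1 = 0$ contribution to $H_\rho(U_2|U_1 W_1 E_1 E_2) - H_\rho(U_2|W_1 Y_1 Y_2)$ and identifying the induced joint distribution of $(\bar{\san{L}}, \san{K})$ given $\san{K}_1 = \san{K}_2$ with exactly the $P_{\san{KL}}^\prime$ of the statement.

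The main obstacle I expect is the Holevo bookkeeping in the $W_1 = 1$ branch — simultaneously tracking the classical $E_K$ entropy contribution and the tensor-product pure-state mixture on $E_L$ — together with the final identification in the $W_1 = 0$ branch of the induced effective distribution with the $P_{\san{KL}}^\prime$ of the theorem. Once these computations and the chain-rule identity above are in hand, the rest of the proof is algebraic cleanup.
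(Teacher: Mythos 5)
Your proposal is correct: I checked the branch-by-branch computations and they reproduce both candidates in the maximum, including the chain-rule identity $2h(P_{\san{K}}(1)) = H(P_{\bar{\san{K}}}) + P_{\bar{\san{K}}}(0)H(\tilde{P}) + P_{\bar{\san{K}}}(1)$ needed to collapse the first candidate, and the identification of the conditional law of $(\bar{\san{L}},\san{K})$ given $\san{K}_1=\san{K}_2$ with $P_{\san{KL}}^\prime$ for the second. The underlying structure is the same as in the paper's proof --- both start from the purification $\sum_{x,\san{k}}\sqrt{P_{\san{K}}(\san{k})}\ket{x,x+\san{k}}\ket{\phi(x,\san{k})}$ and exploit the fact that Eve holds $\san{K}=X\oplus Y$ classically with only a phase-encoded qubit residue --- but the bookkeeping differs. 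The paper computes the full von Neumann entropies $H(\rho_{U_1U_2W_1E_1E_2})$, $H(\rho_{W_1E_1E_2})$, $H(\rho_{U_1W_1E_1E_2})$ globally, invoking Lemma \ref{lemma-for-key-rate} (a coset/Fourier eigendecomposition with $C=C^{\bot}=\{00,11\}$) to diagonalize the $w_1=1$ conditional operators, and then subtracts. You instead condition on $W_1$ and on Eve's classical register and evaluate per-branch Holevo quantities for equal mixtures of two product pure states, whose overlaps factor as $(2p_0-1)(2p_1-1)$ and $(2p_{\san{k}}-1)^2$; this replaces the coset lemma by the elementary eigenvalues $(1\pm|c|)/2$ of a rank-two mixture. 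Your organization is somewhat more modular and makes the origin of each term in Eq.~(\ref{eq:two-way-asymptotic-key-pauli}) transparent, while the paper's global-entropy route generalizes more directly to longer blocks via Lemma \ref{lemma-for-key-rate}. Either way the remaining work is, as you say, algebraic cleanup.
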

The theorem is proved by a straightforward calculation,
and the proof is presented at the end of this section.

Combining Lemma \ref{lemma:calc-bell-diagonal},
Theorem \ref{theorem:calc-bell-diagonal}, 
and Eq~(\ref{eq:relation-P-e}), it is straightforward to
calculate the asymptotic key generation rate
for a Pauli channel.
As a special case of the Pauli channel,
we consider the depolarizing channel.
The depolarizing channel is parameterized by one
real parameter $e \in [0,1/2]$, and the Bell diagonal
entries of the Choi operator are given by
$p_{00} = 1 - 3 e/2$,
$p_{10} = p_{01} = p_{11} = e/2$.
For the six-state protocol,
it is straightforward to calculate the 
asymptotic key generation rate, which is plotted 
in Fig.~\ref{fig:hash-six-depolarizing}.
According to Lemma \ref{lemma:calc-bell-diagonal},
it is sufficient to take the minimization over the 
subset ${\cal P}_{c,\rom{Bell}}(\omega) \subset {\cal P}_c(\omega)$
that consists of all Bell diagonal operators in ${\cal P}_c(\omega)$.
For the depolarizing channel, the set ${\cal P}_{c,\rom{Bell}}(\omega)$
consists of Bell diagonal state 
$\varrho = \sum_{\san{k},\san{l} \in \mathbb{F}_2}
p_{\san{kl}}^\prime
\ket{\psi(\san{k},\san{l})}\bra{\psi(\san{k},\san{l})}$
satisfying $p_{00}^\prime = 1 - e + \kappa$,
$p_{10}^\prime = p_{11}^\prime = e/2 - \kappa$,
and $p_{11}^\prime = \kappa$ for $\kappa \in [0,e/2]$.
We can calculate the asymptotic key generation rate
by taking the minimum with respect to
the one free parameter $\kappa \in [0,e/2]$,
which is plotted in Fig.~\ref{fig:hash-bb84-depolarizing}.

It should be noted that the asymptotic key generation 
rate of the standard one-way postprocessing \cite{shor:00, lo:01}
is $1 - H(P_{\san{KL}})$ for the six-state protocol
and $\min_{\kappa}[1 - H(P_{\san{KL}})]$ for the
BB84 protocol. 
Therefore, Eq.~(\ref{eq:two-way-asymptotic-key-pauli})
analytically clarifies that the asymptotic key generation rate
of our postprocessing is at least as high as that
of the standard postprocessing.

\begin{figure}
\centering
\includegraphics[width=\linewidth]{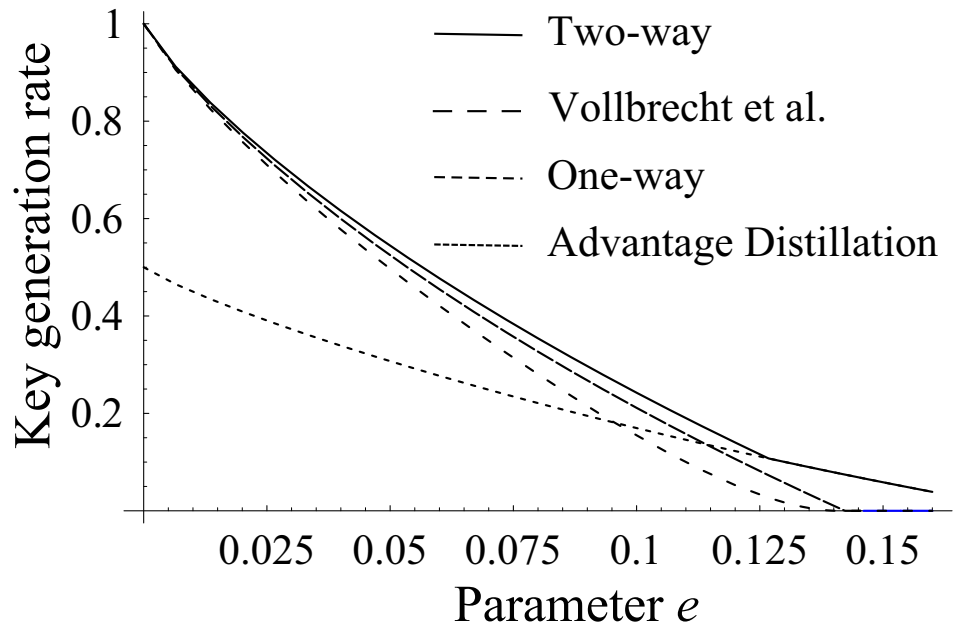}
\caption{Comparison of the asymptotic key generation rates of the  
six-state protocols. 
``Two-way'' is
the asymptotic key generation rate of
the proposed postprocessing. 
``Vollbrecht et al.'' is the asymptotic key generation rate of the two-way postprocessing
of \cite{ma:06, watanabe:06}.
``Advantage Distillation'' is the asymptotic key generation rate 
of the postprocessing with the advantage distillation
 \cite{gottesman:03}.
``One-way'' is the asymptotic key generation rate of
the one-way postprocessing \cite{renner:05}.
It should be noted that the asymptotic key generation rates of the six-state protocols
with the advantage distillation in
\cite{renner:05b, gottesman:03, chau:02, bae:07} are slightly higher
than that of the proposed protocol for much higher error rate.
}
\label{fig:hash-six-depolarizing}
\end{figure}
\begin{figure}
\centering
\includegraphics[width=\linewidth]{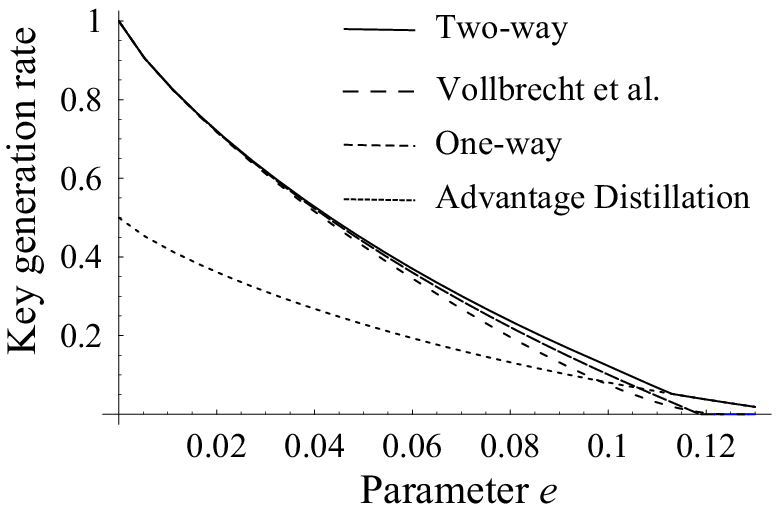}
\caption{ Comparison of the asymptotic key generation rates of the  
BB84 protocols. 
``Two-way'' is
the asymptotic key generation rate of 
the proposed postprocessing. 
``Vollbrecht et al.'' is the asymptotic key generation rate of the two-way 
postprocessing of \cite{ma:06, watanabe:06}.
``Advantage Distillation'' is the asymptotic key generation rate 
of the postprocessing with the advantage distillation
 \cite{gottesman:03}.
``One-way'' is the asymptotic key generation rate of
the one-way postprocessing \cite{renner:05}.
}
\label{fig:hash-bb84-depolarizing}
\end{figure}

\noindent{\em Proof of Theorem \ref{theorem:calc-bell-diagonal}}

Let 
\begin{eqnarray*}
\ket{\psi_{ABE}} &:=& \sum_{\san{k}, \san{l} \in \mathbb{F}_2 }
\sqrt{P_{\san{KL}}(\san{k},\san{l})} \ket{\psi(\san{k},\san{l})} 
\ket{\san{k}, \san{l}} \\
&=& \sum_{x, \san{k} \in \mathbb{F}_2}
\sqrt{P_{\san{K}}(\san{k})} \ket{x, x + \san{k}} \ket{\phi(x,\san{k})}
\end{eqnarray*}
be a purification of $\rho = \sum_{\san{k},\san{l} \in
\mathbb{F}_2} \ket{\psi(\san{k},\san{l})} \bra{\psi(\san{k},\san{l})}$,
where we set
\begin{eqnarray*}
\ket{\phi(x, \san{k})} := \frac{1}{\sqrt{P_{\san{K}}(\san{k})}}
\sum_{\san{l} \in \mathbb{F}_2} (-1)^{x \san{l}} 
\sqrt{P_{\san{KL}}(\san{k}, \san{l})} \ket{\san{k}, \san{l}},
\end{eqnarray*}
and where $P_{\san{K}}(\san{k}) = \sum_{\san{l} \in \mathbb{F}_2}
P_{\san{KL}}(\san{k},\san{l})$
is a marginal distribution.
Then, let
\begin{eqnarray*}
\lefteqn{ \rho_{X_1 X_2 Y_1 Y_2 E_1 E_2} } \\
&=& \sum_{\vec{x}, \vec{\san{k}} \in \mathbb{F}_2^2} \frac{1}{4}
P_{\san{K}}^2(\vec{\san{k}}) \ket{\vec{x}, \vec{x} + \vec{\san{k}}}
\bra{\vec{x}, \vec{x} + \vec{\san{k}}} \otimes \rho_{E_1
E_2}^{\vec{x}, \vec{\san{k}}},
\end{eqnarray*}
where 
\begin{eqnarray*}
\rho_{E_1 E_2}^{\vec{x}, \vec{\san{k}}} :=
\ket{\phi(x_1, \san{k}_1)} \bra{\phi(x_1, \san{k}_1)}
\otimes \ket{\phi(x_2, \san{k}_2)} \bra{\phi(x_2, \san{k}_2)}
\end{eqnarray*}
for $\vec{x} = (x_1,x_2)$ and $\vec{\san{k}} = (\san{k}_1,\san{k}_2)$.

Note that $H(U_1|Y_1 Y_2) = H(W_1)$ for
the Pauli channel.
Let $W_2$ be a random variable defined by
$W_2 := \xi_2(W_1,Y_2) + U_2$. Then, for the 
Pauli channel, we have
$H(U_2|W_1 Y_1 Y_2) = P_{W_1}(0) H(P_{W_2|W_1=0})$.

Noting that 
\begin{eqnarray*}
P_{X_1 X_2 Y_1 Y_2}(\vec{x}, \vec{x} + \vec{\san{k}}) =
\frac{1}{4} P_{\san{K}}^2(\vec{\san{k}}),
\end{eqnarray*}
we have
\begin{eqnarray*}
P_{U_1}(u_1) &=& \frac{1}{2} \\
P_{W_1}(w_1) &=& \sum_{\vec{\san{k}} \in \mathbb{F}_2^2
 \atop \san{k}_1 + \san{k}_2 = w_1 } 
P_{\san{K}}^2(\vec{\san{k}}) \\
P_{U_2|W_1 = 0}(u_2) &=& \frac{1}{2} \\
P_{U_2|W_1 = 1}(u_2) &=& 1 \\
P_{W_2|W_1=0}(w_2) &=& \frac{P_{\san{K}}^2(w_2,w_2)}{P_{W_1}(w_1)} \\
P_{W_2|W_1=1}(0) &=& 1.
\end{eqnarray*}
Using these formulas, we can write
\begin{eqnarray*}
&& \rho_{U_1 U_2 W_1 E_1 E_2} =
\sum_{\vec{u} \in \mathbb{F}_2^2} \sum_{w_1 \in \mathbb{F}_2}
P_{U_1}(u_1) P_{W_1}(w_1) \\
&& ~~~~~~~~~P_{U_2|W_1 = w_1}(u_2)  
 \ket{\vec{u}, w_1}\bra{\vec{u}, w_1} \otimes
\bar{\rho}_{E_1 E_2}^{\vec{u}, w_1} 
\end{eqnarray*}
for $\vec{u} = (u_1,u_2)$, where 
\begin{eqnarray*}
\bar{\rho}_{E_1 E_2}^{\vec{u},w_1} :=
\sum_{w_2 \in \mathbb{F}_2} P_{W_2|W_1=0}(w_2)
\rho_{E_1 E_2}^{\vec{u} G, (w_1,w_2)G}
\end{eqnarray*}
for $w_1 = 0$ and a matrix $G = \left(\begin{array}{cc} 1 & 1 \\ 1 & 0
				      \end{array} \right)$,
and 
\begin{eqnarray*}
\bar{\rho}_{E_1 E_2}^{\vec{u}, w_1} :=
\sum_{a,b \in \mathbb{F}_2} \frac{1}{4} \rho_{E_1 E_2}^{(u_1,a)G, (w_1,b)G}
\end{eqnarray*}
for $w_1 = 1$.

Since supports of rank $1$ matrices $\{ \rho_{E_1 E_2}^{\vec{x},
\vec{\san{k}}} \}_{\vec{\san{k}} \in \mathbb{F}_2^2}$ are
orthogonal to each other, 
$\rho_{E_1 E_2}^{\vec{u}, w_1}$ for $w_1 = 0$ is already
eigen value decomposed.
Applying Lemma \ref{lemma-for-key-rate}
for $\san{J} = \{ 00, 10\}$ and 
$C = C^{\bot} = \{ 00, 11 \}$, we can 
eigen value decompose $\rho_{E_1 E_2}^{\vec{u}, w_1}$
for $w_1 = 1$ as 
\begin{eqnarray*}
\rho_{E_1 E_2}^{\vec{u}, w_1} =
\sum_{b \in \mathbb{F}_2} \frac{1}{2}
\sum_{\vec{\san{j}} \in \san{J} } 
P_{\san{J}|\vec{\san{K}}=\vec{\san{k}}}(\vec{\san{j}})
\ket{\vartheta((u_1,0), \san{k}, \vec{\san{j}})}
\bra{\vartheta((u_1,0), \san{k}, \vec{\san{j}})},
\end{eqnarray*}
where we follow the notations in 
Lemma \ref{lemma-for-key-rate}
for $m=2$.

Thus, we have
\begin{eqnarray}
\lefteqn{ H(\rho_{U_1 U_2 W_1 E_1 E_2}) } \nonumber \\
&=& H(P_{U_1}) + H(P_{W_1}) + \sum_{w_1 \in \mathbb{F}_2}
P_{W_1}(w_1) \{ H(P_{U_2|W_1 = w_1}) \nonumber \\
&& +
\sum_{\vec{u} \in \mathbb{F}_2^2} P_{U_1}(u_1) P_{U_2|W_1 = w_1}(u_2)
H(\rho_{E_1 E_2}^{\vec{u}, w_1}) \} \nonumber \\
&=& 1 + H(P_{\bar{\san{K}}}) + P_{\bar{\san{K}}}(0)\{
1 + H(P_{\vec{\san{K}}|\bar{\san{K}}=0})\} \nonumber \\ 
&&~~ +
P_{\bar{\san{K}}}(1) H(P_{\vec{\san{K}}\san{J}| \bar{\san{K}}=1}). 
\label{entropy-uwe}
\end{eqnarray}

Taking the partial trace of $\rho_{U_1 U_2 W_1 E_1 E_2}$
over systems $U_1, U_2$, we have
\begin{eqnarray*}
\rho_{W_1 E_1 E_2} &=& \sum_{w_1 \in \mathbb{F}_2}
P_{W_1}(w_1) \ket{w_1}\bra{w_1}  \\
&& \otimes
\left(
\sum_{\vec{u} \in \mathbb{F}_2^2} P_{U_1}P_{U_2|W_1 = w_1}(u_2)
\bar{\rho}_{E_1 E_2}^{\vec{u},w_1}
\right).
\end{eqnarray*}
Thus, we have
\begin{eqnarray}
H(\rho_{W_1 E_1 E_2}) &=&
H(P_{W_1}) + \sum_{w_1 \in \mathbb{F}_2} P_{W_1}(w_1) \nonumber \\
&& H \left(
\sum_{\vec{u} \in \mathbb{F}_2^2} P_{U_1}P_{U_2|W_1 = w_1}(u_2)
\bar{\rho}_{E_1 E_2}^{\vec{u},w_1}
\right) \nonumber \\
&=& 
H(P_{\bar{\san{K}}}) + 
\sum_{\bar{\san{k}} \in \mathbb{F}_2}
P_{\bar{\san{K}}}(0) H(P_{\vec{\san{K}}\vec{\san{L}}| \bar{\san{K}}= \bar{\san{k}}}). 
\label{entropy-we}
\end{eqnarray}

Combining Eqs.~(\ref{entropy-uwe}) and (\ref{entropy-we}), we have
\begin{eqnarray*}
\lefteqn{ H_{\rho}(U_1 U_2 | W_1 E_1 E_2) - H(U_1|Y_1 Y_2) 
- H(U_2|U_1 W_1 Y_1 Y_2)} \\
&=& H_{\rho}(U_1 U_2 | W_1 E_1 E_2) - H(P_{W_1}) -
P_{W_1}(0) H(P_{W_2|W_1 = 0})  \\
&=&
2 - H(P_{\vec{\san{K}}\vec{\san{L}}}) +
P_{\bar{\san{K}}}(1) \{
H(P_{\vec{\san{K}}\san{J}| \bar{\san{K}}=1}) - 1
\} \\
&=& 
2 - 2 H(P_{\san{K}\san{L}}) +
P_{\bar{\san{K}}}(1) 
h\left( \frac{p_{00} p_{10} + p_{01} p_{11}}{
(p_{00} + p_{01})(p_{10} + p_{11})}\right).
\end{eqnarray*}

On the other hand, by taking partial trace of
$\rho_{U_1 U_2 W_1 E_1 E_2}$ over the system
$U_1$, we have
\begin{eqnarray*}
\rho_{U_1 W_1 E_1 E_2} &=&
\sum_{u_1, w_1 \in \mathbb{F}_2} \frac{1}{2} P_{W_1}(w_1) 
\ket{u_1,w_1}\bra{u_1,w_1} \\
&& \otimes
\left(
\sum_{u_2 \in \mathbb{F}_2} P_{U_2 | W_1 = w_1}(u_2)
\rho_{E_1 E_2}^{(u_1,u_2), w_1}
\right) .
\end{eqnarray*}
Thus, we have
\begin{eqnarray}
H(\rho_{U_1 W_1 E_1 E_2}) &=&
1 + H(P_{W_1}) + \sum_{u_1, w_1 \in \mathbb{F}_2}
\frac{1}{2} P_{W_1}(w_1) \nonumber \\
&& ~~
H\left(
\sum_{u_2 \in \mathbb{F}_2} P_{U_2 | W_1 = w_1}(u_2)
\rho_{E_1 E_2}^{(u_1,u_2), w_1}
\right) \nonumber \\
&=& 1 + H(P_{\bar{\san{K}}}) + 
\sum_{\bar{\san{k}} \in \mathbb{F}_2} P_{\bar{\san{K}}}(\bar{\san{k}})
H(P_{\vec{\san{K}} \san{J} | \bar{\san{K}}=1}). \nonumber \\
\label{entropy-uwe2}
\end{eqnarray}

Combining Eqs.~(\ref{entropy-uwe}) and (\ref{entropy-uwe2}), we have
\begin{eqnarray*}
\lefteqn{H_{\rho}(U_2|W_1 U_1 E_1 E_2) - H(U_2|W_1 U_1 E_1 E_2) } \\
&=&  H_{\rho}(U_2|W_1 U_1 E_1 E_2) - P_{W_1}(0) 
H(P_{W_2|W_1=0})  \\
&=& P_{\bar{\san{K}}}(0) (
1- H(P_{\san{K}\san{L}}^\prime)
). 
\end{eqnarray*}

\hspace*{\fill}\qed \\
  \vspace{2ex}

\begin{lemma}
\label{lemma-for-key-rate}
Let $C$ be a linear subspace of $\mathbb{F}_2^m$.
Let 
\begin{eqnarray*}
\ket{\varphi^m(\vec{x},\vec{\san{k}})} :=
\frac{1}{\sqrt{P_{\san{K}}^m(\vec{\san{k}})}}
\sum_{\vec{\san{l}} \in \mathbb{F}_2^m}
(-1)^{\vec{x} \cdot \vec{\san{l}}}
\sqrt{P_{\san{KL}}^m(\vec{\san{k}},\vec{\san{l}})}
 \ket{\vec{\san{k}},\vec{\san{l}}}, 
\end{eqnarray*}
and 
$\rho_{E^m}^{\vec{x},\vec{\san{k}}} := \ket{\varphi^m(\vec{x},\vec{\san{k}})} 
\bra{\varphi^m(\vec{x},\vec{\san{k}})}$.
Let $\san{J}$ be a set of coset representatives of the cosets $\mathbb{F}_2^m/C$,
and
\begin{eqnarray*}
P_{\san{J}|\san{K}^m = \vec{\san{k}}}(\vec{\san{j}}) :=
\frac{ \sum_{\vec{\san{c}} \in C^\bot} P_{\san{KL}}^m(\vec{\san{k}}, \vec{\san{j}}+\vec{\san{c}})}
{P_{\san{K}}^m(\vec{\san{k}})}
\end{eqnarray*}
be  conditional probability distributions on $\san{J}$.
Then, for any $\vec{a} \in \mathbb{F}_2^m$, we have
\begin{eqnarray}
\label{eq-mixture-of-eve-state}
\sum_{\vec{x} \in C} \frac{1}{|C|} \rho_{E^m}^{\vec{x} + \vec{a}, \vec{\san{k}}} =
\sum_{\vec{\san{j}} \in \san{J}} P_{\san{J}|\san{K}^m=\vec{\san{k}}}(\vec{\san{j}})
\ket{\vartheta(\vec{a},\vec{\san{k}}, \vec{\san{j}})}
\bra{\vartheta(\vec{a},\vec{\san{k}},\vec{\san{j}})},
\end{eqnarray}
where 
\begin{eqnarray*}
\ket{\vartheta(\vec{a},\vec{\san{k}},\vec{\san{j}})} &:=& 
\frac{1}{\sqrt{\sum_{\vec{\san{e}} \in C^\bot} 
P_{\san{KL}}^m(\vec{\san{k}}, \vec{\san{j}}+\vec{\san{e}})} } \\
&& \sum_{\vec{\san{c}} \in C^\bot} (-1)^{\vec{a} \cdot \vec{\san{c}}} 
 \sqrt{P_{\san{KL}}^m(\vec{\san{k}}, \vec{\san{j}}+\vec{\san{c}})}
\ket{\vec{\san{k}}, \vec{\san{j}} + \vec{\san{c}} }.
\end{eqnarray*}
\end{lemma}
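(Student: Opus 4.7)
The plan is a direct expansion, followed by a character-sum orthogonality argument that forces a coset restriction, followed by a factorization. I will assume, as the formula for $P_{\san{J}|\san{K}^m=\vec{\san{k}}}$ demands, that $\san{J}$ is a set of coset representatives of $\mathbb{F}_2^m/C^\perp$; this is consistent with its use in the theorem (where $C=C^\perp$, so the distinction collapses).

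First I would expand a single term on the left-hand side. From the definition,
\begin{eqnarray*}
\lefteqn{\rho_{E^m}^{\vec{x}+\vec{a},\vec{\san{k}}} =} \\
&& \frac{1}{P_{\san{K}}^m(\vec{\san{k}})} \sum_{\vec{\san{l}},\vec{\san{l}}'\in \mathbb{F}_2^m} (-1)^{(\vec{x}+\vec{a})\cdot(\vec{\san{l}}+\vec{\san{l}}')} \sqrt{P_{\san{KL}}^m(\vec{\san{k}},\vec{\san{l}})P_{\san{KL}}^m(\vec{\san{k}},\vec{\san{l}}')} \, \ket{\vec{\san{k}},\vec{\san{l}}}\bra{\vec{\san{k}},\vec{\san{l}}'}.
\end{eqnarray*}
Averaging over $\vec{x}\in C$ and using the standard orthogonality relation
\begin{eqnarray*}
\frac{1}{|C|}\sum_{\vec{x}\in C} (-1)^{\vec{x}\cdot(\vec{\san{l}}+\vec{\san{l}}')} = \mathbf{1}[\vec{\san{l}}+\vec{\san{l}}' \in C^\perp]
\end{eqnarray*}
restricts the double sum to pairs $(\vec{\san{l}},\vec{\san{l}}')$ lying in a common coset of $C^\perp$.

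Next I would reparameterize. Writing $\vec{\san{l}} = \vec{\san{j}}+\vec{\san{c}}$ and $\vec{\san{l}}' = \vec{\san{j}}'+\vec{\san{c}}'$ with $\vec{\san{j}},\vec{\san{j}}'\in\san{J}$ and $\vec{\san{c}},\vec{\san{c}}'\in C^\perp$, the restriction $\vec{\san{l}}+\vec{\san{l}}'\in C^\perp$ becomes $\vec{\san{j}}=\vec{\san{j}}'$ (distinct representatives differ by a non-$C^\perp$ vector). So the averaged operator equals
\begin{eqnarray*}
\frac{1}{P_{\san{K}}^m(\vec{\san{k}})} \sum_{\vec{\san{j}}\in\san{J}}\sum_{\vec{\san{c}},\vec{\san{c}}'\in C^\perp} (-1)^{\vec{a}\cdot(\vec{\san{c}}+\vec{\san{c}}')} \sqrt{P_{\san{KL}}^m(\vec{\san{k}},\vec{\san{j}}+\vec{\san{c}})P_{\san{KL}}^m(\vec{\san{k}},\vec{\san{j}}+\vec{\san{c}}')} \, \ket{\vec{\san{k}},\vec{\san{j}}+\vec{\san{c}}}\bra{\vec{\san{k}},\vec{\san{j}}+\vec{\san{c}}'},
\end{eqnarray*}
where the inner double sum over $(\vec{\san{c}},\vec{\san{c}}')$ factors as an outer product.

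Finally I would identify the normalization. Pulling out $\sum_{\vec{\san{e}}\in C^\perp}P_{\san{KL}}^m(\vec{\san{k}},\vec{\san{j}}+\vec{\san{e}})$ converts the factored outer product into $\ket{\vartheta(\vec{a},\vec{\san{k}},\vec{\san{j}})}\bra{\vartheta(\vec{a},\vec{\san{k}},\vec{\san{j}})}$, and dividing by $P_{\san{K}}^m(\vec{\san{k}})$ turns the scalar prefactor into $P_{\san{J}|\san{K}^m=\vec{\san{k}}}(\vec{\san{j}})$, giving exactly Eq.~(\ref{eq-mixture-of-eve-state}). The argument has no genuine obstacle; the only care needed is the coset-representative bookkeeping, in particular recognizing that the displayed conditional distribution $P_{\san{J}|\san{K}^m}$ correctly normalizes (summing to $1$) precisely when $\san{J}$ indexes $\mathbb{F}_2^m/C^\perp$, which is the interpretation under which the factorization closes.
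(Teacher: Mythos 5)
Your proof is correct and is essentially the paper's own calculation with the steps reordered: the paper first regroups the pure state $\ket{\varphi^m(\vec{x}+\vec{a},\vec{\san{k}})}$ over the cosets $\vec{\san{j}}+C^\bot$ and then applies the character-sum orthogonality over $C$ to kill the cross terms $\vec{\san{i}}\neq\vec{\san{j}}$, whereas you apply the orthogonality first and regroup afterwards — the same two ingredients either way. Your observation that $\san{J}$ must really be a transversal of $\mathbb{F}_2^m/C^\bot$ for the coset bookkeeping and the normalization of $P_{\san{J}|\san{K}^m=\vec{\san{k}}}$ to close (a distinction that is immaterial in the paper's application, where $C=C^\bot$) is a correct reading of the lemma's intended hypothesis.
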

\begin{remark}
\label{remark-of-lemma-for-key-rate}
If $\vec{\san{j}} \neq \vec{\san{i}}$, obviously we have 
$\langle \vartheta(\vec{a},\vec{\san{k}},\vec{\san{j}}) |
\vartheta(\vec{a},\vec{\san{k}},\vec{\san{i}}) \rangle = 0$.
Thus, the right hand side of Eq.~(\ref{eq-mixture-of-eve-state})
is an eigen value decomposition. Moreover, if $\vec{a} + \vec{b} \in C$,
then we have $\ket{\vartheta(\vec{a},\vec{\san{k}},\vec{\san{j}})} =
\ket{\vartheta(\vec{b},\vec{\san{k}},\vec{\san{j}})}$.
\end{remark}
\begin{proof}
For any $\vec{x} \in C$ and $\vec{a} \in \mathbb{F}_2^m$, we can rewrite
\begin{eqnarray*}
\ket{\varphi(\vec{x}+\vec{a},\vec{\san{k}})} &=&
\frac{1}{\sqrt{P_{\san{K}}^m(\vec{\san{k}})}} \sum_{\vec{\san{j}} \in \san{J}}
\sum_{\vec{\san{c}} \in C^\bot} 
(-1)^{(\vec{x}+\vec{a})\cdot(\vec{\san{j}}+\vec{\san{c}})} \\
&& ~~~\sqrt{ P_{\san{KL}}^m(\vec{\san{k}}, \vec{\san{j}} + \vec{\san{c}} )}
\ket{\vec{\san{k}}, \vec{\san{j}} + \vec{\san{c}} } \\
&=& \sum_{\vec{\san{j}} \in \san{J}}
(-1)^{(\vec{x}+\vec{a})\cdot \vec{\san{j}} }
\sqrt{ P_{\san{J}|\san{K}^m = \vec{\san{k}} }(\vec{\san{j}}) }
\ket{\vartheta(\vec{a}, \vec{\san{k}}, \vec{\san{j}} ) }.
\end{eqnarray*}
Then, we have
\begin{eqnarray*}
\lefteqn{ 
\sum_{\vec{x} \in C} \frac{1}{|C|} \rho_{E^m}^{\vec{x}+\vec{a}, \vec{\san{k}} } } \\
&=& \sum_{\vec{x} \in C} \frac{1}{|C|} 
\sum_{\vec{\san{i}}, \vec{\san{j}} \in \san{J} } 
(-1)^{(\vec{x}+\vec{a})\cdot(\vec{\san{i}}+\vec{\san{j}}) }
\sqrt{ P_{\san{J}|\san{K}^m = \vec{\san{k}} }(\vec{\san{i}})
P_{\san{J}|\san{K}^m = \vec{\san{k}} }(\vec{\san{j}}) } \\
&& ~~~\ket{\vartheta(\vec{a}, \vec{\san{k}}, \vec{\san{i}} ) }
\bra{\vartheta(\vec{a}, \vec{\san{k}}, \vec{\san{j}} ) } \\
&=& \sum_{\vec{\san{i}}, \vec{\san{j}} \in \san{J} }
(-1)^{\vec{a} \cdot(\vec{\san{i}}+\vec{\san{j}}) }
\sum_{\vec{x} \in C} \frac{1}{|C|}
(-1)^{\vec{x} \cdot(\vec{\san{i}}+\vec{\san{j}}) }
\sqrt{ P_{\san{J}|\san{K}^m = \vec{\san{k}} }(\vec{\san{i}})
P_{\san{J}|\san{K}^m = \vec{\san{k}} }(\vec{\san{j}}) } \\
&& ~~~\ket{\vartheta(\vec{a}, \vec{\san{k}}, \vec{\san{i}} ) }
\bra{\vartheta(\vec{a}, \vec{\san{k}}, \vec{\san{j}} ) } \\
&=& \sum_{\vec{\san{j}} \in \san{J} }
P_{\san{J}|\san{K}^m = \vec{\san{k}} }(\vec{\san{j}}) 
\ket{\vartheta(\vec{a}, \vec{\san{k}}, \vec{\san{j}} )}
\bra{\vartheta(\vec{a}, \vec{\san{k}}, \vec{\san{j}} )},
\end{eqnarray*}
where $\cdot$ is the standard inner product on the vector space
 $\mathbb{F}_2^m$, and 
we used the following equality,
\begin{eqnarray*}
\sum_{\vec{x} \in C} (-1)^{\vec{x} \cdot (\vec{\san{i}} + \vec{\san{j}} ) }
= 0
\end{eqnarray*}
for $\vec{\san{i}} \neq \vec{\san{j}}$.
\end{proof}

\subsection{Unital Channel}
\label{subsec:two-way-unital-channel}

In this section, we calculate the asymptotic 
key generation rates 
for the Unital channel. Although we succeeded to
show a closed formula of the asymptotic key
generation rate for the Pauli channel,
which is a special class of the unital channel, in 
Section \ref{subsec:pauli-channel}, we do not
know any closed formula of the asymptotic key
generation rate for the unital channel in general.

For the six-state protocol, it is straightforward
to numerically calculate the asymptotic key generation rate. 
For the BB84 protocol, 
owing to Proposition \ref{proposition:minimization-two-way},
the asymptotic key generation rate can be calculated by
taking the minimization over one free parameter
$R_{\san{yy}}$. 

As an example of non Pauli but unital channel, we 
 numerically calculated asymptotic
key generation rates for the depalarizing channel whose
axis is rotated by $\pi/4$, i.e., the channel whose 
Stokes parameterization is given by
\begin{eqnarray}
\left(
\left[
\begin{array}{ccc}
\cos (\pi/4) & -\sin (\pi/4) & 0 \\
\sin(\pi/4) & \cos(\pi/4) & 0 \\
0 & 0 & 1
\end{array}
\right]
\left[
\begin{array}{ccc}
1-2e & 0 & 0 \\
0 & 1-2e & 0 \\
0 & 0 & 1-2e
\end{array}
\right],
\left[
\begin{array}{c}
0 \\ 0 \\ 0
\end{array}
\right]
\right).
\label{eq:quarter-rotated-depolarizing}
\end{eqnarray}

For this channel, 
since the Choi operator is symmetric with respect to
Alice and Bob's subsystem,
we can also show that the asymptotic key 
generation rate is maximized when we employ the
functions $\chi_A, \chi_B$ given by
Eqs.~(\ref{eq:chi-A}) and (\ref{eq:chi-B})
in a similar manner as Lemma \ref{lemma-optimal-function}.
Therefore, we employ the functions 
given by Eqs.~(\ref{eq:chi-A}) and (\ref{eq:chi-B})
throughout this subsection.
Furthermore, we can find that
the asymptotic key generation rates for the direct and the reverse 
IR procedure coincide, because 
$H_\rho(U_1|W_1 E_1 E_2) = H_\rho(V_1|W_1 E_1 E_2)$
and
$H_\rho(U_1|Y_1 Y_2) = H_\rho(V_1| X_1 X_2)$.
Therefore, we only consider the asymptotic key 
generation rate for the direct IR procedure throughout
this subsection.

For the BB84 protocol
and the six-state protocol, the asymptotic key 
generation rate of the postprocessing with the two-way
IR procedure and that of the postprocessing with
the one-way IR procedure are compared in  
Fig.~\ref{fig:quarter-bb84}
and Fig.~\ref{fig:quarter-six-state} respectively.
We find that the asymptotic key generation rates of the 
postprocessing with our proposed two-way IR procedure
is higher than those of the one-way postprocessing,
which suggest that our proposed IR procedure is effective not
only for the Pauli channel, but also for non-Pauli channels.
It should be noted that the asymptotic key generation
rates of the postprocessing with the direct one-way IR procedure
and the reverse one-way IR procedure coincide for this 
example.

\begin{figure}
\centering
\includegraphics[width=\linewidth]{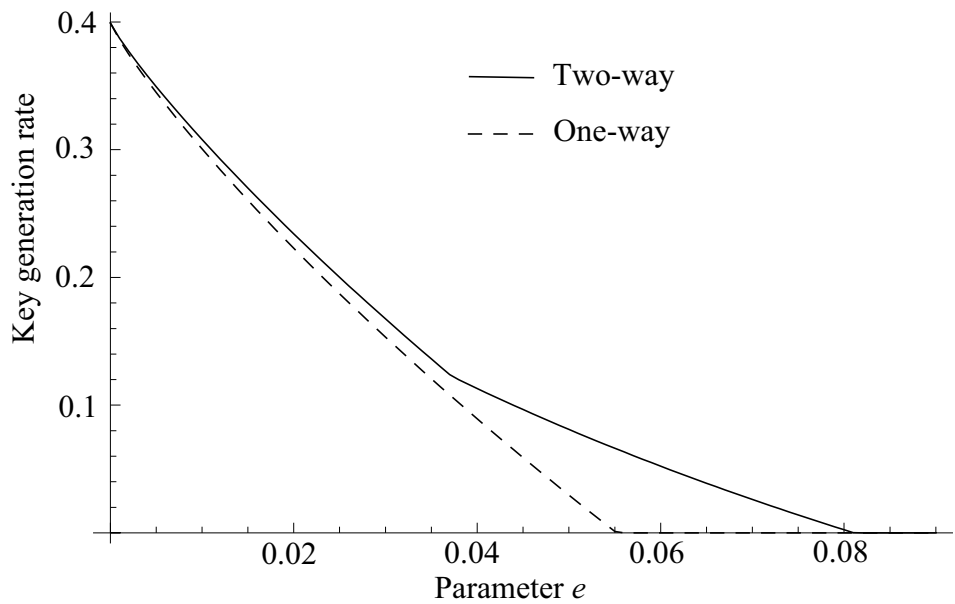}
\caption{ 
Comparison of the asymptotic key generation rates of the  
BB84 protocol. 
``Two-way'' is
the asymptotic key generation rate of the postprocessing with
two-way IR procedure (Eq.~(\ref{eq:asymptotic-key-rate-two-way-bb84})). 
``One-way'' is the asymptotic key generation rate of
the postprocessing with one-way IR procedure
(Eq.~(\ref{eq:key-rate-one-way-bb84})).
}
\label{fig:quarter-bb84}
\end{figure}
\begin{figure}
\centering
\includegraphics[width=\linewidth]{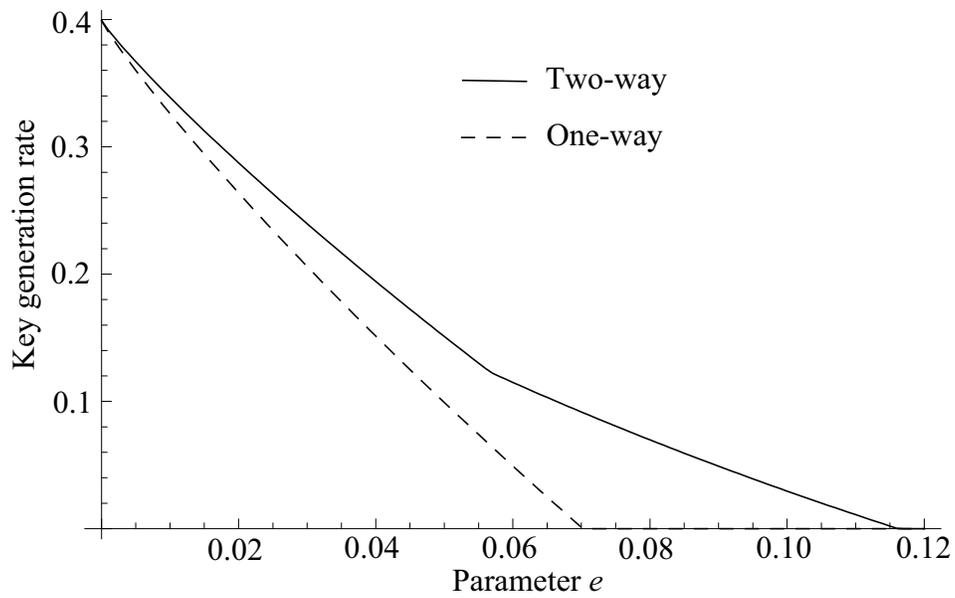}
\caption{ 
Comparison of the asymptotic key generation rates of the  
six-state protocol. 
``Two-way'' is
the asymptotic key generation rate of the postprocessing with
two-way IR procedure (Eq.~(\ref{eq:asymptotic-key-rate-two-way})). 
``One-way'' is the asymptotic key generation rate of
the postprocessing with one-way IR procedure
(Eq.~(\ref{eq:asymptotic-direct})).
}
\label{fig:quarter-six-state}
\end{figure}

\subsection{Amplitude Damping Channel}
\label{subsec:two-way-amplitude}

In this section, we calculate the asymptotic
key generation rates (for the direct two-way IR procedure
and the reverse two-way IR procedure)
for the amplitude damping channel.
Although we succeeded to derive
a closed formulae of the asymptotic key generation
rates of the one-way postprocessing in 
Section \ref{subsec:amplitude-damping}, we do not know
any closed formula of the asymptotic key generation rates
of the postprocessing with the two-way IR procedure
for the amplitude damping channel. Furthermore, it is not
clear whether the asymptotic key generation rate is maximized when we 
employ the functions 
given by Eqs.~(\ref{eq:chi-A}) and (\ref{eq:chi-B}).
Therefore, we (numerically) optimize the choice of the functions
$\chi_A, \chi_B$ so that the asymptotic key generation rate
is maximized. 

Since the set ${\cal P}_c(\omega)$ consists of
only $\rho$ itself for both the BB84 protocol
(refer Section
\ref{subsec:amplitude-damping}), 
we can easily conduct 
the numerical calculation of the asymptotic key generation
rates for the six-state protocol and the BB84 protocol. 
The asymptotic key generation rates of the postprocessing
with the direct two-way IR procedure, the reverse two-way IR
procedure, the direct one-way IR procedure, and the reverse
one-way IR procedure are compared in Fig.~\ref{fig:two-way-amplitude}.
It should be noted that the asymptotic key generation rates for
the BB84 protocol and the six-state protocol coincide in
this example. 
We numerically found that the functions given by
$\chi_A(a_1,a_2) := 1$ and
\begin{eqnarray*}
\chi_B(a_1,a_2) = \left\{\begin{array}{ll}
0 & \mbox{if } a_1 = a_2 \\
1 & \mbox{else}
\end{array}
\right.
\end{eqnarray*}
maximizes the asymptotic key generation rates for 
both the direct two-way IR procedure and the reverse
IR procedure.

\begin{figure}
\centering
\includegraphics[width=\linewidth]{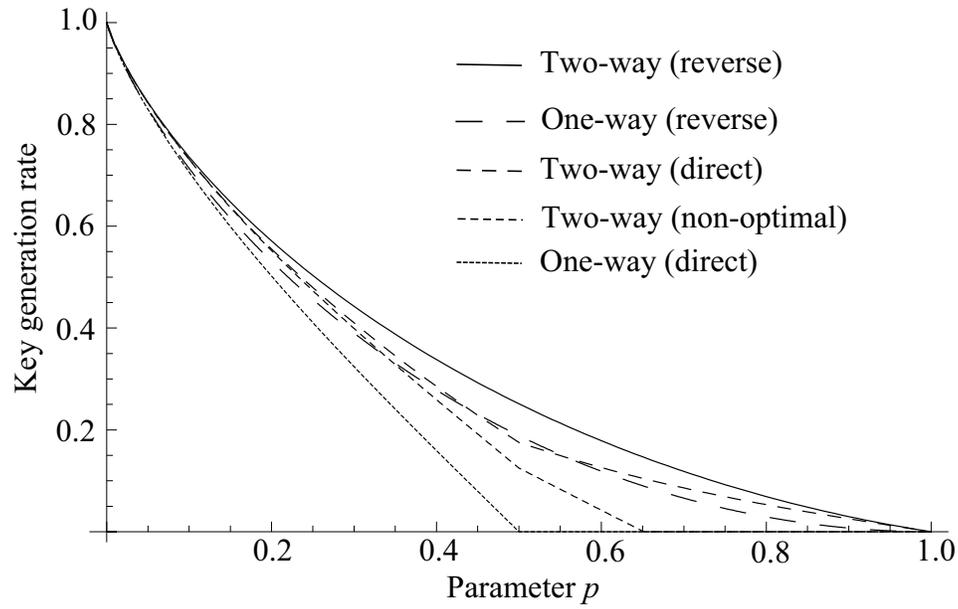}
\caption{ 
Comparison of the asymptotic key generation rates. 
``Two-way (reverse)'' is
the asymptotic key generation rate of the postprocessing with
reverse two-way IR procedure 
(Eq.~(\ref{eq:asymptotic-key-rate-two-way-reverse})).
``One-way (reverse)'' is the asymptotic key generation rate of
the postprocessing with reverse one-way IR procedure
(Eq.~(\ref{eq:asymptotic-reverse})).
 ``Two-way (direct)'' is
the asymptotic key generation rate of the postprocessing with
direct two-way IR procedure 
(Eq.~(\ref{eq:asymptotic-key-rate-two-way})).
``Two-way (non-optimal)'' is the asymptotic key generation rate
of the postprocessing with direct two-way IR procedure when we employ
the functions $\chi_A, \chi_B$ given by 
Eqs.~(\ref{eq:chi-A}) and (\ref{eq:chi-B}).
``One-way (direct)'' is the asymptotic key generation rate of
the postprocessing with one-way IR procedure
(Eq.~(\ref{eq:asymptotic-direct})).
}
\label{fig:two-way-amplitude}
\end{figure}


\section{Relation to Entanglement Distillation Protocol}
\label{sec:relation-edp}

As is mentioned in Chapter \ref{ch:introduction},
the security of the QKD protocols have been studied by using 
the quantum error correcting code and the entanglement
distillation protocol (EDP) since Shor and Preskill found
the relation between them \cite{shor:00}. 
The crucial point in Shor and Preskill's proof is to find an EDP
that corresponds to a postprocessing of the QKD protocols.
Indeed, the security of 
the QKD protocols with the two-way classical communication
\cite{gottesman:03} was proved by finding the corresponding 
EDPs. 

We will explain the EDP proposed by Vollbrecht and Vestraete
\cite{vollbrecht:05} in this section.
Then, we present the postprocesing\footnote{The postprocessing presented
in this section is a modified version of the postprocessing
presented in \cite{ma:06, watanabe:06} so that it fit into 
the notations in this thesis.} of the QKD
protocols that corresponds to Vollbrecht and Vestraete's EDP.
Furthermore, we compare the posptocessing
(corresponding to Vollbrecht and Vestraete's EDP) and the postprocessing
shown in Section \ref{sec:key-rate-two-way}, and clarify the relation
between them,
where we employ the functions given by
Eqs.~(\ref{eq:chi-A}) and (\ref{eq:chi-B}).
 The comparison result suggests\footnote{Renner {\em et
al}.~suggested  that there exist no EDP which corresponds to
the noisy preprocessing (see Remark \ref{remark:noisy-preprocessing}) 
proposed by themselves.} 
that there exists no
EDP that corresponds to the postprocessing shown in 
Section \ref{sec:key-rate-two-way}.

Suppose that Alice and Bob share $2n$ pairs bipartite
qubits systems, and the state of each bipartite system is
a Bell diagonal state\footnote{There is an entanglement distillation
protocol that works for  bipartite states that are not 
necessarily Bell diagonal states \cite{devetak:04}.
However, we only consider EDPs for the Bell diagonal states.}
\begin{eqnarray}
\rho = \sum_{\san{k},\san{l} \in \mathbb{F}_2}
  P_{\san{KL}}(\san{k},\san{l}) \ket{\psi(\san{k},\san{l})}\bra{\psi(\san{k},\san{l})}.
\end{eqnarray} 
The EDP is a protocol to distill the mixed entangled state 
$\rho^{\otimes 2n}$ into the maximally entangled state 
$\ket{\psi}^{\otimes \ell}$ by using the local operation
and the classical communication \cite{bennett:96b}.

Vollbrecht and Vestraete proposed the following 
EDP \cite{vollbrecht:05}, where it is slightly modified 
(essentially the same) from the original version because we want
to clarify the relation among this EDP, the corresponding postprocessing, and 
the postprocessing shown in Section \ref{sec:key-rate-two-way}.
\begin{enumerate}
\renewcommand{\theenumi}{\roman{enumi}}
\renewcommand{\labelenumi}{(\theenumi)}
\item Alice and Bob divide $2n$ pairs of the bipartite 
systems into $n$ blocks of length $2$, and locally carry
out the controlled-NOT (CNOT) operation on each 
block, where the $2i$th pair is the source and the 
$(2i-1)$th pair is the target. 

\item Then, Alice and Bob undertake
the breeding protocol \cite{bennett:96} to guess
bit-flip errors in the $(2i-1)$th pair for all $i$.
The guessed bit-flip errors can be described by a
sequence $\hat{\bol{w}}_1$ (Note that two-way 
classical communication is used in this step).

\item According to $\hat{\bol{w}}_1$, Alice and Bob classify indices
of blocks into two sets $\san{T}_0 := \{i \mymid \hat{w}_i = 0\}$
and $\san{T}_1 := \{ i \mymid \hat{w}_i = 1\}$.

\item For a collection of $2i$th pairs such that $i \in \san{T}_0$,
Alice and Bob conduct the breeding protocol to correct bit-flip
errors.

\item For a collection of $2i$th pairs such that $i \in \san{T}_1$,
Alice and Bob perform measurements in the $\san{z}$-basis,
and obtain measurement results $\bol{x}_{2,\san{T}_1}$
and $\bol{y}_{2,\san{T}_1}$ respectively.

\item Alice sends $\bol{x}_{2,\san{T}_1}$ to Bob.

\item Alice and Bob correct the phase errors for the 
remaining pairs by using information $\san{T}_0$,
$\san{T}_1$, and the bit-flip error 
$\bol{x}_{2,\san{T}_1} + \bol{y}_{2, \san{T}_1}$.
\end{enumerate}

The yield of this EDP is given by
\begin{eqnarray}
1 - H(P_{\san{KL}}) 
+ \frac{P_{\bar{\san{K}}}(1)}{4} \left\{
 h\left( 
\frac{p_{01}}{p_{00} + p_{01} }
\right) 
+ h\left(
\frac{p_{11}}{p_{10} + p_{11}}
\right) \right\}.
      \label{eq-rate-of-volbrecht}
\end{eqnarray}
We can find by the concavity of the binary entropy function
that the first argument in the maximum of the r.h.s.
of Eq.~(\ref{eq:two-way-asymptotic-key-pauli}) is larger than
the value in Eq.~(\ref{eq-rate-of-volbrecht}).

If we convert this EDP into a postprocessing of the QKD protocols, 
the difference between that postprocessing 
and ours is as follows.
In the postprocessing converted from 
the EDP \cite{vollbrecht:05}, after Step (\ref{step4-two-way-IR}), 
Alice reveals the sequence, $\bol{x}_{2, \hat{\san{T}}_1}$,
which consists of the second bit, $x_{i2}$, of the $i$th
block such that the parity of discrepancies $\hat{w}_{i1}$
is $1$. 
However, Alice discards $\bol{x}_{2, \hat{\san{T}}_1}$
in the proposed IR protocol of Section
 \ref{sec:two-way-ir}.
Since sequence $\bol{x}_{2,\hat{\san{T}}_1}$ has 
some correlation to sequence $\bol{u}_1$ from
the view point of Eve, Alice should not reveal 
$\bol{x}_{2, \hat{\san{T}}_1}$ 
to achieve a higher key generation rate.

In the EDP context, on the other hand, since the bit flip error,
$\bol{x}_{2, \hat{\san{T}}_1} + \bol{y}_{2, \hat{\san{T}}_1}$, has some
 correlation to the phase flip errors in the $(2i-1)$-th pair with 
$i \in \hat{\san{T}}_1$, Alice should send the measurement results,
$\bol{x}_{2, \hat{\san{T}}_1}$, to Bob. If Alice discards measurement
results $\bol{x}_{2, \hat{\san{T}}_1}$ without telling Bob what
the result is, then the yield of the resulting EDP is worse than
Eq.~(\ref{eq-rate-of-volbrecht}). 
Consequently, there seems to be no correspondence between the EDP
and our proposed classical processing.


\section{Summary}
\label{sec:summary-of-chapter4}

The results in this chapter is summarized as 
follows: In Section \ref{sec:advantage-distillation},
we reviewed the advantage distillation.
In Section \ref{sec:two-way-ir}, we proposed the
two-way IR procedure.
In Section \ref{sec:key-rate-two-way},
we derived 
a sufficient condition on the key generation
rate such that a secure key agreement
is possible with our proposed postprocessing
(Theorem \ref{theorem:security-two-way}).
We also derived the asymptotick key generation
rate formulae.

In Section \ref{sec:example-two-way},
we investigated the asymptotic key generation rate
of our proposed postprocessing. 
Especially in Section \ref{subsec:pauli-channel}, we derived 
a closed form of the asymptotic key generation rate
for the Pauli channel (Theorem \ref{theorem:calc-bell-diagonal}), 
which clarifies that the asymptotic key generation rate of our proposed
postprocessing is at least as high as the asymptotic key 
generation rate of the standard postprocessing.
We also numerically clarified that the asymptotic
key generation rate of our proposed postprocessing
is higher than the asymptotic key generation rate
of any other postprocessing for 
the Pauli channel (Section \ref{subsec:pauli-channel}), 
the unital channel (Section \ref{subsec:two-way-unital-channel}), and
the amplitude damping channel (Section \ref{subsec:two-way-amplitude})
respectively.

Finally in Section \ref{sec:relation-edp},
we clarified the  relation between our proposed
postprocessing and the EDP proposed by
Vollbrecht and Vestraete \cite{vollbrecht:05}.

\chapter{Conclusion}
\label{ch:conclusions}

In this thesis, we investigated the channel estimation phase
and the postprocessing phase of the QKD protocols.
The contribution of this thesis is summarized as follows.

For the channel estimation phase, we proposed a new
channel estimation procedure in which
we use the mismatched measurement outcomes in addition
to the samples from the matched measurement outcomes.
We clarified that the key generation rate decided according
to our proposed channel estimation procedure is at least as
high as the key generation rate decided according to
the conventional channel estimation procedure.
We also clarified that the former is strictly higher than
the latter for the amplitude damping channel and
the unital channel.

For the postprocessing phase, we proposed a new
kind of postprocessing procedure with two-way public communication.
For the Pauli channel, we clarified that
the key generation rate of the QKD protocols with our proposed
postprocessing is higher than the key generation
rate of the QKD protocols with the standard one-way postprocessing.
For the Pauli channel,
the amplitude damping channel, and the
unital channel, we numerically clarified that
 the QKD protocols with our proposed
postprocessing is higher than the key generation
rate of the QKD protocols with any other postprocessing.

There are some problems that should be investigated
in a future. 
\begin{itemize}
\item To show the necessary and sufficient condition
on the channel for that the (asymptotic) key generation rate decided according
our proposed channel estimation procedure is
strictly higher than that decided 
according to the conventional channel estimation procedure
for the six-state protocol.

\item To analytically show that the (asymptotic) 
key generation rate of our proposed two-way postprocessing is at
least as high as that of the standard one-way postprocessing,
or to find a counter example.
\end{itemize}

\appendix

\chapter{Notations}


\section*{Notations first appeared in Chapter \ref{ch:preliminaries}}

\begin{center}
\begin{longtable}{p{2.5cm}|| p{8.5cm}}
${\cal P}({\cal X})$ 
& the set of all probability distributions on the set
${\cal X}$ \\
$P_X, P_{XY}$ & probability distributions \\
$P_{\bol{x}}$ & the type of the sequence $\bol{x}$ \\
${\cal P}({\cal H})$ & the set of all density operators
on the quantum system ${\cal H}$ \\
${\cal P}^\prime({\cal H})$ & the set of all non-negative 
operators on ${\cal H}$ \\
$\rho, \rho_{AB}$ & density operators \\
$\| \cdot \|$ & the trace distance (variational distance) \\
$F(\cdot, \cdot)$ & the fidelity \\
$H(X)$ & the entropy of the random variable $X$ \\
$H(P_X)$ & the entropy of the random variable with the
 distribution $P_X$ \\
$h(\cdot)$ & the binary entropy function \\
$H(X|Y)$ & the (Shannon) conditional entropy of $X$ given $Y$ \\
$I(X;Y)$ & the mutual information between $X$ and $Y$ \\
$H(\rho)$ & the von Neumann entropy of the system whose state is $\rho$ \\
$H_\rho(A|B)$ & the conditional von Neumann entropy of the system $A$
conditioned by the system $B$ \\
$I_\rho(A; B)$ & the quantum mutual information between the systems
$A$ and $B$ \\
$\sigma_\san{x}, \sigma_\san{y}, \sigma_\san{z}$ & the Pauli operators
 \\
$\ket{\psi}$ & the maximally entangled state defined in 
Eq.~(\ref{eq:definition-of-maximally-entangled}) \\
${\cal P}_c$ & the set of all Choi
 operators \\
$(R, t)$ & the Stokes parameterization of the channel \\
$H_{\min}(\rho_{AB}|\sigma_B)$ & the min-entropy of $\rho_{AB}$ relative
 to $\sigma_B$ \\
$H_{\max}(\rho_{AB}|\sigma_B)$ & the max-entropy of $\rho_{AB}$ relative
 to $\sigma_B$ \\
$H_{\min}^\varepsilon(\rho_{AB}|B)$ & the $\varepsilon$-smooth min-entropy of
$\rho_{AB}$ given the system $B$ \\
$H_{\max}^\varepsilon(\rho_{AB}|B)$ & the $\varepsilon$-smooth max-entropy of
$\rho_{AB}$ given the system $B$  \\
${\cal B}^{\varepsilon}(\rho)$ & the set of all operators 
$\bar{\rho} \in {\cal P}^\prime({\cal H})$ such that 
$\| \bar{\rho} - \rho\| \le \rom{Tr}[\rho] \varepsilon$ \\
$d(\rho_{AB}|B)$ & the distance from the uniform 
(see Definition \ref{definition-distance-from-uniform})
\end{longtable}
\end{center}

\section*{Notations first appeared in Chapter
 \ref{ch:channel-estimation}}

\begin{center}
\begin{longtable}{p{2.5cm}|| p{8.5cm}}
$\ket{0_\san{a}}, \ket{1_\san{a}}$ & the eigenstates of the Pauli
 operator $\sigma_\san{a}$ \\
$\rho_{\bol{X}\bol{Y}\bol{E}}$ & the $\{ccq\}$-state
describing Alice and Bob's bit sequences $(\bol{X},\bol{Y})$
and the state in Eve's system \\
$M$ & the parity check matrix \\
$t$ & the syndrome \\
$P_{XY}$ & the probability distribution of Alice and Bob's bits \\
$P_W$ & the probability distribution of the discrepancy between Alice
 and Bos's bits \\
$\omega$ & the components
 $(R_\san{zz},R_\san{zx},R_\san{xz},R_\san{xx},t_\san{z},t_\san{x})$
of the Stokes parameterization \\
$\tau$ & the components 
 $(R_\san{zy},R_\san{xy},R_\san{yz},R_\san{yx},R_\san{yy},t_\san{y})$
of the Stokes parameterization \\
$\Omega$ & the range of $\omega$ \\
${\cal P}_c(\omega)$ & the set of all Choi operator for a fixed $\omega$ \\
$\gamma$ & the components
 $(R_\san{zz},R_\san{xx}, R_\san{yy})$ of the Stokes parameterization \\
$\kappa$ & the components
 $(R_\san{zx}$,$R_\san{zy}$,$R_\san{xz}$,$R_\san{xy}$,$R_\san{yz}$,$R_\san{yx}$,$t_\san{z}$,$t_\san{x}$,$t_\san{y})$ of the Stokes parameterization \\
$\Gamma$ & the range of $\gamma$ \\
${\cal P}_c(\gamma)$ & the set of all Choi operator for a fixed $\gamma$ \\
$\upsilon$ & the components
 $(R_\san{zz}, R_\san{xx})$ of the Stokes parameterization \\
$\varsigma$ & the components
 $(R_\san{zx}$,$R_\san{zy}$,$R_\san{xz}$,$R_\san{xy}$,$R_\san{yz}$,$R_\san{yx}$,$R_\san{yy}$,$t_\san{z}$,$t_\san{x}$,$t_\san{y})$ of the Stokes
 parameterization \\
$\Upsilon$ & the range of $\upsilon$ \\
${\cal P}_c(\upsilon)$ & the set of all Choi operators for a fixed
 $\upsilon$
\end{longtable}
\end{center}

\section*{Notations first appeared in Chapter
 \ref{ch:postprocessing}}

\begin{center}
\begin{longtable}{p{2.5cm}|| p{8.5cm}}
$\xi$ & the function $\xi:\mathbb{F}_2^2 \to \mathbb{F}_2$ such that
$\xi(a_1,a_2) = a_1 + a_2$ \\
$\zeta$ & the function $\zeta:\mathbb{F}_2^2 \to \mathbb{F}_2$
such that $\zeta(a,0) = a$ and $\zeta(a,1) = 0$ \\
$\chi_A, \chi_B$ & arbitrary functions from $\mathbb{F}_2^2$ to
 $\mathbb{F}_2$ \\
$\zeta_A$ & the function $\mathbb{F}_2^3 \to \mathbb{F}_2$ such that
$\zeta_A(a_1,a_2,a_3) = a_1$ for $\chi_A(a_2,a_3) = 0$ and 
$\zeta_A(a_1,a_2,a_3) = 0$ for else \\
$\zeta_B$ & the function $\mathbb{F}_2^3 \to \mathbb{F}_2$ such that
$\zeta_B(a_1,a_2,a_3) = a_1$ for $\chi_B(a_2,a_3) = 0$ and 
$\zeta_B(a_1,a_2,a_3) = 0$ for else \\
$U_1$ & the random variable defined as $U_1 = \xi(X_1,X_2)$ \\
$V_1$ & the random variable defined as $V_1 = \xi(Y_1, Y_2)$ \\
$W_1$ & the random variable defined as $W_1 = U_1 + V_1$ \\
$U_2$ & the random variable defined as $U_2 = \zeta(X_2,W_1)$ or
the random variable defined as $U_2 = \zeta_A(X_2,U_1,V_1)$ \\
$V_2$ & the random variable defined as $V_2 = \zeta(Y_2,W_1$ or
the random variable defined as $V_2 = \zeta_B(X_2,U_1,V_1)$ \\
$\ket{\psi(\san{k},\san{l})}$ & Bell states \\
$P_{\san{KL}}$ & the distribution such that the Bell diagonal components
of a Bell diagonal state 
\end{longtable}
\end{center}

\chapter{Publications Related to This Thesis}

\section*{Articles in Journals}

\begin{itemize}
\item S.~Watanabe, R.~Matsumoto, T.~Uyematsu, and Y.~Kawano, ''Key rate
      of quantum key distribution with hashed two-way classical
      communication,'' {\em Phys.~Rev.~A}, vol.~76,
      no.~3,pp.~032312-1--7, Sep. 2007.
\item S.~Watanabe, R.~Matsumoto, and T.~Uyematsu, ''Tomography increases
      key rate of quantum-key-distribution protocols,'' {\em
      Phys.~Rev.~A}, vol.~78, no.~4, pp.~042316-1--11, Oct. 2008. 
\end{itemize}

\section*{Peer-Reviewed Articles in International Conferences}

\begin{itemize}
\item S.~Watanabe, R.~Matsumoto, and T.~Uyematsu, ''Security of quantum
      key distribution protocol with two-way classical communication
      assisted by one-time pad encryption,'' in Proc. Asian Conference
      on Qauntum Information Science 2006, Beijing, China, September 2006.
\item S.~Watanabe, R.~Matsumoto, T.~Uyematsu, and Y.~Kawano, ''Key rate
      of quantum key distribution with hashed two-way classical
      communication,'' in {\em Proc.~2007 IEEE
      Int.~Symp.~Inform.~Theory}, Nice, France, June, 2007.
\end{itemize}

\section*{Non-Reviewd Articles in Conferences}

\begin{itemize}
\item S.~Watanabe, R.~Matsumoto, T.~Uyematsu, and Y.~Kawano, ''Key rate
      of quantum key distribution with hashed two-way classical
      communication,'' in {\em Proc.~QIT 16}, Atsugi, Japan, May, 2006.

\item S.~Watanabe, R.~Matsumoto, and T.~Uyematsu, ''Tomography increases
      key rate of quantum-key-distribution protocols,'' presented at
 recent result session in {\em 2008 IEEE Int.~Symp.~Inform.~Theory},
 Toronto, Canada, July, 2008.

\item S.~Watanabe, R.~Matsumoto, and T.~Uyematsu, ''Tomography increases
      key rate of quantum-key-distribution protocols,'' in {\em
 Proc.~SITA 2008}, Kinugawa, Japan, Oct., 2008.

\item S.~Watanabe, R.~Matsumoto, and T.~Uyematsu, ''Tomography increases
      key rate of quantum-key-distribution protocols,'' presented at
      GSIS Workshop on Quantum Information Theory, Sendai, Japan,
      November 2008.
\end{itemize}

\newcommand{\etalchar}[1]{$^{#1}$}


\end{document}